\documentclass[12pt]{article}

\newcommand{\blind}{0}

\usepackage{amsthm,amsmath,amsfonts,amssymb}
\usepackage[colorlinks,citecolor=blue,urlcolor=blue]{hyperref}

\numberwithin{equation}{section}
\theoremstyle{plain}
\newtheorem{proposition}{Proposition}
\newtheorem{definition}{Definition}
\newtheorem{corollary}{Corollary}
\newtheorem{lemma}{Lemma}
\newtheorem{remark}{Remark}

\usepackage{subfigure}     
\usepackage{booktabs}
\usepackage{ctable}
\usepackage{colortbl}
\usepackage{longtable}
\usepackage{ragged2e}

\usepackage[ruled,vlined]{algorithm2e}
\SetKwInput{kwInit}{Initialize}

\usepackage{graphicx}
\usepackage{enumerate}
\usepackage{natbib}
\usepackage{url} 

\addtolength{\oddsidemargin}{-.5in}%
\addtolength{\evensidemargin}{-1in}%
\addtolength{\textwidth}{1in}%
\addtolength{\textheight}{1.7in}%
\addtolength{\topmargin}{-1in}%

\usepackage{subcaption}
\usepackage[skip=0pt]{caption}
\captionsetup{font={small,stretch=1.3}}

\begin{document}

\def\spacingset#1{\renewcommand{\baselinestretch}%
{#1}\small\normalsize} \spacingset{1}


\if0\blind
{
  \title{\bf Scalable Variational Bayes Inference for Dynamic Variable Selection}
  \author{Nicolas Bianco\\
    Scientific Computing Center, \\ Karlsruhe Institute of Technology, Germany.\\
    and \\
    Mauro Bernardi \\
    Department of Statistical Sciences, \\ University of Padova, Italy\\
    and \\
    Daniele Bianchi \\
    School of Economics and Finance, \\ Queen Mary University of London, United Kingdom}
  \maketitle
} \fi

\if1\blind
{
  \bigskip
  \bigskip
  \bigskip
  \begin{center}
    {\LARGE\bf Scalable Bayesian Inference for Dynamic Variable Selection in High-Dimensional Regressions}

\end{center}
  \medskip
} \fi

\bigskip
\begin{abstract}
We develop a variational Bayes approach for dynamic variable selection in high-dimensional regression models with time-varying parameters and predictors that exhibit a predefined group structure. Through comprehensive simulation studies, we demonstrate that our method yields more accurate parameter estimates than existing Bayesian static and dynamic variable selection approaches while maintaining computational efficiency. We illustrate the performance of our approach within the context of a popular problem in economics: forecasting inflation based on a large set of macroeconomic predictors. Our approach demonstrates significant improvements in out-of-sample point and density forecasting accuracy. A retrospective analysis of the time-varying parameter estimates reveals economically interpretable patterns in inflation dynamics.
\end{abstract}

\noindent%
{\it Keywords:}  High-dimensional regressions, approximate inference, Bayesian variable selection, mean-field approximation, macroeconomic forecasting.
\vfill

\newpage
\spacingset{1.75} 

\section{Introduction}
Predicting the dynamics of economic variables is critical for policymakers and market participants. For instance, inflation forecasts guide central bankers’ monetary and fiscal decisions and help investors hedge against inflation risk. However, the predictive power of individual variables is rarely known \textit{ex ante}. Consequently, variable selection methods have grown in prominence, particularly in predictive regressions \citep[e.g.,][]{rockova_george.2014,huber2021inducing}. Yet, although economic predictability is often transient, conventional regression models typically perform static variable selection, that is the subsets of ``active'' predictors are assumed constant over the training sample.

To address this limitation, we develop a dynamic Bernoulli-Gaussian regression framework that treats variable selection as a latent stochastic process. Our approach specifically tackles the statistical challenge of dynamically incorporating predictors into a time-varying parameter regression model - admitting variables only during periods where they demonstrate significant predictive power. We propose a variational Bayes estimation method that enables a computationally efficient and statistically consistent identification of relevant predictors in high-dimensional, time-varying parameter regression settings.

Our methodological contribution is threefold. First, the optimal variational density for the latent selection process accommodates flexible smoothness assumptions, regularizing the dynamic inclusion probabilities by filtering noisy estimates that could lead to unstable model identification. Second, we sequentially eliminate predictors showing no informational value across the entire training sample, significantly enhancing both estimation accuracy and computational efficiency. Third, our framework incorporates prior information about group correlations among predictors, allowing decision-makers to leverage domain knowledge about structural relationships between variables for individual variable selection. 

We first assess the estimation accuracy of our approach through an extensive simulation study where time-varying parameters exhibit diverse temporal sparsity patterns. Our evaluation proceeds in two stages: First, we benchmark our approach against recent dynamic variable selection methods, including the dynamic spike-and-slab approaches of \cite{koop_korobilis_2020} and \cite{rockova_mcalinn_2021} and static two-component mixture priors \citep[e.g.,][]{rockova_george.2014}. Second, we compare our variational Bayes method against an equivalent MCMC algorithm to establish baseline performance differences. 

The simulation results demonstrate that our variational Bayes method successfully balances accuracy and computational efficiency. While conventional MCMC becomes computationally prohibitive even for moderate dimensions, our approach scales effectively to high-dimensional settings without compromising estimation accuracy. This computational advantage proves particularly crucial for real-time economic forecasting, where recursive predictions require frequent model updates. Moreover, our method maintains efficiency comparable to competing approximate inference algorithms while achieving significantly superior accuracy across all model dimensions.


The central intuition of our paper is that the precise identification of time-varying relevant predictors is crucial for decision-making in data-rich environments. To empirically validate this claim, we examine a fundamental problem in applied macroeconomics: inflation forecasting using a broad set of macroeconomic indicators \citep[e.g.,][]{stock2006forecasting}. Our analysis employs a comprehensive dataset of 220+ quarterly macroeconomic variables from 1967:Q3 to 2022:Q2, sourced from FRED-QD and transformed following \cite{mccracken2020fred}. We evaluate forecasting performance for four key inflation measures—total CPI, core CPI, GDP deflator, and PCE deflator—across multiple horizons.

Our empirical results reaffirm that parsimonious specifications, particularly the local-level model of \cite{stock2007has}, remain strong benchmarks in inflation forecasting. However, our dynamic Bernoulli-Gaussian predictive regression demonstrates consistent and statistically significant outperformance across all competing methods incorporating macroeconomic predictors. This superior performance holds for both point and density forecasts and proves robust across different forecasting horizons.

Unlike conventional time-varying intercept specifications \`{a} la \cite{stock2007has}, our framework enables retrospective analysis of evolving relationships between macroeconomic variables and inflation measures. Analysis of posterior dynamic inclusion probabilities yields three principal findings: First, the set of predictors with meaningful explanatory power is strikingly sparse. Second, the composition of relevant predictors shows substantial time variation. Third, economic theories receive nuanced empirical support, with their validity contingent on specific periods.


This paper contributes to the growing literature on Bayesian variable selection in high dimensions  (e.g., \citealp{rockova_george.2014,huber2021inducing,ray2022variational,mogliani2024bayesian}). Our work extends these approaches to dynamic settings, complementing several existing frameworks, such as the dynamic variable selection approaches in \cite{koop_korobilis_2020,uribe2020dynamic,rockova_mcalinn_2021}, and (iii) dynamic shrinkage methods \citep[e.g.,][]{kowal2019dynamic}.

\section{Model specification and inference}	
\label{sec:model}

Let $y_t$ a scalar response at time $t$ and $\mathbf{x}_{t-1}=\left(x_{1t-1},\ldots,x_{pt-1}\right)^\prime$ a $p$-dimensional vector of known predictors. A predictive regression model can be specified as
\begin{align}\label{eq:tvp}
	y_t &= \sum_{j=1}^p\beta_{jt}x_{jt-1} + \varepsilon_t, \qquad\varepsilon_t\sim\mathsf{N}(0,e^{h_t}),\qquad t=1,\ldots,n,
\end{align}
where $\boldsymbol{\beta}_t=\left(\beta_{1t},\ldots,\beta_{pt}\right)^\prime$ is a vector of regression coefficients and $h_t=\log\sigma_t^2$ is a latent log-volatility process. We assume that predictors can dynamically enter or leave the regression model as time progresses according to
\begin{align}
\beta_{jt}& = b_{jt}\gamma_{jt},\qquad\text{where}\qquad b_{jt}=b_{jt-1}+v_{jt}\qquad   v_{jt}\sim \mathsf{N}\left(0,\eta_j^2\right),\label{eq:betas}   
\end{align}
with $b_{j0}\sim N\left(0,k_0\eta_j^2\right)$ the initial state, and $\gamma_{jt}\in\{0,1\}$ an indicator variable which identifies if the $j$-th predictor is included or not in the model at time $t$.

Leveraging the Markov property of the random walk, the vector $\mathbf{b}_j=\left(b_{j0},\ldots,b_{jn}\right)^\prime$ for $j=1,\ldots,p$ is a Gaussian Markov Random Field (GMRF) $\mathbf{b}_j\sim\mathsf{N}_{n+1}(\mathbf{0},\eta_j^2\mathbf{Q}^{-1})$. $\mathbf{Q}$ is a tridiagonal precision matrix with diagonal elements $q_{1,1}=1+1/k_0$, $q_{n+1,n+1}=1$, and $q_{l,l}=2$ for $l=2,\ldots,n$, and off-diagonal ones $q_{l,m}=-1$ if $|l-m|=1$ and $0$ elsewhere. The same logic applies to the log-volatility process $\mathbf{h}=\left(h_0,\ldots,h_n\right)^\prime$ with initial state $h_0\sim\mathsf{N}\left(0,k_0\nu^2\right)$, such that $h_t=h_{t-1}+e_t$ with $e_t\sim\mathsf{N}\left(0,\nu^2\right)$. Hence, $\mathbf{h}\sim\mathsf{N}_{n+1}(\mathbf{0},\nu^2\mathbf{Q}^{-1})$.

Eq.\eqref{eq:betas} implies that the time-varying regression coefficients $\left\{\beta_{jt}\right\}_{t=1}^n$ are the product of the latent stochastic process $\left\{b_{jt}\right\}_{t=1}^n$ and the indicator $\left\{\gamma_{jt}\right\}_{t=1}^n$. We assume a hierarchical specification for $\gamma_{jt}$ of the form $\gamma_{jt}|\omega_{jt}\sim\mathsf{Bern}(\mathrm{expit}(\omega_{jt}))$ for $j=1,\ldots,p$, where $\mathrm{expit}(\cdot)$ is the inverse of the logit function. The latent $\boldsymbol{\omega}_j=\left(\omega_{j0},\ldots,\omega_{jn}\right)^\prime$ is a GMRF $\boldsymbol{\omega}_j\sim\mathsf{N}_{n+1}(\mathbf{0},\xi_j^2\mathbf{Q}^{-1})$ and it drives the persistence of the inclusion probability $\mathbb{P}(\gamma_{jt}=1)$.\footnote{The marginal distribution for the vector $\boldsymbol{\gamma}_j=(\gamma_{j1},\ldots,\gamma_{jn})^\prime$ can be retrieved by integrating out $\boldsymbol{\omega}_j$ as,
\begin{equation}
	p(\gamma_{j1},\ldots,\gamma_{jn}) = \int p(\boldsymbol{\omega}_j)\prod_{t=1}^n p(\gamma_{jt}|\omega_{jt}) \,d\boldsymbol{\omega}_j,
\end{equation}
so that $\gamma_{j1},\ldots,\gamma_{jn}$ represent a set of autocorrelated latent states for each $j=1,\ldots,p$.} 

\subsection{Variational Bayes inference}
\label{sec:variational}

Variational inference requires to minimize a Kullback-Leibler divergence ($\mathit{KL}$) between the approximating $q(\boldsymbol{\vartheta})$ and the true posterior $p(\boldsymbol{\vartheta}|\mathbf{y})$ densities, \citep[e.g.,][]{Blei.2017}. \cite{ormerod_wand.2010} show that the problem of minimizing this $\mathit{KL}$ divergence can be equivalently stated as maximizing the variational lower bound (ELBO) denoted by $\underline{p}\left(\mathbf{y};q\right)$:
\begin{equation}
\label{eq:vb_elbo_optim}
q^*(\boldsymbol{\vartheta}) = \arg\max_{q(\boldsymbol{\vartheta}) \in \mathcal{Q}}\log\underline{p}\left(\mathbf{y};q\right)\qquad\text{with}\qquad
\underline{p}\left(\mathbf{y};q\right)=\int q(\boldsymbol{\vartheta}) \log\left\{\frac{p(\mathbf{y},\boldsymbol{\vartheta})}{q(\boldsymbol{\vartheta})}\right\}\,d\boldsymbol{\vartheta}
\end{equation}
where $q^*(\boldsymbol{\vartheta})\in\mathcal{Q}$ represents the optimal variational density and $\mathcal{Q}$ is a space of functions. The choice of the family of distributions $\mathcal{Q}$ is critical and leads to different algorithmic approaches. We consider a Mean-Field Variational Bayes (MFVB) approach which is based on a factorization of the variational density $q(\boldsymbol{\vartheta})=\prod_{i=1}^p q_i(\boldsymbol{\vartheta}_i)$ for a partition $\{ \boldsymbol{\vartheta}_1,\dots,\boldsymbol{\vartheta}_p\}$ of the parameter vector $\boldsymbol{\vartheta}$. This factorization implies that the optimal variational density of each component $q(\boldsymbol{\vartheta}_j)$ can be derived in closed form as
\begin{equation}
    q^\ast(\boldsymbol{\vartheta}_j) \propto \exp\left\{\mathbb{E}_{q(\boldsymbol{\vartheta}\setminus\boldsymbol{\vartheta}_j)}\Big[\log p(\mathbf{y},\boldsymbol{\vartheta})\Big] \right\}\qquad\text{with}\qquad
q(\boldsymbol{\vartheta}\setminus\boldsymbol{\vartheta}_j)=\prod_{\substack{i=1\\i\neq j}}^p q_i(\boldsymbol{\vartheta}_i)\label{eq:mfvb}
\end{equation}
where the expectation is taken with respect to the joint approximating density with the $j$-th element of the partition removed $q(\boldsymbol{\vartheta}\setminus\boldsymbol{\vartheta}_j)$. As a result, the joint density $q^*(\boldsymbol{\vartheta})$ can then be estimated based on a coordinate ascent variational inference (CAVI) algorithm. 

Eq.\eqref{eq:mfvb} shows that the factorization of $q(\boldsymbol{\vartheta})$ is crucial. Let $\boldsymbol{\vartheta}=(\mathbf{h}^\prime,\mathbf{b}^\prime,\boldsymbol{\gamma}^\prime,\boldsymbol{\omega}^\prime,\nu^{2},\boldsymbol{\eta}^{2\prime},\boldsymbol{\xi}^{2\prime})^\prime$ be the vector that collects the model parameters. We factorize $p(\mathbf{y},\boldsymbol{\vartheta})$ as:
\begin{align}\label{eq:joint_p}
	p(\mathbf{y},\boldsymbol{\vartheta}) &= p(\mathbf{y}|\boldsymbol{\vartheta})p(\mathbf{h})p(\nu^2)\prod_{j=1}^p p(\mathbf{b}_j|\eta^2_j)p(\boldsymbol{\omega}_j|\xi^2_j)p(\eta_j^2)p(\xi_j^2)\underbrace{\prod_{t=1}^n p(\gamma_{jt}|\omega_{jt})}_{p(\boldsymbol{\gamma}_{j}|\boldsymbol{\omega}_{j})}.
\end{align}
Two comments are in order. First, we exploit a Polya-Gamma representation \citep{polson2013bayesian} $p(\gamma_{jt}|\omega_{jt}) = \int_0^{+\infty}p(\gamma_{jt}|z_{jt},\omega_{jt})p(z_{jt}|\omega_{jt})\,dz_{jt}$, where $p(z_{jt})$ is the probability density function of a Polya-Gamma $\mathsf{PG}(1,0)$ random variable. This allows for computationally tractable updates, as it leads to a known form of the full conditionals. Second, the optimal variational density of the time-varying coefficients \( q(\mathbf{b}) \) presents a computational bottleneck in high dimensions; assuming that all predictors are correlated requires inversion of a \( p(n+1) \times p(n+1) \) matrix. Although this matrix is block-tridiagonal and can be handled with fast algebra, for large \( p \), the computation remains prohibitive.

To address this issue, we follow \cite{mogliani2024bayesian} and introduce the assumption that predictors may present a known group structure based on shared economic fundamentals, such that the joint density $q\left(\mathbf{b}\right)$ can be factorized as $q(\mathbf{b})=\prod_{k=1}^Kq(\mathbf{b}_k)$ with $K$ groups. This specification covers different cases; $K=1$ represents the most general case in which all regression coefficients are correlated, while $K=p$ implies independence between predictors. By leveraging Polya-Gamma augmentation and the a-priori group correlation structure of the predictors, we can factorize $q(\boldsymbol{\vartheta})$ as,
\begin{align}\label{eq:q_fact}
	q(\boldsymbol{\vartheta}) &= q(\mathbf{h})q(\nu^2)\prod_{k=1}^Kq(\mathbf{b}_k)\prod_{j=1}^p q(\boldsymbol{\omega}_j)q(\eta_j^2)q(\xi_j^2)\prod_{t=1}^n q(\gamma_{jt})q(z_{jt}),
\end{align}
where a characterization of the joint distributions $q(\mathbf{h})$, $q(\mathbf{b}_k)$, and $q(\boldsymbol{\omega}_j)$ is required to provide a global approximation for the latent states. 

\paragraph{Optimal variational densities.} We now discuss the optimal variational densities of $\mathbf{b}_k$, the variable selection indicators $\gamma_{jt}$, and the stochastic volatility $\mathbf{h}$. A detailed description of the remaining components in Eq.\eqref{eq:q_fact} and the full set of analytical derivations and proofs are available in the Supplementary Material. In what follows, we define as $\boldsymbol{\mu}_{q(\mathbf{a})}=\mathbb{E}_q[\mathbf{a}]$ and $\mathbf{\Sigma}_{q(\mathbf{a})}=\mathsf{Var}_q[\mathbf{a}]$ (or $\sigma^2_{q({a})}=\mathsf{Var}_q[{a}]$ if $a$ is univariate) the expected value and variance of the random variable $a$, with respect to density $q$, respectively, and with $\mathbf{a}^{-j}$ the vector $\mathbf{a}$ without its $j$-th element. Proposition \ref{prop:up_gamma_main} provides the optimal variational density $q^\ast(\gamma_{jt})$ for all variables $j=1,\ldots,p_k$ that belong to a group $k=1,\ldots,K$ of size $p_k$.

\begin{proposition}\label{prop:up_gamma_main}
The optimal variational density of $\gamma_{jt}$ is a Bernoulli\\ $q^\ast(\gamma_{jt})=\mathsf{Bern}(\mathrm{expit}(\omega_{q(\gamma_{jt})}))$, where $\mathrm{expit}(\cdot)$ is the inverse of the logit function and 
    \begin{align}
	\omega_{q(\gamma_{jt})}&=\mu_{q(\omega_{jt})}-\frac{1}{2}\mu_{q(1/\sigma_t^2)}(x^2_{jt-1}\mathbb{E}_q[b^2_{jt}]-2\mu_{q(b_{jt})} x_{jt-1}\mu_{q(\varepsilon_{-j,t})}) \nonumber\\
    &\qquad -\mu_{q(1/\sigma_t^2)}x_{jt-1}\mathbf{x}^{-j}_{kt-1}\mathrm{diag}(\boldsymbol{\mu}^{-j}_{q(\gamma_{kt})})[\boldsymbol{\Sigma}_{q(\mathbf{b}_{kt})}]_{-j,j},\label{eq:up_main_gamma}
	\end{align}
with $[\boldsymbol{\Sigma}_{q(\mathbf{b}_{kt})}]_{-j,j}$ the $j$-th column withouth the $j$-th element of the $t$-th diagonal block of $\boldsymbol{\Sigma}_{q(\mathbf{b}_{k})}$, and $\mu_{q(\varepsilon_{-j,t})}=y_t - \mathbf{x}^{-j\,\prime}_{k,t-1}\mathrm{diag}(\boldsymbol{\mu}_{q(\gamma^{-j}_{kt})}){\boldsymbol{\mu}}_{q(\mathbf{b}^{-j}_{kt})}- \sum_{m=1,m\neq k}^K \mathbf{x}_{m,t-1}^\prime\mathrm{diag}(\boldsymbol{\mu}_{q(\gamma_{mt})}){\boldsymbol{\mu}}_{q(\mathbf{b}_{mt})}$ with $\boldsymbol{\mu}_{q(\gamma_{kt})}$ being the collection of $\boldsymbol{\mu}_{q(\gamma_{jt})}$ for the predictors in group $k$.
\end{proposition}

We note that $\omega_{q(\gamma_{jt})}$ depends on the covariance $\boldsymbol{\Sigma}_{q(\mathbf{b}_{k})}$. As a result, the probability that a given predictor is selected depends on its correlation with the other predictors in the same group. If predictors are independent, the second line of Eq.\eqref{eq:up_main_gamma} would equal zero. The parameter $\mu_{q(1/\sigma_t^2)}$ is defined in Remark 1 in the Supplementary Material. Proposition \ref{prop:up_beta_main} shows that the mean and variance of the time-varying parameters $\mathbf{b}_k$ depend on the full trajectory $t=1,\ldots,n$ of the estimates $\boldsymbol{\mu}_{q(\gamma_{jt})}$ for all $j=1,\ldots,p_k$ in group $k$. 



\begin{proposition}\label{prop:up_beta_main}
The optimal variational density for $\mathbf{b}_k=\left(\mathbf{b}_{k0},\ldots,\mathbf{b}_{kn}\right)^\prime$, with $\mathbf{b}_{kt}=(b_{1t},\ldots,b_{p_k t})^\prime$, is a multivariate Gaussian  $q^\ast(\mathbf{b}_k)=\mathsf{N}_{p_k(n+1)}(\boldsymbol{\mu}_{q(\mathbf{b}_k)},\boldsymbol{\Sigma}_{q(\mathbf{b}_k)})$:
	\begin{align}
		\boldsymbol{\Sigma}_{q(\mathbf{b}_k)} &= (\mathbf{D}_k+\mathbf{Q}_k)^{-1},\qquad \boldsymbol{\mu}_{q(\mathbf{b}_k)}=\boldsymbol{\Sigma}_{q(\mathbf{b}_k)}{\mathbf{\Lambda}_k},
	\end{align}
where $\mathbf{D}_k$ is a block-diagonal matrix with $n+1$ blocks of size $p_k$ having generic block equal to $[\mathbf{D}_k]_t = \mu_{q(1/\sigma^2_t)}\mathbf{x}_{k,t-1}\mathbf{x}_{k,t-1}^\prime \odot \{\boldsymbol{\mu}_{q(\gamma_{kt})}\boldsymbol{\mu}_{q(\gamma_{kt})}^\prime+\mathrm{diag}(\boldsymbol{\mu}_{q(\gamma_{kt})}(1-\boldsymbol{\mu}_{q(\gamma_{kt})})\}$, $\mathbf{Q}_k$ is a tridiagonal block matrix $\mathbf{Q}_k=\mathbf{Q}\otimes\mathrm{diag}(\mu_{q(1/\eta_1^2)},\ldots,\mu_{q(1/\eta_{p_k}^2)})$, and $\mathbf{\Lambda}_k$ stacks $p_k$ dimensional vectors defined as $\boldsymbol{\lambda}_{kt}={\mu}_{q(1/\sigma_t^2)}\mathrm{diag}(\boldsymbol{\mu}_{q(\gamma_{kt})})\mathbf{x}_{k,t-1}(y_t - \sum_{m=1,m\neq k}^K \mathbf{x}_{m,t-1}^\prime\mathrm{diag}(\boldsymbol{\mu}_{q(\gamma_{mt})}){\boldsymbol{\mu}}_{q(\mathbf{b}_{mt})})$.
\end{proposition}

Recall from Eq.\eqref{eq:betas} that the dynamic of the regression coefficient $\beta_{jt}$ is a compounding process of the latent state $b_{jt}$ and the indicator variable $\gamma_{jt}$. Proposition \ref{prop:q_beta_tilde_main} builds upon Propositions \ref{prop:up_gamma_main} and \ref{prop:up_beta_main} and provides the optimal variational density for the time-varying regression coefficients for variables $j=1,\ldots,p_k$ in group $k=1,\ldots,K$.

\begin{proposition}\label{prop:q_beta_tilde_main}
    Let $q^\ast(\mathbf{b}_k)$ and $q^\ast(\gamma_{jt})$, for $j=1,\ldots,p_k$, be the optimal variational densities presented in Propositions \ref{prop:up_gamma_main} and \ref{prop:up_beta_main}. Define $\boldsymbol{\beta}_k=\mathbf{\Gamma}_k\mathbf{b}_k$ with $\mathbf{\Gamma}_k=\mathrm{diag}(\boldsymbol{\iota}_{p_k}^\prime,\boldsymbol{\gamma}_{k1}^\prime,\ldots,\boldsymbol{\gamma}_{kn}^\prime)$, where $\boldsymbol{\gamma}_{kt}=\left(\gamma_{1t},\ldots,\gamma_{p_kt}\right)^\prime$ and $\boldsymbol{\iota}_{p_k}$ is a $p_k$-dimensional vector of ones. The optimal variational density of $\boldsymbol{\beta}_k$ is defined as:
	\begin{equation}
q^\ast(\boldsymbol{\beta}_k)=\sum_{\mathbf{s}\in\mathcal{S}}\,w_s\,\mathsf{N}_{p_k(n+1)}(\mathbf{D}_{s}\boldsymbol{\mu}_{q(\mathbf{b}_k)},\mathbf{D}_{s}^{1/2}\mathbf{\Sigma}_{q(\mathbf{b}_k)}\mathbf{D}_{s}^{1/2}),\label{eq:beta}
	\end{equation}
	where $\mathcal{S}$ is a sequence of $\{0,1\}$ of length $p_k n$ with cardinality $|\mathcal{S}|=2^{p_k n}$, the diagonal matrix $\mathbf{D}_{s}=\mathrm{diag}(1,s_{11},\ldots,s_{p_k,n})$, and mixing weights:
	\begin{equation}
		w_s = \prod_{j=1}^{p_k}\prod_{t=1}^n\mu_{q(\gamma_{jt})}^{s_{jt}}(1-\mu_{q(\gamma_{jt})})^{1-s_{jt}},
	\end{equation}
	where $\mathbf{s}=(s_{11},\ldots,s_{jt},\ldots,s_{p_k,n})\in\mathcal{S}$ is an element in $\mathcal{S}$. The mean and variance can be computed analytically as:
	\begin{align}
		\boldsymbol{\mu}_{q(\beta_k)}&=\boldsymbol{\mu}_{q(\Gamma_k)}\boldsymbol{\mu}_{q(\mathbf{b}_k)},\\ \mathbf{\Sigma}_{q(\beta_k)}&=(\boldsymbol{\mu}_{q(\gamma_k)}\boldsymbol{\mu}_{q(\gamma_k)}^\prime+\mathbf{W}_{\mu_{q(\gamma_k)}})\odot\mathbf{\Sigma}_{q(\mathbf{b}_k)}+ \mathbf{W}_{\mu_{q(\gamma_k)}}\odot\boldsymbol{\mu}_{q(\mathbf{b}_k)}\boldsymbol{\mu}_{q(\mathbf{b}_k)}^\prime,
	\end{align}
	where $\mathbf{W}_{\mu_{q(\gamma_k)}}$ is a diagonal matrix with elements $\mu_{q(\gamma_{jt})}(1-\mu_{q(\gamma_{jt})})$, and $\boldsymbol{\mu}_{q(\gamma_k)}$ the collection of $\boldsymbol{\mu}_{q(\gamma_{jt})}$ for $j=1,\ldots,p_k$ and $t=1,\ldots,n$. 
\end{proposition}

To approximate the posterior of the stochastic volatility process $\mathbf{h}$, we leverage on a Gaussian variational approximation and assume that the corresponding variational density is $q\left(\mathbf{h}\right)=\mathsf{N}_{n+1}(\boldsymbol{\mu}_{q(h)},\boldsymbol{\Sigma}_{q(h)})$. Proposition \ref{prop:up_logsima} builds upon \cite{rhode_wand2016} to derive a convenient iterative updating scheme to estimate $\boldsymbol{\mu}_{q(h)},\boldsymbol{\Sigma}_{q(h)}$.

\begin{proposition}\label{prop:up_logsima}
	\it Let $\boldsymbol{\varepsilon}^2=\boldsymbol{\varepsilon}\odot\boldsymbol{\varepsilon}$ with components $[\boldsymbol{\varepsilon}^2]_t=(y_t-\boldsymbol{\beta}_t^\prime\mathbf{x}_t)^2$. Assuming a Gaussian approximation $\mathbf{h}\sim\mathsf{N}_{n+1}(\boldsymbol{\mu}_{q(h)},\boldsymbol{\Sigma}_{q(h)})$, an iterative optimization algorithm can be set as:
	\begin{align*}
		\boldsymbol{\Sigma}_{q(h)}^{new} &= \left[\nabla_{\boldsymbol{\mu}_{q(h)},\boldsymbol{\mu}_{q(h)}}^2 S(\boldsymbol{\mu}_{q(h)}^{old},\boldsymbol{\Sigma}_{q(h)}^{old})\right]^{-1},\qquad
		\boldsymbol{\mu}_{q(h)}^{new} = \boldsymbol{\mu}_{q(h)}^{old} + \boldsymbol{\Sigma}_{q(h)}^{new}\nabla_{\boldsymbol{\mu}_{q(h)}} S(\boldsymbol{\mu}_{q(h)}^{old},\boldsymbol{\Sigma}_{q(h)}^{old}),\vspace{-1em}
	\end{align*}\vspace{-1em}
	where 
	\begin{align*}
		\nabla_{\boldsymbol{\mu}_{q(h)}} S(\boldsymbol{\mu}_{q(h)}^{old},\boldsymbol{\Sigma}_{q(h)}^{old}) &= -\frac{\boldsymbol{\iota}_n}{2}+\frac{1}{2}\mathbb{E}_q(\boldsymbol{\varepsilon}^2)\odot\mathrm{e}^{-\boldsymbol{\mu}_{q(h)}^{old}+\boldsymbol{\sigma}^{2\,old}_{q(h)}/2} -\mu_{q(1/\nu^2)}\mathbf{Q}\boldsymbol{\mu}_{q(h)}^{old},\vspace{-1em}
	\end{align*}\vspace{-1em}
	and
	\begin{align*}
		\nabla_{\boldsymbol{\mu}_{q(h)},\boldsymbol{\mu}_{q(h)}}^2 S(\boldsymbol{\mu}_{q(h)}^{old},\boldsymbol{\Sigma}_{q(h)}^{old}) &= -\frac{1}{2}\mathsf{Diag}(\mathbb{E}_q(\boldsymbol{\varepsilon}^2)\odot\mathrm{e}^{-\boldsymbol{\mu}_{q(h)}^{old}+\boldsymbol{\sigma}^{2\,old}_{q(h)}/2}) -\mu_{q(1/\nu^2)}\mathbf{Q},
	\end{align*}
	denote the first and second derivative of $S(\boldsymbol{\mu}_{q(h)},\boldsymbol{\Sigma}_{q(h)})$ with respect to $\boldsymbol{\mu}_{q(h)}$ and evaluated at $(\boldsymbol{\mu}_{q(h)}^{old},\boldsymbol{\Sigma}_{q(h)}^{old})$, and  $\boldsymbol{\sigma}^2_{q(h)}=\mathrm{diag}(\boldsymbol{\Sigma}_{q(h)})$.  
\end{proposition}

Section B in Supplementary Material reports the optimal variational densities $q^\ast(z_{jt})$, $q^\ast(\boldsymbol{\omega}_j)$, $q^\ast(\eta_j^2)$, $q^\ast(\xi_j^2)$, and $q^\ast(\nu^2)$. For completeness, it also reports the optimal variational density for the homoskedastic case in which the prior for the variance of the residuals in Eq.\eqref{eq:tvp} is an inverse-gamma $\sigma^2\sim\mathsf{IG}(A_\sigma,B_\sigma)$. 


\paragraph{Smoothing the inclusion probabilities.}
\label{subsec:smoothing}

The optimal variational density in Proposition \ref{prop:up_gamma_main} shows that $q(\gamma_{jt})=\mathsf{Bern}(\mathrm{expit}(\omega_{q(\gamma_{jt})}))$, such that the time trajectory of the posterior inclusion probabilities can be obtained as the mean vector
$\mathbb{E}_{q}(\boldsymbol{\gamma}_j|\mathbf{y})=\mathrm{expit}(\boldsymbol{\omega}_{q(\gamma_{j})})$. This entirely data-driven approach could produce erratic posterior inclusion probabilities, especially with noisy dependent variables and predictors. The left panel of Figure \ref{fig:non_smooth} shows this case in point based on a simple simulation example. The inclusion probability $\mathbb{P}(\gamma_{jt}|\mathbf{y})$ can be quite noisy. This is practically inconvenient since $\beta_{jt}=b_{jt}\gamma_{jt}$ for $j=1,\ldots,p_k$, such that the dynamic of $\beta_{jt}$ inherits any eventual erratic behaviour of the inclusion probability.

\begin{figure}[ht]
	\centering
\hspace{-2em}\subfigure[Non-smooth $\boldsymbol{\mu}_{q(\beta)}$ and $\boldsymbol{\mu}_{q(\gamma)}$]{\includegraphics[width=.52\textwidth]{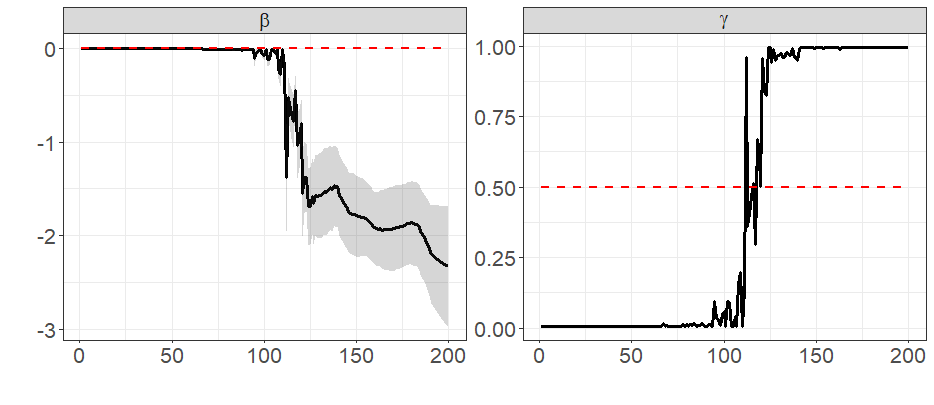}\label{fig:non_smooth}}
	\subfigure[Smooth $\boldsymbol{\mu}_{q(\beta)}$ and $\boldsymbol{\mu}_{q(\gamma)}$]{\includegraphics[width=.52\textwidth]{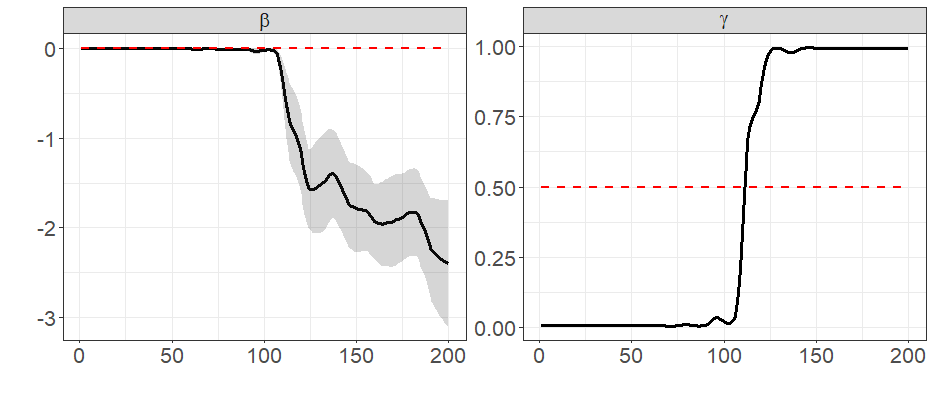}\label{fig:smooth}}
	\caption{\small Smoothing $\boldsymbol{\mu}_{q(\beta_j)}$ and the posterior probability of inclusion $\mathbb{P}(\gamma_{jt}=1|\mathbf{y})$ for $t=1,\ldots,n$.}\hspace{-3em}\label{fig:non_smooth_to_smooth}
\end{figure}\vspace{-1.5em}
To address this issue, we propose to regularise the estimates of the time trajectory of $\gamma_{jt}$ for $t=1,\ldots,n$ based on a parametric approximation. In particular, we define $\{q(\gamma_{jt})\}_{t=1}^n$ as the closest approximation $\{\widetilde{q}(\gamma_{jt})\}_{t=1}^n$ in terms of KL divergence which leads to a smooth sequence of inclusion probabilities, whose expected values coincide with the non-smooth estimates. Proposition \ref{prop:up_gamma_sm_main} formalises our smoothing approach.

\begin{proposition}\label{prop:up_gamma_sm_main}
    A smooth estimate for the trajectory of the inclusion probabilities can be achieved assuming $\widetilde{q}(\gamma_{jt})=\mathsf{Bern}(\mathrm{expit}(\mathbf{w}_t^\prime\mathbf{f}_j))$ such that the expectation of the joint vector $\boldsymbol{\gamma}_j=(\gamma_{j1},\ldots,\gamma_{jn})^\prime$ is equal to $\mathbb{E}_{\widetilde{q}}(\boldsymbol{\gamma}_j)=\mathbf{W}\mathbf{f}_j$, where $\mathbf{W}$ is a $n\times k$ B-spline basis matrix. The optimal value of $\mathbf{f}_j$ is the solution of the optimization problem $\widehat{\mathbf{f}}_j = \arg\max_{\mathbf{f}_j\in\mathbb{R}^k} \psi(\mathbf{f}_j)$ where $\psi(\mathbf{f}_j) = \sum_{t=1}^{n}\left[(\omega_{q(\gamma_{jt})}-\mathbf{w}^\prime_t\mathbf{f}_j)\mathrm{expit}(\mathbf{w}^\prime_t\mathbf{f}_j) +\log(1+\exp(\mathbf{w}^\prime_t\mathbf{f}_j))\right]$, such that the gradient is equal to
$\nabla_{\mathbf{f}} \psi(\mathbf{f}) = \sum_{t=1}^{n}\mathbf{w}_t(\omega_{q(\gamma_{jt})}-\mathbf{w}^\prime_t\mathbf{f})\frac{\mathrm{expit}(\mathbf{w}^\prime_t\mathbf{f})}{1+\exp(\mathbf{w}^\prime_t\mathbf{f})}$.
\end{proposition}

The proof is available in the Supplementary Material. 
Note that Proposition \ref{prop:up_gamma_sm_main} holds the same for correlated predictors since the original $q(\gamma_{jt})$ is obtained by taking into account the correlation structure of $\mathbf{b}_k$ (see Proposition \ref{prop:up_gamma_main}). Figure \ref{fig:smooth} shows that as a by-product of a smoother inclusion probability, the dynamic of the regression coefficient is also regularised. 

For the interested reader, Algorithm 2 reported in the Supplementary Material illustrates the complete iterative optimization scheme to perform the model estimation. The corresponding R code is available online at \url{https://github.com/whitenoise8/BGdvs}.



\paragraph{Convergence properties and computational efficiency}

In this section, we discuss the convergence properties of our variational Bayes algorithm, which can then be exploited to improve computational efficiency further. Consider the variables $j=1,\ldots,p_k$ in group $k$. Proposition \ref{theo:main} extends \cite{ormerod2017VS} to the case of dynamic variable selection in a time-varying parameter regression model.

\begin{proposition}\label{theo:main}
Assume that the inclusion probabilities for a given variable $j$ in group $k$ at the $i$-th iteration of the algorithm is such that $\max_{t\in\{1,\ldots,n\}}\mu^{(i)}_{q(\gamma_{jt})}=\epsilon$, for $\epsilon\approx 0$. Let the $\boldsymbol{\omega}_j$ covariance update be $\boldsymbol{\Sigma}^{(i)}_{q(\omega_j)}-\boldsymbol{\Sigma}^{(i-1)}_{q(\omega_j)}\geq 0$. Then we obtain:
	\begin{center}
		\begin{enumerate}
			\item $\mu^{(i+1)}_{q(\gamma_{jt})} = \mathrm{expit}\left\{\mu^{(i+1)}_{q(\omega_{jt})}-\frac{1}{2}\mu^{(i+1)}_{q(1/\sigma_t^2)}x_{jt-1}^2\mu_{q(1/\eta_j^2)}^{-1(i+1)}q_{tt}+O(\epsilon)\right\}$, $q_{tt}=[\mathbf{Q}^{-1}]_{tt}\geq 0$;
			\item $\mu_{q(\omega_{jt})}^{(i+1)} = -1/2\sum_{k=1}^n s_{tk}+O(\epsilon)$, $s_{tk}=[\boldsymbol{\Sigma}_{q(\omega_{j})}]_{tk}\geq 0$;
			\item $\mu_{q(\omega_{jt})}^{(i+1)}\leq\mu_{q(\omega_{jt})}^{(i)}$ decreases after each iteration.
		\end{enumerate}
	\end{center}  
\end{proposition}
The proof is available in Section C in the Supplementary Material. From Proposition \ref{theo:main} and Lemma 1 in \cite{ormerod2017VS}
two useful numerical results arise: first, for $\epsilon\approx 0$, the following approximation for the update of the inclusion probabilities holds:
	\begin{align}
		\mu^{(i+1)}_{q(\gamma_{jt})} \approx \mathrm{expit}\left\{\mu^{(i+1)}_{q(\omega_{jt})}-\frac{1}{2}\mu^{(i+1)}_{q(1/\sigma_t^2)}x_{jt-1}^2\mu_{q(1/\eta_j^2)}^{-1(i+1)}q_{tt}\right\}.\label{eq:sparsity}
	\end{align}
This implies that for $\text{M}^{(i)}=\arg\min_{t\in\{1,\ldots,n\}}\mu^{(i)}_{q(\omega_{jt})}\ll0$, after $i$ iterations, the sequence $\{\mu^{(i)}_{q(\gamma_{jt})}\}_{t=1}^n$ is indistinguishable from zero. As a result, the variational densities are concentrated to a point mass at zero $\forall t=1,\ldots,n$. Second, if $\mu^{(i)}_{q(\gamma_{jt})}\approx 0$, $\forall t$, then all successive updates $i+k\geq i$  imply $\mu^{(i+k)}_{q(\gamma_{jt})}\approx0$ since ${\mu}_{q(\omega_{jt})}^{(i+k)} \leq {\mu}_{q(\omega_{jt})}^{(i)}$ and therefore $\text{M}^{(i+k)}\leq \text{M}^{(i)}$.

Proposition \ref{theo:main} provides a useful ``dimension reduction'' strategy whereby we can remove the $j$-th variable from the set of predictors within the coordinate ascent algorithm. Such an exclusion strategy improves the computational efficiency when $p$ increases but the signal $\bar{p}\leq p$ remains constant, where $\bar{p}=\mathsf{card}(\mathcal{J})$ and $\mathcal{J}=\{j:\sum_{t=1}^n\gamma_{jt}>0\}$ collects the indexes of regression coefficients that are included in the model at least for one $t$. The extended algorithm, which includes Proposition \ref{theo:main}, is summarized in Algorithm 3 in the Supplementary Material. Section C.1 in the Supplementary Material provides a comprehensive visual discussion of the implications of Proposition \ref{theo:main} for a simple simulation set-up. The approximation is exact after less than 30 iterations, meaning the algorithm induces sparsity in the full trajectory of time-varying parameters as highlighted in Eq.\eqref{eq:sparsity}.

\paragraph{Prior elicitation.} We follow \cite{koop_korobilis_2020} and set $\mu_{q(\gamma_{jt})}^{(0)}=1/2$, $\forall t,j$. Next, we assume inverse-gamma priors for the variances parameters $\nu^2\sim\mathsf{IG}(A_\nu,B_\nu)$, $\eta_j^2\sim\mathsf{IG}(A_\eta,B_\eta)$, and $\xi_j^2\sim\mathsf{IG}(A_\xi,B_\xi)$.  We follow \cite{ormerod2017VS} and set $A_\sigma=B_\sigma=A_\eta=B_\eta=0.01$ to maintain non-informativeness. These are all standard conjugate priors. An important feature of our dynamic variable selection method is that the time-variation of $\gamma_{jt}$ depends on $\omega_{jt}$, where the dynamic of the latter is governed by the variance $\xi^2_j\sim\mathsf{IG}(A_\xi,B_\xi)$. 

To shed light on the impact of $A_\xi,B_\xi$, in the Supplementary Material we provide a deeper analysis on the estimated variational mean $\{\mu_{q(\omega_{jt})}\}_{t=1}^n$ and variance $\mathbf{\Sigma}_{q(\omega_j)}$, and the resulting $\{\mu_{q(\omega_{jt})}\}_{t=1}^n$, for three alternative limit cases. Specifically, we investigate a series of comparative statics based on the optimal variational densities. This provides a transparent strategy to select values for $A_\xi,B_\xi$.

\section{Simulation study}
\label{sec:sim}
We evaluate our approach through $M=100$ Monte Carlo replications, sampling from the data-generating process $y_t = \sum_{j=1}^p \beta_{jt}x_{jt-1} + \varepsilon_t$ with $\varepsilon_t \sim \mathsf{N}(0,0.25)$. Mirroring our empirical application, we set $n=200$ observations and consider $p\in\{10,50,100,200\}$ predictors. The coefficients follow three distinct patterns: $\beta_{1t}$ serves as a persistent time-varying intercept ($\gamma_{1t}=1$ for all $t$), $\beta_{2:7,t}$ exhibit various sparsity regimes, while $\beta_{8:p,t}$ remain insignificant throughout ($\gamma_{8:p,t}=0$ for all $t$). Complete specifications for $\beta_{1:7,t}$ appear in Section E of the Supplementary Material.

We examine six variants of our framework: The baseline independent predictor model ({\tt BG}); a version with smoothing on inclusion probabilities ({\tt BGS}); two alternative hyperprior specifications for $\xi^2_j\sim\mathsf{IG}(A_\xi,B_\xi)$ with $B_\xi=1$ ({\tt BGS(1)}) and $B_\xi=10$ ({\tt BGS(10)}); a constant volatility restriction ({\tt BGH}); a fully correlated predictor version ({\tt BG joint}); and a grouped structure variant ({\tt BG group}). These specifications differ primarily in their variational factorization, as detailed in Section \ref{sec:variational}.

For comparison, we benchmark against established variable selection methods including the static spike-and-slab {\tt EMVS} \citep{rockova_george.2014}, lasso {\tt LASSO} \citep{tibshirani1996regression}, dynamic model averaging {\tt DMA} \citep{raftery_etal2010}, fused lasso {\tt fLASSO} \citep{tibshirani2005sparsity}, and dynamic spike-and-slab variants {\tt DVS} \citep{koop_korobilis_2020} and {\tt DSS} \citep{rockova_mcalinn_2021}. We also consider non-sparse shrinkage approaches: the horseshoe prior {\tt BHS} \citep{carvalho_etal.2010} and dynamic shrinkage process {\tt DSP} \citep{kowal2019dynamic}. All static methods employ rolling-window estimation with 100 observations.

We assess variable selection accuracy using the F1 score, which balances precision and recall as: $\text{F1} = \frac{2 \cdot \text{TP}}{2 \cdot \text{TP} + \text{FP} + \text{FN}}$, where $\text{TP}$, $\text{FP}$, and $\text{FN}$ denote counts of true positives, false positives, and false negatives, respectively. The metric ranges from 0 (no true positives) to 1 (perfect classification with zero false positives/negatives), providing a rigorous evaluation metric \citep[e.g.,][]{bernardi2024variational}.

\paragraph{Independent predictors.} We first consider the scenario with independently generated predictors $\{x_{jt}\}_{j=1}^p \sim \mathsf{N}(0,1)$ at each time point $t=1,\ldots,n$. In this setting, we implement the group factorization for {\tt BG group} using randomly assigned groups of ten variables. Figure~\ref{fig:res_tvp_sim} displays the aggregate F1 scores for dynamically sparse parameters ($\beta_{2:7,t}$) with $p\in\{10,100\}$. Section E.1 in the Supplemental Material reports the results for $p\in\{50,200\}$.

\begin{figure}[!ht]
\centering
\subfigure[F1-score ($p=10$)]{\includegraphics[width=0.48\textwidth]{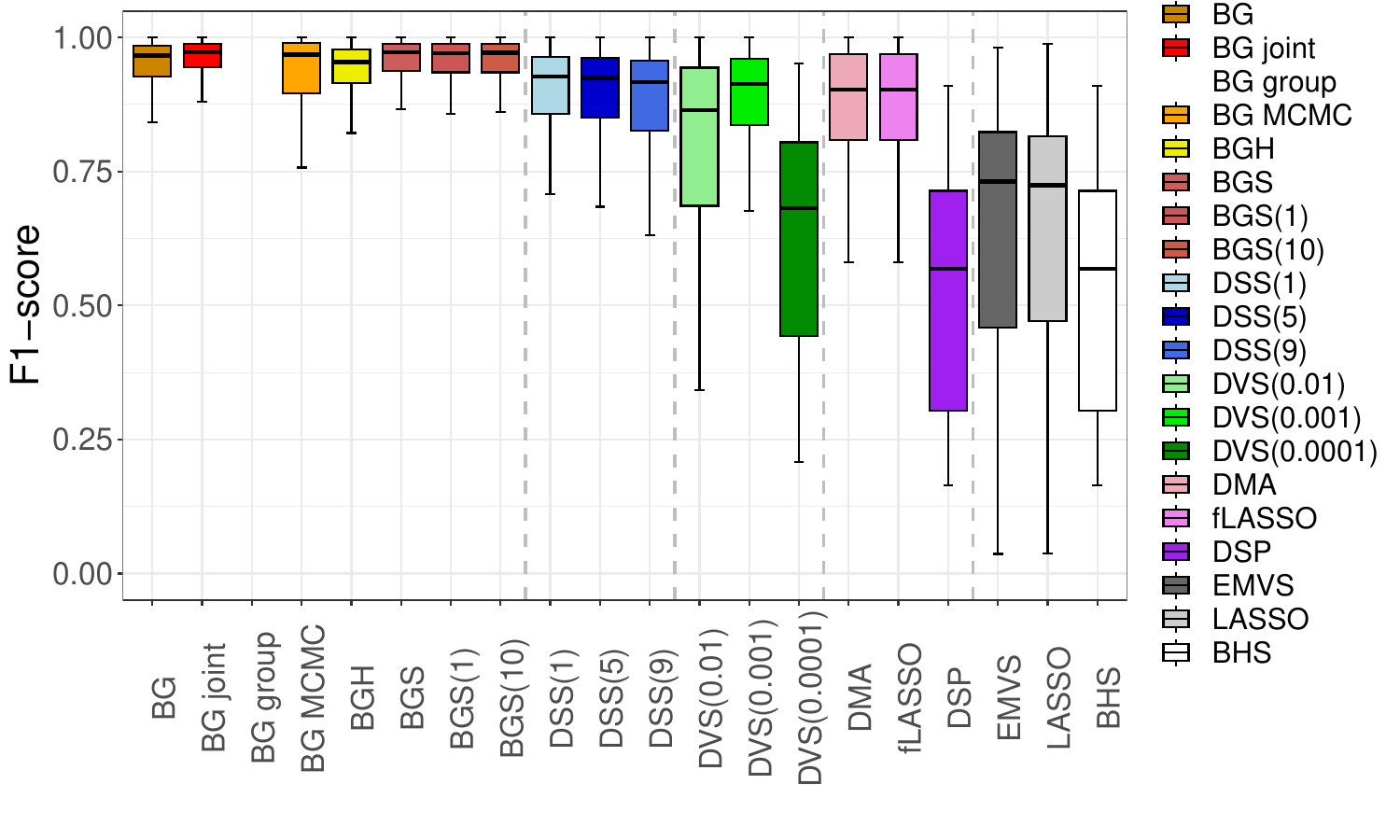}\label{fig:f1_tvp_10}}
\subfigure[F1-score ($p=100$)]{\includegraphics[width=0.48\textwidth]{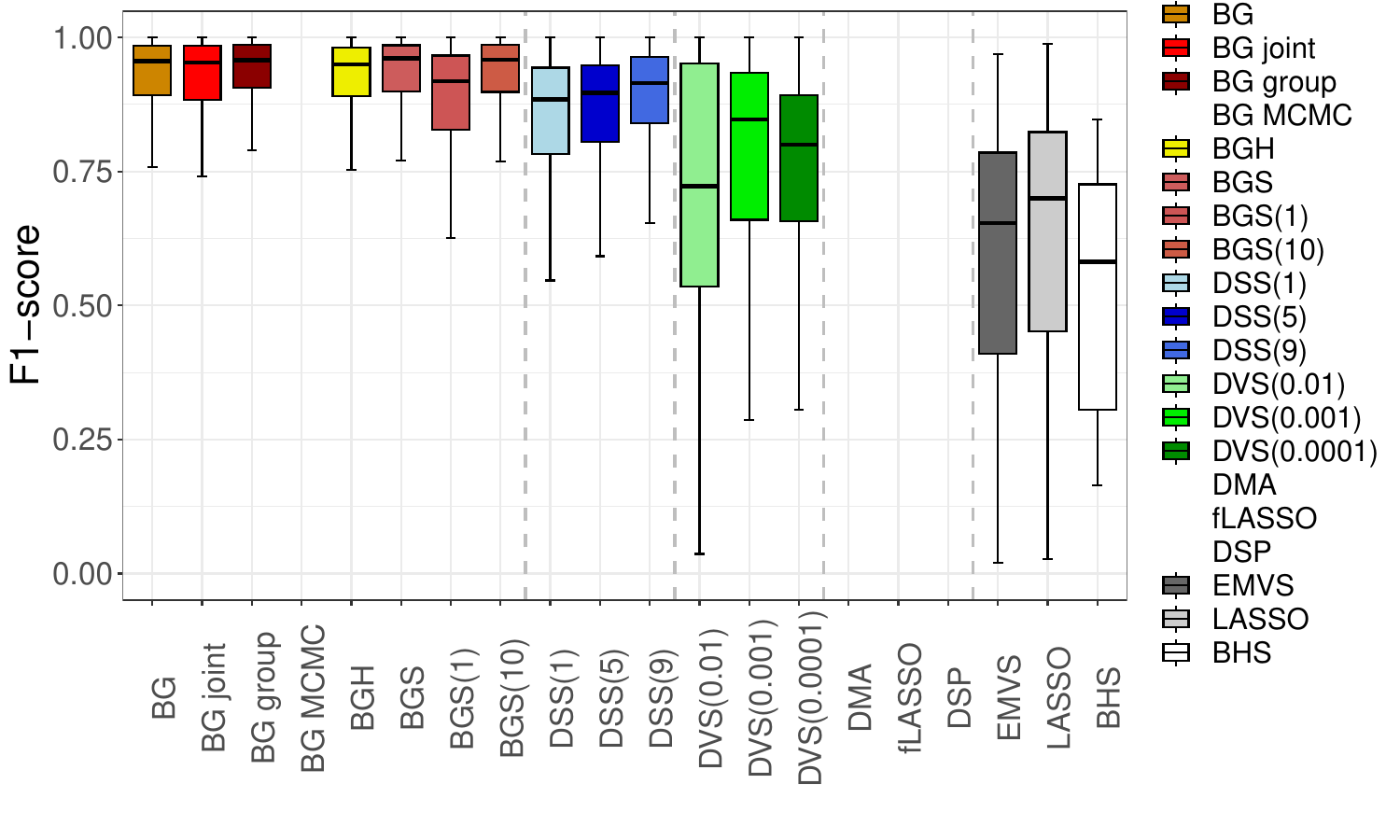}\label{fig:f1_tvp_100}}
\caption{Aggregate F1 scores for parameters $\beta_{2:7,t}$ exhibiting dynamic sparsity patterns. Results shown for independent predictors across dimensions $p=10,100$.}\label{fig:res_tvp_sim}
\end{figure}

The results reveal three key findings. First, static methods ({\tt LASSO}, {\tt EMVS}, {\tt fLASSO}) prove inadequate for time-varying sparsity. Second, the dynamic spike-and-slab prior \citep{koop_korobilis_2020} shows significant sensitivity to the variance ratio $\underline{c}\in\{0.01,0.001,0.0001\}$. In contrast, \cite{rockova_mcalinn_2021}'s method remains robust to varying marginal importance weights $\Theta\in\{0.1,0.5,0.9\}$, though both {\tt DVS} and {\tt DSS} degrade with increasing $p$. Third, while dynamic approaches {\tt DMA} and {\tt DSP} match newer methods ({\tt DSS}, {\tt DVS}) in low dimensions ($p\leq20$), they can no longer be estimated for $p>20$. 

Our variational Bayes approach outperforms all competitors, with advantages growing in higher dimensions. As expected in this uncorrelated setting, {\tt BG group} performs similarly to {\tt BG}.\footnote{For $p=10$, {\tt BG joint} and {\tt BG group} coincide; we report only {\tt BG group}.} Additional results in Section E.1 in the Supplementary Material confirm all methods handle consistently (non-)zero parameters adequately, highlighting our framework's superior detection of complex time-varying sparsity patterns.


\paragraph{Correlated predictors.} We now examine a scenario where predictors follow $\mathbf{x}_t \sim \mathsf{N}_p(\mathbf{0}_p,\mathbf{\Sigma}_x)$, with $\mathbf{\Sigma}_x$ estimated from $p\in\{10,50,100,200\}$ randomly selected macroeconomic series in our empirical application (Section~\ref{sec:appl}). This design replicates the correlation structures of the used macroeconomic data. For {\tt BG group}, we maintain random assignment to groups of ten variables, deliberately avoiding correlation-informed grouping to test robustness. Figure~\ref{fig:res_sim_corr} reports the results for $p\in\{10,100\}$ while the results for $p\in\{50,200\}$ are reported in Section E.1 in the Supplemental Material. 

\begin{figure}[!ht]
\subfigure[F1-score for $p=10$]{\includegraphics[width=.48\textwidth]{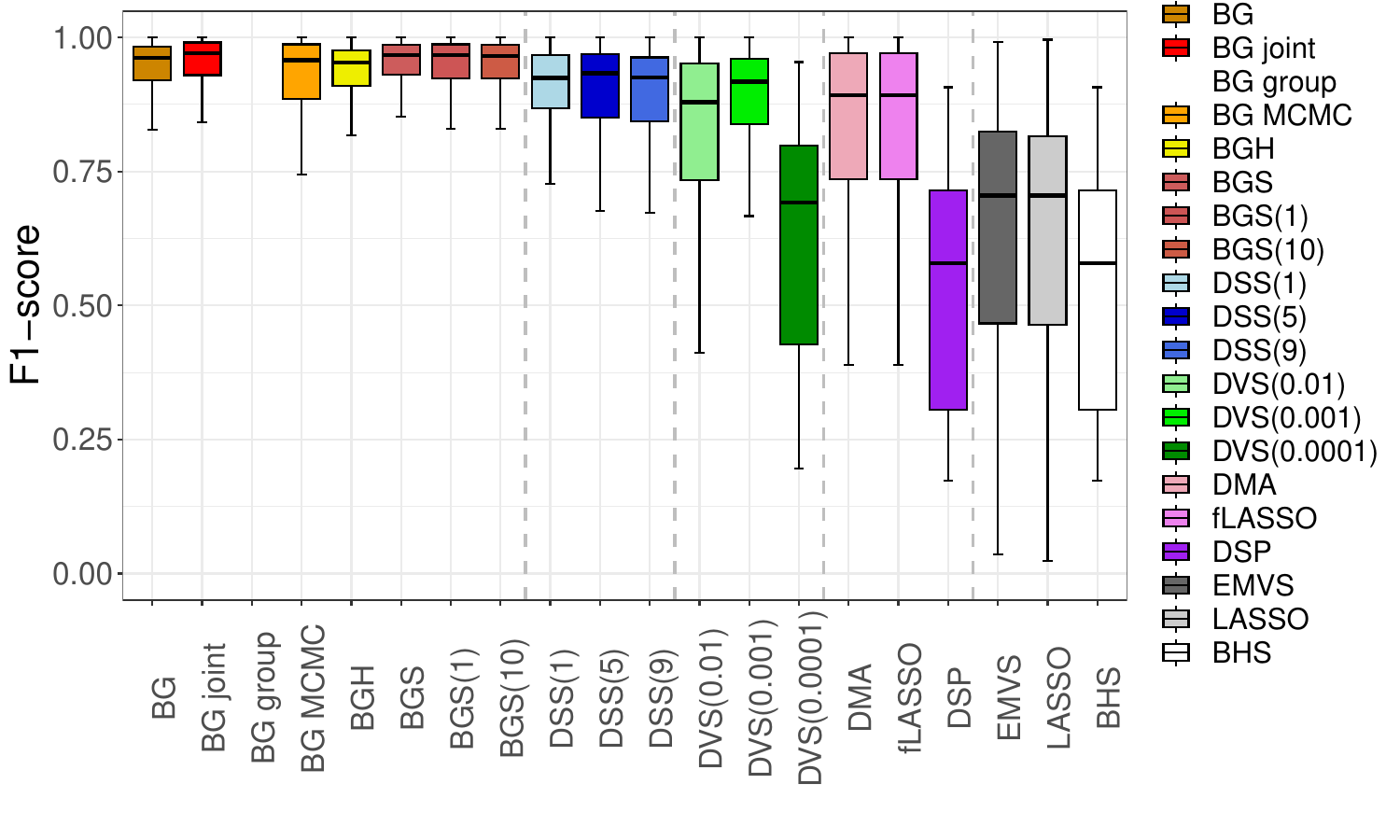}}
\subfigure[F1-score for $p=100$]{\includegraphics[width=.48\textwidth]{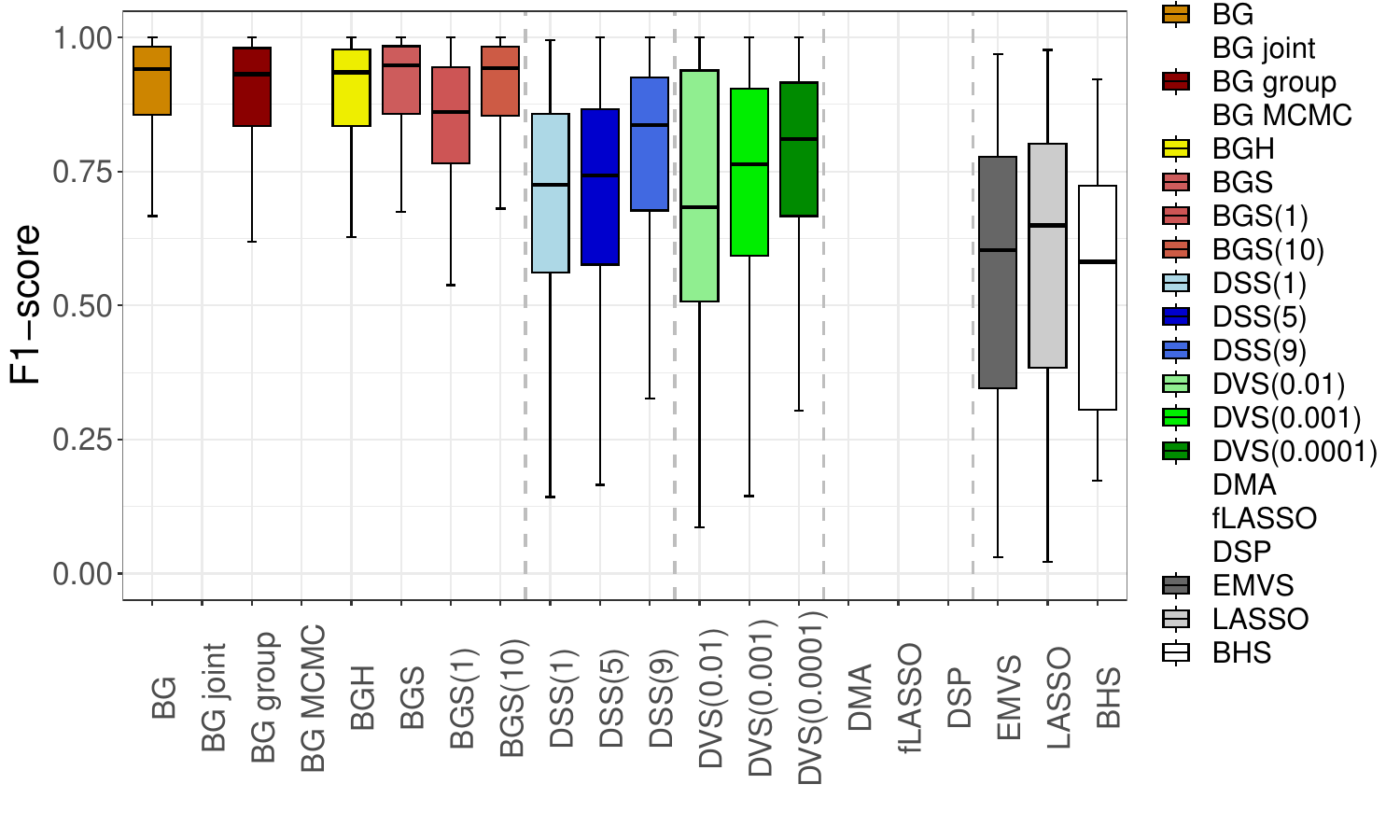}}
\caption{\small Aggregate F1 score for parameters $\beta_{2:7,t}$ exhibiting dynamic sparsity patterns over the sample period with correlated predictors. Results are reported for dimensions $p=10,100$.}\label{fig:res_sim_corr}
\end{figure}

The results reveal two key insights: First, correlation-induced selection bias reduces F1 scores compared to the independent case, with deterioration accelerating at $p>50$. Second, {\tt BG group} maintains comparable performance to {\tt BG} despite its agnostic grouping, suggesting robustness to imperfect group specifications. These results align with theoretical expectations about correlated covariates while demonstrating our method's practical applicability to macroeconomic data structures. 

\paragraph{Computational efficiency.} We assess computational efficiency for $p\in\{5,10,20,50,100,200\}$ with $n=200$ observations. Figure~\ref{fig:time} compares log-scale computation times (minutes) for key methods: {\tt BG}, {\tt BG group}, {\tt BG joint}, {\tt BG MCMC}, {\tt DSS(1)}, and {\tt DVS(0.01)}. Table E.2 in the Supplementary Material provides computational times for all remaining models.\footnote{We reimplemented all methods in {\bf Rcpp} when they were originally coded in different languages, and we executed all implementations on the same 2.5 GHz Intel Xeon W-2175 processor with 32GB of RAM.}

\begin{figure}[!ht]
\centering
\subfigure[Independent predictors]{\includegraphics[width=0.48\textwidth]{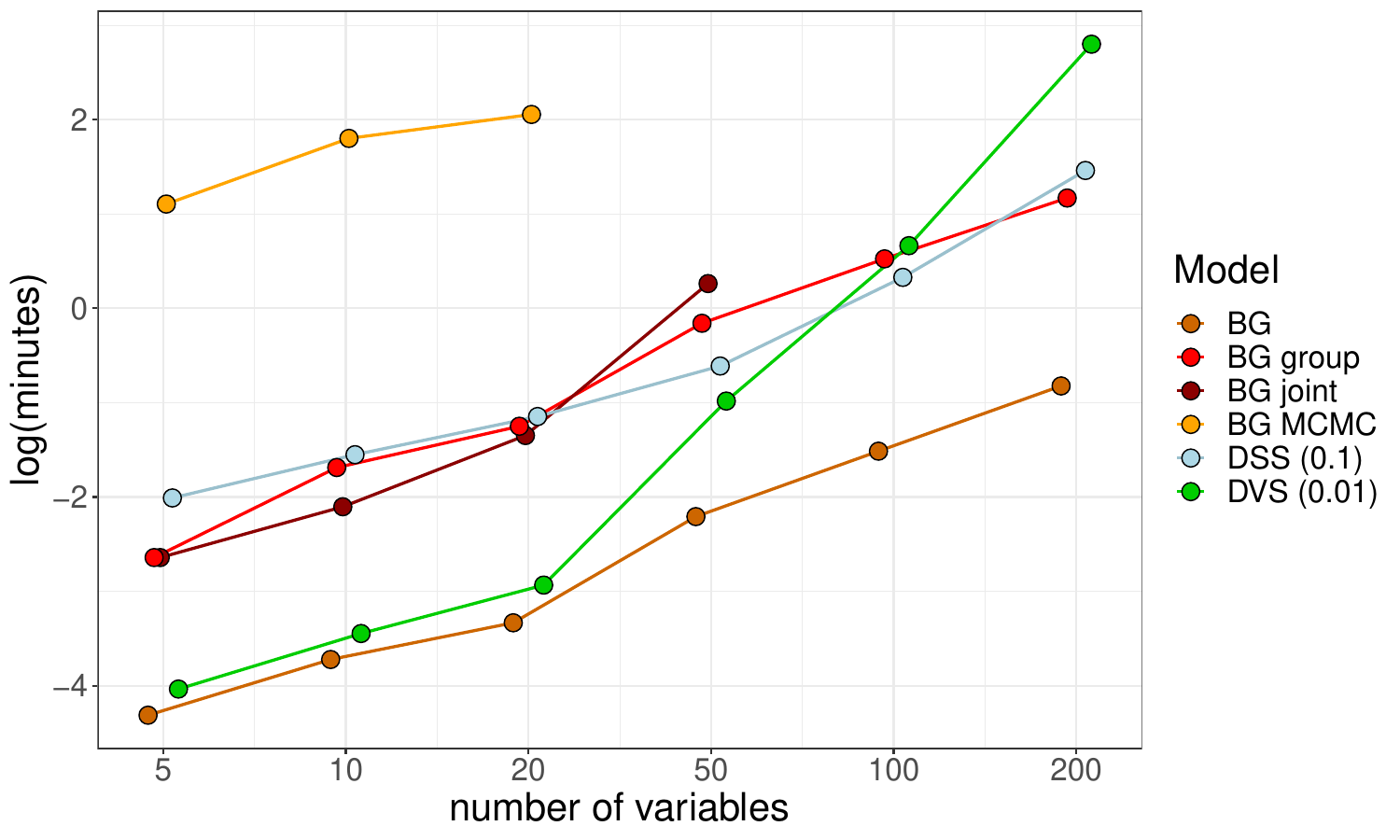}\label{fig:time_ind}}
\subfigure[Correlated predictors]{\includegraphics[width=0.48\textwidth]{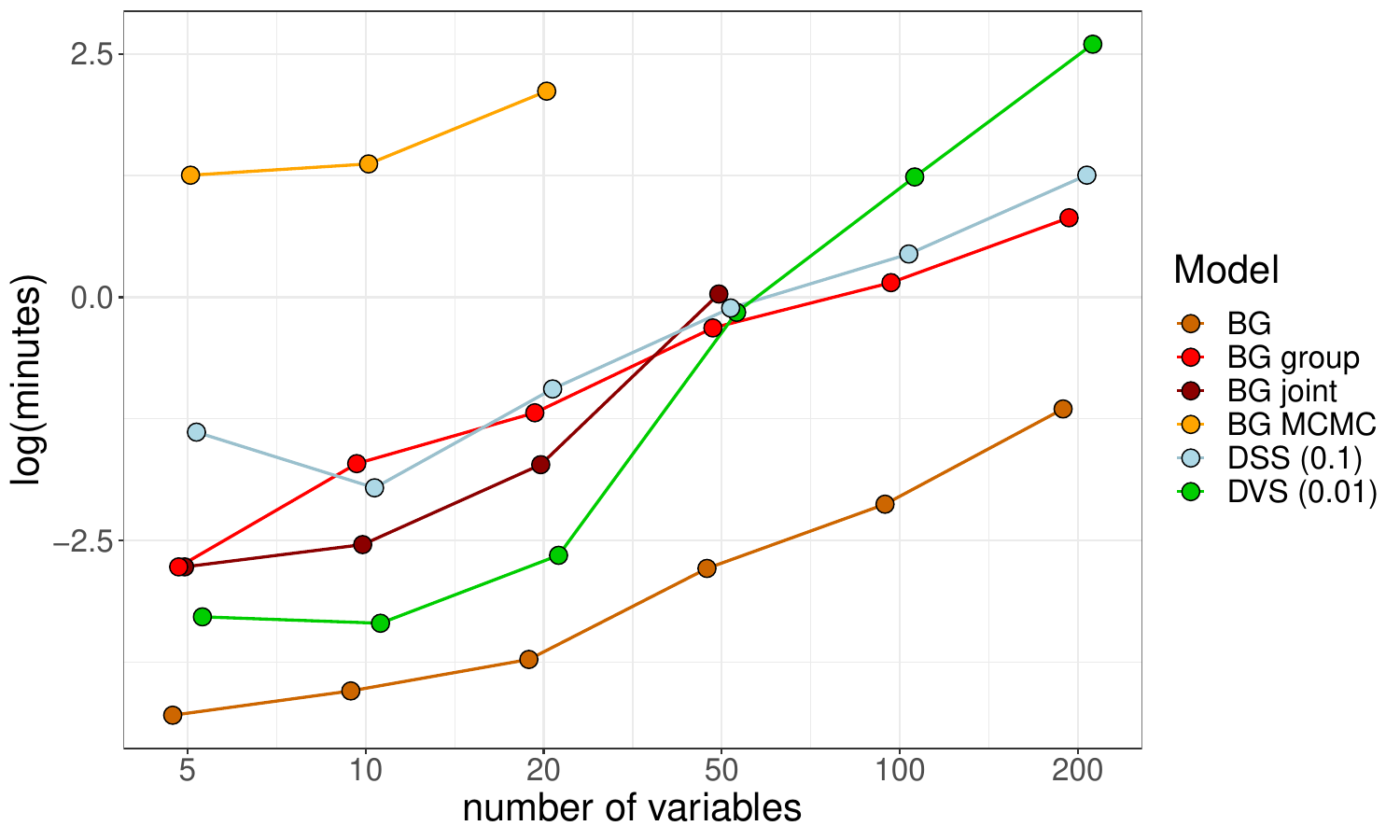}\label{fig:time_corr}}
\caption{Computation time (log minutes) for dynamic variable selection methods. Dashed lines indicate practical feasibility thresholds.}\label{fig:time}
\end{figure}

Three key findings emerge: First, while {\tt DVS} shows efficiency for $p<50$ due to its variational implementation, our {\tt BG} method achieves 5--10$\times$ speed gains at higher dimensions with independent predictors, and 100$\times$ gains with correlated predictors. Second, the fully correlated {\tt BG joint} becomes computationally prohibitive for $p>50$, whereas the group-structured {\tt BG group} maintains feasible computation costs, outperforming {\tt DVS} at $p=200$. Third, MCMC estimation requires 2.5--3 orders of magnitude more time than variational Bayes, becoming impractical for $p>20$. The results demonstrate our method's superior scalability, particularly in data-rich environments where correlation structures and high dimensionality pose dual challenges. 

\paragraph{Variational Bayes vs. MCMC.} 
The computational efficiency of variational Bayes (VB) comes with inherent tradeoffs in posterior approximation quality that merit careful consideration. Where MCMC samples from the exact joint posterior $p(\boldsymbol{\vartheta}|\mathbf{y})$, VB approximates it through an optimized factorized density $q(\boldsymbol{\vartheta})$. This formulation, particularly under mean-field assumptions, systematically produces under-dispersed posterior approximations \citep{Blei.2017}, a well-documented limitation of variational methods.

We evaluate approximation accuracy, we follow \cite{wand2011mean} and use an $L_1$-based metric $\mathcal{ACC}(\boldsymbol{\vartheta}) = [1 - 0.5 \int |q(\boldsymbol{\vartheta}) - p(\boldsymbol{\vartheta}|\mathbf{y})| d\boldsymbol{\vartheta}] \times100\%$, where $p(\boldsymbol{\vartheta}|\mathbf{y})$ represents the ground-truth MCMC posterior. Table \ref{tab:vs MCMC} reports the accuracy of $q^\ast\left(\beta_{jt}\right)$ with respect to its MCMC counterpart $p\left(\beta_{jt}|\mathbf{y}\right)$, aggregated by sparsity pattern type: $\beta_{1,t}$ (intercept), $\beta_{2,t}$ (one switch), $\beta_{3,t}$ (two switches), and $\beta_{4,t}$ (short-lived significance). The results show that VB achieves 84-92\% approximation accuracy under full factorization ($K=p$), with the joint specification ($K=1$) improving to 89-96\% across different coefficient types. This accuracy gradient confirms that preserving dependency structures in the variational family enhances posterior approximation quality.

\begin{table}[!ht]
\centering
\renewcommand{\arraystretch}{0.6}
\resizebox{1\textwidth}{!}{
    \begin{tabular}{lccccccccccccc}
    \toprule 
          & \multicolumn{6}{c}{Independent variables}               &       & \multicolumn{6}{c}{Correlated variables} \\
          \cmidrule{2-7}\cmidrule{9-14}
          & \multicolumn{2}{l}{$p=5$} & \multicolumn{2}{l}{$p=10$} & \multicolumn{2}{l}{$p=20$} &       & \multicolumn{2}{l}{$p=5$} & \multicolumn{2}{l}{$p=10$} & \multicolumn{2}{l}{$p=20$} \\
          \midrule 
          Parameter & {\tt BG} & {\tt BG joint} & {\tt BG} & {\tt BG joint}  & {\tt BG} & {\tt BG joint}  &       & {\tt BG} & {\tt BG joint}  & {\tt BG} & {\tt BG joint}  & {\tt BG} & {\tt BG joint}  \\
          \midrule 
    $\beta_{1,t}$ & 88.15 & 91.84 & 88.96 & 91.94 & 88.99 & 91.53 &       & 87.17 & 91.4  & 87.53 & 91.13 & 87.78 & 90.4 \\
    $\beta_{2,t}$ & 88.68 & 92.53 & 90.12 & 93.25 & 89.81 & 92.65 &       & 87.12 & 91.67 & 87.84 & 92.08 & 87.16 & 90.56 \\
    $\beta_{3,t}$ & 84.74 & 89.01 & 85.73 & 89.51 & 84.93 & 87.83 &       & 83.78 & 88.23 & 84.34 & 88.61 & 84.11 & 87.23 \\
    $\beta_{4,t}$ & 94.67 & 94.84 & 95.05 & 95.19 & 95.44 & 95.64 &       & 94.43 & 94.59 & 95.05 & 95.12 & 93.78 & 93.93 \\
    \bottomrule 
    \end{tabular}}
  \caption{\small The table shows the accuracy of $q^\ast\left(\beta_{jt}\right)$ with respect to its MCMC $p\left(\beta_{jt}|\mathbf{y}\right)$ counterpart. This is aggregated based on the type of sparsity dynamic; that is, $\beta_{1,t}$ (intercept), $\beta_{2,t}$ (one switch), $\beta_{3,t}$ (two switches), and $\beta_{4,t}$ (short-lived significance).}
    \label{tab:vs MCMC}%
\end{table}%

The practical consequences of these approximations become evident in Figure~\ref{fig:acc_tvp}, which compares posterior densities for a transitioning coefficient $\beta_{2t}$. The factorized VB approximation ($K=p$) produces markedly sharper transitions between inclusion/exclusion states while underestimating posterior variance during regime switches. In contrast, the joint specification ($K=1$) better preserves the uncertainty characteristics of the MCMC ground truth. This variance underestimation stems directly from Proposition~\ref{prop:q_beta_tilde_main}, where the factorized approximation tends to collapse toward a single mixture component.

\begin{figure}[h!]
\centering
\subfigure[Factorized ($K=p$)]{\includegraphics[width=0.45\textwidth]{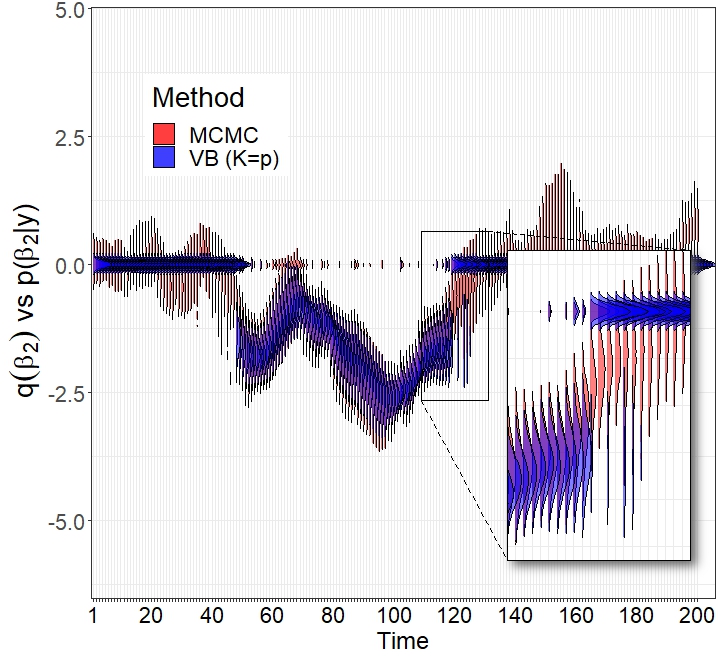}}
\subfigure[Joint ($K=1$)]{\includegraphics[width=0.45\textwidth]{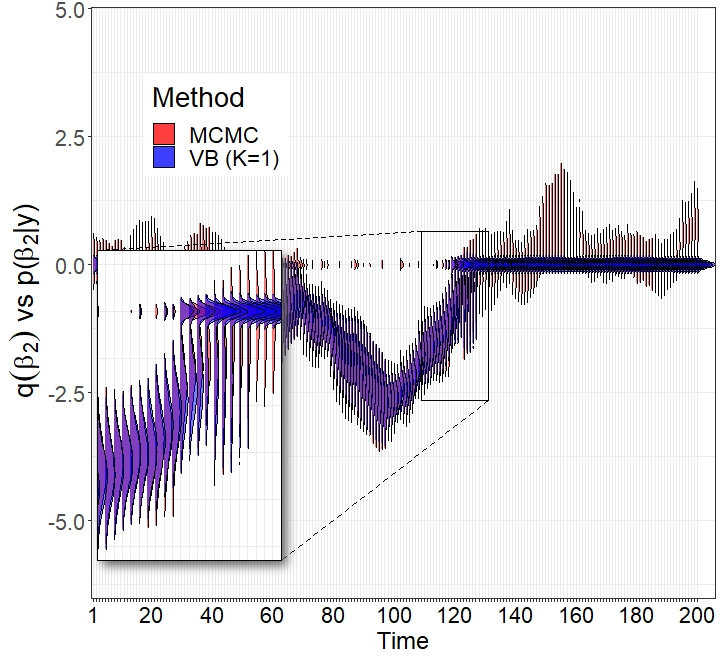}}
\caption{Posterior comparison (VB: blue, MCMC: red) for $\beta_{2t}$ during regime transitions, demonstrating VB's variance underestimation in (a) and improved uncertainty characterization in (b).}\label{fig:acc_tvp}
\end{figure}

Notably, these theoretical limitations have minimal operational impact on variable selection performance. The VB approximation preserves decision-critical information—particularly the inclusion probabilities $\mathbb{P}(\gamma_{jt}=1)$—explaining its comparable selection accuracy to MCMC in Figures~\ref{fig:res_tvp_sim}--\ref{fig:res_sim_corr}. This practical robustness, combined with VB's orders-of-magnitude speed advantage (see Figure \ref{fig:time}), establishes its clear superiority for high-dimensional applications. The method's scalability enables implementations where MCMC proves computationally prohibitive, particularly in recursive forecasting contexts requiring frequent model re-estimation. To summarize, while theoretical accuracy tradeoffs exist, VB's preserved decision quality and computational tractability make it the preferred approach for variable selection in data-rich environments.

\section{Empirical analysis of inflation predictability}
\label{sec:appl}

We assess the empirical performance of our dynamic variable selection approach through an inflation forecasting application, using a comprehensive set of macroeconomic predictors \citep{stock2006forecasting}. This context is particularly appropriate given the well-documented parameter instability in inflation dynamics, especially for out-of-sample forecasting \citep{huber2021inducing}.

Our dataset comprises 229 quarterly macroeconomic variables from the FRED-QD database \citep{mccracken2020fred}, spanning 1967:Q3 to 2022:Q2. Following standard practice, we transform all predictors to ensure stationarity. The variables are organized into 14 economically meaningful categories: (1) National Income and Product Accounts, (2) Industrial Production, (3) Employment and Unemployment, (4) Housing, (5) Inventories, Orders and Sales, (6) Prices, (7) Earnings and Productivity, (8) Interest Rates, (9) Money and Credit, (10) Household Balance Sheets, (11) Exchange Rates, (12) Other Indicators, (13) Stock Markets, and (14) Non-Household Balance Sheets. We supplement these with two lags of each response variable. The correlation structure of these predictors is analyzed in Section F.1 of the Supplementary Material. The categories inform our variational factorization from Equation~\eqref{eq:q_fact}, allowing us to incorporate economic structure into the estimation while maintaining computational efficiency. 

We examine four distinct inflation measures as response variables: total Consumer Price Index (CPIAUCSL), core Consumer Price Index (CPILFESL), GDP deflator (GDPCTPI), and Personal Consumption Expenditures deflator (PCECTPI). The parenthetical codes correspond to the original variable identifiers in the FRED-QD database. For each price series $P_t$, we compute $h$-quarter ahead annualized inflation rates as $y_{t+h} = (400/h)\ln(P_t/P_{t-1})$.

\paragraph{Out-of-sample forecasting performance.}We evaluate $h$-quarter ahead point and density forecasts using four specifications of our approach: the grouped dynamic Bernoulli-Gaussian model ({\tt BG group}), baseline independent coefficients ({\tt BG}), constant volatility variant ({\tt BGH}), and smoothed inclusion probability specification ({\tt BGS}). All models employ uninformative hyperparameters ($A_\nu=0.01,B_\nu=0.01$; $A_\eta=0.01,B_\eta=0.01$; $A_\xi=2,B_\xi=5$) following sensitivity analyses in Supplementary Material.

We compare against three classes of alternative approaches. Benchmark models include the local-level specification {\tt TVI} \citep{stock2007has}, standard {\tt AR(2)}, and time-varying AR(2) ({\tt TVAR(2)}) \citep{koop_korobilis_2020}. Both {\tt TVI} and {\tt TVAR(2)} account for stochastic volatility. Static selection methods comprise the horseshoe prior {\tt BHS}, spike-and-slab {\tt EMVS} \citep{rockova_george.2014}, Dirac specification {\tt GLP} \citep{Giannone:Lenza:Primiceri:2021}, and five-factor regression {\tt F5}. Dynamic alternatives consist of {\tt DVS} \citep{koop_korobilis_2020} with $\underline{c}=0.0001$ -- which simulation shows works comparatively well (see Figure \ref{fig:res_tvp_sim}) -- and {\tt DSS} \citep{rockova_mcalinn_2021} with $\Theta\in\{0.5,0.9\}$.

\begin{table}[!ht]
\centering
\renewcommand{\arraystretch}{0.6}
\resizebox{1\textwidth}{!}{
   \begin{tabular}{clrrrrrrrrrrrrrrrr}
   \toprule 
   & & \multicolumn{3}{c}{Benchmarks} & & \multicolumn{4}{c}{Static selection (rolling)} & & \multicolumn{7}{c}{Dynamic selection}\\
   \cmidrule{3-5}\cmidrule{7-10}\cmidrule{12-18}
    \multicolumn{1}{l}{Horizon} & Inflation & \multicolumn{1}{l}{{\tt TVI}} & \multicolumn{1}{l}{{\tt AR(2)}} & \multicolumn{1}{l}{{\tt TVAR(2)}} &       & \multicolumn{1}{l}{{\tt PCA(5)}} & \multicolumn{1}{l}{{\tt HS}} & \multicolumn{1}{l}{{\tt EMVS}} & \multicolumn{1}{l}{{\tt GLP}} &       & \multicolumn{1}{l}{{\tt BGS}} & \multicolumn{1}{l}{{\tt BG group}} & \multicolumn{1}{l}{{\tt BGH}} & \multicolumn{1}{l}{{\tt BG}} &     \multicolumn{1}{l}{{\tt DSS(5)}} & \multicolumn{1}{l}{{\tt DSS(9)}} & \multicolumn{1}{l}{{\tt DVS}} \\
    \midrule
  1     & CPIAUCSL & \cellcolor[rgb]{ .988,  .988,  1}2.38 & \cellcolor[rgb]{ .988,  .957,  .965}2.54 & \cellcolor[rgb]{ .988,  .969,  .98}2.48 &       & \cellcolor[rgb]{ .984,  .769,  .78}3.38 & \cellcolor[rgb]{ .984,  .788,  .8}3.29 & \cellcolor[rgb]{ .976,  .482,  .49}4.67 & \cellcolor[rgb]{ .984,  .824,  .835}3.13 &       & \cellcolor[rgb]{ .988,  .965,  .976}2.49 & \cellcolor[rgb]{ .988,  .98,  .992}2.43 & \cellcolor[rgb]{ .988,  .961,  .973}2.51 & \cellcolor[rgb]{ .988,  .91,  .918}2.75 & \cellcolor[rgb]{ .984,  .82,  .831}3.15 & \cellcolor[rgb]{ .984,  .82,  .831}3.15 & \cellcolor[rgb]{ .973,  .412,  .42}4.99 \\
    1     & CPILFESL & \cellcolor[rgb]{ .988,  .953,  .965}1.08 & \cellcolor[rgb]{ .988,  .973,  .984}1.02 & \cellcolor[rgb]{ .988,  .953,  .965}1.08 &       & \cellcolor[rgb]{ .973,  .412,  .42}2.75 & \cellcolor[rgb]{ .988,  .894,  .906}1.26 & \cellcolor[rgb]{ .988,  .902,  .914}1.23 & \cellcolor[rgb]{ .98,  .682,  .69}1.92 &       & \cellcolor[rgb]{ .988,  .988,  1}0.96 & \cellcolor[rgb]{ .988,  .988,  1}0.97 & \cellcolor[rgb]{ .988,  .98,  .992}0.99 & \cellcolor[rgb]{ .988,  .973,  .984}1.02 & \cellcolor[rgb]{ .98,  .604,  .612}2.16 & \cellcolor[rgb]{ .98,  .631,  .639}2.07 & \cellcolor[rgb]{ .988,  .91,  .922}1.21 \\
    1     & GDPCTPI & \cellcolor[rgb]{ .988,  .922,  .933}1.13 & \cellcolor[rgb]{ .988,  .965,  .976}1.06 & \cellcolor[rgb]{ .988,  .929,  .941}1.12 &       & \cellcolor[rgb]{ .976,  .475,  .482}1.91 & \cellcolor[rgb]{ .98,  .675,  .686}1.56 & \cellcolor[rgb]{ .984,  .835,  .843}1.29 & \cellcolor[rgb]{ .984,  .722,  .733}1.48 &       & \cellcolor[rgb]{ .988,  .984,  .996}1.03 & \cellcolor[rgb]{ .988,  .988,  1}1.01 & \cellcolor[rgb]{ .988,  .976,  .988}1.04 & \cellcolor[rgb]{ .988,  .988,  1}1.02 & \cellcolor[rgb]{ .973,  .412,  .42}2.01 & \cellcolor[rgb]{ .976,  .427,  .435}1.99 & \cellcolor[rgb]{ .98,  .612,  .62}1.67 \\
    1     & PCECTPI & \cellcolor[rgb]{ .988,  .984,  .996}1.75 & \cellcolor[rgb]{ .988,  .965,  .976}1.82 & \cellcolor[rgb]{ .988,  .973,  .984}1.78 &       & \cellcolor[rgb]{ .984,  .792,  .804}2.40 & \cellcolor[rgb]{ .984,  .792,  .8}2.41 & \cellcolor[rgb]{ .988,  .863,  .871}2.17 & \cellcolor[rgb]{ .988,  .871,  .882}2.14 &       & \cellcolor[rgb]{ .988,  .98,  .992}1.77 & \cellcolor[rgb]{ .988,  .988,  1}1.73 & \cellcolor[rgb]{ .988,  .957,  .969}1.84 & \cellcolor[rgb]{ .988,  .969,  .98}1.80 & \cellcolor[rgb]{ .984,  .776,  .784}2.47 & \cellcolor[rgb]{ .984,  .776,  .788}2.46 & \cellcolor[rgb]{ .973,  .412,  .42}3.71 \\
          \midrule
     2     & CPIAUCSL & \cellcolor[rgb]{ .988,  .969,  .98}1.65 & \cellcolor[rgb]{ .984,  .788,  .8}2.28 & \cellcolor[rgb]{ .988,  .929,  .941}1.79 &       & \cellcolor[rgb]{ .98,  .565,  .573}3.04 & \cellcolor[rgb]{ .976,  .49,  .498}3.30 & \cellcolor[rgb]{ .973,  .412,  .42}3.56 & \cellcolor[rgb]{ .98,  .69,  .702}2.61 &       & \cellcolor[rgb]{ .988,  .988,  1}1.59 & \cellcolor[rgb]{ .988,  .976,  .988}1.63 & \cellcolor[rgb]{ .988,  .957,  .969}1.71 & \cellcolor[rgb]{ .988,  .984,  .996}1.61 & \cellcolor[rgb]{ .98,  .698,  .71}2.58 & \cellcolor[rgb]{ .98,  .694,  .706}2.60 & \cellcolor[rgb]{ .98,  .647,  .659}2.76 \\
    2     & CPILFESL & \cellcolor[rgb]{ .988,  .925,  .933}0.78 & \cellcolor[rgb]{ .988,  .898,  .91}0.86 & \cellcolor[rgb]{ .988,  .878,  .89}0.92 &       & \cellcolor[rgb]{ .973,  .412,  .42}2.37 & \cellcolor[rgb]{ .984,  .761,  .773}1.28 & \cellcolor[rgb]{ .984,  .784,  .796}1.21 & \cellcolor[rgb]{ .984,  .71,  .718}1.45 &       & \cellcolor[rgb]{ .988,  .988,  1}0.57 & \cellcolor[rgb]{ .988,  .988,  1}0.58 & \cellcolor[rgb]{ .988,  .984,  .996}0.59 & \cellcolor[rgb]{ .988,  .988,  1}0.57 & \cellcolor[rgb]{ .98,  .58,  .588}1.86 & \cellcolor[rgb]{ .98,  .6,  .612}1.79 & \cellcolor[rgb]{ .988,  .906,  .918}0.84 \\
    2     & GDPCTPI & \cellcolor[rgb]{ .988,  .953,  .965}0.69 & \cellcolor[rgb]{ .984,  .8,  .812}0.99 & \cellcolor[rgb]{ .984,  .839,  .847}0.92 &       & \cellcolor[rgb]{ .976,  .416,  .424}1.76 & \cellcolor[rgb]{ .98,  .612,  .62}1.37 & \cellcolor[rgb]{ .98,  .698,  .71}1.19 & \cellcolor[rgb]{ .98,  .659,  .667}1.28 &       & \cellcolor[rgb]{ .988,  .988,  1}0.62 & \cellcolor[rgb]{ .988,  .969,  .98}0.66 & \cellcolor[rgb]{ .988,  .988,  1}0.61 & \cellcolor[rgb]{ .988,  .988,  1}0.62 & \cellcolor[rgb]{ .973,  .412,  .42}1.76 & \cellcolor[rgb]{ .976,  .475,  .482}1.64 & \cellcolor[rgb]{ .984,  .761,  .773}1.07 \\
    2     & PCECTPI & \cellcolor[rgb]{ .988,  .988,  1}1.18 & \cellcolor[rgb]{ .988,  .882,  .894}1.73 & \cellcolor[rgb]{ .988,  .945,  .957}1.41 &       & \cellcolor[rgb]{ .984,  .784,  .796}2.23 & \cellcolor[rgb]{ .984,  .737,  .745}2.48 & \cellcolor[rgb]{ .984,  .765,  .773}2.34 & \cellcolor[rgb]{ .988,  .859,  .871}1.85 &       & \cellcolor[rgb]{ .988,  .984,  .996}1.22 & \cellcolor[rgb]{ .988,  .984,  .996}1.22 & \cellcolor[rgb]{ .988,  .969,  .98}1.29 & \cellcolor[rgb]{ .988,  .984,  .996}1.21 & \cellcolor[rgb]{ .984,  .796,  .808}2.17 & \cellcolor[rgb]{ .984,  .8,  .808}2.16 & \cellcolor[rgb]{ .973,  .412,  .42}4.13 \\
                    \midrule
    4     & CPIAUCSL & \cellcolor[rgb]{ .988,  .988,  1}0.92 & \cellcolor[rgb]{ .984,  .761,  .773}1.83 & \cellcolor[rgb]{ .988,  .988,  1}0.92 &       & \cellcolor[rgb]{ .98,  .584,  .592}2.53 & \cellcolor[rgb]{ .973,  .412,  .42}3.21 & \cellcolor[rgb]{ .984,  .718,  .729}2.00 & \cellcolor[rgb]{ .984,  .702,  .714}2.06 &       & \cellcolor[rgb]{ .988,  .973,  .984}0.99 & \cellcolor[rgb]{ .988,  .984,  .996}0.94 & \cellcolor[rgb]{ .988,  .976,  .988}0.97 & \cellcolor[rgb]{ .988,  .953,  .965}1.07 & \cellcolor[rgb]{ .98,  .663,  .675}2.21 & \cellcolor[rgb]{ .98,  .604,  .612}2.46 & \cellcolor[rgb]{ .984,  .831,  .843}1.55 \\
    4     & CPILFESL & \cellcolor[rgb]{ .988,  .898,  .91}0.68 & \cellcolor[rgb]{ .984,  .812,  .824}0.96 & \cellcolor[rgb]{ .988,  .902,  .914}0.66 &       & \cellcolor[rgb]{ .973,  .412,  .42}2.29 & \cellcolor[rgb]{ .984,  .71,  .718}1.31 & \cellcolor[rgb]{ .984,  .78,  .792}1.07 & \cellcolor[rgb]{ .98,  .675,  .686}1.41 &       & \cellcolor[rgb]{ .988,  .988,  1}0.37 & \cellcolor[rgb]{ .988,  .988,  1}0.38 & \cellcolor[rgb]{ .988,  .969,  .98}0.44 & \cellcolor[rgb]{ .988,  .988,  1}0.37 & \cellcolor[rgb]{ .98,  .6,  .612}1.66 & \cellcolor[rgb]{ .98,  .667,  .675}1.44 & \cellcolor[rgb]{ .988,  .859,  .871}0.80 \\
    4     & GDPCTPI & \cellcolor[rgb]{ .988,  .929,  .941}0.53 & \cellcolor[rgb]{ .984,  .718,  .725}1.03 & \cellcolor[rgb]{ .984,  .808,  .82}0.82 &       & \cellcolor[rgb]{ .973,  .412,  .42}1.74 & \cellcolor[rgb]{ .976,  .545,  .557}1.43 & \cellcolor[rgb]{ .98,  .576,  .584}1.36 & \cellcolor[rgb]{ .976,  .545,  .553}1.43 &       & \cellcolor[rgb]{ .988,  .988,  1}0.39 & \cellcolor[rgb]{ .988,  .988,  1}0.39 & \cellcolor[rgb]{ .988,  .988,  1}0.39 & \cellcolor[rgb]{ .988,  .988,  1}0.39 & \cellcolor[rgb]{ .976,  .427,  .435}1.71 & \cellcolor[rgb]{ .976,  .459,  .467}1.63 & \cellcolor[rgb]{ .984,  .808,  .816}0.82 \\
    4     & PCECTPI & \cellcolor[rgb]{ .988,  .988,  1}0.69 & \cellcolor[rgb]{ .984,  .792,  .804}1.46 & \cellcolor[rgb]{ .988,  .988,  1}0.70 &       & \cellcolor[rgb]{ .98,  .655,  .663}2.00 & \cellcolor[rgb]{ .973,  .412,  .42}2.92 & \cellcolor[rgb]{ .98,  .686,  .694}1.87 & \cellcolor[rgb]{ .984,  .765,  .776}1.56 &       & \cellcolor[rgb]{ .988,  .98,  .992}0.74 & \cellcolor[rgb]{ .988,  .984,  .996}0.71 & \cellcolor[rgb]{ .988,  .984,  .996}0.72 & \cellcolor[rgb]{ .988,  .965,  .976}0.79 & \cellcolor[rgb]{ .98,  .671,  .678}1.93 & \cellcolor[rgb]{ .98,  .631,  .643}2.07 & \cellcolor[rgb]{ .988,  .914,  .925}0.99 \\
                    \midrule
    8     & CPIAUCSL & \cellcolor[rgb]{ .988,  .988,  1}0.43 & \cellcolor[rgb]{ .98,  .667,  .675}1.34 & \cellcolor[rgb]{ .988,  .984,  .996}0.45 &       & \cellcolor[rgb]{ .976,  .42,  .427}2.04 & \cellcolor[rgb]{ .976,  .455,  .463}1.94 & \cellcolor[rgb]{ .976,  .463,  .471}1.92 & \cellcolor[rgb]{ .973,  .412,  .42}2.05 &       & \cellcolor[rgb]{ .988,  .976,  .988}0.47 & \cellcolor[rgb]{ .988,  .976,  .988}0.47 & \cellcolor[rgb]{ .988,  .98,  .992}0.46 & \cellcolor[rgb]{ .988,  .976,  .988}0.47 & \cellcolor[rgb]{ .976,  .455,  .463}1.94 & \cellcolor[rgb]{ .976,  .459,  .467}1.93 & \cellcolor[rgb]{ .988,  .902,  .91}0.69 \\
    8     & CPILFESL & \cellcolor[rgb]{ .988,  .988,  1}0.19 & \cellcolor[rgb]{ .984,  .773,  .784}0.87 & \cellcolor[rgb]{ .988,  .961,  .969}0.28 &       & \cellcolor[rgb]{ .973,  .412,  .42}1.99 & \cellcolor[rgb]{ .98,  .678,  .686}1.17 & \cellcolor[rgb]{ .98,  .627,  .639}1.32 & \cellcolor[rgb]{ .98,  .682,  .69}1.15 &       & \cellcolor[rgb]{ .988,  .984,  .996}0.21 & \cellcolor[rgb]{ .988,  .988,  1}0.19 & \cellcolor[rgb]{ .988,  .98,  .992}0.21 & \cellcolor[rgb]{ .988,  .98,  .992}0.21 & \cellcolor[rgb]{ .976,  .549,  .557}1.57 & \cellcolor[rgb]{ .98,  .639,  .651}1.28 & \cellcolor[rgb]{ .988,  .945,  .957}0.33 \\
    8     & GDPCTPI & \cellcolor[rgb]{ .988,  .988,  1}0.26 & \cellcolor[rgb]{ .98,  .702,  .71}0.90 & \cellcolor[rgb]{ .988,  .949,  .961}0.34 &       & \cellcolor[rgb]{ .976,  .447,  .455}1.48 & \cellcolor[rgb]{ .98,  .612,  .62}1.11 & \cellcolor[rgb]{ .976,  .51,  .518}1.33 & \cellcolor[rgb]{ .976,  .541,  .549}1.26 &       & \cellcolor[rgb]{ .988,  .988,  1}0.26 & \cellcolor[rgb]{ .988,  .988,  1}0.26 & \cellcolor[rgb]{ .988,  .988,  1}0.26 & \cellcolor[rgb]{ .988,  .988,  1}0.26 & \cellcolor[rgb]{ .973,  .412,  .42}1.55 & \cellcolor[rgb]{ .976,  .529,  .537}1.29 & \cellcolor[rgb]{ .988,  .882,  .89}0.50 \\
    8     & PCECTPI & \cellcolor[rgb]{ .988,  .988,  1}0.32 & \cellcolor[rgb]{ .984,  .71,  .718}1.02 & \cellcolor[rgb]{ .988,  .969,  .98}0.38 &       & \cellcolor[rgb]{ .976,  .459,  .467}1.64 & \cellcolor[rgb]{ .976,  .522,  .529}1.49 & \cellcolor[rgb]{ .976,  .518,  .525}1.49 & \cellcolor[rgb]{ .976,  .537,  .549}1.44 &       & \cellcolor[rgb]{ .988,  .976,  .988}0.36 & \cellcolor[rgb]{ .988,  .98,  .992}0.35 & \cellcolor[rgb]{ .988,  .969,  .98}0.38 & \cellcolor[rgb]{ .988,  .969,  .98}0.37 & \cellcolor[rgb]{ .973,  .412,  .42}1.75 & \cellcolor[rgb]{ .976,  .447,  .455}1.67 & \cellcolor[rgb]{ .984,  .839,  .851}0.70 \\
    \bottomrule 
    \end{tabular}%
}
\caption{\small This table reports the out-of-sample root mean squared prediction error across models. The sample period is from 1967Q3 to 2022Q3. The first prediction is generated in 1997Q3.}
\label{tab:mse}
\end{table}

Table~\ref{tab:mse} reports root mean squared errors $\text{RMSE}_i = \sqrt{n^{-1}\sum_{t=1}^n \widehat{e}_{i,t}^2}$, where $\widehat{e}_{i,t}$ denotes the forecast error for model $i$ at time $t$. Our recursive evaluation spans 1997:Q3 to 2022:Q3 using 30-year rolling windows, with results color-coded from white (best) to dark red (worst) for visual comparison.

The results reveal several important patterns. First, the strong performance of {\tt TVI} confirms \cite{stock2007has}'s insight regarding persistent inflation components - particularly during the Great Moderation (1985-2007) when well-anchored inflation expectations diminished the direct influence of macroeconomic predictors. Second, all static selection methods underperform, suggesting inflation-predictor relationships significantly evolve across different economic regimes. Third, while dynamic alternatives {\tt DSS} and {\tt DVS} improve upon static approaches, our {\tt BGS} specification achieves superior forecasting accuracy with average RMSEs of 1.56, 1.00, 0.62, and 0.33 for horizons $h=1,2,4,8$ respectively. The minimal improvement from economic groupings ({\tt BG group} yields RMSEs of 1.65, 1.00, 0.65, and 0.33) suggests relevant inflation signals frequently cut across economic categories, often emerging simultaneously from diverse sectors during specific periods.



In Section F.2 of the  Supplementary Material, we assess the statistical significance of our variational Bayes approach's forecasting gains using Diebold-Mariano (DM) tests \citep{diebold2002}. For brevity, we report the results for the Total CPI (CPIAUCSL), the GDP deflator (GDPCTPI), and the PCE deflator (PCECTPI) for $h=1,2,4$ quarter ahead. The tests evaluate $\mathcal{H}_0: \text{MSE}^C = \text{MSE}^R$ against $\mathcal{H}_a: \text{MSE}^C > \text{MSE}^R$, where $\text{MSE}^C$ and $\text{MSE}^R$ represent the mean squared errors of column (comparison) and row (reference) models respectively. The results demonstrate that our {\tt BG}, {\tt BGS}, and {\tt BG group} specifications significantly outperform all other macroeconomic predictor-based approaches at conventional significance levels, while achieving point forecasts statistically indistinguishable from the {\tt TVI} specification. 

This finding substantiates two key insights: First, our dynamic variable selection framework successfully distills the predictive content from a large macroeconomic dataset without sacrificing the robustness of simpler benchmarks. Second, the maintained accuracy relative to {\tt TVI} confirms our method's ability to adaptively discard irrelevant predictors while preserving the core inflation dynamics captured by the local-level model.

\begin{table}[h!]
\centering
\renewcommand{\arraystretch}{0.6}
\resizebox{1\textwidth}{!}{
   \begin{tabular}{clrrrrrrrrrrrrrrrr}
   \toprule 
   & & \multicolumn{3}{c}{Benchmarks} & & \multicolumn{4}{c}{Static selection (rolling)} & & \multicolumn{7}{c}{Dynamic selection}\\
   \cmidrule{3-5}\cmidrule{7-10}\cmidrule{12-18}
    \multicolumn{1}{l}{Horizon} & Inflation & \multicolumn{1}{l}{{\tt TVI}} & \multicolumn{1}{l}{{\tt AR(2)}} & \multicolumn{1}{l}{{\tt TVAR(2)}} &       & \multicolumn{1}{l}{{\tt PCA(5)}} & \multicolumn{1}{l}{{\tt HS}} & \multicolumn{1}{l}{{\tt EMVS}} & \multicolumn{1}{l}{{\tt GLP}} &       & \multicolumn{1}{l}{{\tt BGS}} & \multicolumn{1}{l}{{\tt BG group}} & \multicolumn{1}{l}{{\tt BGH}} & \multicolumn{1}{l}{{\tt BG}} &     \multicolumn{1}{l}{{\tt DSS(5)}} & \multicolumn{1}{l}{{\tt DSS(9)}} & \multicolumn{1}{l}{{\tt DVS}} \\
    \midrule
    1     & CPIAUCSL & \cellcolor[rgb]{ .984,  .961,  .973}-2.5 & \cellcolor[rgb]{ .988,  .988,  1}-2.3 & \cellcolor[rgb]{ .984,  .98,  .992}-2.3 &       & \cellcolor[rgb]{ .984,  .941,  .953}-2.8 & \cellcolor[rgb]{ .984,  .918,  .929}-3.0 & \cellcolor[rgb]{ .98,  .804,  .816}-4.2 & \cellcolor[rgb]{ .984,  .914,  .925}-3.0 &       & \cellcolor[rgb]{ .984,  .953,  .965}-2.6 & \cellcolor[rgb]{ .984,  .953,  .965}-2.6 & \cellcolor[rgb]{ .984,  .976,  .988}-2.4 & \cellcolor[rgb]{ .984,  .953,  .965}-2.6 & \cellcolor[rgb]{ .98,  .804,  .812}-4.2 & \cellcolor[rgb]{ .98,  .804,  .816}-4.2 & \cellcolor[rgb]{ .973,  .412,  .42}-8.3 \\
    1     & CPILFESL & \cellcolor[rgb]{ .984,  .875,  .882}-1.6 & \cellcolor[rgb]{ .984,  .863,  .875}-1.7 & \cellcolor[rgb]{ .988,  .988,  1}-1.1 &       & \cellcolor[rgb]{ .976,  .686,  .694}-2.5 & \cellcolor[rgb]{ .984,  .882,  .89}-1.6 & \cellcolor[rgb]{ .98,  .702,  .714}-2.4 & \cellcolor[rgb]{ .98,  .745,  .757}-2.2 &       & \cellcolor[rgb]{ .984,  .933,  .941}-1.4 & \cellcolor[rgb]{ .984,  .933,  .945}-1.4 & \cellcolor[rgb]{ .984,  .937,  .949}-1.3 & \cellcolor[rgb]{ .984,  .875,  .886}-1.6 & \cellcolor[rgb]{ .973,  .412,  .42}-3.7 & \cellcolor[rgb]{ .973,  .435,  .443}-3.6 & \cellcolor[rgb]{ .976,  .667,  .675}-2.5 \\
    1     & GDPCTPI & \cellcolor[rgb]{ .984,  .878,  .89}-2.2 & \cellcolor[rgb]{ .984,  .965,  .976}-1.7 & \cellcolor[rgb]{ .988,  .988,  1}-1.5 &       & \cellcolor[rgb]{ .98,  .824,  .835}-2.6 & \cellcolor[rgb]{ .984,  .902,  .914}-2.1 & \cellcolor[rgb]{ .98,  .835,  .847}-2.5 & \cellcolor[rgb]{ .984,  .961,  .973}-1.7 &       & \cellcolor[rgb]{ .984,  .953,  .965}-1.7 & \cellcolor[rgb]{ .984,  .937,  .949}-1.9 & \cellcolor[rgb]{ .984,  .969,  .98}-1.6 & \cellcolor[rgb]{ .984,  .953,  .965}-1.7 & \cellcolor[rgb]{ .976,  .616,  .624}-4.0 & \cellcolor[rgb]{ .976,  .647,  .659}-3.8 & \cellcolor[rgb]{ .973,  .412,  .42}-5.4 \\
    1     & PCECTPI & \cellcolor[rgb]{ .984,  .933,  .945}-2.4 & \cellcolor[rgb]{ .988,  .988,  1}-2.0 & \cellcolor[rgb]{ .984,  .98,  .992}-2.0 &       & \cellcolor[rgb]{ .984,  .925,  .937}-2.5 & \cellcolor[rgb]{ .984,  .925,  .937}-2.5 & \cellcolor[rgb]{ .984,  .89,  .902}-2.8 & \cellcolor[rgb]{ .984,  .922,  .933}-2.5 &       & \cellcolor[rgb]{ .984,  .965,  .976}-2.1 & \cellcolor[rgb]{ .984,  .961,  .973}-2.2 & \cellcolor[rgb]{ .984,  .965,  .976}-2.1 & \cellcolor[rgb]{ .984,  .945,  .957}-2.3 & \cellcolor[rgb]{ .98,  .733,  .745}-4.1 & \cellcolor[rgb]{ .98,  .725,  .733}-4.2 & \cellcolor[rgb]{ .973,  .412,  .42}-6.8 \\
          \midrule 
    2     & CPIAUCSL & \cellcolor[rgb]{ .984,  .851,  .863}-3.1 & \cellcolor[rgb]{ .984,  .973,  .984}-2.3 & \cellcolor[rgb]{ .984,  .98,  .992}-2.3 &       & \cellcolor[rgb]{ .984,  .894,  .902}-2.8 & \cellcolor[rgb]{ .984,  .875,  .886}-3.0 & \cellcolor[rgb]{ .98,  .757,  .765}-3.7 & \cellcolor[rgb]{ .984,  .859,  .871}-3.0 &       & \cellcolor[rgb]{ .988,  .988,  1}-2.2 & \cellcolor[rgb]{ .984,  .98,  .992}-2.2 & \cellcolor[rgb]{ .984,  .98,  .992}-2.3 & \cellcolor[rgb]{ .984,  .98,  .992}-2.3 & \cellcolor[rgb]{ .976,  .69,  .702}-4.2 & \cellcolor[rgb]{ .976,  .675,  .686}-4.3 & \cellcolor[rgb]{ .973,  .412,  .42}-6.0 \\
    2     & CPILFESL & \cellcolor[rgb]{ .973,  .412,  .42}-6.2 & \cellcolor[rgb]{ .984,  .894,  .906}-1.7 & \cellcolor[rgb]{ .988,  .988,  1}-0.8 &       & \cellcolor[rgb]{ .98,  .824,  .835}-2.3 & \cellcolor[rgb]{ .984,  .871,  .878}-1.9 & \cellcolor[rgb]{ .98,  .835,  .847}-2.2 & \cellcolor[rgb]{ .984,  .867,  .875}-1.9 &       & \cellcolor[rgb]{ .984,  .98,  .992}-0.9 & \cellcolor[rgb]{ .984,  .957,  .969}-1.1 & \cellcolor[rgb]{ .984,  .937,  .949}-1.3 & \cellcolor[rgb]{ .984,  .973,  .984}-0.9 & \cellcolor[rgb]{ .98,  .722,  .733}-3.3 & \cellcolor[rgb]{ .98,  .737,  .745}-3.2 & \cellcolor[rgb]{ .984,  .867,  .878}-1.9 \\
    2     & GDPCTPI & \cellcolor[rgb]{ .973,  .412,  .42}-5.3 & \cellcolor[rgb]{ .984,  .922,  .933}-1.6 & \cellcolor[rgb]{ .984,  .969,  .98}-1.3 &       & \cellcolor[rgb]{ .98,  .796,  .804}-2.6 & \cellcolor[rgb]{ .98,  .816,  .824}-2.4 & \cellcolor[rgb]{ .98,  .8,  .808}-2.5 & \cellcolor[rgb]{ .984,  .898,  .91}-1.8 &       & \cellcolor[rgb]{ .988,  .988,  1}-1.2 & \cellcolor[rgb]{ .984,  .961,  .973}-1.4 & \cellcolor[rgb]{ .984,  .976,  .988}-1.3 & \cellcolor[rgb]{ .984,  .98,  .992}-1.2 & \cellcolor[rgb]{ .976,  .667,  .678}-3.5 & \cellcolor[rgb]{ .98,  .733,  .745}-3.0 & \cellcolor[rgb]{ .976,  .655,  .663}-3.6 \\
    2     & PCECTPI & \cellcolor[rgb]{ .98,  .729,  .737}-3.7 & \cellcolor[rgb]{ .984,  .957,  .969}-2.1 & \cellcolor[rgb]{ .984,  .984,  .996}-1.9 &       & \cellcolor[rgb]{ .984,  .878,  .89}-2.6 & \cellcolor[rgb]{ .984,  .867,  .878}-2.7 & \cellcolor[rgb]{ .98,  .816,  .827}-3.0 & \cellcolor[rgb]{ .984,  .902,  .914}-2.5 &       & \cellcolor[rgb]{ .988,  .988,  1}-1.9 & \cellcolor[rgb]{ .984,  .98,  .992}-1.9 & \cellcolor[rgb]{ .984,  .953,  .965}-2.1 & \cellcolor[rgb]{ .984,  .969,  .98}-2.0 & \cellcolor[rgb]{ .976,  .671,  .682}-4.1 & \cellcolor[rgb]{ .976,  .671,  .682}-4.1 & \cellcolor[rgb]{ .973,  .412,  .42}-5.9 \\
                    \midrule 
    4     & CPIAUCSL & \cellcolor[rgb]{ .973,  .467,  .478}-4.0 & \cellcolor[rgb]{ .984,  .898,  .91}-2.2 & \cellcolor[rgb]{ .984,  .925,  .937}-2.1 &       & \cellcolor[rgb]{ .98,  .804,  .816}-2.6 & \cellcolor[rgb]{ .98,  .729,  .741}-2.9 & \cellcolor[rgb]{ .976,  .667,  .675}-3.2 & \cellcolor[rgb]{ .98,  .784,  .796}-2.7 &       & \cellcolor[rgb]{ .984,  .984,  .996}-1.9 & \cellcolor[rgb]{ .984,  .973,  .984}-1.9 & \cellcolor[rgb]{ .988,  .988,  1}-1.9 & \cellcolor[rgb]{ .984,  .976,  .988}-1.9 & \cellcolor[rgb]{ .973,  .459,  .467}-4.0 & \cellcolor[rgb]{ .973,  .412,  .42}-4.2 & \cellcolor[rgb]{ .973,  .455,  .463}-4.1 \\
    4     & CPILFESL & \cellcolor[rgb]{ .973,  .412,  .42}-5.0 & \cellcolor[rgb]{ .98,  .824,  .835}-1.8 & \cellcolor[rgb]{ .988,  .988,  1}-0.5 &       & \cellcolor[rgb]{ .98,  .769,  .78}-2.2 & \cellcolor[rgb]{ .98,  .792,  .804}-2.0 & \cellcolor[rgb]{ .98,  .835,  .847}-1.7 & \cellcolor[rgb]{ .98,  .784,  .796}-2.1 &       & \cellcolor[rgb]{ .984,  .98,  .992}-0.5 & \cellcolor[rgb]{ .984,  .941,  .953}-0.9 & \cellcolor[rgb]{ .984,  .902,  .914}-1.2 & \cellcolor[rgb]{ .984,  .98,  .992}-0.6 & \cellcolor[rgb]{ .976,  .686,  .698}-2.9 & \cellcolor[rgb]{ .98,  .753,  .761}-2.4 & \cellcolor[rgb]{ .984,  .859,  .871}-1.5 \\
    4     & GDPCTPI & \cellcolor[rgb]{ .973,  .412,  .42}-7.3 & \cellcolor[rgb]{ .984,  .91,  .922}-1.7 & \cellcolor[rgb]{ .988,  .988,  1}-0.8 &       & \cellcolor[rgb]{ .984,  .851,  .863}-2.3 & \cellcolor[rgb]{ .984,  .867,  .878}-2.2 & \cellcolor[rgb]{ .98,  .827,  .839}-2.6 & \cellcolor[rgb]{ .984,  .875,  .882}-2.1 &       & \cellcolor[rgb]{ .984,  .984,  .996}-0.8 & \cellcolor[rgb]{ .984,  .969,  .98}-1.0 & \cellcolor[rgb]{ .984,  .957,  .969}-1.2 & \cellcolor[rgb]{ .984,  .984,  .996}-0.9 & \cellcolor[rgb]{ .98,  .765,  .776}-3.3 & \cellcolor[rgb]{ .98,  .78,  .788}-3.1 & \cellcolor[rgb]{ .984,  .871,  .878}-2.1 \\
    4     & PCECTPI & \cellcolor[rgb]{ .973,  .412,  .42}-4.9 & \cellcolor[rgb]{ .984,  .898,  .906}-2.0 & \cellcolor[rgb]{ .988,  .988,  1}-1.5 &       & \cellcolor[rgb]{ .98,  .82,  .831}-2.5 & \cellcolor[rgb]{ .98,  .8,  .812}-2.6 & \cellcolor[rgb]{ .98,  .741,  .749}-2.9 & \cellcolor[rgb]{ .98,  .839,  .851}-2.4 &       & \cellcolor[rgb]{ .984,  .961,  .973}-1.7 & \cellcolor[rgb]{ .984,  .969,  .98}-1.6 & \cellcolor[rgb]{ .984,  .957,  .969}-1.7 & \cellcolor[rgb]{ .984,  .98,  .992}-1.5 & \cellcolor[rgb]{ .976,  .573,  .58}-3.9 & \cellcolor[rgb]{ .973,  .518,  .525}-4.2 & \cellcolor[rgb]{ .98,  .71,  .718}-3.1 \\
                    \midrule 
    8     & CPIAUCSL & \cellcolor[rgb]{ .973,  .412,  .42}-7.5 & \cellcolor[rgb]{ .984,  .906,  .918}-1.9 & \cellcolor[rgb]{ .988,  .988,  1}-0.9 &       & \cellcolor[rgb]{ .984,  .875,  .886}-2.2 & \cellcolor[rgb]{ .984,  .855,  .867}-2.4 & \cellcolor[rgb]{ .98,  .757,  .769}-3.6 & \cellcolor[rgb]{ .984,  .875,  .886}-2.2 &       & \cellcolor[rgb]{ .984,  .965,  .973}-1.2 & \cellcolor[rgb]{ .984,  .957,  .969}-1.3 & \cellcolor[rgb]{ .984,  .945,  .957}-1.4 & \cellcolor[rgb]{ .984,  .961,  .973}-1.2 & \cellcolor[rgb]{ .98,  .761,  .773}-3.5 & \cellcolor[rgb]{ .98,  .757,  .765}-3.6 & \cellcolor[rgb]{ .984,  .906,  .918}-1.9 \\
    8     & CPILFESL & \cellcolor[rgb]{ .973,  .42,  .427}-2.9 & \cellcolor[rgb]{ .98,  .714,  .722}-1.4 & \cellcolor[rgb]{ .988,  .988,  1}-0.1 &       & \cellcolor[rgb]{ .976,  .635,  .647}-1.8 & \cellcolor[rgb]{ .976,  .639,  .647}-1.8 & \cellcolor[rgb]{ .976,  .612,  .62}-1.9 & \cellcolor[rgb]{ .976,  .678,  .686}-1.6 &       & \cellcolor[rgb]{ .98,  .824,  .835}-0.9 & \cellcolor[rgb]{ .984,  .863,  .871}-0.7 & \cellcolor[rgb]{ .984,  .847,  .859}-0.8 & \cellcolor[rgb]{ .984,  .882,  .894}-0.6 & \cellcolor[rgb]{ .973,  .412,  .42}-2.9 & \cellcolor[rgb]{ .976,  .569,  .58}-2.1 & \cellcolor[rgb]{ .984,  .914,  .925}-0.4 \\
    8     & GDPCTPI & \cellcolor[rgb]{ .973,  .412,  .42}-7.7 & \cellcolor[rgb]{ .984,  .902,  .914}-1.6 & \cellcolor[rgb]{ .988,  .988,  1}-0.5 &       & \cellcolor[rgb]{ .984,  .882,  .89}-1.8 & \cellcolor[rgb]{ .984,  .878,  .886}-1.9 & \cellcolor[rgb]{ .98,  .816,  .827}-2.6 & \cellcolor[rgb]{ .984,  .855,  .867}-2.1 &       & \cellcolor[rgb]{ .984,  .941,  .953}-1.1 & \cellcolor[rgb]{ .984,  .941,  .953}-1.1 & \cellcolor[rgb]{ .984,  .914,  .925}-1.4 & \cellcolor[rgb]{ .984,  .933,  .945}-1.2 & \cellcolor[rgb]{ .98,  .792,  .804}-2.9 & \cellcolor[rgb]{ .984,  .851,  .859}-2.2 & \cellcolor[rgb]{ .984,  .918,  .929}-1.4 \\
    8     & PCECTPI & \cellcolor[rgb]{ .973,  .412,  .42}-7.3 & \cellcolor[rgb]{ .984,  .906,  .918}-1.6 & \cellcolor[rgb]{ .988,  .988,  1}-0.7 &       & \cellcolor[rgb]{ .984,  .867,  .878}-2.0 & \cellcolor[rgb]{ .984,  .859,  .867}-2.2 & \cellcolor[rgb]{ .98,  .769,  .78}-3.2 & \cellcolor[rgb]{ .984,  .871,  .878}-2.0 &       & \cellcolor[rgb]{ .984,  .961,  .973}-1.0 & \cellcolor[rgb]{ .984,  .957,  .969}-1.0 & \cellcolor[rgb]{ .984,  .933,  .945}-1.3 & \cellcolor[rgb]{ .984,  .953,  .965}-1.1 & \cellcolor[rgb]{ .98,  .741,  .749}-3.5 & \cellcolor[rgb]{ .98,  .761,  .773}-3.2 & \cellcolor[rgb]{ .984,  .871,  .882}-2.0 \\
    \bottomrule 
    \end{tabular}%
}
\caption{\small This table reports the out-of-sample log-predictive score calculated across models. The sample period is from 1967Q3 to 2022Q3. The first prediction is generated in 1997Q3.}
\label{tab:lscore}
\end{table}

Table~\ref{tab:lscore} presents log-predictive scores $LS_i = n^{-1}\sum_{t=1}^n \log(S_{i,t})$, where $\log(S_{i,t})$ denotes the time-$t$ predictive likelihood for model $i$. Higher scores indicate better density forecasting performance. Three key patterns emerge from these results.

First, the strong performance of {\tt TVAR(2)} for density forecasting, particularly at longer horizons, highlights the critical role of properly characterizing inflation uncertainty. While {\tt TVI} produces accurate point forecasts, its density forecasts prove excessively concentrated around the conditional mean, systematically underestimating true inflation uncertainty.

Second, rolling-window implementations of static shrinkage and selection methods ({\tt BHS}, {\tt EMVS}, {\tt GLP}) outperform dynamic alternatives {\tt DSS} and {\tt DVS} at shorter horizons ($h\leq4$), suggesting that adaptive methods may initially overfit when estimating predictive densities.

Most significantly, our {\tt BG}, {\tt BGS}, and {\tt BG group} specifications demonstrate superior density forecasting performance across all horizons, with the advantage growing substantially at longer forecasts ($h=8$). This robust performance indicates our approach successfully captures both gradual inflation dynamics and abrupt volatility regime changes. The consistent outperformance across all four inflation measures further confirms the method's ability to adapt to different conditions while maintaining well-calibrated uncertainty estimates.

\paragraph{Retrospective analysis of inflation drivers}
Our dynamic variable selection framework provides both accurate forecasts and interpretable estimates of time-varying regression parameters $\beta_{j,t}$ that reveal the macroeconomic drivers of inflation. Focusing on one-quarter ahead forecasts, Figure~\ref{fig:beta_appl1} presents coefficient estimates $\mu_{q(\beta_{jt})}$ and inclusion probabilities $\mu_{q(\gamma_{jt})}$ from the {\tt BGS} model, showing only predictors active for significant periods. For the sake of brevity, we report the results for Total CPI and GDP deflator. The remaining plots for PCE deflator and Core CPI are available in Section F3 of the Supplementary Material. 

\begin{figure}[ht]
\centering
\hspace{-1em}\subfigure[Total CPI (CPIAUCSL)]{\includegraphics[width=0.48\textwidth]{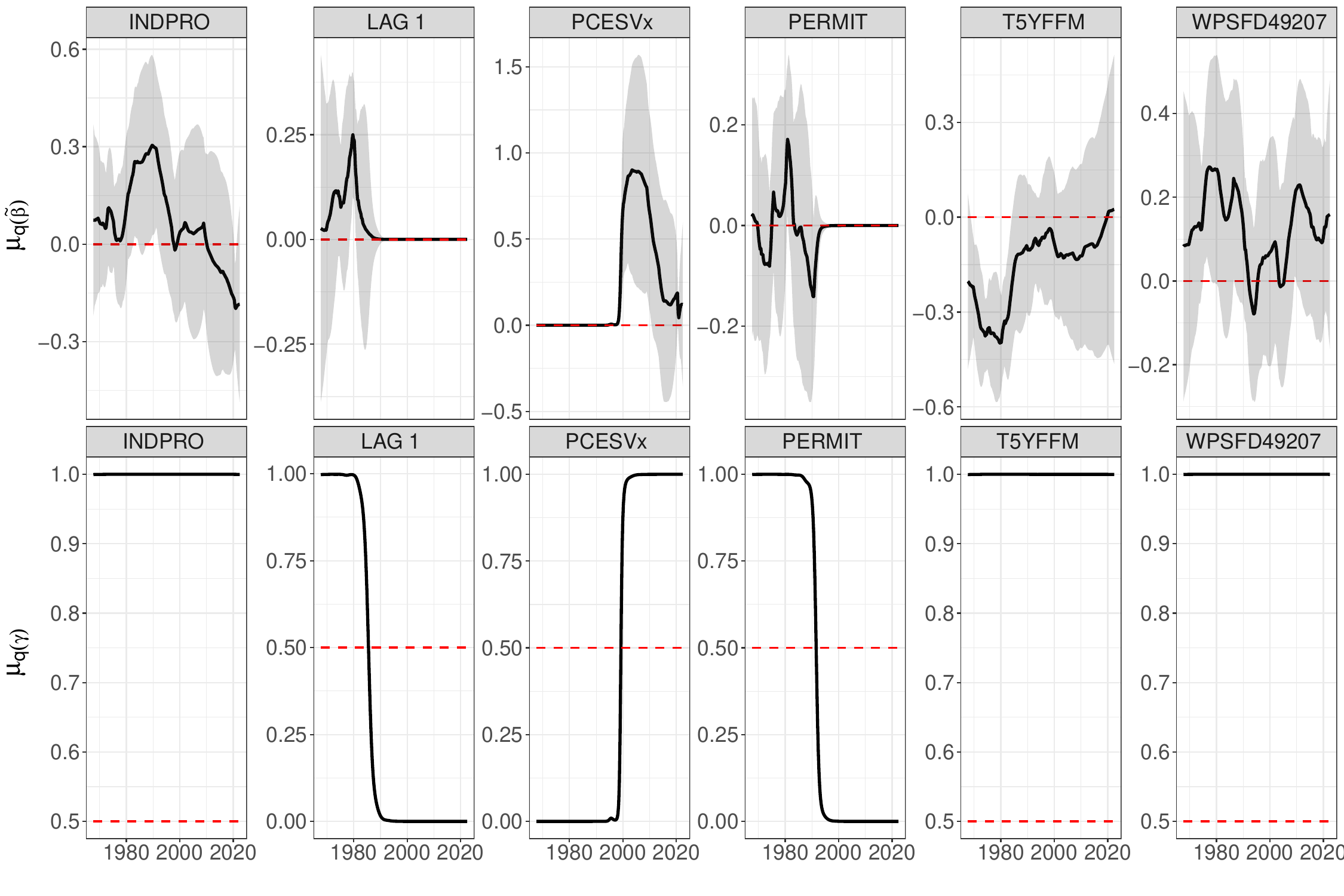}}\hspace{1em}\subfigure[GDP deflator (GDPCTPI)]{\includegraphics[width=0.48\textwidth]{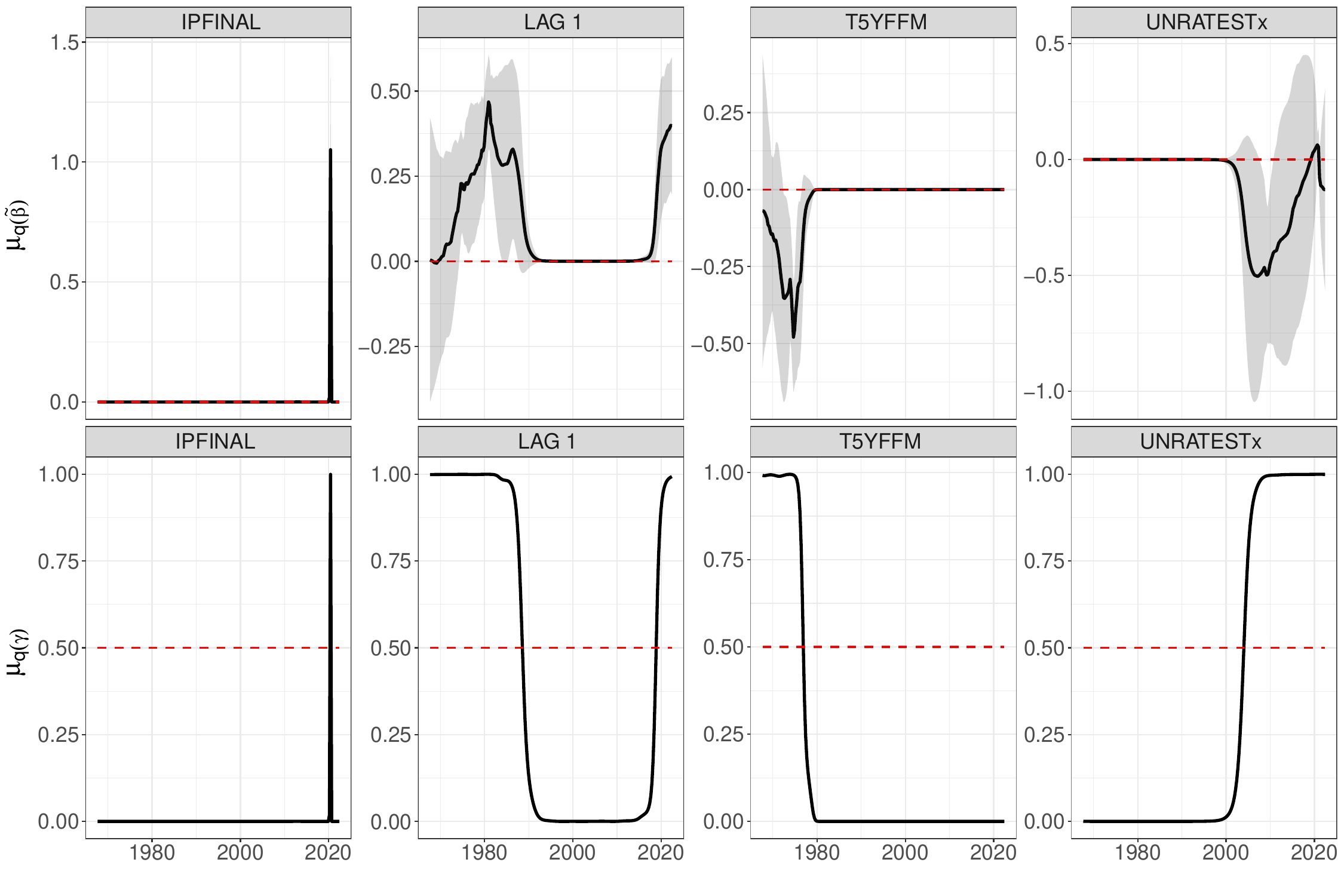}}
\caption{Time-varying coefficient estimates and inclusion probabilities for four inflation measures.}\label{fig:beta_appl1}
\end{figure}

The analysis yields several important findings about inflation dynamics. First, the sparse sets of active predictors support the unobserved components view advanced by \cite{stock2007has} and \cite{harvey2007trends}, where inflation variation primarily reflects latent local trends rather than observable economic drivers. This sparsity is particularly evident during the Great Moderation period from the mid-1980s through 2007, when inflation expectations became well-anchored.

Second, while certain predictors like industrial production (INDPRO), 5-year Treasury rates (T5YFFM), and producer prices (WPSFD49207) appear consistently across different inflation measures, their timing and magnitude of influence show important variations. Lagged inflation emerges as particularly significant during the 1970s oil shocks and Volcker disinflation period.

Third, the estimates reveal distinct regimes in monetary policy transmission. While 5-year Treasury rates maintain negative predictive relationships with CPI and PCE inflation throughout the sample, they lose predictive power for GDP deflator inflation following the financial deregulation of the early 1980s. Simultaneously, supply-side factors like industrial production show persistent positive effects, supporting cost-push theories of inflation.

Fourth, the results provide nuanced evidence about traditional inflation relationships. Short-term unemployment gains predictive power for GDP deflator following the 2008 financial crisis, offering conditional support for the Phillips curve that aligns with recent evidence of its nonlinear, slack-dependent nature \citep{blanchard2016phillips}. The framework also successfully captures transient shocks, such as the 2020 spikes in industrial production (IPFINAL) and consumption (PCECC96) that anticipated the subsequent inflation surge.



\section{Concluding remarks}
\label{sec:concl}

We develop a variational Bayes method for dynamic variable selection in high-dimensional time-varying parameter regressions. The approach delivers significant computational efficiency gains over MCMC while maintaining accuracy. In inflation forecasting applications with 220+ macroeconomic predictors, our method outperforms existing alternatives and provides interpretable time-varying estimates of inflation drivers. The results highlight the importance of dynamic selection when modeling economic time series.

\bibliographystyle{chicago}
\spacingset{1}
\bibliography{BGTVP_biblio}

\begin{thebibliography}{}

\bibitem[\protect\citeauthoryear{Bernardi, Bianchi, and Bianco}{Bernardi et~al.}{2024}]{bernardi2024variational}
Bernardi, M., D.~Bianchi, and N.~Bianco (2024).
\newblock Variational inference for large {B}ayesian vector autoregressions.
\newblock {\em Journal of Business \& Economic Statistics\/}~{\em 42\/}(3), 1066--1082.

\bibitem[\protect\citeauthoryear{Blanchard}{Blanchard}{2016}]{blanchard2016phillips}
Blanchard, O. (2016).
\newblock The phillips curve: back to the'60s?
\newblock {\em American Economic Review\/}~{\em 106\/}(5), 31--34.

\bibitem[\protect\citeauthoryear{Blei, Kucukelbir, and McAuliffe}{Blei et~al.}{2017}]{Blei.2017}
Blei, D.~M., A.~Kucukelbir, and J.~D. McAuliffe (2017).
\newblock Variational inference: A review for statisticians.
\newblock {\em Journal of the American Statistical Association\/}~{\em 112\/}(518), 859--877.

\bibitem[\protect\citeauthoryear{Carvalho, Polson, and Scott}{Carvalho et~al.}{2010}]{carvalho_etal.2010}
Carvalho, C.~M., N.~G. Polson, and J.~G. Scott (2010).
\newblock The horseshoe estimator for sparse signals.
\newblock {\em Biometrika\/}~{\em 97\/}(2), 465--480.

\bibitem[\protect\citeauthoryear{Diebold and Mariano}{Diebold and Mariano}{1995}]{diebold2002}
Diebold, F.~X. and R.~S. Mariano (1995).
\newblock Comparing predictive accuracy.
\newblock {\em Journal of Business \& economic statistics\/}~{\em 20}, 134--144.

\bibitem[\protect\citeauthoryear{Giannone, Lenza, and Primiceri}{Giannone et~al.}{2021}]{Giannone:Lenza:Primiceri:2021}
Giannone, D., M.~Lenza, and G.~E. Primiceri (2021).
\newblock Economic predictions with big data: The illusion of sparsity.
\newblock {\em Econometrica\/}~{\em 89\/}(5), 2409--2437.

\bibitem[\protect\citeauthoryear{Harvey, Trimbur, and Van~Dijk}{Harvey et~al.}{2007}]{harvey2007trends}
Harvey, A.~C., T.~M. Trimbur, and H.~K. Van~Dijk (2007).
\newblock Trends and cycles in economic time series: A {B}ayesian approach.
\newblock {\em Journal of Econometrics\/}~{\em 140\/}(2), 618--649.

\bibitem[\protect\citeauthoryear{Huber, Koop, and Onorante}{Huber et~al.}{2021}]{huber2021inducing}
Huber, F., G.~Koop, and L.~Onorante (2021).
\newblock Inducing sparsity and shrinkage in time-varying parameter models.
\newblock {\em Journal of Business \& Economic Statistics\/}~{\em 39\/}(3), 669--683.

\bibitem[\protect\citeauthoryear{Kastner and Fr{\"u}hwirth-Schnatter}{Kastner and Fr{\"u}hwirth-Schnatter}{2014}]{kastner2014ancillarity}
Kastner, G. and S.~Fr{\"u}hwirth-Schnatter (2014).
\newblock Ancillarity-sufficiency interweaving strategy {(ASIS)} for boosting {MCMC} estimation of stochastic volatility models.
\newblock {\em Computational Statistics \& Data Analysis\/}~{\em 76}, 408--423.

\bibitem[\protect\citeauthoryear{Koop and Korobilis}{Koop and Korobilis}{2023}]{koop_korobilis_2020}
Koop, G. and D.~Korobilis (2023).
\newblock {B}ayesian dynamic variable selection in high dimensions.
\newblock {\em International Economic Review\/}~{\em 64\/}(3), 1047--1074.

\bibitem[\protect\citeauthoryear{Kowal, Matteson, and Ruppert}{Kowal et~al.}{2019}]{kowal2019dynamic}
Kowal, D.~R., D.~S. Matteson, and D.~Ruppert (2019).
\newblock Dynamic shrinkage processes.
\newblock {\em Journal of the Royal Statistical Society Series B: Statistical Methodology\/}~{\em 81\/}(4), 781--804.

\bibitem[\protect\citeauthoryear{McCracken and Ng}{McCracken and Ng}{2020}]{mccracken2020fred}
McCracken, M. and S.~Ng (2020).
\newblock {FRED-QD}: A quarterly database for macroeconomic research.
\newblock Technical report, National Bureau of Economic Research.

\bibitem[\protect\citeauthoryear{Mogliani and Simoni}{Mogliani and Simoni}{2024}]{mogliani2024bayesian}
Mogliani, M. and A.~Simoni (2024).
\newblock {B}ayesian bi-level sparse group regressions for macroeconomic forecasting.
\newblock {\em arXiv preprint arXiv:2404.02671\/}.

\bibitem[\protect\citeauthoryear{Ormerod and Wand}{Ormerod and Wand}{2010}]{ormerod_wand.2010}
Ormerod, J.~T. and M.~P. Wand (2010).
\newblock Explaining variational approximations.
\newblock {\em The American Statistician\/}~{\em 64\/}(2), 140--153.

\bibitem[\protect\citeauthoryear{Ormerod, You, and Müller}{Ormerod et~al.}{2017}]{ormerod2017VS}
Ormerod, J.~T., C.~You, and S.~Müller (2017).
\newblock {A variational Bayes approach to variable selection}.
\newblock {\em Electronic Journal of Statistics\/}~{\em 11\/}(2), 3549 -- 3594.

\bibitem[\protect\citeauthoryear{Polson, Scott, and Windle}{Polson et~al.}{2013}]{polson2013bayesian}
Polson, N.~G., J.~G. Scott, and J.~Windle (2013).
\newblock {B}ayesian inference for logistic models using p{\'o}lya--gamma latent variables.
\newblock {\em Journal of the American statistical Association\/}~{\em 108\/}(504), 1339--1349.

\bibitem[\protect\citeauthoryear{Raftery, K\'{a}rn\'{y}, and Ettler}{Raftery et~al.}{2010}]{raftery_etal2010}
Raftery, A.~E., M.~K\'{a}rn\'{y}, and P.~Ettler (2010).
\newblock Online prediction under model uncertainty via dynamic model averaging: Application to a cold rolling mill.
\newblock {\em Technometrics\/}~{\em 52\/}(1), 52--66.

\bibitem[\protect\citeauthoryear{Ray and Szab{\'o}}{Ray and Szab{\'o}}{2022}]{ray2022variational}
Ray, K. and B.~Szab{\'o} (2022).
\newblock Variational {B}ayes for high-dimensional linear regression with sparse priors.
\newblock {\em Journal of the American Statistical Association\/}~{\em 117\/}(539), 1270--1281.

\bibitem[\protect\citeauthoryear{Rohde and Wand}{Rohde and Wand}{2016}]{rhode_wand2016}
Rohde, D. and M.~P. Wand (2016).
\newblock Semiparametric mean field variational bayes: General principles and numerical issues.
\newblock {\em Journal of Machine Learning Research\/}~{\em 17\/}(172), 1--47.

\bibitem[\protect\citeauthoryear{Ro\v{c}kov\'{a} and George}{Ro\v{c}kov\'{a} and George}{2014}]{rockova_george.2014}
Ro\v{c}kov\'{a}, V. and E.~I. George (2014).
\newblock {EMVS}: The em approach to {B}ayesian variable selection.
\newblock {\em Journal of the American Statistical Association\/}~{\em 109\/}(506), 828--846.

\bibitem[\protect\citeauthoryear{Ro\v{c}kov\'{a} and McAlinn}{Ro\v{c}kov\'{a} and McAlinn}{2021}]{rockova_mcalinn_2021}
Ro\v{c}kov\'{a}, V. and K.~McAlinn (2021).
\newblock {Dynamic variable selection with spike-and-slab process priors}.
\newblock {\em Bayesian Analysis\/}~{\em 16\/}(1), 233 -- 269.

\bibitem[\protect\citeauthoryear{Stock and Watson}{Stock and Watson}{2006}]{stock2006forecasting}
Stock, J.~H. and M.~W. Watson (2006).
\newblock Forecasting with many predictors.
\newblock {\em Handbook of economic forecasting\/}~{\em 1}, 515--554.

\bibitem[\protect\citeauthoryear{Stock and Watson}{Stock and Watson}{2007}]{stock2007has}
Stock, J.~H. and M.~W. Watson (2007).
\newblock Why has us inflation become harder to forecast?
\newblock {\em Journal of Money, Credit and banking\/}~{\em 39}, 3--33.

\bibitem[\protect\citeauthoryear{Stock and Watson}{Stock and Watson}{2012}]{stock2012disentangling}
Stock, J.~H. and M.~W. Watson (2012).
\newblock Disentangling the channels of the 2007-09 recession.
\newblock {\em Brookings Papers on Economic Activity\/}~(1), 81--135.

\bibitem[\protect\citeauthoryear{Tibshirani}{Tibshirani}{1996}]{tibshirani1996regression}
Tibshirani, R. (1996).
\newblock Regression shrinkage and selection via the lasso.
\newblock {\em Journal of the Royal Statistical Society Series B: Statistical Methodology\/}~{\em 58\/}(1), 267--288.

\bibitem[\protect\citeauthoryear{Tibshirani, Saunders, Rosset, Zhu, and Knight}{Tibshirani et~al.}{2005}]{tibshirani2005sparsity}
Tibshirani, R., M.~Saunders, S.~Rosset, J.~Zhu, and K.~Knight (2005).
\newblock Sparsity and smoothness via the fused lasso.
\newblock {\em Journal of the Royal Statistical Society Series B: Statistical Methodology\/}~{\em 67\/}(1), 91--108.

\bibitem[\protect\citeauthoryear{Uribe and Lopes}{Uribe and Lopes}{2020}]{uribe2020dynamic}
Uribe, P.~W. and H.~F. Lopes (2020).
\newblock Dynamic sparsity on dynamic regression models.
\newblock {\em arXiv preprint arXiv:2009.14131\/}.

\bibitem[\protect\citeauthoryear{Wand, Ormerod, Padoan, and Fr{\"u}hwirth}{Wand et~al.}{2011}]{wand2011mean}
Wand, M.~P., J.~T. Ormerod, S.~A. Padoan, and R.~Fr{\"u}hwirth (2011).
\newblock Mean field variational {B}ayes for elaborate distributions.
\newblock {\em Bayesian Analysis\/}~{\em 6\/}(4), 847--900.

\end{thebibliography}

\newpage
\appendix
\renewcommand\thefigure{\thesection.\arabic{figure}}    
\renewcommand\thetable{\thesection.\arabic{table}}    
\numberwithin{equation}{section}
\numberwithin{proposition}{section}

\def\spacingset#1{\renewcommand{\baselinestretch}%
{#1}\small\normalsize} \spacingset{1}


\begin{center}
    \LARGE{\bf Supplementary Material for: \\ Scalable Variational Bayes Inference for Dynamic Variable Selection}
\end{center}

\spacingset{1.75} 

\vspace{1cm}
\noindent This Supplementary Material provides a set of theoretical, simulation, and empirical results complementary to what is discussed in the main paper. Section \ref{app:mcmc_algo} shows the derivation of an MCMC algorithm equivalent to our variational Bayes approach. This allows us to draw a more equitable comparison between estimation methods based on the same model. Section \ref{app:vb_densities} provides the full proofs and derivations of the optimal densities used to implement our variational Bayes estimation. Section \ref{app:theo} provides the formal proofs of our variational Bayes algorithm's sparsity-inducing and convergence properties. Section \ref{app:hyperparam} discusses the choice of the hyperparameters. Finally, Section \ref{app:TVP_more_sim} and Section \ref{app:additional_empirical} provide additional simulation and empirical results, respectively.

\section{An equivalent MCMC sampling scheme}\setcounter{figure}{0}\setcounter{table}{0}

\label{app:mcmc_algo}

We provide the full conditional distributions, which can be used to implement a conventional MCMC algorithm equivalent to our variational Bayes algorithm. It is worth noticing that the MCMC implementation lacks two important properties. First, it is not possible to smooth the posterior inclusion probabilities using the strategy in Proposition 5. Second, perhaps more importantly, the results described in Section 2.2 are no longer valid, and therefore, an efficient version of the MCMC that drops the unimportant variables online is not available.

\subsection{Full conditional distributions}

In this Section, we provide the set of full conditional distributions that can be used to implement a conventional Gibbs sampler assuming either stochastic or constant volatility in the residuals. 

\paragraph{Full conditional of $p(\sigma^2|\mbox{rest})$.} Recall that in the case of a homoschedastic error term, the prior assumption on $\sigma^2$ is $\sigma^2\sim\mathsf{IGa}(A_\sigma,B_\sigma)$.
The full conditional distribution of $\sigma^2$ given the rest $p(\sigma^2|\mbox{rest})\propto p(\mathbf{y}|\sigma^2,\mathbf{b},\boldsymbol{\gamma})p(\sigma^2)$ is proportional to:
\begin{align}\label{eq:full_s2}
		\log p(\sigma^2|\mathrm{rest})&\propto -\frac{n}{2}\log\sigma^2-\frac{1}{2\sigma^2}\boldsymbol{\varepsilon}^\prime\boldsymbol{\varepsilon} -(A_\sigma+1)\log\sigma^2-\frac{B_\sigma}{\sigma^2},
\end{align}
with $\boldsymbol{\varepsilon}=\mathbf{y}-\sum_{j=1}^p\mathbf{X}_j\mathbf{\Gamma}_j\mathbf{b}_j$, where $\mathbf{X}_k$ and $\mathbf{\Gamma}_k$ are diagonal matrices with elements $x_{kt-1}$ and $\gamma_{kt}$ respectively. Therefore, the full conditional distribution of the constant variance $\sigma^2$ is an inverse-gamma $\sigma^2|\mbox{rest}\sim\mathsf{IG}\left(A_\sigma+\frac{n}{2},B_\sigma+\frac{1}{2}\boldsymbol{\varepsilon}^\prime\boldsymbol{\varepsilon}\right)$.
 
\paragraph{Sampling $\mathbf{h}$.} In order to get posterior samples from $p(\mathbf{h}|\mathbf{y})$, we exploit the methodology described in \cite{kastner2014ancillarity} where the data are transformed as $\varepsilon_t=y_t-\sum_{j=1}^px_{jt-1}\gamma_{jt}b_{jt}$, for $t=1,\ldots,n$. 

\paragraph{Full conditional of $p(\mathbf{b}_j|\mbox{rest})$.} For simplicity, we provide the full conditionals for the independence case where variables are not grouped based on their correlation. Let the prior distribution on $\mathbf{b}_j$ be $\mathbf{b}_j\sim\mathsf{N}_{n+1}(0,\eta^2_j\mathbf{Q}^{-1})$.
The full conditional distribution of $\mathbf{b}_j$ given the rest $p(\mathbf{b}_j|\mbox{rest})\propto p(\mathbf{y}|\sigma^2,\mathbf{b},\boldsymbol{\gamma})p(\mathbf{b}_j|\eta^2_j)$ is proportional to:
	\begin{align*}
		\log p(\mathbf{b}_j|\mathrm{rest})\propto -\frac{1}{2}\left(\mathbf{y}-\sum_{k=1}^p\mathbf{X}_k\mathbf{\Gamma}_k\mathbf{b}_k\right)^\prime\mathbf{H}\left(\mathbf{y}-\sum_{k=1}^p\mathbf{X}_k\mathbf{\Gamma}_k\mathbf{b}_k\right) -\frac{1}{2\eta_j^2}{\mathbf{b}}_j^\prime\mathbf{Q}{\mathbf{b}}_j
	\end{align*}
	where $\mathbf{H}$, $\mathbf{X}_k$, and $\mathbf{\Gamma}_k$ are diagonal matrices with elements $1/\sigma^2_t$, $x_{kt-1}$, and $\gamma_{kt}$ for $t=1,\ldots,n$, respectively.
	Define $\boldsymbol{\varepsilon}_{-j}=\mathbf{y}-\sum_{k=1,k\neq j}^p\mathbf{X}_k\mathbf{\Gamma}_k\mathbf{b}_k$, then
	\begin{equation}\label{eq:full_b}
		\begin{aligned}
			\log p(\mathbf{b}_j|\mathrm{rest})&\propto -\frac{1}{2}\left(\boldsymbol{\varepsilon}_{-j}-\mathbf{X}_j\mathbf{\Gamma}_j\mathbf{b}_j\right)^\prime\mathbf{H}\left(\boldsymbol{\varepsilon}_{-j}-\mathbf{X}_j\mathbf{\Gamma}_j\mathbf{b}_j\right) -\frac{1}{2\eta_j^2}{\mathbf{b}}_j^\prime\mathbf{Q}{\mathbf{b}}_j \\
			&\propto -\frac{1}{2}\left(\mathbf{b}_j^\prime\mathbf{\Gamma}_j\mathbf{X}_j\mathbf{H}\mathbf{X}_j\mathbf{\Gamma}_j\mathbf{b}_j-2\mathbf{b}_j^\prime\mathbf{\Gamma}_j\mathbf{X}_j\mathbf{H}\boldsymbol{\varepsilon}_{-j}\right) -\frac{1}{2\eta_j^2}{\mathbf{b}}_j^\prime\mathbf{Q}{\mathbf{b}}_j \\
            &\propto -\frac{1}{2}\left(\mathbf{b}_j^\prime(\mathbf{\Gamma}_j\mathbf{X}_j\mathbf{H}\mathbf{X}_j\mathbf{\Gamma}_j+1/\eta_j^2\mathbf{Q})\mathbf{b}_j-2\mathbf{b}_j^\prime\mathbf{\Gamma}_j\mathbf{X}_j\mathbf{H}\boldsymbol{\varepsilon}_{-j}\right).
		\end{aligned}
	\end{equation}
Therefore, the full conditional distribution of $\mathbf{b}_j$ is a multivariate Gaussian distribution $\mathbf{b}_j|\mbox{rest}\sim\mathsf{N}_{n+1}\left(\boldsymbol{\mu}_{\mathbf{b}_j|\mbox{rest}},\mathbf{\Sigma}_{\mathbf{b}_j|\mbox{rest}}\right)$, with variance-covariance $\mathbf{\Sigma}_{\mathbf{b}_j|\mbox{rest}} = (\mathbf{\Gamma}_j\mathbf{X}_j\mathbf{H}\mathbf{X}_j\mathbf{\Gamma}_j+1/\eta_j^2\mathbf{Q})^{-1}$ and mean $\boldsymbol{\mu}_{\mathbf{b}_j|\mbox{rest}} = (\mathbf{\Gamma}_j\mathbf{X}_j\mathbf{H}\mathbf{X}_j\mathbf{\Gamma}_j+1/\eta_j^2\mathbf{Q})^{-1}\mathbf{\Gamma}_j\mathbf{X}_j\mathbf{H}\boldsymbol{\varepsilon}_{-j}$.

\paragraph{Full conditional of $p(\gamma_{jt}|\mbox{rest})$.} Recall that the prior assumption on $\gamma_{jt}$ is $\gamma_{jt}\sim\mathsf{Bern}(\mathrm{expit}(\omega_{jt}))$. The full conditional distribution of $\gamma_{jt}$, namely $p(\gamma_{jt}|\mbox{rest})\propto p(\mathbf{y}|\sigma^2,\mathbf{b},\boldsymbol{\gamma})p(\gamma_{jt}|\omega_{jt})$ is proportional to:
	\begin{align}\label{eq:full_gamma}
		\log p(\gamma_{jt}|\mathrm{rest})&\propto -\frac{1}{2\sigma^2_t}\left(y_t-\sum_{k=1}^p \gamma_{kt}b_{kt}x_{kt-1}\right)^2+\gamma_{jt}\omega_{jt} \nonumber\\
		&\propto -\frac{1}{2\sigma_t^2}(\gamma^2_{jt}b^2_{jt}x^2_{jt-1}-2\gamma_{jt}b_{jt}x_{jt-1} \varepsilon_{-j,t})+\gamma_{jt}\omega_{jt}\\
		&\propto \gamma_{jt}\left\{\omega_{jt}-\frac{1}{2\sigma_t^2}(b^2_{jt}x^2_{jt-1}-2b_{jt}x_{jt-1} \varepsilon_{-j,t})\right\}\nonumber.
	\end{align}
Therefore, the full conditional distribution of the indicator variable $\gamma_{jt}$ is a Bernoulli distribution $\gamma_{jt}|\mbox{rest}\sim\mathsf{Bern}\left(\mathrm{expit}\left\{\omega_{jt}-\frac{1}{2\sigma_t^2}(b^2_{jt}x^2_{jt-1}-2b_{jt}x_{jt-1} \varepsilon_{-j,t})\right\}\right)$.

\paragraph{Full conditional of $p(\boldsymbol{\omega}_j|\mbox{rest})$.} Let the prior distribution on $\boldsymbol{\omega}_j$ be $\boldsymbol{\omega}_j\sim\mathsf{N}_{n+1}(0,\xi^2_j\mathbf{Q}^{-1})$. The full conditional distribution $p(\boldsymbol{\omega}_j|\mbox{rest})\propto \left[\prod_{t=1}^np(\gamma_{jt}|\omega_{jt},z_{jt})p(z_{jt}|\omega_{jt})\right]p(\boldsymbol{\omega}_j|\xi^2_j)$ is proportional to:
\begin{align}\label{eq:full_om}
		\log p(\boldsymbol{\omega}_j|\mathrm{rest})&\propto \boldsymbol{\omega}_j^\prime(\boldsymbol{\gamma}_j-1/2\boldsymbol{\iota}_{n})-\frac{1}{2}\boldsymbol{\omega}_j^\prime\mathsf{Diag}(\mathbf{z}_j)\boldsymbol{\omega}_j -\frac{1}{2\xi_j^2}{\boldsymbol{\omega}}_j^\prime\mathbf{Q}{\boldsymbol{\omega}}_j \nonumber\\
        &\propto -\frac{1}{2}\left(\boldsymbol{\omega}_j^\prime(\mathsf{Diag}(\mathbf{z}_j)+1/\xi_j^2\mathbf{Q})\boldsymbol{\omega}_j -2\boldsymbol{\omega}_j^\prime(\boldsymbol{\gamma}_j-1/2\boldsymbol{\iota}_{n}) \right).
\end{align}
Therefore, the full conditional distribution of $\boldsymbol{\omega}_j$ is a multivariate Gaussian distribution $\boldsymbol{\omega}_j|\mbox{rest}\sim\mathsf{N}_{n+1}\left(\boldsymbol{\mu}_{\omega_j|\mbox{rest}},\mathbf{\Sigma}_{\omega_j|\mbox{rest}}\right)$, with variance-covariance $\mathbf{\Sigma}_{\omega_j|\mbox{rest}}=(\mathsf{Diag}(\mathbf{z}_j)+1/\xi_j^2\mathbf{Q})^{-1}$ and mean $\boldsymbol{\mu}_{\omega_j|\mbox{rest}}=(\mathsf{Diag}(\mathbf{z}_j)+1/\xi_j^2\mathbf{Q})^{-1}(\boldsymbol{\gamma}_j-1/2\boldsymbol{\iota}_{n})$.

\paragraph{Full conditional of $p(z_{jt}|\mbox{rest})$.} Recall the Polya-Gamma representation. Then, the full conditional distribution of $z_{jt}$, namely $p(z_{jt}|\mbox{rest})\propto p(\gamma_{jt}|z_{jt},\omega_{jt})p(z_{jt}|\omega_{jt})$ is proportional to:
\begin{equation}\label{eq:full_z}
		\begin{aligned}
			\log p(z_{jt}|\mbox{rest}) &\propto -z_{jt}\omega_{jt}^2 +\log p(z_{jt}),
		\end{aligned}
\end{equation}
where $p(z_{jt})$ is the density function of a Polya-Gamma random variable $\mathsf{PG}(1,0)$. Hence, $z_{jt}|\mbox{rest}\sim\mathsf{PG}(1,\omega_{jt})$.

\paragraph{Full conditional of $p(\eta^2_j|\mbox{rest})$.} Assume that a prior $\eta^2_j\sim\mathsf{IGa}(A_\eta,B_\eta)$. Then, the full conditional distribution of $\eta^2_j$ given the rest $p(\eta^2_j|\mbox{rest})\propto p(\mathbf{b}_j|\eta^2_j)p(\eta^2_j)$ is proportional to:
\begin{align}\label{eq:full_eta2}
    p(\eta^2_j|\mathrm{rest})&\propto -\frac{n+1}{2}\log\eta_j^2-\frac{1}{2\eta_j^2}{\mathbf{b}}_j^\prime\mathbf{Q}{\mathbf{b}}_j -(A_\eta+1)\log\eta_j^2-\frac{B_\eta}{\eta_j^2} \nonumber\\
    &\propto -(A_\eta+\frac{n+1}{2}+1)\log\eta_j^2-\frac{1}{\eta_j^2}\left(B_\eta+\frac{1}{2}{\mathbf{b}}_j^\prime\mathbf{Q}{\mathbf{b}}_j\right).
\end{align}
Therefore, the full conditional distribution of the conditional variance $\eta_j^2$ is an inverse-gamma $\eta_j^2|\mbox{rest}\sim\mathsf{IG}\left(A_\eta+\frac{n+1}{2},B_\eta+\frac{1}{2}{\mathbf{b}}_j^\prime\mathbf{Q}{\mathbf{b}}_j\right)$.

\paragraph{Full conditional of $p(\xi^2_j|\mbox{rest})$.} Recall that a priori $\xi^2_j\sim\mathsf{IGa}(A_\xi,B_\xi)$. The full conditional distribution of $\xi^2_j$ given the rest $p(\xi^2_j|\mbox{rest})\propto p(\boldsymbol{\omega}_j|\xi^2_j)p(\xi^2_j)$ is proportional to:
\begin{align}\label{eq:full_xi2}
    p(\xi^2_j|\mathrm{rest})&\propto -\frac{n+1}{2}\log\xi_j^2-\frac{1}{2\xi_j^2}{\boldsymbol{\omega}}_j^\prime\mathbf{Q}{\boldsymbol{\omega}}_j -(A_\xi+1)\log\xi_j^2-\frac{B_\xi}{\xi_j^2} \nonumber\\
    &\propto -(A_\xi+\frac{n+1}{2}+1)\log\xi_j^2-\frac{1}{\xi_j^2}\left(B_\xi+\frac{1}{2}{\boldsymbol{\omega}}_j^\prime\mathbf{Q}{\boldsymbol{\omega}}_j\right).
\end{align}
Hence, the full conditional distribution of the conditional variance $\xi_j^2$ is an inverse-gamma $\xi_j^2|\mbox{rest}\sim\mathsf{IG}\left(A_\xi+\frac{n+1}{2},B_\xi+\frac{1}{2}{\boldsymbol{\omega}}_j^\prime\mathbf{Q}{\boldsymbol{\omega}}_j\right)$.

\paragraph{Full conditional of $p(\nu^2|\mbox{rest})$.} Assume that a priori $\nu^2\sim\mathsf{IGa}(A_\nu,B_\nu)$. The full conditional distribution of $\nu^2$ given the rest $p(\nu^2|\mbox{rest})\propto p(\mathbf{h}|\nu^2)p(\nu^2)$ is proportional to:
\begin{align}\label{eq:full_nu2}
    p(\nu^2|\mathrm{rest})&\propto -\frac{n+1}{2}\log\nu^2-\frac{1}{2\nu^2}{\mathbf{h}}_j^\prime\mathbf{Q}{\mathbf{h}}_j -(A_\nu+1)\log\nu^2-\frac{B_\nu}{\nu^2} \nonumber\\
    &\propto -(A_\nu+\frac{n+1}{2}+1)\log\nu^2-\frac{1}{\nu^2}\left(B_\nu+\frac{1}{2}{\mathbf{h}}_j^\prime\mathbf{Q}{\mathbf{h}}_j\right).
\end{align}
Therefore, the full conditional distribution of the conditional variance $\nu^2$ is an inverse-gamma $\nu^2|\mbox{rest}\sim\mathsf{IG}\left(A_\nu+\frac{n+1}{2},B_\nu+\frac{1}{2}{\mathbf{h}}_j^\prime\mathbf{Q}{\mathbf{h}}_j\right)$.

Algorithm \ref{algo:algo_mcmc} summarises the Gibbs-sampling algorithm leveraging the full conditional distributions provided above. 

	\begin{algorithm}
		\SetAlgoLined
		\kwInit{$\boldsymbol{\vartheta}^{(0)}$, $\mathsf{ndraws}$, $A_\nu$, $B_\nu$, $A_\eta$, $B_\eta$, $A_\xi$, $B_\xi$}
		\For{$r=1,\ldots,\mathsf{ndraws}$}{
			\For{$j=1,\ldots,p$}{
                Compute $\mathbf{\Sigma}_{\mathbf{b}_j|\mbox{rest}} = (\mathbf{\Gamma}^{(r-1)}_j\mathbf{X}_j\mathbf{H}^{(r-1)}\mathbf{X}_j\mathbf{\Gamma}^{(r-1)}_j+1/\eta_{j}^{2{(r-1)}}\mathbf{Q})^{-1}$;\\
                Compute $\boldsymbol{\mu}_{\mathbf{b}_j|\mbox{rest}} = \mathbf{\Sigma}_{\mathbf{b}_j|\mbox{rest}}\mathbf{\Gamma}^{(r-1)}_j\mathbf{X}_j\mathbf{H}^{(r-1)}\boldsymbol{\varepsilon}^{(r-1)}_{-j}$;\\
				Sample $\mathbf{b}_j^{(r)}\sim\mathsf{N}_{n+1}\left(\boldsymbol{\mu}_{\mathbf{b}_j|\mbox{rest}},\mathbf{\Sigma}_{\mathbf{b}_j|\mbox{rest}}\right)$; \\
				Sample $\eta_j^{2(r)}\sim\mathsf{IG}\left(A_\eta+\frac{n+1}{2},B_\eta+\frac{1}{2}{\mathbf{b}}_j^{\prime(r)}\mathbf{Q}{\mathbf{b}}_j^{(r)}\right)$;\\
                Compute $\mathbf{\Sigma}_{\omega_j|\mbox{rest}} = (\mathsf{Diag}(\mathbf{z}^{(r-1)}_j)+1/\xi_j^{2{(r-1)}}\mathbf{Q})^{-1}$;\\
                Compute $\boldsymbol{\mu}_{\omega_j|\mbox{rest}} = \mathbf{\Sigma}_{\omega_j|\mbox{rest}}(\boldsymbol{\gamma}^{(r-1)}_j-1/2\boldsymbol{\iota}_{n})$;\\
				Sample $\boldsymbol{\omega}_j^{(r)}\sim\mathsf{N}_{n+1}\left(\boldsymbol{\mu}_{\omega_j|\mbox{rest}},\mathbf{\Sigma}_{\omega_j|\mbox{rest}}\right)$;\\
                Sample $\xi_j^{2(r)}\sim\mathsf{IG}\left(A_\xi+\frac{n+1}{2},B_\xi+\frac{1}{2}{\boldsymbol{\omega}}_j^{\prime(r)}\mathbf{Q}{\boldsymbol{\omega}}_j^{(r)}\right)$;\\
				\For{$t=1,\ldots,n$}{
					Sample $z^{(r)}_{jt}\sim\mathsf{PG}(1,\omega^{{(r)}}_{jt})$; \\
					Sample $\gamma^{(r)}_{jt}\sim\mathsf{Bern}\left(\mathrm{expit}\left\{\omega^{(r)}_{jt}-\frac{1}{2\sigma^{2(r-1)}_t}(b^{2(r)}_{jt}x^2_{jt}-2b^{(r)}_{jt}x_{jt} \varepsilon^{(r)}_{-j,t})\right\}\right)$;
				}
			}
			Sample $\mathbf{h}$ with $\varepsilon^{(r)}_t=y_t-\sum_{j=1}^px_{jt}\gamma^{(r)}_{jt} b^{(r)}_{jt}$ (heteroskedastic);\\
            Sample $\nu^{2(r)}\sim\mathsf{IG}\left(A_\nu+\frac{n+1}{2},B_\nu+\frac{1}{2}{\mathbf{h}}_j^{\prime(r)}\mathbf{Q}{\mathbf{h}}_j^{(r)}\right)$;\\
                Sample $\sigma^{2(r)}\sim\mathsf{IG}\left(A_\sigma+\frac{n}{2},B_\sigma+\frac{1}{2}\boldsymbol{\varepsilon}^{\prime(r)}\boldsymbol{\varepsilon}^{(r)}\right)$ (homoskedastic); \\
		}
		\caption{Gibbs-sampling scheme for dynamic variable selection.}
		\label{algo:algo_mcmc}
	\end{algorithm}

\section{Optimal variational densities}\setcounter{figure}{0}\setcounter{table}{0}

\label{app:vb_densities}
Proposition \ref{prop:up_beta} shows that the variational mean and variance of the time-varying parameters $\mathbf{b}_k$ depend on the entire trajectory $t=1,\ldots,n$ of the variational estimates $\boldsymbol{\mu}_{q(\gamma_{jt})}$ for all $j=1,\ldots,p_k$ in group $k$.  

\begin{proposition}\label{prop:up_beta}
The optimal variational density for $\mathbf{b}_k=\left(\mathbf{b}_{k0},\ldots,\mathbf{b}_{kn}\right)^\prime$, with $\mathbf{b}_{kt}=(b_{1t},\ldots,b_{p_k t})^\prime$, is a multivariate Gaussian  $q^\ast(\mathbf{b}_k)=\mathsf{N}_{p_k(n+1)}(\boldsymbol{\mu}_{q(\mathbf{b}_k)},\boldsymbol{\Sigma}_{q(\mathbf{b}_k)})$, where:
	\begin{align}
		\boldsymbol{\Sigma}_{q(\mathbf{b}_k)} &= (\mathbf{D}_k+\mathbf{Q}_k)^{-1},\qquad \boldsymbol{\mu}_{q(\mathbf{b}_k)}=\boldsymbol{\Sigma}_{q(\mathbf{b}_k)}{\mathbf{\Lambda}_k},
	\end{align}
where $\mathbf{D}_k$ is a block-diagonal matrix with $n+1$ blocks of size $p_k$ having generic block equal to $[\mathbf{D}_k]_t = \mu_{q(1/\sigma^2_t)}\mathbf{x}_{k,t-1}\mathbf{x}_{k,t-1}^\prime \odot \{\boldsymbol{\mu}_{q(\gamma_{kt})}\boldsymbol{\mu}_{q(\gamma_{kt})}^\prime+\mathrm{diag}(\boldsymbol{\mu}_{q(\gamma_{kt})}(1-\boldsymbol{\mu}_{q(\gamma_{kt})})\}$, $\mathbf{Q}_k$ is a tridiagonal block matrix $\mathbf{Q}_k=\mathbf{Q}\otimes\mathrm{diag}(\mu_{q(1/\eta_1^2)},\ldots,\mu_{q(1/\eta_{p_k}^2)})$, and $\mathbf{\Lambda}_k$ stacks $p_k$ dimensional vectors defined as $\boldsymbol{\lambda}_{kt}={\mu}_{q(1/\sigma_t^2)}\mathrm{diag}(\boldsymbol{\mu}_{q(\gamma_{kt})})\mathbf{x}_{k,t-1}(y_t - \sum_{m=1,m\neq k}^K \mathbf{x}_{m,t-1}^\prime\mathrm{diag}(\boldsymbol{\mu}_{q(\gamma_{mt})}){\boldsymbol{\mu}}_{q(\mathbf{b}_{mt})})$.
\end{proposition}

\begin{proof}
Let the prior distribution on $\mathbf{b}_k$ be $\mathbf{b}_k\sim\mathsf{N}_{p_k(n+1)}(0,\tilde{\mathbf{Q}}_k^{-1})$ with $\tilde{\mathbf{Q}}_k=\mathbf{Q}\otimes\mathrm{diag}(\eta^2_1,\ldots,\eta^2_{p_k})$, then the full conditional distribution of $\mathbf{b}_k$ given the rest $p(\mathbf{b}_k|\mbox{rest})\propto p(\mathbf{y}|\sigma^2,\mathbf{b},\boldsymbol{\gamma})p(\mathbf{b}_k|\eta^2_1,\ldots,\eta^2_{p_k})$ is proportional to:
	\begin{align*}
		\log p(\mathbf{b}_k|\mathrm{rest})\propto -\frac{1}{2}\left(\mathbf{y}-\sum_{m=1}^K\mathbf{X}_m\mathbf{\Gamma}_m\mathbf{b}_m\right)^\prime\mathbf{H}\left(\mathbf{y}-\sum_{m=1}^K\mathbf{X}_m\mathbf{\Gamma}_m\mathbf{b}_m\right) -\frac{1}{2}{\mathbf{b}}_k^\prime\tilde{\mathbf{Q}}_k{\mathbf{b}}_k
	\end{align*}
	where $\mathbf{H}$ and $\mathbf{\Gamma}_m$ are diagonal matrices with elements $1/\sigma^2_t$ and $\gamma_{jt}$, respectively, and $\mathbf{X}_m=[\mathbf{X}_1,\ldots,\mathbf{X}_{p_m}]$ with $\mathbf{X}_j$ diagonal matrix with elements $x_{jt-1}$, for $j=1,\ldots,p_m$ and $t=1,\ldots,n$.
	Define $\boldsymbol{\varepsilon}_{-k}=\mathbf{y}-\sum_{m=1,m\neq k}^K\mathbf{X}_m\mathbf{\Gamma}_m\mathbf{b}_m$, then
	\begin{equation}
		\begin{aligned}
			\log p(\mathbf{b}_k|\mathrm{rest})&\propto -\frac{1}{2}\left(\boldsymbol{\varepsilon}_{-k}-\mathbf{X}_k\mathbf{\Gamma}_k\mathbf{b}_k\right)^\prime\mathbf{H}\left(\boldsymbol{\varepsilon}_{-k}-\mathbf{X}_k\mathbf{\Gamma}_k\mathbf{b}_k\right) -\frac{1}{2}{\mathbf{b}}_k^\prime \tilde{\mathbf{Q}}_k\mathbf{b}_k \\
			&\propto -\frac{1}{2}\left(\mathbf{b}_k^\prime\mathbf{\Gamma}_k\mathbf{X}_k\mathbf{H}\mathbf{X}_k\mathbf{\Gamma}_k\mathbf{b}_k-2\mathbf{b}_k^\prime\mathbf{\Gamma}_k\mathbf{X}_k\mathbf{H}\boldsymbol{\varepsilon}_{-k}\right) -\frac{1}{2}{\mathbf{b}}_k^\prime\tilde{\mathbf{Q}}_k{\mathbf{b}}_k\\
            &\propto -\frac{1}{2}\left(\mathbf{b}_k^\prime(\mathbf{\Gamma}_k\mathbf{X}_k\mathbf{H}\mathbf{X}_k\mathbf{\Gamma}_k+\tilde{\mathbf{Q}}_k)\mathbf{b}_k-2\mathbf{b}_k^\prime\mathbf{\Gamma}_k\mathbf{X}_k\mathbf{H}\boldsymbol{\varepsilon}_{-k}\right).
		\end{aligned}
	\end{equation} 
 According to \cite{ormerod_wand.2010}, the optimal variational density is given by:
	\begin{equation}
		\begin{aligned}\label{eq:q_beta}
			\log q^\ast({\mathbf{b}}_k) &\propto \mathbb{E}_{-\mathbf{b}_k}[\log p(\mathbf{b}_k|\mathrm{rest})] \\ 
			&\propto-\frac{1}{2}\left({\mathbf{b}}_k^\prime(\mathbf{D}_k+\mathbf{Q}_k){\mathbf{b}}_k-2{\mathbf{b}}_k^\prime\mathrm{diag}(\boldsymbol{\mu}_{q(\gamma_{k})})\mathbf{X}_k\mathbb{E}_q[\mathbf{H}]\mathbb{E}_q[\boldsymbol{\varepsilon}_{-j}]\right),
		\end{aligned}
	\end{equation}
	where $\mathbf{D}_k = \mathbb{E}_q[\mathbf{\Gamma}_k\mathbf{X}_k\mathbf{H}\mathbf{X}_k\mathbf{\Gamma}_k]$ is a block-diagonal matrix with $n+1$ blocks of size $p_k$ having generic block equal to $[\mathbf{D}_k]_t = \mu_{q(1/\sigma^2_t)}\mathbf{x}_{k,t-1}\mathbf{x}_{k,t-1}^\prime \odot \{\boldsymbol{\mu}_{q(\gamma_{kt})}\boldsymbol{\mu}_{q(\gamma_{kt})}^\prime+\mathrm{diag}(\boldsymbol{\mu}_{q(\gamma_{kt})}(1-\boldsymbol{\mu}_{q(\gamma_{kt})}))\}$, and $\mathbf{Q}_k$ is a tridiagonal block matrix $\mathbf{Q}_k=\mathbf{Q}\otimes\mathrm{diag}(\mu_{q(1/\eta_1^2)},\ldots,\mu_{q(1/\eta_{p_k}^2)})$.
    Define $\mathbf{\Lambda}_k=\mathrm{diag}(\boldsymbol{\mu}_{q(\gamma_{k})})\mathbf{X}_k\mathbb{E}_q[\mathbf{H}]\mathbb{E}_q[\boldsymbol{\varepsilon}_{-j}])$ with $\mathbb{E}_q[\mathbf{H}]=\mathrm{diag}(\mu_{q(1/\sigma^2_1)},\ldots,\mu_{q(1/\sigma^2_n)})$ and notice that equation \ref{eq:q_beta} represents the kernel of a multivariate Gaussian distribution as in Proposition \ref{prop:up_beta}.
\end{proof}

Proposition \ref{prop:up_gamma} provides the vector of optimal variational estimates of $\gamma_{jt}$ for all variables $j=1,\ldots,p_k$ that belong to a group $k=1,\ldots,K$ of size $p_k$.

\begin{proposition}\label{prop:up_gamma}
The optimal variational density of $\gamma_{jt}$ is a Bernoulli random variable $q^\ast(\gamma_{jt})=\mathsf{Bern}(\mathrm{expit}(\omega_{q(\gamma_{jt})}))$, where $\mathrm{expit}(\cdot)$ is the inverse of the logit function and 
    \begin{align}
	\omega_{q(\gamma_{jt})}&=\mu_{q(\omega_{jt})}-\frac{1}{2}\mu_{q(1/\sigma_t^2)}(x^2_{jt-1}\mathbb{E}_q[b^2_{jt}]-2\mu_{q(b_{jt})} x_{jt-1}\mu_{q(\varepsilon_{-j,t})}) \nonumber\\
    &\qquad -\mu_{q(1/\sigma_t^2)}x_{jt-1}\mathbf{x}^{-j}_{kt-1}\mathrm{diag}(\boldsymbol{\mu}^{-j}_{q(\gamma_{kt})})[\boldsymbol{\Sigma}_{q(\mathbf{b}_{kt})}]_{-j,j},\label{eq:up_gamma}
	\end{align}
 where $[\boldsymbol{\Sigma}_{q(\mathbf{b}_{kt})}]_{-j,j}$ is the $j$-th column withouth the $j$-th element of the $t$-th diagonal block of $\boldsymbol{\Sigma}_{q(\mathbf{b}_{k})}$, and $\mu_{q(\varepsilon_{-j,t})}=y_t - \mathbf{x}^{-j\,\prime}_{k,t-1}\mathrm{diag}(\boldsymbol{\mu}_{q(\gamma^{-j}_{kt})}){\boldsymbol{\mu}}_{q(\mathbf{b}^{-j}_{kt})}- \sum_{m=1,m\neq k}^K \mathbf{x}_{m,t-1}^\prime\mathrm{diag}(\boldsymbol{\mu}_{q(\gamma_{mt})}){\boldsymbol{\mu}}_{q(\mathbf{b}_{mt})}$ with $\boldsymbol{\mu}_{q(\gamma_{kt})}$ being the collection of $\boldsymbol{\mu}_{q(\gamma_{jt})}$ for the predictors in group $k$.
\end{proposition}

\begin{proof}
	The full conditional distribution of $\gamma_{jt}\sim p(\gamma_{jt}|\mbox{rest})$ is derived in Eq.\eqref{eq:full_gamma}. Thus, the optimal variational density is given by:
	\begin{equation}
		\begin{aligned}\label{eq:q_gamma}
			\log q^\ast(\gamma_{jt}) &\propto \mathbb{E}_{-\gamma_{jt}}\left[\log p(\gamma_{jt}|\mathrm{rest})\right] \\
			&\propto \gamma_{jt}\{\mu_{q(\omega_{jt})}-\frac{1}{2}\mu_{q(1/\sigma^2)}\big(x^2_{jt-1}\mathbb{E}_q[b^2_{jt}]-2\mathbb{E}_{-\gamma_{jt}}\left[b_{jt}x_{jt-1}\varepsilon_{-j,t}\right] \big)\},
		\end{aligned}
	\end{equation}
    where $\mathbb{E}_q[b^2_{jt}]=\mu^2_{q(b_{jt})}+\sigma^2_{q(b_{jt})}$ and $\varepsilon_{-j,t}=y_t - \mathbf{x}^{-j\,\prime}_{k,t-1}\mathrm{diag}(\boldsymbol{\gamma}^{-j}_{kt})\mathbf{b}^{-j}_{kt}- \sum_{m=1,m\neq k}^K \mathbf{x}_{m,t-1}^\prime\mathrm{diag}(\boldsymbol{\gamma}_{mt})\mathbf{b}_{mt}$.
    Hence, we need to account for the within-group correlation between $b_{jt}$ and $\mathbf{b}^{-j}_{kt}$:
    \begin{align*}
        \mathbb{E}_{-\gamma_{jt}}\left[b_{jt}x_{jt-1}\varepsilon_{-j,t}\right] &= \mathbb{E}_{-\gamma_{jt}}\left[b_{jt}x_{jt-1}\bigg(y_t - \sum_{m=1,m\neq k}^K \mathbf{x}_{m,t-1}^\prime\mathrm{diag}(\boldsymbol{\gamma}_{mt})\mathbf{b}_{mt}\bigg)\right] \\
        &\qquad - \mathbb{E}_{-\gamma_{jt}}\big[b_{jt}x_{jt-1}\mathbf{x}^{-j\,\prime}_{k,t-1}\mathrm{diag}(\boldsymbol{\gamma}^{-j}_{kt})\mathbf{b}^{-j}_{kt}\big] \\
        &= \mu_{q(b_{jt})}x_{jt-1}\bigg(y_t - \sum_{m=1,m\neq k}^K \mathbf{x}_{m,t-1}^\prime\mathrm{diag}(\boldsymbol{\mu}_{q(\gamma_{mt})})\boldsymbol{\mu}_{q(\mathbf{b}_{mt})}\bigg) \\
        &\qquad - x_{jt-1}\mathbf{x}^{-j\,\prime}_{k,t-1}\mathrm{diag}(\boldsymbol{\mu}_{q(\gamma^{-j}_{kt})})\big(\mu_{q(b_{jt})}\boldsymbol{\mu}_{q(\mathbf{b}^{-j}_{kt})}+\mathsf{Cov}(b_{jt},\mathbf{b}^{-j}_{kt})\big).
    \end{align*}
Define $\mu_{q(\varepsilon_{-jt})}=y_t - \mathbf{x}^{-j\,\prime}_{k,t-1}\mathrm{diag}(\boldsymbol{\mu}_{q(\gamma^{-j}_{kt})}){\boldsymbol{\mu}}_{q(\mathbf{b}^{-j}_{kt})}- \sum_{m=1,m\neq k}^K \mathbf{x}_{m,t-1}^\prime\mathrm{diag}(\boldsymbol{\mu}_{q(\gamma_{mt})}){\boldsymbol{\mu}}_{q(\mathbf{b}_{mt})}$, the latter equation reduces to:
\begin{align*}
        \mathbb{E}_{-\gamma_{jt}}\left[b_{jt}x_{jt-1}\varepsilon_{-j,t}\right] &=  \mu_{q(b_{jt})}x_{jt-1}\mu_{q(\varepsilon_{-jt})} - x_{jt-1}\mathbf{x}^{-j\,\prime}_{k,t-1}\mathrm{diag}(\boldsymbol{\mu}_{q(\gamma^{-j}_{kt})})[\boldsymbol{\Sigma}_{q(\mathbf{b}_{kt})}]_{-j,j}.
    \end{align*}
To conclude, Eq.\ref{eq:q_gamma} is the kernel of a Bernoulli distribution as in Proposition \ref{prop:up_gamma}.
\end{proof}

\begin{proposition}\label{prop:up_omega_app}
	\it The optimal variational density for the parameter ${\boldsymbol{\omega}}_j$ is a multivariate Gaussian $q^\ast({\boldsymbol{\omega}}_j)=\mathsf{N}_{n+1}(\boldsymbol{\mu}_{q(\omega_j)},\boldsymbol{\Sigma}_{q(\omega_j)})$, where:
	\begin{align}
		\boldsymbol{\Sigma}_{q(\omega_j)} &= (\mathsf{Diag}(0,\boldsymbol{\mu}_{q(z_j)})+\mu_{q(1/\xi_j^2)}\mathbf{Q})^{-1} ,\qquad \boldsymbol{\mu}_{q(\omega_j)}=\boldsymbol{\Sigma}_{q(\omega_j)}(0,\boldsymbol{\mu}^\intercal_{q(\bar{\gamma}_j)})^\intercal,
	\end{align}
	with $\boldsymbol{\mu}_{q(\bar{\gamma}_j)}=\boldsymbol{\mu}_{q(\gamma_j)}-1/2\boldsymbol{\iota}_{n}$.
\end{proposition}

\begin{proof}
	The full conditional distribution of $\boldsymbol{\omega}_j$ is defined in Eq.\eqref{eq:full_om}. Then, the optimal variational density is given by:
	\begin{equation}
		\begin{aligned}\label{eq:q_omega}
			\log q^\ast(\boldsymbol{\omega}_j) &\propto \mathbb{E}_{-\omega_j}[\log p(\boldsymbol{\omega}_j|\mathrm{rest})] \\ &\propto\boldsymbol{\omega}_j^\prime\boldsymbol{\mu}_{q(\bar{\gamma}_j)}-\frac{1}{2}\boldsymbol{\omega}_j^\prime\mathsf{Diag}(\boldsymbol{\mu}_{q(z_j)})\boldsymbol{\omega}_j -\frac{1}{2}\mu_{q(1/\xi_j^2)}{\boldsymbol{\omega}}_j^\prime\mathbf{Q}{\boldsymbol{\omega}}_j\\
			&\propto -\frac{1}{2}\left({\boldsymbol{\omega}}_j^\prime(\mathsf{Diag}(0,\boldsymbol{\mu}_{q(z_j)})+\mu_{q(1/\xi_j^2)}\mathbf{Q}){\boldsymbol{\omega}}_j -2{\boldsymbol{\omega}}_j^\prime(0,\boldsymbol{\mu}^\prime_{q(\bar{\gamma}_j)})^\prime\right),
		\end{aligned}
	\end{equation}
	where $\boldsymbol{\mu}_{q(\bar{\gamma}_j)}=\boldsymbol{\mu}_{q(\gamma_j)}-1/2\boldsymbol{\iota}_{n}$. Equation \ref{eq:q_omega} is the kernel of a multivariate Gaussian distribution as in Proposition \ref{prop:up_omega_app}.
\end{proof}

Recall from Eq.(2.2) that the dynamic of the regression coefficient $\beta_{jt}$ is a compounding process of the latent state $b_{jt}$ and the indicator variable $\gamma_{jt}$. Proposition \ref{prop:q_beta_tilde_app} builds upon Propositions \ref{prop:up_beta} and \ref{prop:up_gamma} and provides the optimal variational density for the time-varying regression coefficients for variables $j=1,\ldots,p_k$ in group $k$.

\begin{proposition}\label{prop:q_beta_tilde_app}
    Let $q^\ast(\mathbf{b}_k)$ and $q^\ast(\gamma_{jt})$, for $j=1,\ldots,p_k$, be the optimal variational densities presented in Propositions \ref{prop:up_beta} and \ref{prop:up_gamma}. Define $\boldsymbol{\beta}_k=\mathbf{\Gamma}_k\mathbf{b}_k$ with $\mathbf{\Gamma}_k=\mathsf{diag}(\boldsymbol{\iota}_{p_k}^\prime,\boldsymbol{\gamma}_{k1}^\prime,\ldots,\boldsymbol{\gamma}_{kn}^\prime)$, where $\boldsymbol{\gamma}_{kt}=\left(\gamma_{1t},\ldots,\gamma_{p_kt}\right)^\prime$ and $\boldsymbol{\iota}_{p_k}$ is a $p_k$-dimensional vector of ones. The optimal variational density of $\boldsymbol{\beta}_k$ is given by a mixture of multivariate Gaussian distributions:
	\begin{equation}
		q^\ast(\boldsymbol{\beta}_k)=\sum_{\mathbf{s}\in\mathcal{S}}\,w_s\,\mathsf{N}_{p_k(n+1)}(\mathbf{D}_{s}\boldsymbol{\mu}_{q(\mathbf{b}_k)},\mathbf{D}_{s}^{1/2}\mathbf{\Sigma}_{q(\mathbf{b}_k)}\mathbf{D}_{s}^{1/2}),
	\end{equation}
	where $\mathcal{S}$ is a sequence of $\{0,1\}$ of length $p_k n$ with cardinality $|\mathcal{S}|=2^{p_k n}$, the diagonal matrix $\mathbf{D}_{s}=\mathsf{diag}(1,s_{11},\ldots,s_{p_k,n})$, and mixing weights:
	\begin{equation}
		w_s = \prod_{j=1}^{p_k}\prod_{t=1}^n\mu_{q(\gamma_{jt})}^{s_{jt}}(1-\mu_{q(\gamma_{jt})})^{1-s_{jt}},
	\end{equation}
	where $\mathbf{s}=(s_{11},\ldots,s_{jt},\ldots,s_{p_k,n})\in\mathcal{S}$ is an element in $\mathcal{S}$. Moreover, the mean and variance can be computed analytically:
	\begin{align}
		\boldsymbol{\mu}_{q(\beta_k)}&=\boldsymbol{\mu}_{q(\Gamma_k)}\boldsymbol{\mu}_{q(\mathbf{b}_k)},\\ \mathbf{\Sigma}_{q(\beta_k)}&=(\boldsymbol{\mu}_{q(\gamma_k)}\boldsymbol{\mu}_{q(\gamma_k)}^\prime+\mathbf{W}_{\mu_{q(\gamma_k)}})\odot\mathbf{\Sigma}_{q(\mathbf{b}_k)}+ \mathbf{W}_{\mu_{q(\gamma_k)}}\odot\boldsymbol{\mu}_{q(\mathbf{b}_k)}\boldsymbol{\mu}_{q(\mathbf{b}_k)}^\prime,
	\end{align}
	where $\mathbf{W}_{\mu_{q(\gamma_k)}}$ is a diagonal matrix with elements $\mu_{q(\gamma_{jt})}(1-\mu_{q(\gamma_{jt})})$, and $\boldsymbol{\mu}_{q(\gamma_k)}$ the collection of $\boldsymbol{\mu}_{q(\gamma_{jt})}$ for $j=1,\ldots,p_k$ and $t=1,\ldots,n$.
\end{proposition}

\begin{proof}
    Without loss of generality, assume that the $k$-th group has only one element $\mathbf{b}_j$ and $p_k=1$. Recall that under the mean-field variational Bayes setting we have that \begin{equation}\label{eq:q_fact_beta_gamma}
		q(\mathbf{b}_j,\gamma_{j1},\ldots,\gamma_{jn})=q(\mathbf{b}_j)\prod_{t=1}^nq(\gamma_{jt}).
	\end{equation}
	For the sake of simplicity, in what follows, we drop the index $j$ and define $\boldsymbol{\gamma}=\mathsf{diag}(\mathbf{\Gamma})$ the diagonal elements in $\mathbf{\Gamma}$. Consider the following transformation of random variables $(\boldsymbol{\gamma} =\boldsymbol{\gamma}, \boldsymbol{\beta}=\mathbf{\Gamma}\mathbf{b})$, so that $\mathbf{b}=\mathbf{\Gamma}^{-1}\boldsymbol{\beta}$. Hence, it follows that:
	\begin{equation}
		\mathbf{J}=\left[\begin{array}{cc}
			\nabla_\gamma (\gamma_1,\ldots,\gamma_n)^\prime & \nabla_{\mathbf{b}} (\gamma_1,\ldots,\gamma_n)^\prime \\
			\nabla_\gamma \mathbf{\Gamma}^{-1}\boldsymbol{\beta} & \nabla_{\boldsymbol{\beta}} \mathbf{\Gamma}^{-1}\boldsymbol{\beta}
		\end{array}\right] =\left[\begin{array}{cc}
		\mathbf{I}_n & \mathbf{0} \\
		\nabla_\gamma \mathbf{\Gamma}^{-1}\boldsymbol{\beta} & \mathbf{\Gamma}^{-1}
	\end{array}\right],
	\end{equation}
	and so $|\mathbf{J}|=|\mathbf{\Gamma}^{-1}|$. The joint distribution of $(\boldsymbol{\beta},\gamma_1,\ldots,\gamma_n)$ can be written as:
	\begin{equation}\label{eq:q_joint_betatilde_gamma}
		q(\boldsymbol{\beta},\gamma_1,\ldots,\gamma_n)=|\mathbf{\Gamma}^{-1}|q(\mathbf{\Gamma}^{-1}\boldsymbol{\beta})\prod_{t=1}^nq(\gamma_{jt}) = f(\boldsymbol{\beta}|\gamma_1,\ldots,\gamma_n)f(\gamma_1,\ldots,\gamma_n),
	\end{equation}
	where $q$ are then replaced by the optimal elements $q^\ast$. For the conditional distribution in \eqref{eq:q_joint_betatilde_gamma}, we have that:
	\begin{equation}
		f(\boldsymbol{\beta}|\boldsymbol{\gamma}) = |\mathbf{\Gamma}^{-1}|\phi_{n+1}(\mathbf{\Gamma}^{-1}\boldsymbol{\beta}|\boldsymbol{\mu}_{q(\mathbf{b})},\mathbf{\Sigma}_{q(\mathbf{b})}),
	\end{equation}
	where $\phi_{n+1}(\cdot|\boldsymbol{\mu},\mathbf{\Sigma})$ is the density function of a multivariate Gaussian. After some computations we have that $f(\boldsymbol{\beta}|\boldsymbol{\gamma}) = \phi_{n+1}(\boldsymbol{\beta}|\boldsymbol{\mu}(\boldsymbol{\gamma}),\mathbf{\Sigma}(\boldsymbol{\gamma}))$ with mean vector $\boldsymbol{\mu}(\boldsymbol{\gamma})=\mathbf{\Gamma}\boldsymbol{\mu}_{q(\mathbf{b})}$ and covariance matrix $\mathbf{\Sigma}(\boldsymbol{\gamma})=\mathbf{\Gamma}^{1/2}\mathbf{\Sigma}_{q(\mathbf{b})}\mathbf{\Gamma}^{1/2}$. The marginal for $\boldsymbol{\beta}$ can be found as:
	\begin{equation}\label{eq:q_marg_betatilde}
		q(\boldsymbol{\beta})=\sum_{\mathbf{s}\in\mathcal{S}}\phi_{n+1}(\boldsymbol{\beta}|\boldsymbol{\mu}(\boldsymbol{\gamma}=\mathbf{s}),\mathbf{\Sigma}(\boldsymbol{\gamma}=\mathbf{s}))\prod_{t=1}^nq(\gamma_{t}=s_t),
	\end{equation}
	where $\mathcal{S}$ denotes the domain of $\boldsymbol{\gamma}=(1,\gamma_1,\ldots,\gamma_n)$, and it is composed by all the possible sequences of $\{0,1\}$ of length n, since the first element is fixed to be 1. The latter set has cardinality $|\mathcal{S}|=2^n$. The distributional result concerning $\boldsymbol{\beta}$ is therefore proven.\par 
	Now compute the marginal mean recall that $\mathbb{E}_x(x)=\mathbb{E}_y(\mathbb{E}_x(x|y))$. Hence $\mathbb{E}_q(\boldsymbol{\beta})=\mathbb{E}_\gamma(\mathbf{\Gamma}\boldsymbol{\mu}_{q(\mathbf{b})})=\boldsymbol{\mu}_{q(\Gamma)}\boldsymbol{\mu}_{q(\mathbf{b})}$. The marginal variance-covariance matrix is then computed as $\mathsf{Var}_q(\boldsymbol{\beta}) =\mathbb{E}(\boldsymbol{\beta}\boldsymbol{\beta}^\prime)-\mathbb{E}(\boldsymbol{\beta})\mathbb{E}(\boldsymbol{\beta})^\prime$ where 
	\begin{align}
		\mathbb{E}(\boldsymbol{\beta}\boldsymbol{\beta}^\prime) &= \mathbb{E}(\mathbf{\Gamma}{\mathbf{b}}(\mathbf{\Gamma}{\mathbf{b}})^\prime) = \mathbb{E}(\mathbf{\Gamma}{\mathbf{b}}{\mathbf{b}}^\prime\mathbf{\Gamma}) = \mathbb{E}(\boldsymbol{\gamma}\boldsymbol{\gamma}^\prime\odot{\mathbf{b}}{\mathbf{b}}^\prime)= \mathbb{E}(\boldsymbol{\gamma}\boldsymbol{\gamma}^\prime)\odot\mathbb{E}({\mathbf{b}}{\mathbf{b}}^\prime) \nonumber\\
		&= (\boldsymbol{\mu}_{q(\gamma)}\boldsymbol{\mu}_{q(\gamma)}^\prime+\mathbf{W}_{\mu_{q(\gamma)}})\odot(\boldsymbol{\mu}_{q(\mathbf{b})}\boldsymbol{\mu}_{q(\mathbf{b})}^\prime+\mathbf{\Sigma}_{q(\mathbf{b})}),
	\end{align}
	where $\mathbf{W}_{\mu_{q(\gamma)}}$ is a diagonal matrix with elements $(1,\{\mu_{q(\gamma_{t})}(1-\mu_{q(\gamma_{t})})\}_{t=1}^n)$. Plug-in the latter in the formula for $\mathsf{Var}_q(\boldsymbol{\beta})$ and recall the analytical form of the mean $\mathbb{E}(\boldsymbol{\beta})$. After some simplification we end up with $\mathbf{\Sigma}_{q(\beta)}=(\boldsymbol{\mu}_{q(\gamma)}\boldsymbol{\mu}_{q(\gamma)}^\prime+\mathbf{W}_{\mu_{q(\gamma)}})\odot\mathbf{\Sigma}_{q(\mathbf{b})}+ \mathbf{W}_{\mu_{q(\gamma)}}\odot\boldsymbol{\mu}_{q(\mathbf{b})}\boldsymbol{\mu}_{q(\mathbf{b})}^\prime$, which concludes the proof. The result for a generic group $k$ of size $p_k$ can be proven following the same steps and adjusting for the array's dimensions.
\end{proof}

\begin{proposition}\label{prop:up_logsima_app}
	\it Let $\boldsymbol{\varepsilon}^2=\boldsymbol{\varepsilon}\odot\boldsymbol{\varepsilon}$ with components $[\boldsymbol{\varepsilon}^2]_t=(y_t-\boldsymbol{\beta}_t\mathbf{x}_{t-1})^2$. Assuming a Gaussian Markov Random Field (GMRF) approximation $q^\ast(\mathbf{h})=\mathsf{N}_{n+1}(\boldsymbol{\mu}_{q(h)},\mathbf{\Sigma}_{q(h)})$, with mean vector $\boldsymbol{\mu}_{q(h)}$ and variance-covariance matrix $\mathbf{\Sigma}_{q(h)}$, an iterative algorithm can be set as:
	\begin{align}
		\boldsymbol{\Sigma}_{q(h)}^{new} &= \left[\nabla_{\boldsymbol{\mu}_{q(h)},\boldsymbol{\mu}_{q(h)}}^2 S(\boldsymbol{\mu}_{q(h)}^{old},\boldsymbol{\Sigma}_{q(h)}^{old})\right]^{-1} \\
		\boldsymbol{\mu}_{q(h)}^{new} &= \boldsymbol{\mu}_{q(h)}^{old} + \boldsymbol{\Sigma}_{q(h)}^{new}\nabla_{\boldsymbol{\mu}_{q(h)}} S(\boldsymbol{\mu}_{q(h)}^{old},\boldsymbol{\Sigma}_{q(h)}^{old}).
	\end{align}
	where 
	\begin{align}
		\nabla_{\boldsymbol{\mu}_{q(h)}} S(\boldsymbol{\mu}_{q(h)}^{old},\boldsymbol{\Sigma}_{q(h)}^{old}) &= -\frac{\boldsymbol{\iota}_n}{2}+\frac{1}{2}\mathbb{E}_q(\boldsymbol{\varepsilon}^2)\odot\mathrm{e}^{-\boldsymbol{\mu}_{q(h)}^{old}+\boldsymbol{\sigma}^{2\,old}_{q(h)}/2} -\mu_{q(1/\nu^2)}\mathbf{Q}\boldsymbol{\mu}_{q(h)}^{old},
	\end{align}
	and
	\begin{align}
		\nabla_{\boldsymbol{\mu}_{q(h)},\boldsymbol{\mu}_{q(h)}}^2 S(\boldsymbol{\mu}_{q(h)}^{old},\boldsymbol{\Sigma}_{q(h)}^{old}) &= -\frac{1}{2}\mathsf{Diag}(\mathbb{E}_q(\boldsymbol{\varepsilon}^2)\odot\mathrm{e}^{-\boldsymbol{\mu}_{q(h)}^{old}+\boldsymbol{\sigma}^{2\,old}_{q(h)}/2}) -\mu_{q(1/\nu^2)}\mathbf{Q},
	\end{align}
	denote the first and second derivative of $S(\boldsymbol{\mu}_{q(h)},\boldsymbol{\Sigma}_{q(h)})$ with respect to $\boldsymbol{\mu}_{q(h)}$ and evaluated at $(\boldsymbol{\mu}_{q(h)}^{old},\boldsymbol{\Sigma}_{q(h)}^{old})$, and  $\boldsymbol{\sigma}^2_{q(h)}=\mathsf{diag}(\boldsymbol{\Sigma}_{q(h)})$.  
\end{proposition}

\begin{proof}
	The updating scheme follows the algorithm provided in \cite{rhode_wand2016} for Gaussian variational approximations. The function $S$ is called \textit{non-entropy function} and it is given by $S(\boldsymbol{\mu}_{q(h)},\boldsymbol{\Sigma}_{q(h)})=\mathbb{E}_q(\log p(\mathbf{y},\boldsymbol{\vartheta}))$:
	\begin{align}
		\begin{aligned}
			S(\boldsymbol{\mu}_{q(h)},\boldsymbol{\Sigma}_{q(h)}) &= -\frac{\boldsymbol{\iota}_n^\prime}{2}\boldsymbol{\mu}_{q(h)}-\frac{1}{2}\mathbb{E}^\prime_q(\boldsymbol{\varepsilon}^2)\mathrm{e}^{-\boldsymbol{\mu}_{q(h)}+\boldsymbol{\sigma}^2_{q(h)}/2} \\
			&\qquad -\frac{1}{2}\mu_{q(1/\nu^2)}\left(\boldsymbol{\mu}_{q(h)}^\prime\mathbf{Q}\boldsymbol{\mu}_{q(h)}+\mathsf{tr}\left\{\boldsymbol{\Sigma}_{q(h)}\mathbf{Q}\right\}\right),
		\end{aligned}
	\end{align}
	where $\boldsymbol{\varepsilon}^2=\boldsymbol{\varepsilon}\odot\boldsymbol{\varepsilon}$ with components $[\boldsymbol{\varepsilon}^2]_t=(y_t-\boldsymbol{\beta}_t\mathbf{x}_{t})^2$, and $\boldsymbol{\sigma}^2_{q(h)}=\mathsf{diag}(\boldsymbol{\Sigma}_{q(h)})$.
	Then, the first derivative with respect to the variational mean vector $\boldsymbol{\mu}_{q(h)}$ is given by
	\begin{align}
		\nabla_{\boldsymbol{\mu}_{q(h)}} S(\boldsymbol{\mu}_{q(h)},\boldsymbol{\Sigma}_{q(h)}) &= -\frac{\boldsymbol{\iota}_n}{2}+\frac{1}{2}\mathbb{E}_q(\boldsymbol{\varepsilon}^2)\odot\mathrm{e}^{-\boldsymbol{\mu}_{q(h)}+\boldsymbol{\sigma}^2_{q(h)}/2} -\mu_{q(1/\nu^2)}\mathbf{Q}\boldsymbol{\mu}_{q(h)}.
	\end{align}
	Moreover, derive $\nabla_{\boldsymbol{\mu}_{q(h)}} S(\boldsymbol{\mu}_{q(h)},\boldsymbol{\Sigma}_{q(h)})$ again with respect to $\boldsymbol{\mu}_{q(h)}$:
	\begin{align}
		\nabla_{\boldsymbol{\mu}_{q(h)},\boldsymbol{\mu}_{q(h)}}^2 S(\boldsymbol{\mu}_{q(h)},\boldsymbol{\Sigma}_{q(h)}) &= -\frac{1}{2}\mathsf{Diag}(\mathbb{E}_q(\boldsymbol{\varepsilon}^2)\odot\mathrm{e}^{-\boldsymbol{\mu}_{q(h)}+\boldsymbol{\sigma}^2_{q(h)}/2}) -\mu_{q(1/\nu^2)}\mathbf{Q}.
	\end{align}
\end{proof}

\begin{remark}\label{prop:up_sigmasq}
	Under the multivariate Gaussian approximation of $q(\mathbf{h})$ with mean vector $\boldsymbol{\mu}_{q(h)}$ and covariance matrix $\mathbf{\Sigma}_{q(h)}$, the optimal density of the vector $\boldsymbol{\sigma}^2=\exp\{\mathbf{h}\}$, namely $q^\ast(\boldsymbol{\sigma}^2)$, is a multivariate log-normal distribution such that:
	\begin{align}
		\mathbb{E}_q[\sigma_t^2] &= \exp\{\mu_{q(h_t)}+1/2\sigma^2_{q(h_t)}\},\\\label{eq:sigmat}
            \mathbb{E}_q[1/\sigma_t^2] &= \exp\{-\mu_{q(h_t)}+1/2\sigma^2_{q(h_t)}\},\\
		\mathsf{Var}_q[\sigma_t^2] &= \exp\{2\mu_{q(h_t)}+\sigma^2_{q(h_t)}\}(\exp\{\sigma^2_{q(h_t)}\}-1), \\
		\mathsf{Cov}_q[\sigma_t^2,\sigma_{t+1}^2] &= \exp\{\mu_{q(h_t)}+\mu_{q(h_{t+1})}+1/2(\sigma^2_{q(h_t)}+\sigma^2_{q(h_{t+1})})\}(\exp\{\mathsf{Cov}_q[h_t,h_{t+1}]\}-1). \nonumber
	\end{align}
\end{remark}

\begin{proposition}\label{prop:up_sigmasq_homo}
	\it The optimal variational density for the homoskedastic variance $\sigma^2$ is an inverse-gamma $q^\ast(\sigma^2)=\mathsf{IG}(A_{q(\sigma^2)},B_{q(\sigma^2)})$ where:
	\begin{align}
		A_{q(\sigma^2)} &= A_\sigma+\frac{n}{2},
		B_{q(\sigma^2)} = B_\sigma+\frac{1}{2}\mathbb{E}_q\left[\boldsymbol{\varepsilon}^\prime\boldsymbol{\varepsilon}\right],
	\end{align}
	where:
	\begin{align*}
		\mathbb{E}_{-\sigma^2}\left[\boldsymbol{\varepsilon}^\prime\boldsymbol{\varepsilon}\right] &=\mathbf{y}^\prime\mathbf{y} -2\left(\sum_{j=1}^p\mathbf{X}_j\boldsymbol{\mu}_{q(\Gamma_j)}\boldsymbol{\mu}_{q(\mathbf{b}_j)}\right)^\prime\mathbf{y} + \sum_{j=1}^p\mathsf{tr}\left\{\left(\boldsymbol{\mu}_{q(\mathbf{b}_j)}\boldsymbol{\mu}_{q(\mathbf{b}_j)}^\prime+\mathbf{\Sigma}_{q(\mathbf{b}_j)}\right)\boldsymbol{\mu}_{q(\Gamma_j)}\mathbf{X}^2_j\right\}\\
		&\qquad +\sum_{j=1}^p\boldsymbol{\mu}_{q(\mathbf{b}_j)}^\prime\boldsymbol{\mu}_{q(\Gamma_j)}\mathbf{X}_j\sum_{k=1,k\neq j}^p\mathbf{X}_k\boldsymbol{\mu}_{q(\Gamma_k)}\boldsymbol{\mu}_{q(\mathbf{b}_k)}.
	\end{align*}
\end{proposition}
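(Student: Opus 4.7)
The plan is to mimic the mean-field calculation already used for the other variational densities in this appendix. First I would take the full conditional $p(\sigma^2 \mid \mathrm{rest})$ derived in Eq.~\eqref{eq:full_s2}, which is proportional to
\[
-\Bigl(A_\sigma + \tfrac{n}{2} + 1\Bigr)\log \sigma^2 \;-\; \frac{1}{\sigma^2}\Bigl(B_\sigma + \tfrac{1}{2}\boldsymbol{\varepsilon}^\prime \boldsymbol{\varepsilon}\Bigr),
\]
and apply the generic MFVB update $\log q^\ast(\sigma^2) \propto \mathbb{E}_{-\sigma^2}[\log p(\sigma^2\mid \mathrm{rest})]$ from Eq.~\eqref{eq:mfvb}. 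Since $\sigma^2$ only appears through the terms above, the expectation only affects the quadratic $\boldsymbol{\varepsilon}^\prime\boldsymbol{\varepsilon}$, yielding
\[
\log q^\ast(\sigma^2) \propto -\Bigl(A_\sigma + \tfrac{n}{2} + 1\Bigr)\log \sigma^2 \;-\; \frac{1}{\sigma^2}\Bigl(B_\sigma + \tfrac{1}{2}\mathbb{E}_{-\sigma^2}[\boldsymbol{\varepsilon}^\prime \boldsymbol{\varepsilon}]\Bigr),
\]
which is precisely the kernel of the claimed inverse-gamma with shape $A_\sigma + n/2$ and scale $B_\sigma + \tfrac{1}{2}\mathbb{E}_q[\boldsymbol{\varepsilon}^\prime\boldsymbol{\varepsilon}]$.

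The substantive work is then the explicit evaluation of $\mathbb{E}_{-\sigma^2}[\boldsymbol{\varepsilon}^\prime\boldsymbol{\varepsilon}]$. I would expand
\[
\boldsymbol{\varepsilon}^\prime \boldsymbol{\varepsilon} = \mathbf{y}^\prime \mathbf{y} \;-\; 2\,\mathbf{y}^\prime\!\sum_{j=1}^p \mathbf{X}_j\boldsymbol{\Gamma}_j \mathbf{b}_j \;+\; \sum_{j=1}^p\sum_{k=1}^p \mathbf{b}_j^\prime \boldsymbol{\Gamma}_j \mathbf{X}_j \mathbf{X}_k \boldsymbol{\Gamma}_k \mathbf{b}_k,
\]
and invoke the mean-field factorisation \eqref{eq:q_fact} to commute expectations across the independent blocks $q(\mathbf{b}_j)$ and $q(\gamma_{jt})$. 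The linear term collapses immediately to $-2\bigl(\sum_j \mathbf{X}_j\boldsymbol{\mu}_{q(\Gamma_j)}\boldsymbol{\mu}_{q(b_j)}\bigr)^\prime\mathbf{y}$.

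For the quadratic term I would split the double sum into the diagonal part $j=k$ and the off-diagonal part $j \neq k$. The off-diagonal contributions are the straightforward ones: by mean-field independence of $\mathbf{b}_j, \boldsymbol{\Gamma}_j, \mathbf{b}_k, \boldsymbol{\Gamma}_k$, each term simplifies to $\boldsymbol{\mu}_{q(b_j)}^\prime \boldsymbol{\mu}_{q(\Gamma_j)} \mathbf{X}_j \mathbf{X}_k \boldsymbol{\mu}_{q(\Gamma_k)} \boldsymbol{\mu}_{q(b_k)}$, which matches the last summand in the statement. The diagonal part is the main obstacle: it requires both $\mathbb{E}_q[\mathbf{b}_j \mathbf{b}_j^\prime] = \boldsymbol{\mu}_{q(b_j)}\boldsymbol{\mu}_{q(b_j)}^\prime + \boldsymbol{\Sigma}_{q(b_j)}$ and the identity $\mathbb{E}_q[\boldsymbol{\Gamma}_j \mathbf{X}_j^2 \boldsymbol{\Gamma}_j] = \boldsymbol{\mu}_{q(\Gamma_j)}\mathbf{X}_j^2$, which follows because $\boldsymbol{\Gamma}_j$ is diagonal with Bernoulli entries and hence $\gamma_{jt}^2 = \gamma_{jt}$ a.s., so $\mathbb{E}_q[\gamma_{jt}^2] = \mu_{q(\gamma_{jt})}$. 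Combining via the trace identity $\mathbf{b}_j^\prime \mathbf{A} \mathbf{b}_j = \mathsf{tr}(\mathbf{A}\mathbf{b}_j\mathbf{b}_j^\prime)$ then produces the middle summand $\mathsf{tr}\{(\boldsymbol{\mu}_{q(b_j)}\boldsymbol{\mu}_{q(b_j)}^\prime + \boldsymbol{\Sigma}_{q(b_j)})\boldsymbol{\mu}_{q(\Gamma_j)}\mathbf{X}_j^2\}$.

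The hardest part is bookkeeping: making sure the Bernoulli square identity is used in the diagonal blocks only (not in the off-diagonal blocks where the $\gamma_{jt}$ come from distinct independent densities), and tracking the diagonal structure of $\mathbf{X}_j$ and $\boldsymbol{\Gamma}_j$ so that all products remain diagonal and the trace representation applies. Once those three pieces are assembled, summing them with the constant $\mathbf{y}^\prime\mathbf{y}$ and the linear term recovers exactly the expression for $\mathbb{E}_{-\sigma^2}[\boldsymbol{\varepsilon}^\prime\boldsymbol{\varepsilon}]$ in the statement, completing the identification of $B_{q(\sigma^2)}$.
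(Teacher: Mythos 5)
Your proposal is correct and follows essentially the same route as the paper's proof: the MFVB update applied to the full conditional in Eq.~\eqref{eq:full_s2}, followed by expansion of $\mathbb{E}_{-\sigma^2}[\boldsymbol{\varepsilon}^\prime\boldsymbol{\varepsilon}]$ into the constant, linear, diagonal-quadratic (via the trace and $\mathbb{E}_q[\mathbf{b}_j\mathbf{b}_j^\prime]$) and cross terms (via mean-field independence). The only difference is that you make explicit the Bernoulli identity $\gamma_{jt}^2=\gamma_{jt}$ underlying the single factor $\bmu_{q(\Gamma_j)}$ in the trace term, which the paper leaves implicit.
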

\begin{proof}
	The full conditional distribution of $\sigma^2$ given the rest $p(\sigma^2|\mbox{rest})$ is derived in Eq.\eqref{eq:full_s2}. Thus, the optimal variational density is given by:
	\begin{equation}
		\begin{aligned}\label{eq:q_sigma2_homo}
			\log q^\ast(\sigma^2) &\propto \mathbb{E}_{-\sigma^2}[\log p(\sigma^2|\mathrm{rest})] \\ 
			&\propto -(A_\sigma+\frac{n}{2}+1)\log\sigma^2 -\frac{1}{\sigma^2}\left\{B_\sigma+\frac{1}{2}\mathbb{E}_{-\sigma^2}\left[\boldsymbol{\varepsilon}^\prime\boldsymbol{\varepsilon}\right]\right\},
		\end{aligned}
	\end{equation}
	where:
	\begin{align*}
		\mathbb{E}_{-\sigma^2}\left[\boldsymbol{\varepsilon}^\prime\boldsymbol{\varepsilon}\right] &= \mathbb{E}_{-\sigma^2}\left[\left(\mathbf{y}-\sum_{j=1}^p\mathbf{X}_j\mathbf{\Gamma}_j\mathbf{b}_j\right)^\prime\left(\mathbf{y}-\sum_{j=1}^p\mathbf{X}_j\mathbf{\Gamma}_j\mathbf{b}_j\right)\right]=\mathbf{y}^\prime\mathbf{y} -2\left(\sum_{j=1}^p\mathbb{E}_{-\sigma^2}\left[\mathbf{X}_j\mathbf{\Gamma}_j\mathbf{b}_j\right]\right)^\prime\mathbf{y} \\
		&\qquad +\sum_{j=1}^p\mathbb{E}_{-\sigma^2}\left[\mathbf{b}_j^\prime\mathbf{\Gamma}_j\mathbf{X}_j\mathbf{X}_j\mathbf{\Gamma}_j\mathbf{b}_j+\mathbf{b}_j^\prime\mathbf{\Gamma}_j\mathbf{X}_j\sum_{k=1,k\neq j}^p\mathbf{X}_k\mathbf{\Gamma}_k\mathbf{b}_k\right]\\
		&=\mathbf{y}^\prime\mathbf{y} -2\left(\sum_{j=1}^p\mathbf{X}_j\boldsymbol{\mu}_{q(\Gamma_j)}\boldsymbol{\mu}_{q(\mathbf{b}_j)}\right)^\prime\mathbf{y} + \sum_{j=1}^p\mathsf{tr}\left\{\mathbb{E}_{\mathbf{b}_j}\left[\mathbf{b}_j\mathbf{b}_j^\prime\right]\boldsymbol{\mu}_{q(\Gamma_j)}\mathbf{X}^2_j\right\}\\
		&\qquad +\sum_{j=1}^p\boldsymbol{\mu}_{q(\mathbf{b}_j)}^\prime\boldsymbol{\mu}_{q(\Gamma_j)}\mathbf{X}_j\sum_{k=1,k\neq j}^p\mathbf{X}_k\boldsymbol{\mu}_{q(\Gamma_k)}\boldsymbol{\mu}_{q(\mathbf{b}_k)}\\
		&=\mathbf{y}^\prime\mathbf{y} -2\left(\sum_{j=1}^p\mathbf{X}_j\boldsymbol{\mu}_{q(\Gamma_j)}\boldsymbol{\mu}_{q(\mathbf{b}_j)}\right)^\prime\mathbf{y} + \sum_{j=1}^p\mathsf{tr}\left\{\left(\boldsymbol{\mu}_{q(\mathbf{b}_j)}\boldsymbol{\mu}_{q(\mathbf{b}_j)}^\prime+\mathbf{\Sigma}_{q(\mathbf{b}_j)}\right)\boldsymbol{\mu}_{q(\Gamma_j)}\mathbf{X}^2_j\right\}\\
		&\qquad +\sum_{j=1}^p\boldsymbol{\mu}_{q(\mathbf{b}_j)}^\prime\boldsymbol{\mu}_{q(\Gamma_j)}\mathbf{X}_j\sum_{k=1,k\neq j}^p\mathbf{X}_k\boldsymbol{\mu}_{q(\Gamma_k)}\boldsymbol{\mu}_{q(\mathbf{b}_k)}.
	\end{align*}
	Equation \ref{eq:q_sigma2_homo} represents the kernel of a Inverse-Gamma distribution as in Proposition \ref{prop:up_sigmasq_homo}.
\end{proof}

\begin{proposition}\label{prop:up_z}
	\it The optimal variational density for the $z_{jt}$ parameters is a Polya-Gamma $q^\ast(z_{jt})=\mathsf{PG}(1,\sqrt{\mu_{q(\omega^2_{jt})}})$ and define
	\begin{equation}
		\mu_{q(z_{jt})} = \mathbb{E}_q\left[z_{jt}\right] = \frac{1}{2}\frac{1}{\sqrt{\mu_{q(\omega^2_{jt})}}}\tanh\left(\frac{\sqrt{\mu_{q(\omega^2_{jt})}}}{2}\right)
	\end{equation}
\end{proposition}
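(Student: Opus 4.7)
The plan is to mirror the structure used for the other optimal variational densities (Propositions \ref{prop:up_beta}--\ref{prop:up_sigmasq_homo}): start from the full conditional already derived in Appendix \ref{app:mcmc_algo}, apply the mean-field update rule \eqref{eq:mfvb}, and then recognise the resulting kernel as that of a known distribution.

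First, I would take the full conditional of $z_{jt}$ given in Eq.\eqref{eq:full_z}, namely
\begin{equation*}
\log p(z_{jt}|\mathrm{rest}) \propto -z_{jt}\,\omega_{jt}^2 + \log p(z_{jt}),
\end{equation*}
where $p(z_{jt})$ is the density of a $\mathsf{PG}(1,0)$ random variable. Applying \eqref{eq:mfvb}, I would take the expectation with respect to $q(\boldsymbol{\vartheta}\setminus z_{jt})$. Because $z_{jt}$ appears only linearly in the term $-z_{jt}\omega_{jt}^2$, this expectation reduces to
\begin{equation*}
\log q^{\ast}(z_{jt}) \propto -z_{jt}\,\mu_{q(\omega_{jt}^2)} + \log p(z_{jt}),
\end{equation*}
where $\mu_{q(\omega_{jt}^2)} = \mu_{q(\omega_{jt})}^2 + [\boldsymbol{\Sigma}_{q(\omega_j)}]_{tt}$ is read off from the Gaussian optimal density in Proposition \ref{prop:up_omega}.

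Next, I would invoke the exponential-tilting identity for the Polya-Gamma family established by \cite{polson2013bayesian}: if $z \sim \mathsf{PG}(b,0)$ with density $p(z)$, then $\exp\{-z\,c^{2}/2\}\,p(z) \propto p_{\mathsf{PG}(b,c)}(z)$. Matching $c^{2}/2 = \mu_{q(\omega_{jt}^2)}$ and $b=1$, this identifies the kernel above with a $\mathsf{PG}(1,\sqrt{\mu_{q(\omega_{jt}^2)}})$ density, which proves the distributional claim. The mean formula then follows from the standard Polya-Gamma moment $\mathbb{E}[\mathsf{PG}(b,c)] = \tfrac{b}{2c}\tanh(c/2)$, specialised at $b=1$ and $c=\sqrt{\mu_{q(\omega_{jt}^2)}}$.

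The calculation itself is quite short; the only subtle step is the tilting identity, so I would cite \cite{polson2013bayesian} and verify that the argument $c$ entering the tilted PG matches $\sqrt{\mu_{q(\omega_{jt}^2)}}$ rather than $|\mu_{q(\omega_{jt})}|$—the point being that under the variational factorisation it is the second moment of $\omega_{jt}$, not the squared mean, that appears. This is the main place where a reader could be misled, so I would emphasise it explicitly before concluding with the mean formula.
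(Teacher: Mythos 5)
Your proof follows exactly the paper's route: start from the full conditional in Eq.\eqref{eq:full_z}, take the mean-field expectation (which, since $z_{jt}$ enters linearly, just replaces $\omega_{jt}^2$ by $\mu_{q(\omega_{jt}^2)}=\mu_{q(\omega_{jt})}^2+[\boldsymbol{\Sigma}_{q(\omega_j)}]_{tt}$), and recognise the exponentially tilted $\mathsf{PG}(1,0)$ kernel; you are in fact more explicit than the paper's one-line argument, and the point you stress --- that it is the second moment of $\omega_{jt}$, not the squared mean, that enters --- is exactly right. One bookkeeping remark: as written, matching $c^{2}/2=\mu_{q(\omega_{jt}^2)}$ would give $c=\sqrt{2\mu_{q(\omega_{jt}^2)}}$ rather than $\sqrt{\mu_{q(\omega_{jt}^2)}}$; the stated result is recovered because the \cite{polson2013bayesian} augmentation actually carries the exponent $-z_{jt}\omega_{jt}^{2}/2$ (the factor $1/2$ is silently dropped in the paper's Eq.\eqref{eq:full_z}), so the correct match is $c^{2}/2=\mu_{q(\omega_{jt}^2)}/2$ --- worth fixing so the constant is consistent with the mean formula you then quote.
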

\begin{proof}
 The full conditional distribution of $z_{jt}$ given the rest is proportional to Eq.\eqref{eq:full_z}. Then the optimal variational density is such that
\begin{equation}
		\begin{aligned}\label{eq:q_z}
			\log q^\ast(z_{jt}) &\propto -z_{jt}\mu_{q(\omega_{jt}^2)} +\log p(z_{jt}).
		\end{aligned}
\end{equation}
Equation \ref{eq:q_z} represents the kernel of a Polya-Gamma distribution as in Proposition \ref{prop:up_z}.
\end{proof}

\begin{proposition}\label{prop:up_eta}
	\it The optimal variational density for the variance parameter $\eta_j^2$ is an inverse-gamma distribution $q^\ast(\eta_j^2)=\mathsf{IG}(A_{q(\eta_j^2)},B_{q(\eta_j^2)})$, where:
	\begin{align}
		A_{q(\eta_j^2)} &=  A_\eta + \frac{n+1}{2},\qquad B_{q(\eta_j^2)} =  B_\eta +\frac{1}{2}\left(\boldsymbol{\mu}_{q(\mathbf{b}_j)}^\prime\mathbf{Q}\boldsymbol{\mu}_{q(\mathbf{b}_j)} +\mathsf{tr}\left\{\boldsymbol{\Sigma}_{q(\mathbf{b}_j)}\mathbf{Q}\right\}\right).
	\end{align}
\end{proposition}
\begin{proof}
	The full conditional distribution of $\eta^2_j$ given the rest $p(\eta^2_j|\mbox{rest})$ is described in Eq.\eqref{eq:full_eta2}. Then, the optimal variational density is given by:
	\begin{equation}
		\begin{aligned}\label{eq:q_eta}
			\log q^\ast(\eta_j^2) &\propto \mathbb{E}_{-\eta^2_j}[\log p(\eta^2_j|\mathrm{rest})] \\	
			&\propto -\frac{n+1}{2}\log\eta_j^2-\frac{1}{2\eta_j^2}\mathbb{E}_{-\eta_j^2}\left[{\mathbf{b}}_j^\prime\mathbf{Q}{\mathbf{b}}_j\right]  -(A_\eta+1)\log\eta_j^2-\frac{B_\eta}{\eta_j^2} \\
			&\propto -\left(\frac{n}{2}+A_\eta+1\right)\log\eta_j^2-\frac{1}{\eta_j^2}\left(B_\eta+\frac{1}{2}\left(\boldsymbol{\mu}_{q(\mathbf{b}_j)}^\prime\mathbf{Q}\boldsymbol{\mu}_{q(\mathbf{b}_j)} +\mathsf{tr}\left\{\boldsymbol{\Sigma}_{q(\mathbf{b}_j)}\mathbf{Q}\right\}\right)\right).
		\end{aligned}
	\end{equation}
	Equation \ref{eq:q_eta} represents the kernel of an Inverse-Gaussian distribution as in Proposition \ref{prop:up_eta}.
\end{proof}

\begin{proposition}\label{prop:up_xi}
	\it The optimal variational density for the variance parameter $\xi_j^2$ is an inverse-gamma distribution $q^\ast(\xi_j^2)=\mathsf{IG}(A_{q(\xi_j^2)},B_{q(\xi_j^2)})$, where:
	\begin{align}
		A_{q(\xi_j^2)} &=  A_\xi + \frac{n+1}{2},\qquad B_{q(\xi_j^2)} =  B_\xi +\frac{1}{2}\left(\boldsymbol{\mu}_{q(\omega_j)}^\prime\mathbf{Q}\boldsymbol{\mu}_{q(\omega_j)} +\mathsf{tr}\left\{\boldsymbol{\Sigma}_{q(\omega_j)}\mathbf{Q}\right\}\right).
	\end{align}
\end{proposition}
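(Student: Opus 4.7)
The plan is to mirror the proof of Proposition \ref{prop:up_eta} almost verbatim, since the structure is identical once one recognizes that $\boldsymbol{\omega}_j$ plays the same role for $\xi_j^2$ that $\mathbf{b}_j$ plays for $\eta_j^2$. Concretely, I would start from the full conditional $p(\xi_j^2|\mathrm{rest})$ displayed in Eq.\eqref{eq:full_xi2} of Appendix \ref{app:mcmc_algo}, and invoke the mean-field identity
\begin{equation*}
    \log q^\ast(\xi_j^2) \;\propto\; \mathbb{E}_{-\xi_j^2}\left[\log p(\xi_j^2\mid\mathrm{rest})\right],
\end{equation*}
which turns the task into evaluating the expectation of $\boldsymbol{\omega}_j^\prime \mathbf{Q}\boldsymbol{\omega}_j$ under the variational density $q(\boldsymbol{\omega}_j)$. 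Because the mean-field factorization \eqref{eq:q_fact} makes $\xi_j^2$ independent of $\boldsymbol{\omega}_j$ under $q$, the only pieces that actually carry a $\xi_j^2$ dependence are the $-\tfrac{n+1}{2}\log\xi_j^2$ term and the $1/(2\xi_j^2)$ coefficient multiplying that expectation.

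The second step is the quadratic-form expectation itself. Using $q^\ast(\boldsymbol{\omega}_j)\equiv\mathsf{N}_{n+1}(\boldsymbol{\mu}_{q(\omega_j)},\boldsymbol{\Sigma}_{q(\omega_j)})$ from Proposition \ref{prop:up_omega}, the standard identity $\mathbb{E}[\mathbf{x}^\prime \mathbf{A}\mathbf{x}]=\boldsymbol{\mu}^\prime \mathbf{A}\boldsymbol{\mu}+\mathsf{tr}(\mathbf{A}\boldsymbol{\Sigma})$ gives
\begin{equation*}
    \mathbb{E}_{q(\boldsymbol{\omega}_j)}\!\left[\boldsymbol{\omega}_j^\prime \mathbf{Q}\boldsymbol{\omega}_j\right] \;=\; \boldsymbol{\mu}_{q(\omega_j)}^\prime\mathbf{Q}\boldsymbol{\mu}_{q(\omega_j)}+\mathsf{tr}\{\boldsymbol{\Sigma}_{q(\omega_j)}\mathbf{Q}\}.
\end{equation*}
Plugging this back yields
\begin{equation*}
    \log q^\ast(\xi_j^2) \;\propto\; -\left(A_\xi+\tfrac{n+1}{2}+1\right)\log\xi_j^2 - \frac{1}{\xi_j^2}\left(B_\xi+\tfrac{1}{2}\left[\boldsymbol{\mu}_{q(\omega_j)}^\prime\mathbf{Q}\boldsymbol{\mu}_{q(\omega_j)}+\mathsf{tr}\{\boldsymbol{\Sigma}_{q(\omega_j)}\mathbf{Q}\}\right]\right),
\end{equation*}
which is immediately recognized as the kernel of an inverse-gamma with the stated shape and scale.

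There is really no main obstacle here: the proof is mechanical once Propositions \ref{prop:up_omega} and the Appendix \ref{app:mcmc_algo} full conditional are in hand. The only mild care needed is to keep the Gaussian quadratic-form identity tidy (the $(n+1)\times(n+1)$ precision $\mathbf{Q}$ is singular-free in this computation because only $\mathbf{Q}$ itself, not $\mathbf{Q}^{-1}$, appears), and to observe that no cross-expectation with other parameters enters because $\xi_j^2$ couples to the rest of the model only through $p(\boldsymbol{\omega}_j\mid\xi_j^2)$. I would conclude by noting the analogy with Proposition \ref{prop:up_eta} to avoid re-deriving what is essentially the same calculation.
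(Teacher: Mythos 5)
Your proposal is correct and follows essentially the same route as the paper's own proof: take the full conditional from Eq.\eqref{eq:full_xi2}, apply the mean-field identity, evaluate $\mathbb{E}_{q}\left[\boldsymbol{\omega}_j^\prime\mathbf{Q}\boldsymbol{\omega}_j\right]$ via the Gaussian quadratic-form formula, and read off the inverse-gamma kernel. Your exponent $-\left(A_\xi+\tfrac{n+1}{2}+1\right)$ is in fact the consistent one (the paper's final display drops the $+\tfrac{1}{2}$, an apparent typo), so nothing further is needed.
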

\begin{proof}
	The full conditional distribution of $\xi^2_j$ given the rest $p(\xi^2_j|\mbox{rest})$ is described in Eq.\eqref{eq:full_xi2}. Thus, the optimal variational density is given by:
	\begin{equation}
		\begin{aligned}\label{eq:q_xi}
			\log q^\ast(\xi_j^2) &\propto \mathbb{E}_{-\xi^2_j}[\log p(\xi^2_j|\mathrm{rest})] \\
			&\propto-\frac{n+1}{2}\log\xi_j^2-\frac{1}{2\xi_j^2}\mathbb{E}_{-\xi_j^2}\left[{\boldsymbol{\omega}}_j^\prime\mathbf{Q}{\boldsymbol{\omega}}_j\right] -(A_\xi+1)\log\xi_j^2-\frac{B_\xi}{\xi_j^2} \\
			&\propto -(\frac{n}{2}+A_\xi+1)\log\xi_j^2-\frac{1}{\xi_j^2}\left(B_\xi+\frac{1}{2}\left(\boldsymbol{\mu}_{q(\omega_j)}^\prime\mathbf{Q}\boldsymbol{\mu}_{q(\omega_j)} +\mathsf{tr}\left\{\boldsymbol{\Sigma}_{q(\omega_j)}\mathbf{Q}\right\}\right)\right).
		\end{aligned}
	\end{equation}
	Equation \ref{eq:q_xi} represents the kernel of an Inverse-Gaussian distribution as in Proposition \ref{prop:up_xi}.
\end{proof}

\begin{proposition}\label{prop:up_nu}
	\it The optimal variational density for the variance parameter $\nu^2$ is an inverse-gamma distribution $q^\ast(\nu^2)=\mathsf{IG}(A_{q(\nu^2)},B_{q(\nu^2)})$, where:
	\begin{align}
		A_{q(\nu^2)} &=  A_\nu + \frac{n+1}{2},\qquad B_{q(\nu^2)} =  B_\nu +\frac{1}{2}\left(\boldsymbol{\mu}_{q(h)}^\prime\mathbf{Q}\boldsymbol{\mu}_{q(h)} +\mathsf{tr}\left\{\boldsymbol{\Sigma}_{q(h)}\mathbf{Q}\right\}\right).
	\end{align}
\end{proposition}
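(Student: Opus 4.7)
The plan is to mirror exactly the template already used for the analogous variance parameters $\eta_j^2$ in Proposition \ref{prop:up_eta} and $\xi_j^2$ in Proposition \ref{prop:up_xi}, since the structural ingredients are identical: $\nu^2$ is the scale of a GMRF $\bh\sim\mathsf{N}_{n+1}(\mathbf{0},\nu^2\mathbf{Q}^{-1})$ equipped with an inverse-gamma prior $\nu^2\sim\mathsf{IG}(A_\nu,B_\nu)$, and the only random quantity appearing in the full conditional that depends on the remaining parameters is the quadratic form $\bh^\prime\mathbf{Q}\bh$.

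First I would recall the full conditional derived in Eq.\eqref{eq:full_nu2}, which is an inverse-gamma kernel of the form $-(A_\nu+\tfrac{n+1}{2}+1)\log\nu^2-\tfrac{1}{\nu^2}\bigl(B_\nu+\tfrac{1}{2}\bh^\prime\mathbf{Q}\bh\bigr)$. Following \cite{ormerod_wand.2010}, the optimal mean-field density satisfies
\begin{equation*}
\log q^\ast(\nu^2)\propto \mathbb{E}_{-\nu^2}\bigl[\log p(\nu^2\mid\mathrm{rest})\bigr] \propto -\Bigl(A_\nu+\tfrac{n+1}{2}+1\Bigr)\log\nu^2-\frac{1}{\nu^2}\Bigl(B_\nu+\tfrac{1}{2}\mathbb{E}_{-\nu^2}\bigl[\bh^\prime\mathbf{Q}\bh\bigr]\Bigr).
\end{equation*}
Since $\bh$ is the only random component of $\bh^\prime\mathbf{Q}\bh$ under $q(\boldsymbol{\vartheta}\setminus\nu^2)$, and since we have already imposed in Eq.\eqref{eq:q_fact} a joint Gaussian approximation $q^\ast(\bh)\equiv\mathsf{N}_{n+1}(\boldsymbol{\mu}_{q(h)},\boldsymbol{\Sigma}_{q(h)})$ via Proposition \ref{prop:up_logsima}, I would invoke the standard quadratic form identity to obtain
\begin{equation*}
\mathbb{E}_{-\nu^2}\bigl[\bh^\prime\mathbf{Q}\bh\bigr]=\boldsymbol{\mu}_{q(h)}^\prime\mathbf{Q}\boldsymbol{\mu}_{q(h)}+\mathsf{tr}\bigl\{\boldsymbol{\Sigma}_{q(h)}\mathbf{Q}\bigr\}.
\end{equation*}

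Substituting this back into the log-kernel, I would read off an inverse-gamma kernel with shape $A_\nu+\tfrac{n+1}{2}$ and rate $B_\nu+\tfrac{1}{2}\bigl(\boldsymbol{\mu}_{q(h)}^\prime\mathbf{Q}\boldsymbol{\mu}_{q(h)}+\mathsf{tr}\{\boldsymbol{\Sigma}_{q(h)}\mathbf{Q}\}\bigr)$, which matches the claim. There is no real obstacle here: the proof is essentially a verbatim transcription of the $\eta_j^2$ argument with $\bb_j\to\bh$ and $\eta_j^2\to\nu^2$, and the only ingredient beyond direct substitution is the Gaussian quadratic form expectation, which is standard and has already been used in the companion propositions.
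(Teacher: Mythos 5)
Your proposal is correct and follows essentially the same route as the paper's own proof: expand $\mathbb{E}_{-\nu^2}[\log p(\nu^2\mid\mathrm{rest})]$ from Eq.~\eqref{eq:full_nu2}, evaluate $\mathbb{E}_{-\nu^2}[\bh^\prime\mathbf{Q}\bh]=\boldsymbol{\mu}_{q(h)}^\prime\mathbf{Q}\boldsymbol{\mu}_{q(h)}+\mathsf{tr}\{\boldsymbol{\Sigma}_{q(h)}\mathbf{Q}\}$ under the Gaussian $q^\ast(\bh)$, and read off the inverse-gamma kernel. Your write-up in fact carries the $\tfrac{n+1}{2}$ exponent through consistently, which the paper's final displayed line miswrites as $\tfrac{n}{2}$ (a typo, since the stated shape is $A_\nu+\tfrac{n+1}{2}$).
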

\begin{proof}
	The full conditional distribution of $\nu^2$ given the rest $p(\nu^2|\mbox{rest})$ is described in Eq.\eqref{eq:full_nu2}. Thus, the optimal variational density is given by:
	\begin{equation}
		\begin{aligned}\label{eq:q_nu}
			\log q^\ast(\nu^2) &\propto \mathbb{E}_{-\nu^2}[\log p(\nu^2|\mathrm{rest})] \\
			&\propto-\frac{n+1}{2}\log\nu^2-\frac{1}{2\nu^2}\mathbb{E}_{-\nu^2}\left[\mathbf{h}^\prime\mathbf{Q}\mathbf{h}\right] -(A_\nu+1)\log\nu^2-\frac{B_\nu}{\nu^2} \\
			&\propto -\left(\frac{n}{2}+A_\nu+1\right)\log\nu^2-\frac{1}{\nu^2}\left(B_\nu+\frac{1}{2}\left(\boldsymbol{\mu}_{q(h)}^\prime\mathbf{Q}\boldsymbol{\mu}_{q(h)} +\mathsf{tr}\left\{\boldsymbol{\Sigma}_{q(h)}\mathbf{Q}\right\}\right)\right).
		\end{aligned}
	\end{equation}
	Equation \ref{eq:q_nu} represents the kernel of an Inverse-Gaussian distribution as in Proposition \ref{prop:up_nu}.
\end{proof}

\subsection{Smoothing the sparsity dynamics}
\label{app:smoothing}

\begin{proposition}\label{prop:up_sm_gamma}
    A smooth estimate for the trajectory of the inclusion probabilities can be achieved assuming $\widetilde{q}(\boldsymbol{\gamma}_j)=\prod_{t=1}^n \widetilde{q}(\gamma_{jt})$ such that $\widetilde{q}(\gamma_{jt})=\mathsf{Bern}(\mathrm{expit}(\mathbf{w}_t^\prime\mathbf{f}_j))$ with constraints on the mean. Therefore, the expectation of the joint vector $\boldsymbol{\gamma}_j=(\gamma_{j1},\ldots,\gamma_{jn})^\prime$ is equal to $\mathbb{E}_{\widetilde{q}}(\boldsymbol{\gamma}_j)=\mathbf{W}\mathbf{f}_j$, where $\mathbf{W}$ is a $n\times k$ B-spline basis matrix. The optimal value of $\mathbf{f}_j$ is the solution of the optimization problem $\widehat{\mathbf{f}}_j = \arg\max_{\mathbf{f}_j\in\mathbb{R}^k} \psi(\mathbf{f}_j)$ where 
    \begin{equation*}
        \psi(\mathbf{f}_j) = \sum_{t=1}^{n}\left[(\omega_{q(\gamma_{jt})}-\mathbf{w}^\prime_t\mathbf{f}_j)\mathrm{expit}(\mathbf{w}^\prime_t\mathbf{f}_j) +\log(1+\exp(\mathbf{w}^\prime_t\mathbf{f}_j))\right],
    \end{equation*}
    such that the gradient is equal to
$\nabla_{\mathbf{f}} \psi(\mathbf{f}) = \sum_{t=1}^{n}\mathbf{w}_t(\omega_{q(\gamma_{jt})}-\mathbf{w}^\prime_t\mathbf{f})\frac{\mathrm{expit}(\mathbf{w}^\prime_t\mathbf{f})}{1+\exp(\mathbf{w}^\prime_t\mathbf{f})}$.
\end{proposition}

\begin{proof}
	To find the best $\widetilde{q}$ that approximates $q$, minimize the Kullback-Leibler divergence $\mathcal{KL}\left(\widetilde{q}\mid\mid q\right)$. This corresponds to maximize $\mathbb{E}_{\widetilde{q}}[\log q]-\mathbb{E}_{\widetilde{q}}[\log \widetilde{q}]$ over the parameters of the approximating density $\widetilde{q}$. In our case we obtain:
	\begin{align*}
		\hat{\mathbf{f}} = \arg\max_{\mathbf{f}\in\mathbb{R}^k}\psi(\mathbf{f}) &= \arg\max_{\mathbf{f}\in\mathbb{R}^k} \left\{\mathbb{E}_{\widetilde{q}}[\log q(\boldsymbol{\gamma}_j)]-\mathbb{E}_{\widetilde{q}}[\log \widetilde{q}(\boldsymbol{\gamma})]\right\} \\
		&= \arg\max_{\mathbf{f}\in\mathbb{R}^k} \sum_{t=1}^n\left\{\mathbb{E}_{\widetilde{q}}[\log q(\gamma_{jt})]-\mathbb{E}_{\widetilde{q}}[\log \widetilde{q}(\gamma_{jt})]\right\}
	\end{align*}
	and define $\psi_t(\mathbf{f}) = \mathbb{E}_{\widetilde{q}}[\log q(\gamma_{jt})]-\mathbb{E}_{\widetilde{q}}[\log \widetilde{q}(\gamma_{jt})]$. The first term is equal to:
	\begin{align*}
		\mathbb{E}_{\widetilde{q}}[\log q(\gamma_{jt})]=\mathbb{E}_{\widetilde{q}}[\gamma_{jt}\omega_{q(\gamma_{jt})}] = \omega_{q(\gamma_{jt})}\mathrm{expit}(\mathbf{w}^\prime_t\mathbf{f}),
	\end{align*}
	while the second one can be written as:
	\begin{align*}
		\mathbb{E}_{\widetilde{q}}[\log \widetilde{q}(\gamma_{jt})]&=\mathbb{E}_{\widetilde{q}}[\gamma_{j,t}\mathbf{w}^\prime_t\mathbf{f}-\log(1+\exp(\mathbf{w}^\prime_t\mathbf{f}))] \\
		&= \mathbf{w}^\prime_t\mathbf{f}\mathrm{expit}(\mathbf{w}^\prime_t\mathbf{f})-\log(1+\exp(\mathbf{w}^\prime_t\mathbf{f})).
	\end{align*}
	Group together and obtain:
	\begin{align*}
		\psi_t(\mathbf{f}) = (\omega_{q(\gamma_{jt})}-\mathbf{w}^\prime_t\mathbf{f})\mathrm{expit}(\mathbf{w}^\prime_t\mathbf{f})+\log(1+\exp(\mathbf{w}^\prime_t\mathbf{f})).
	\end{align*}
	which defines the $t$ component of the loss function in the Proposition. Now derive $\frac{\partial}{\partial \mathbf{f}} \psi(\mathbf{f})$:
	\begin{align*}
		\nabla_{\mathbf{f}} \psi(\mathbf{f}) = \frac{\partial}{\partial \mathbf{f}} \psi(\mathbf{f}) = \sum_{t=1}^n\frac{\partial}{\partial \mathbf{f}}\psi_t(\mathbf{f}).
	\end{align*}
	Compute the latter and get:
	\begin{align*}
		\frac{\partial}{\partial \mathbf{f}}\psi_t(\mathbf{f}) &= -\mathbf{w}_t\mathrm{expit}(\mathbf{w}^\prime_t\mathbf{f})+\mathbf{w}_t(\omega_{q(\gamma_{jt})}-\mathbf{w}^\prime_t\mathbf{f})\frac{\mathrm{expit}(\mathbf{w}^\prime_t\mathbf{f})}{1+\exp(\mathbf{w}^\prime_t\mathbf{f})}+\mathbf{w}_t\mathrm{expit}(\mathbf{w}^\prime_t\mathbf{f}) \\
		&=\mathbf{w}_t(\omega_{q(\gamma_{jt})}-\mathbf{w}^\prime_t\mathbf{f})\frac{\mathrm{expit}(\mathbf{w}^\prime_t\mathbf{f})}{1+\exp(\mathbf{w}^\prime_t\mathbf{f})},
	\end{align*}
	which completes the proof.
\end{proof}

The iterative optimization to perform approximate posterior inference based on the optimal variational densities outlined above is sketched in Algorithm \ref{algo:algo1}.

	\begin{algorithm}
		\SetAlgoLined
		\kwInit{$q(\boldsymbol{\vartheta})$, $\Delta_{\boldsymbol{\vartheta}}$, $A_\nu$, $B_\nu$, $A_\eta$, $B_\eta$, $A_\xi$, $B_\xi$}
		\While{$\big(\widehat{\Delta}_{\boldsymbol{\vartheta}}>\Delta_{\boldsymbol{\vartheta}}\big)$}{
                \For{$k=1,\ldots,K$}{
				Update $q(\mathbf{b}_k)$ as in \ref{prop:up_beta};\\
			\For{$j=1,\ldots,p_k$}{
				and $q(\eta_j)$ as in \ref{prop:up_eta}; \\
				Update $q(\boldsymbol{\omega}_j)$ as in \ref{prop:up_omega_app} and
				$q(\xi_j)$ as in \ref{prop:up_xi}; \\
				\For{$t=1,\ldots,n$}{
					Update $q(z_{jt})$ as in \ref{prop:up_z}; \\
					Update $q(\gamma_{jt})$ as in \ref{prop:up_gamma} (non-smooth) or \ref{prop:up_sm_gamma} (smooth);
				}
                }
			}
			Update $q(\mathbf{h})$ as in \ref{prop:up_logsima_app} (heteroskedastic) or    $q(\sigma^2)$ as in \ref{prop:up_sigmasq_homo} (homoskedastic); \\
			Update $q(\nu^2)$ as in \ref{prop:up_nu};\\
			Compute $\widehat{\Delta}_{\boldsymbol{\vartheta}} = q(\boldsymbol{\vartheta})^{(\mbox{iter})}-q(\boldsymbol{\vartheta})^{(\mbox{iter}-1)}$ ;
		}
		\caption{Variational Bayes for dynamic variable selection.}
		\label{algo:algo1}
	\end{algorithm}




\section{Sparsity-inducing and convergence properties}\setcounter{figure}{0}\setcounter{table}{0}

\label{app:theo}

Before discussing the theoretical properties of our algorithmic procedure, we need to provide definitions and lemmas that are instrumental in the proof.

\begin{definition}
	$\mathbf{A}$ is a Z-matrix if its off-diagonal elements satisfy $a_{i,j}\leq0$, for $i\neq j$.
\end{definition}
\vspace{0.2cm}
\begin{definition}
	$\mathbf{A}$ is a strictly diagonally dominant (SDD) matrix if, for each $i$ row of $\mathbf{A}$, $|a_{i,i}|>\sum_{j\neq i}|a_{i,j}|$.
\end{definition}
\vspace{0.2cm}
\begin{corollary}
	If a matrix $\mathbf{A}$ is SDD and all its diagonal elements $a_{i,i}$ are positive, then the real parts of its eigenvalues are positive.
\end{corollary}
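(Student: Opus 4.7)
The plan is to invoke the Gershgorin circle theorem, which states that every eigenvalue $\lambda$ of $\mathbf{A}$ lies in at least one of the closed discs $D_i = \{z \in \mathbb{C} : |z - a_{i,i}| \leq R_i\}$ where $R_i = \sum_{j \neq i} |a_{i,j}|$. Once this is in place, the conclusion will follow by a geometric argument about the location of these discs in the complex plane.

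First I would fix an arbitrary eigenvalue $\lambda$ of $\mathbf{A}$ and, by Gershgorin, pick an index $i$ such that $|\lambda - a_{i,i}| \leq R_i$. Since $\mathbf{A}$ is SDD, we have $|a_{i,i}| > R_i$, and since the diagonal entries are positive by assumption, this simplifies to $a_{i,i} > R_i \geq 0$. Writing $\lambda = x + \mathrm{i} y$ with $x, y \in \mathbb{R}$, the Gershgorin inequality becomes $(x - a_{i,i})^2 + y^2 \leq R_i^2$, which immediately yields $(x - a_{i,i})^2 \leq R_i^2$, i.e. $x \geq a_{i,i} - R_i$.

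The final step is to combine $a_{i,i} > R_i$ with the inequality above to conclude $\mathrm{Re}(\lambda) = x \geq a_{i,i} - R_i > 0$. Since $\lambda$ was arbitrary, every eigenvalue of $\mathbf{A}$ has strictly positive real part.

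The argument is essentially a two-line consequence of Gershgorin, so no real obstacle is expected; the only thing to be careful about is to make explicit that the SDD hypothesis together with positivity of $a_{i,i}$ implies $a_{i,i} > R_i$ (not merely $|a_{i,i}| > R_i$), which is what places each Gershgorin disc strictly in the open right half-plane.
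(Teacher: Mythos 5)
Your argument is correct: the Gershgorin circle theorem places every eigenvalue in a disc centered at $a_{i,i}$ of radius $R_i$, and the combination of strict diagonal dominance with $a_{i,i}>0$ gives $a_{i,i}>R_i$, so each disc lies in the open right half-plane. The paper states this corollary without proof, treating it as a standard fact used only as a stepping stone toward Lemma C.2 (positivity of $\boldsymbol{\Sigma}_{q(\omega_j)}$); your Gershgorin derivation is the canonical justification and fills that gap cleanly, including the point you rightly flag — that positivity of the diagonal is what upgrades $|a_{i,i}|>R_i$ to $a_{i,i}>R_i$ and hence pushes the discs off the imaginary axis.
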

\vspace{0.2cm}
\begin{definition}\label{def:mmat}
	A matrix $\mathbf{A}$ is considered an M-matrix if it is a strictly diagonally dominant Z-matrix and all its diagonal elements $a_{i,i}$ are positive.
\end{definition}
\vspace{0.2cm}
\begin{corollary}\label{coro:posmat}
	If a matrix $\mathbf{A}$ is an M-matrix, then it belongs to inverse-positive matrices, i.e all elements of the inverse are positive $[\mathbf{A}^{-1}]_{i,j}\geq0$, for all $(i,j)$.
\end{corollary}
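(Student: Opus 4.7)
The plan is to exploit the classical shifted-decomposition trick: write $\mathbf{A}$ as a positive multiple of the identity minus a non-negative matrix, then invert via a Neumann series. First I would set $s = \max_i a_{ii} > 0$ (positivity by Definition \ref{def:mmat}) and define $\mathbf{B} = s\mathbf{I} - \mathbf{A}$. Because $\mathbf{A}$ is a Z-matrix, its off-diagonal entries are non-positive, so $b_{ij} = -a_{ij} \geq 0$ for $i \neq j$; the diagonal of $\mathbf{B}$ equals $s - a_{ii} \geq 0$ by construction. Hence $\mathbf{B}$ is entrywise non-negative and $\mathbf{A} = s\mathbf{I} - \mathbf{B} = s(\mathbf{I} - s^{-1}\mathbf{B})$.

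The key intermediate step is to show $\rho(s^{-1}\mathbf{B}) < 1$, where $\rho$ denotes the spectral radius. I would apply Gershgorin's circle theorem to $s^{-1}\mathbf{B}$: each eigenvalue lies in some disk centered at $1 - a_{ii}/s \in [0,1]$ with radius $s^{-1}\sum_{j \neq i}|a_{ij}|$. The SDD hypothesis guarantees $\sum_{j\neq i}|a_{ij}| < a_{ii} \leq s$, so this radius is strictly less than $a_{ii}/s$, and the farthest point of the disk from the origin is bounded by $(1 - a_{ii}/s) + s^{-1}\sum_{j \neq i}|a_{ij}| < 1$. Every eigenvalue of $s^{-1}\mathbf{B}$ therefore has modulus strictly less than one.

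The conclusion then follows from the Neumann series expansion
\begin{equation*}
\mathbf{A}^{-1} \;=\; s^{-1}\bigl(\mathbf{I} - s^{-1}\mathbf{B}\bigr)^{-1} \;=\; s^{-1}\sum_{k=0}^{\infty}\bigl(s^{-1}\mathbf{B}\bigr)^{k},
\end{equation*}
which converges absolutely because the spectral radius is below one. Since $s^{-1}\mathbf{B}$ is entrywise non-negative, each power $(s^{-1}\mathbf{B})^k$ is non-negative, the partial sums are non-negative, and the limit inherits the property. Multiplication by $s^{-1} > 0$ preserves non-negativity, so $[\mathbf{A}^{-1}]_{ij} \geq 0$ for all $(i,j)$, which is the desired conclusion.

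I expect the only delicate point to be the Gershgorin estimate in the second step: one must carefully pair the bound on the radius with the explicit location of the center $1 - a_{ii}/s$, which can be arbitrarily close to either $0$ or $1$, and the SDD inequality must be invoked in its strict form to rule out the borderline case $\rho(s^{-1}\mathbf{B}) = 1$. Once that is nailed down, writing the Neumann series, passing to the limit, and reading off non-negativity are entirely routine.
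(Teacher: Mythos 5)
The paper does not actually prove Corollary \ref{coro:posmat}: it is stated without proof as a classical fact from the theory of M-matrices and is simply invoked later in Lemma \ref{lemma:sigmapos}. Your argument is therefore a genuine addition rather than a rederivation, and it is correct. The regular splitting $\mathbf{A}=s\mathbf{I}-\mathbf{B}$ with $s=\max_i a_{ii}$ makes $\mathbf{B}$ entrywise non-negative exactly because $\mathbf{A}$ is a Z-matrix with positive diagonal; the Gershgorin estimate $|\lambda|\leq(1-a_{ii}/s)+s^{-1}\sum_{j\neq i}|a_{ij}|<1$ uses the strict diagonal dominance in precisely the form Definition \ref{def:mmat} supplies, and it correctly exploits that the disk centres $1-a_{ii}/s$ are non-negative so that the triangle inequality does not lose the strictness; the Neumann series then delivers entrywise non-negativity of $\mathbf{A}^{-1}$ term by term. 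This is the standard textbook route (equivalent to showing that an SDD Z-matrix with positive diagonal is a non-singular M-matrix in the splitting sense), and it is exactly the level of generality the paper needs, since Lemma \ref{lemma:sigmapos} only requires $[\mathbf{A}^{-1}]_{ij}\geq 0$ and not strict positivity. The one cosmetic mismatch is that the corollary's wording (``all elements of the inverse are positive'') suggests strict positivity while the displayed inequality is $\geq 0$; your proof establishes the weak inequality, which is what the paper actually uses, and strict positivity would anyway require an additional irreducibility hypothesis that the tridiagonal matrices in Lemma \ref{lemma:sigmapos} happen to satisfy but the corollary does not assume.
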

\vspace{0.2cm}
\begin{lemma}\label{lemma:Qinv_pos}
	The matrix $\mathbf{Q}^{-1}$ is a positive matrix, i.e $[\mathbf{Q}^{-1}]_{i,j}\geq0$.
\end{lemma}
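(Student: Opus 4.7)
The plan is to exploit the fact that $\mathbf{Q}$ is, by construction, the precision matrix of the Gaussian random walk introduced in Section \ref{sec:model}: $b_0 \sim \mathsf{N}(0, k_0)$ and $b_t = b_{t-1} + v_t$ with $v_t \sim \mathsf{N}(0,1)$, with all innovations mutually independent. Hence $\mathbf{Q}^{-1}$ is literally the covariance matrix of the vector $(b_0, b_1, \ldots, b_n)^\prime$, and for $0 \leq s \leq t$,
\begin{equation*}
\mathrm{Cov}(b_s, b_t) = \mathrm{Var}(b_0) + \mathrm{Cov}\!\left(\sum_{k=1}^{s} v_k,\ \sum_{l=1}^{t} v_l\right) = k_0 + s.
\end{equation*}
This gives the closed form $[\mathbf{Q}^{-1}]_{i,j} = k_0 + \min(i-1, j-1)$ for $i,j \in \{1,\ldots,n+1\}$, and every entry is strictly positive because $k_0 > 0$, which is the required conclusion.

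To make the argument self-contained without appealing to the probabilistic identification, I would define the candidate inverse $\mathbf{K}$ with entries $K_{i,j} = k_0 + \min(i-1, j-1)$ and verify $\mathbf{Q}\mathbf{K} = \mathbf{I}_{n+1}$ directly. The tridiagonal structure of $\mathbf{Q}$ reduces the check to three row-types: row $1$ (using $q_{1,1} = 1 + 1/k_0$ and $q_{1,2} = -1$), the interior rows $l = 2,\ldots,n$ (with stencil $-K_{l-1,j} + 2K_{l,j} - K_{l+1,j}$), and row $n+1$ (using $q_{n+1,n} = -1$ and $q_{n+1,n+1} = 1$). In each case the telescoping arising from the piecewise-linearity of $\min(\cdot,\cdot)$ in each argument collapses the expression to $0$ off the diagonal and to $1$ on the diagonal; the only place where the crossover in the min creates a correction is precisely where it is compensated by the boundary diagonal $1 + 1/k_0$.

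I do not anticipate any genuine obstacle: the main bookkeeping is simply splitting on whether $j < l$, $j = l$, or $j > l$ in each row. A seemingly natural shortcut through the M-matrix machinery of Definition \ref{def:mmat} and Corollary \ref{coro:posmat} is tempting but not directly applicable as stated, because $\mathbf{Q}$ is only \emph{weakly} diagonally dominant on the interior rows and the last row; strict dominance holds in the first row alone. Invoking that route would require an extension of Definition \ref{def:mmat} to irreducibly diagonally dominant Z-matrices (which $\mathbf{Q}$ is, owing to its irreducible tridiagonal sign pattern plus one strictly dominant row). The explicit-inverse route outlined above avoids this complication entirely and, as a bonus, yields the covariance representation that will be reused implicitly whenever $\mathbf{Q}^{-1}$ is inspected in later proofs.
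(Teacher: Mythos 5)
Your proof is correct, and it takes a genuinely different and more explicit route than the paper's. The paper disposes of this lemma in one sentence (``follows from the tridiagonal form of $\bQ$ with $q_{1,1}=1+1/k_0$''), implicitly leaning on the inverse-positivity machinery it sets up for Lemma \ref{lemma:sigmapos}; you instead identify $\bQ^{-1}$ as the covariance matrix of the underlying random walk and read off the closed form $[\bQ^{-1}]_{i,j}=k_0+\min(i-1,j-1)>0$, with a direct verification of $\bQ\mathbf{K}=\mathbf{I}_{n+1}$ as backup. Your side remark is also well taken: $\bQ$ is only \emph{weakly} diagonally dominant in rows $2,\ldots,n+1$ (e.g.\ $|q_{l,l}|=2=|q_{l,l-1}|+|q_{l,l+1}|$ for interior $l$), so Definition \ref{def:mmat} and Corollary \ref{coro:posmat} as stated do not literally apply to $\bQ$ itself --- one would need the irreducible/weakly chained variant of diagonal dominance, whereas they do apply verbatim to the matrix $\mathbf{W}$ in Lemma \ref{lemma:sigmapos} because the added $\mathbb{E}_q[z_{jt}]>0$ terms restore strictness. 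Your explicit-inverse argument sidesteps that gap entirely and yields strictly positive entries (not merely nonnegative) plus a reusable covariance interpretation; the paper's route, suitably patched, is shorter and generalizes to the perturbed precision matrices appearing later. Either way the conclusion stands.
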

\begin{proof}
	Follows from the tridiagonal form of $\mathbf{Q}$ with $q_{1,1}=1+1/k_0$, and $k_0<+\infty$.
\end{proof}
\vspace{0.2cm}
\begin{lemma}\label{lemma:sigmapos}
	The matrix $\boldsymbol{\Sigma}_{q(\omega_j)}$ is a positive matrix, i.e $[\boldsymbol{\Sigma}_{q(\omega_j)}]_{i,j}\geq0$.
\end{lemma}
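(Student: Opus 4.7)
The plan is to identify $\boldsymbol{\Sigma}_{q(\omega_j)}^{-1} = \mathsf{Diag}(0,\boldsymbol{\mu}_{q(z_j)})+\mu_{q(1/\xi_j^2)}\mathbf{Q}$ as an M-matrix in the sense of Definition \ref{def:mmat}, and then invoke Corollary \ref{coro:posmat} to immediately conclude that every entry of its inverse is nonnegative. The lemma is then a direct transcription of this chain.

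First I would check the Z-matrix property. Since the diagonal addition $\mathsf{Diag}(0,\boldsymbol{\mu}_{q(z_j)})$ does not touch off-diagonal entries, it suffices to note that the off-diagonals of $\mathbf{Q}$ equal $-1$ on the first sub- and super-diagonal and $0$ elsewhere, and that $\mu_{q(1/\xi_j^2)}>0$ (it is the mean of an inverse-gamma variable). All off-diagonal entries are therefore $\leq 0$.

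Next I would verify strict diagonal dominance and positivity of the diagonal entries. The diagonal of $\mathbf{Q}$ is $(1+1/k_0,\,2,\ldots,2,\,1)$, and the addition $\mathsf{Diag}(0,\boldsymbol{\mu}_{q(z_j)})$ is nonnegative with $\mu_{q(z_{jt})}>0$ since each $z_{jt}$ is Polya-Gamma distributed. I would then sweep the three row types: row $1$ yields strict dominance from the finite initial-state hyper-parameter $k_0<\infty$, via $\mu_{q(1/\xi_j^2)}(1+1/k_0)>\mu_{q(1/\xi_j^2)}$; interior rows $l=2,\dots,n$ yield strict dominance from $\mu_{q(z_{j,l-1})}>0$, since $2\mu_{q(1/\xi_j^2)}+\mu_{q(z_{j,l-1})}>2\mu_{q(1/\xi_j^2)}$; and row $n+1$ analogously uses $\mu_{q(z_{jn})}>0$.

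Having established both bullets of Definition \ref{def:mmat}, the conclusion is immediate from Corollary \ref{coro:posmat}. The only mildly delicate step is row $1$: the Polya-Gamma addition is zero there, so the strict dominance rests entirely on the hyper-parameter condition $k_0<\infty$. This is exactly the same mechanism that makes Lemma \ref{lemma:Qinv_pos} go through for $\mathbf{Q}^{-1}$, so no new assumption is required beyond what the model already posits.
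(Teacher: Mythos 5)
Your proposal is correct and follows essentially the same route as the paper: verify that $\mathsf{Diag}(0,\boldsymbol{\mu}_{q(z_j)})+\mu_{q(1/\xi_j^2)}\mathbf{Q}$ is a Z-matrix, check strict diagonal dominance row by row (first row via $k_0<\infty$, interior and last rows via positivity of the Polya-Gamma means), and conclude by the M-matrix inverse-positivity corollary. Your indexing of the Polya-Gamma term by row ($z_{j,l-1}$ at row $l$) is in fact slightly more careful than the paper's, but the argument is identical.
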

\begin{proof}
	Recall that 
	\begin{equation}
		\boldsymbol{\Sigma}_{q(\omega_j)}=\mathbf{W}^{-1}=\left(\mathsf{Diag}\left(0,\mathbb{E}_q\left[\mathbf{z}_j\right]\right)+\mu_{q(1/\xi_j^2)}\mathbf{Q}\right)^{-1},  
	\end{equation}
	is tridiagonal, where $\mathbb{E}_q\left[z_{jt}\right]>0$ and $\mu_{q(1/\xi_j^2)}>0$. Notice that $\mathbf{W}$ has off-diagonal elements equal to $-\mu_{q(1/\xi_j^2)}<0$ in the first sub/over-diagonal and $0$ elsewhere and therefore it is a Z-matrix. Moreover, $w_{t,t}>0$ for all $t$ and:
	\begin{align}
		&w_{1,1} = (1+k_0^{-1})\mu_{q(1/\xi_j^2)} > \mu_{q(1/\xi_j^2)} = |w_{1,2}| \\
		&w_{t,t} = 2\mu_{q(1/\xi_j^2)}+\mathbb{E}_q\left[z_{jt}\right] > 2\mu_{q(1/\xi_j^2)} = |w_{t,t-1}|+|w_{t,t+1}|, \quad t=2,\ldots,n\\
		&w_{n+1,n+1}= \mu_{q(1/\xi_j^2)}+\mathbb{E}_q\left[z_{jn}\right] > \mu_{q(1/\xi_j^2)} = |w_{n+1,n-1}|,
	\end{align}
	thus $\mathbf{W}$ is SDD with positive diagonal elements. Hence, by definition \ref{def:mmat} is an M-matrix and corollary \ref{coro:posmat} tells us that its inverse is a positive matrix.
\end{proof}
\begin{proposition}\label{prop:main}
	Assume that the maximum over time of the inclusion probabilities, for a given variable $j$, at the $i$-th iteration of the algorithm is such that $\max_{t\in\{1,\ldots,n\}}\mu^{(i)}_{q(\gamma_{jt})}=\epsilon$, and $\epsilon\ll 1$ is small enough. Moreover, let $\boldsymbol{\Sigma}^{(i)}_{q(\omega_j)}-\boldsymbol{\Sigma}^{(i-1)}_{q(\omega_j)}\geq 0$, then:
	\begin{center}
		\begin{enumerate}
			\item $\mu^{(i+1)}_{q(\gamma_{jt})} = \mathrm{expit}\left\{\mu^{(i+1)}_{q(\omega_{jt})}-\frac{1}{2}\mu^{(i+1)}_{q(1/\sigma_t^2)}x_{jt-1}^2\mu_{q(1/\eta_j^2)}^{-1(i+1)}q_{tt}+O(\epsilon)\right\}$, $q_{tt}=[\mathbf{Q}^{-1}]_{tt}\geq 0$;
			\item $\mu_{q(\omega_{jt})}^{(i+1)} = -1/2\sum_{k=1}^n s_{tk}+O(\epsilon)$, $s_{tk}=[\boldsymbol{\Sigma}_{q(\omega_{j})}]_{tk}\geq 0$;
			\item $\mu_{q(\omega_{jt})}^{(i+1)}\leq\mu_{q(\omega_{jt})}^{(i)}$ decreases after each iteration.
		\end{enumerate}
	\end{center}  
\end{proposition}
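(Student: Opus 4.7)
The plan is to dispatch the three claims in order: the first two follow from small-$\epsilon$ expansions of the CAVI updates already derived, while claim 3 is a monotone-cascade argument that exploits the M-matrix structure underlying Lemma \ref{lemma:sigmapos}.

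For claim 1, I would start from the closed form $\omega_{q(\gamma_{jt})} = \mu_{q(\omega_{jt})} - \tfrac{1}{2}\mu_{q(1/\sigma_t^2)}\bigl(x_{jt-1}^2\mathbb{E}_q[b_{jt}^2] - 2\mu_{q(b_{jt})}x_{jt-1}\mu_{q(\varepsilon_{-jt})}\bigr)$ from Proposition \ref{prop:up_gamma_main}. Under the hypothesis $\max_t \mu_{q(\gamma_{jt})}^{(i)} = \epsilon$, the diagonal matrix $\mathbf{D}_j^2$ in Proposition \ref{prop:up_beta_main} is $O(\epsilon)$ entrywise, so a first-order perturbation of the inverse gives $\boldsymbol{\Sigma}_{q(\bb_j)}^{(i+1)} = \mu_{q(1/\eta_j^2)}^{-1(i+1)}\mathbf{Q}^{-1} + O(\epsilon)$; in particular the diagonal entry is $\sigma_{q(b_{jt})}^{2\,(i+1)} = \mu_{q(1/\eta_j^2)}^{-1(i+1)}q_{tt} + O(\epsilon)$. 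The mean $\boldsymbol{\mu}_{q(\bb_j)}^{(i+1)}$ carries a left factor of $\mathbf{D}_j$ and is therefore $O(\epsilon)$, so $\mathbb{E}_q[b_{jt}^2] = \mu_{q(1/\eta_j^2)}^{-1}q_{tt} + O(\epsilon)$ while the cross term $2\mu_{q(b_{jt})}x_{jt-1}\mu_{q(\varepsilon_{-jt})}$ is $O(\epsilon)$. Substituting and applying $\mathrm{expit}$ yields claim 1.

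Claim 2 follows by the same device. Proposition \ref{prop:up_omega} gives $\boldsymbol{\mu}_{q(\omega_j)}^{(i+1)} = \boldsymbol{\Sigma}_{q(\omega_j)}^{(i+1)}(0,\boldsymbol{\mu}^\intercal_{q(\bar\gamma_j)})^\intercal$ with $\boldsymbol{\mu}_{q(\bar\gamma_j)}^{(i)} = -\tfrac{1}{2}\boldsymbol{\iota}_n + O(\epsilon)$ under the hypothesis, so the $t$-th component reduces to $-\tfrac{1}{2}\sum_{k=1}^n s_{tk}^{(i+1)} + O(\epsilon)$; non-negativity of $s_{tk}$ is exactly Lemma \ref{lemma:sigmapos}.

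For claim 3, I would unwind a monotone cascade between consecutive iterations. The hypothesis $\boldsymbol{\Sigma}_{q(\omega_j)}^{(i)} - \boldsymbol{\Sigma}_{q(\omega_j)}^{(i-1)} \geq 0$, read entrywise in the spirit of Lemma \ref{lemma:sigmapos}, combined with claim 2 at step $i$ forces $\mu_{q(\omega_{jt})}^{(i)} \leq \mu_{q(\omega_{jt})}^{(i-1)}$, which inflates $\mu_{q(\omega_{jt}^2)}^{(i)}$; the monotone decrease of $a \mapsto \tfrac{1}{2\sqrt{a}}\tanh(\sqrt{a}/2)$ in Proposition \ref{prop:up_z} then gives $\mu_{q(z_{jt})}^{(i)} \leq \mu_{q(z_{jt})}^{(i-1)}$. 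Consequently the Z-matrix $W^{(i+1)} = \mathsf{Diag}(0,\boldsymbol{\mu}_{q(z_j)}^{(i)}) + \mu_{q(1/\xi_j^2)}^{(i+1)}\mathbf{Q}$ inverted at step $i+1$ is entrywise dominated, on the diagonal, by its step-$i$ counterpart $W^{(i)}$, so $W^{(i)} = W^{(i+1)} + \Delta$ with $\Delta \geq 0$ diagonal. Both are M-matrices (argument as in Lemma \ref{lemma:sigmapos}), and writing $(W^{(i+1)})^{-1} - (W^{(i)})^{-1} = (W^{(i+1)})^{-1}\Delta(W^{(i)})^{-1} \geq 0$ entrywise (a product of three non-negative matrices) delivers $\boldsymbol{\Sigma}_{q(\omega_j)}^{(i+1)} \geq \boldsymbol{\Sigma}_{q(\omega_j)}^{(i)}$ entrywise. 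A second application of claim 2 then closes the loop: $\mu_{q(\omega_{jt})}^{(i+1)} \leq \mu_{q(\omega_{jt})}^{(i)}$ up to $O(\epsilon)$. The main obstacle is precisely this entrywise inverse step, because the quantity being compared is a row sum $\sum_k s_{tk}$ rather than a quadratic form, so PSD ordering of inverses is not enough; one must work with the M-matrix structure, and drifts in $\mu_{q(1/\xi_j^2)}^{(i+1)}$ across iterations have to be absorbed into the $O(\epsilon)$ remainder via Proposition \ref{prop:up_xi}.
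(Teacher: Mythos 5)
Your proposal is correct and follows essentially the same route as the paper's proof: claims 1 and 2 via first-order (Taylor/Neumann) expansions in $\epsilon$ of the updates for $\boldsymbol{\mu}_{q(\bb_j)}$, $\boldsymbol{\Sigma}_{q(\bb_j)}$ and $\boldsymbol{\mu}_{q(\omega_j)}$, and claim 3 via the same monotone cascade through $\mu_{q(z_{jt})}$, $\mu_{q(1/\xi_j^2)}$ and $\boldsymbol{\Sigma}_{q(\omega_j)}$ resting on Lemma \ref{lemma:sigmapos}. If anything you are more explicit than the paper on the entrywise inverse-comparison step, where the paper simply asserts that $\boldsymbol{\Sigma}_{q(\omega_j)}$ increases as $\mathbb{E}_q[\mathbf{z}_j]$ and $\mu_{q(1/\xi_j^2)}$ decrease; the residual difficulty you flag (the $\mu_{q(1/\xi_j^2)}\mathbf{Q}$ drift contributing negative off-diagonal entries to $\Delta$) is glossed over in the paper as well.
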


\begin{proof}
	Consider the $k$-th group with elements $j=1,\ldots,p_k$.
We start proving {\it 1)}. Recall that the update for $\mu^{(i+1)}_{q(\gamma_{jt})}$ is equal to:
	\begin{align}
		\mu^{(i+1)}_{q(\gamma_{jt})} &= \mathrm{expit}\bigg\{\mu^{(i)}_{q(\omega_{jt})}-1/2\mu^{(i)}_{q(1/\sigma_t^2)}\left(\mathbb{E}^{(i+1)}_q[b^2_{jt}]x_{jt-1}^2-2\mu^{(i+1)}_{q(b_{jt})}{x}_{jt-1}\mathbb{E}^{(i+1)}_q[\varepsilon_{jt}]\right) \nonumber\\
        &\qquad\qquad\qquad -\mu^{(i)}_{q(1/\sigma_t^2)}x_{jt-1}\mathbf{x}^{-j}_{kt-1}\mathrm{diag}\big(\boldsymbol{\mu}^{(i+1)}_{q(\gamma^{-j}_{kt})}\big)\big[\boldsymbol{\Sigma}^{(i+1)}_{q(\mathbf{b}_{kt})}\big]_{-j,j}\bigg\}.
	\end{align}
 The variance matrix $\boldsymbol{\Sigma}_{q(\mathbf{b}_k)}^{(i+1)}=\big(\mathbf{D}^{(i+1)}_k+\mathbf{Q}^{(i+1)}_k\big)^{-1}$ and exploiting the block matrix inversion formula, it holds:
	\begin{align*}
		[\boldsymbol{\Sigma}_{q(\mathbf{b}_k)}^{(i+1)}]_{j,j} &= \big(\mathbf{M}_{q(\gamma_j)}^{(i)}\odot\big(\mathbf{X}_j^\prime\mathbf{H}^{(i)}\mathbf{X}_j\big)+\mu^{(i)}_{q(1/\eta_j^2)}\mathbf{Q} -\mathbf{M}_{q(\gamma_j)}^{(i)}\mathbf{X}_j^\prime\mathbf{H}^{(i)}\mathbf{P}^{(i)}_j\mathbf{H}^{(i)}\mathbf{X}_j\mathbf{M}_{q(\gamma_j)}^{(i)}\big)^{-1}, \\
        [\boldsymbol{\Sigma}_{q(\mathbf{b}_k)}^{(i+1)}]_{-j,j} &= -\big(\mathbf{D}^{(i)}_{-j}+\mathbf{Q}^{(i)}_{-j}\big)^{-1}\mathbf{M}^{(i)}_{q(\gamma_{-j})}\mathbf{X}_{-j}^\prime\mathbf{H}^{(i)}\mathbf{X}_{j}\mathbf{M}^{(i)}_{q(\gamma_{j})}[\boldsymbol{\Sigma}_{q(\mathbf{b}_k)}^{(i+1)}]_{j,j},
	\end{align*}
 where $\mathbf{M}_{q(\gamma_j)}=\mathrm{diag}(\boldsymbol{\mu}_{q(\gamma_{j1})},\ldots,\boldsymbol{\mu}_{q(\gamma_{jn})})$, and $\mathbf{P}_j=\mathbf{X}_{-j}\mathbf{M}_{q(\gamma_{-j})}(\mathbf{D}_{-j}+\mathbf{Q}_{-j})^{-1}\mathbf{M}_{q(\gamma_{-j})}\mathbf{X}_{-j}^\prime$.
Notice that the latter doesn't depend on the elements in $\mathbf{M}_{q(\gamma_j)}$. Then, we can write each inclusion probability as $\mu^{(i)}_{q(\gamma_{jt})}=\alpha_t\epsilon$, with $0<\alpha_t\leq1$ and where $\epsilon=\max_t \mu^{(i)}_{q(\gamma_{jt})}$, and define $\mathbf{M}_{q(\gamma_j)} = \epsilon\mathbf{A}$ with ${\mathbf{A}}=\mathrm{diag}(\alpha_1,\ldots,\alpha_n)$, so that the equations above become:
\begin{align*}
		[\boldsymbol{\Sigma}_{q(\mathbf{b}_k)}^{(i+1)}]_{j,j} &= \big(\epsilon\mathbf{A}\odot\big(\mathbf{X}_j^\prime\mathbf{H}^{(i)}\mathbf{X}_j\big)+\mu^{(i)}_{q(1/\eta_j^2)}\mathbf{Q} -\epsilon^2\mathbf{A}\mathbf{X}_j^\prime\mathbf{H}^{(i)}\mathbf{P}^{(i)}_j\mathbf{H}^{(i)}\mathbf{X}_j\mathbf{A}\big)^{-1}, \\
        [\boldsymbol{\Sigma}_{q(\mathbf{b}_k)}^{(i+1)}]_{-j,j} &= -\epsilon\big(\mathbf{D}^{(i)}_{-j}+\mathbf{Q}^{(i)}_{-j}\big)^{-1}\mathbf{M}^{(i)}_{q(\gamma_{-j})}\mathbf{X}_{-j}^\prime\mathbf{H}^{(i)}\mathbf{X}_{j}\mathbf{A}[\boldsymbol{\Sigma}_{q(\mathbf{b}_k)}^{(i+1)}]_{j,j}.
	\end{align*}
Note that $\lim_{\epsilon\rightarrow0}\,[\boldsymbol{\Sigma}_{q(\mathbf{b}_k)}^{(i+1)}]_{j,j} =\mu^{-1\,(i)}_{q(1/\eta_j^2)}\mathbf{Q}^{-1}$
 and each element in $[\boldsymbol{\Sigma}_{q(\mathbf{b}_k)}^{(i+1)}]_{-j,j}$ is $O(\epsilon)$.
 Consider now the mean vector $\boldsymbol{\mu}_{q(\mathbf{b}_k)}^{(i+1)} =\boldsymbol{\Sigma}^{(i+1)}_{q(\mathbf{b}_k)}{\mathbf{\Lambda}^{(i+1)}_k}$. The sub-vector corresponding to the $j$-th element in the $k$-th group is defined as:
	\begin{align}
		\boldsymbol{\mu}_{q(\mathbf{b}_j)}^{(i+1)} = [\boldsymbol{\Sigma}_{q(\mathbf{b}_k)}^{(i+1)}]_{j,j}{\mathbf{\Lambda}^{(i+1)}_j} + [\boldsymbol{\Sigma}_{q(\mathbf{b}_k)}^{(i+1)}]_{j,-j}{\mathbf{\Lambda}^{(i+1)}_{-j}},
	\end{align}
 where 
 \begin{align*}
\mathbf{\Lambda}^{(i+1)}_j &=\mathbf{M}^{(i)}_{q(\gamma_j)}\mathbf{X}_j^\prime\mathbf{H}^{(i)}\bigg(\mathbf{y} - \sum_{m=1,m\neq k}^K \mathbf{X}_{m}\mathbf{M}^{(i)}_{q(\gamma_m)}{\boldsymbol{\mu}}_{q(\mathbf{b}_{m})}\bigg) \\
&=\epsilon\mathbf{A}\mathbf{X}_j^\prime\mathbf{H}^{(i)}\bigg(\mathbf{y} - \sum_{m=1,m\neq k}^K \mathbf{X}_{m}\mathbf{M}^{(i)}_{q(\gamma_m)}{\boldsymbol{\mu}}_{q(\mathbf{b}_{m})}\bigg),
 \end{align*}
 and $\mathbf{\Lambda}^{(i+1)}_{-j}$ doesn't depend on the elements in $\mathbf{M}_{q(\gamma_j)}$. Hence, each element in $\boldsymbol{\mu}_{q(\mathbf{b}_j)}^{(i+1)}$ is $\mathbb{E}^{(i+1)}_q[b_{jt}]=O(\epsilon)$. It follows that
	\begin{align}\label{eq:mu_beta2_th}
		\mathbb{E}^{(i+1)}_q[b^2_{jt}] = \big(\mu^{(i+1)}_{q(b_{jt})}\big)^2+\sigma^{2(i+1)}_{q(b_{jt})} = \left[\mu^{(i)}_{q(1/\eta_j^2)}\right]^{-1}q_{t,t}+O(\epsilon),
	\end{align}
which completes the proof.
	Similarly we prove {\it 2)}. Recall the function to jointly update $\boldsymbol{\mu}^{(i+1)}_{q(\omega_{j})}$:
	\begin{align}
		\boldsymbol{\mu}^{(i+1)}_{q(\omega_{j})}=\boldsymbol{\Sigma}^{(i+1)}_{q(\omega_j)}\left(0,\boldsymbol{\mu}^{(i)\prime}_{q({\gamma}_j)}-1/2\boldsymbol{\iota}_n^\prime\right)^\prime,
	\end{align}
	then the update of the $t$-th component is:
	\begin{align}\label{eq:up_omega_th}
		{\mu}^{(i+1)}_{q(\omega_{jt})}&=\mathbf{s}_t^\prime\left(0,\boldsymbol{\mu}^{(i)\prime}_{q({\gamma}_j)}-1/2\boldsymbol{\iota}_n^\prime\right)^\prime \nonumber\\
		&=-1/2\mathbf{s}_t^\prime\left(0,\boldsymbol{\iota}_n^\prime\right)^\prime+\mathbf{s}_t^\prime\left(0,\boldsymbol{\mu}^{(i)\prime}_{q({\gamma}_j)}\right)^\prime \nonumber\\
		&=-1/2\sum_{k=1}^n{s}_{tk}+\sum_{k=1}^n{s}_{tk}{\mu}^{(i)}_{q({\gamma}_{jk})},
	\end{align}
	where $\mathbf{s}_t$ denotes the $t$-th column in $\boldsymbol{\Sigma}^{(i+1)}_{q(\omega_j)}$. Notice that, since ${\mu}^{(i)}_{q({\gamma}_{jk})}\leq\epsilon$, for all $k$, we can write ${\mu}^{(i)}_{q({\gamma}_{jk})} = \alpha_k\epsilon$, where $0<\alpha_k\leq1$. If we plug-in the latter in Eq.\eqref{eq:up_omega_th} we get
	\begin{equation}
		{\mu}^{(i+1)}_{q(\omega_{j,t})}=-1/2\sum_{k=1}^n s_{tk}+\epsilon\sum_{k=1}^n\alpha_k s_{tk}=-1/2\sum_{k=1}^n s_{tk}+O(\epsilon).
	\end{equation}
	To prove the last statement {\it 3)}, assume that we observe $\boldsymbol{\Sigma}^{(i)}_{q(\omega_{j})}-\boldsymbol{\Sigma}^{(i-1)}_{q(\omega_{j})}$ positive matrix. Then we have that, for $\epsilon$ small:
	\begin{equation}\label{eq:start_proof}
		|\boldsymbol{\mu}^{(i)}_{q(\omega_{j})}|=\frac{1}{2}\boldsymbol{\Sigma}^{(i)}_{q(\omega_j)}\left(0,\boldsymbol{\iota}_n^\prime\right)^\prime\geq \frac{1}{2}\boldsymbol{\Sigma}^{(i-1)}_{q(\omega_j)}\left(0,\boldsymbol{\iota}_n^\prime\right)^\prime = |\boldsymbol{\mu}^{(i-1)}_{q(\omega_{j})}|,
	\end{equation}
	and therefore:
	\begin{equation}
		\mathbb{E}_q^{(i)}(\boldsymbol{\omega}_j\boldsymbol{\omega}_j^\prime) = \boldsymbol{\mu}^{(i)}_{q(\omega_{j})}(\boldsymbol{\mu}^{(i)}_{q(\omega_{j})})^\prime + \boldsymbol{\Sigma}^{(i)}_{q(\omega_j)} \geq \boldsymbol{\mu}^{(i-1)}_{q(\omega_{j})}(\boldsymbol{\mu}^{(i-1)}_{q(\omega_{j})})^\prime + \boldsymbol{\Sigma}^{(i-1)}_{q(\omega_j)} = \mathbb{E}_q^{(i-1)}(\boldsymbol{\omega}_j\boldsymbol{\omega}_j^\prime),
	\end{equation}
	which means that $\mathbb{E}_q^{(i)}(\boldsymbol{\omega}_j\boldsymbol{\omega}_j^\prime)-\mathbb{E}_q^{(i-1)}(\boldsymbol{\omega}_j\boldsymbol{\omega}_j^\prime)$ is a positive matrix. Consider now the update for the variable $z_{jt}$:
	\begin{equation}
		\mathbb{E}^{(i)}_q\left[z_{jt}\right] = \frac{1}{2}\frac{1}{\sqrt{\mathbb{E}_q^{(i)}({\omega}_{jt}^2)}}\tanh(\frac{\sqrt{\mathbb{E}_q^{(i)}({\omega}_{jt}^2)}}{2}) \leq \mathbb{E}^{(i-1)}_q\left[z_{jt}\right],
	\end{equation}
	since it is decreasing in $\mathbb{E}_q^{(i)}({\omega}_{jt}^2)$, for all $t$. And similarly for $\xi_j^2$:
	\begin{align}
		\mu^{(i)}_{q(1/\xi_j^2)} = \frac{A_\xi + \frac{n+1}{2}}{B_\xi +\frac{1}{2}\mathsf{tr}\left\{\mathbb{E}_q^{(i)}(\boldsymbol{\omega}_j\boldsymbol{\omega}_j^\prime)\mathbf{Q}\right\}} \leq \mu^{(i-1)}_{q(1/\xi_j^2)},
	\end{align}
	since it is decreasing in $\mathbb{E}_q^{(i)}(\boldsymbol{\omega}_j\boldsymbol{\omega}_j^\prime)$ and $\mathbb{E}_q^{(i)}(\boldsymbol{\omega}_j\boldsymbol{\omega}_j^\prime)-\mathbb{E}_q^{(i-1)}(\boldsymbol{\omega}_j\boldsymbol{\omega}_j^\prime)$ is a positive matrix. The next update of $\boldsymbol{\Sigma}_{q(\omega_j)}$ is equal to:
	\begin{equation}
		\boldsymbol{\Sigma}^{(i+1)}_{q(\omega_j)}=\left(\mathsf{Diag}\left(0,\mathbb{E}^{(i)}_q\left[\mathbf{z}_j\right]\right)+\mu^{(i)}_{q(1/\xi_j^2)}\mathbf{Q}\right)^{-1},
	\end{equation}
	which increases as both $\mathbb{E}^{(i)}_q\left[\mathbf{z}_j\right]$ and $\mu^{(i)}_{q(1/\xi_j^2)}$ decreases. Hence also $\boldsymbol{\Sigma}^{(i+1)}_{q(\omega_j)}-\boldsymbol{\Sigma}^{(i)}_{q(\omega_j)}$ is a positive matrix and therefore, for $\epsilon$ small:
	\begin{equation}
		|\boldsymbol{\mu}^{(i+1)}_{q(\omega_{j})}|\geq |\boldsymbol{\mu}^{(i)}_{q(\omega_{j})}|,
	\end{equation} 
	and from statement {\it 2)} we have that $\boldsymbol{\mu}^{(i+1)}_{q(\omega_{j})}\leq \boldsymbol{\mu}^{(i)}_{q(\omega_{j})}$. Set $i=i+1$ and repeat the procedure from Eq.\eqref{eq:start_proof}. We can see that $\boldsymbol{\mu}_{q(\omega_{j})}$ decreases after each iteration until convergence.
\end{proof}

The convergence results outlined in Proposition \ref{prop:main} provide a dimension reduction strategy whereby we can remove the $j$-th variable from the set of predictors during the estimation. Such an exclusion strategy improves the computational efficiency when $p$ increases but the signal $\bar{p}\leq p$ remains constant, where $\bar{p}=\mathsf{card}(\mathcal{J})$ and $\mathcal{J}=\{j:\sum_{t=1}^n\gamma_{jt}>0\}$ collects the indexes of regression coefficients that are included in the model at least for one $t$. The extended variational Bayes algorithm is summarised in Algorithm \ref{algo:algo2}.

	\begin{algorithm}
		\SetAlgoLined
		\kwInit{$q(\boldsymbol{\vartheta})$, $\Delta_{\boldsymbol{\vartheta}}$, $A_\nu$, $B_\nu$, $A_\eta$, $B_\eta$, $A_\xi$, $B_\xi$}		\While{$\big(\widehat{\Delta}_{\boldsymbol{\vartheta}}>\Delta_{\boldsymbol{\vartheta}}\big)$}{
			\For{$k=1,\ldots,K$}{
				Update $q(\mathbf{b}_k)$ as in \ref{prop:up_beta};\\
			\For{$j=1,\ldots,p_k$}{ 
				and $q(\eta_j)$ as in \ref{prop:up_eta}; \\
				Update $q(\boldsymbol{\omega}_j)$ as in \ref{prop:up_omega_app} and
				$q(\xi_j)$ as in \ref{prop:up_xi}; \\
				\For{$t=1,\ldots,n$}{
					Update $q(z_{jt})$ as in \ref{prop:up_z}; \\
					Update $q(\gamma_{jt})$ as in \ref{prop:up_gamma} (non-smooth) or \ref{prop:up_sm_gamma} (smooth);
				}
                }
			}
			Update $q(\boldsymbol{\sigma})$ as in  \ref{prop:up_logsima_app} (heteroskedastic) or \ref{prop:up_sigmasq_homo} (homoskedastic); \\
			Update $q(\nu^2)$ as in \ref{prop:up_nu};\\
			\If{assumptions in Proposition \ref{prop:main} hold}{
				\For{$k=1,\ldots,K$}{\For{$j=1,\ldots,p_k$}{
					\If{$\max_t\{\mu_{q(\gamma_{jt})}\}<\epsilon$}{Drop the $j$-th variable}
				}}
			}
			Compute $\widehat{\Delta}_{\boldsymbol{\vartheta}} = q(\boldsymbol{\vartheta})^{(\mbox{iter})}-q(\boldsymbol{\vartheta})^{(\mbox{iter}-1)}$ ;
		}
		\caption{Efficient variational Bayes for dynamic variable selection.}
		\label{algo:algo2}
	\end{algorithm}

\subsection{Additional convergence results}

Figure \ref{fig:dimensionreduction} provides a visual representation of Proposition \ref{prop:main} for a simple simulation set-up in which $\gamma_{jt}=0\ \forall t$. The dashed line identifies the iteration at which the conditions in Proposition \ref{prop:main} are satisfied for $\epsilon=0.01$. After few iterations $\mu_{q(\omega_{jt})}$ (right panel) becomes increasingly negative and $\mu_{q(\gamma_{jt})}$ (left panel) remains zero $\forall t$. Figure \ref{fig:convergence} provides a further comparison between the true variational update and the approximation proposed in Proposition \ref{prop:main} of $\mu_{q(\gamma_{jt})}$ (left panel) and $\mu_{q(\omega_{jt})}$ (right panel) for different time stamps over the simulated sample. Again, the vertical dashed line identifies the iteration at which the conditions of Proposition \ref{prop:main} hold for $\epsilon=0.01$. The approximation is exact after less than 30 iterations, meaning the algorithm induces sparsity in the full trajectory of time-varying parameters as highlighted in Eq.(2.18) in the main paper. 

\begin{figure}[!ht]
    \centering
\subfigure[Convergence of updates for $\mu_{q(\gamma_{jt})}$]{\includegraphics[width=.42\textwidth]{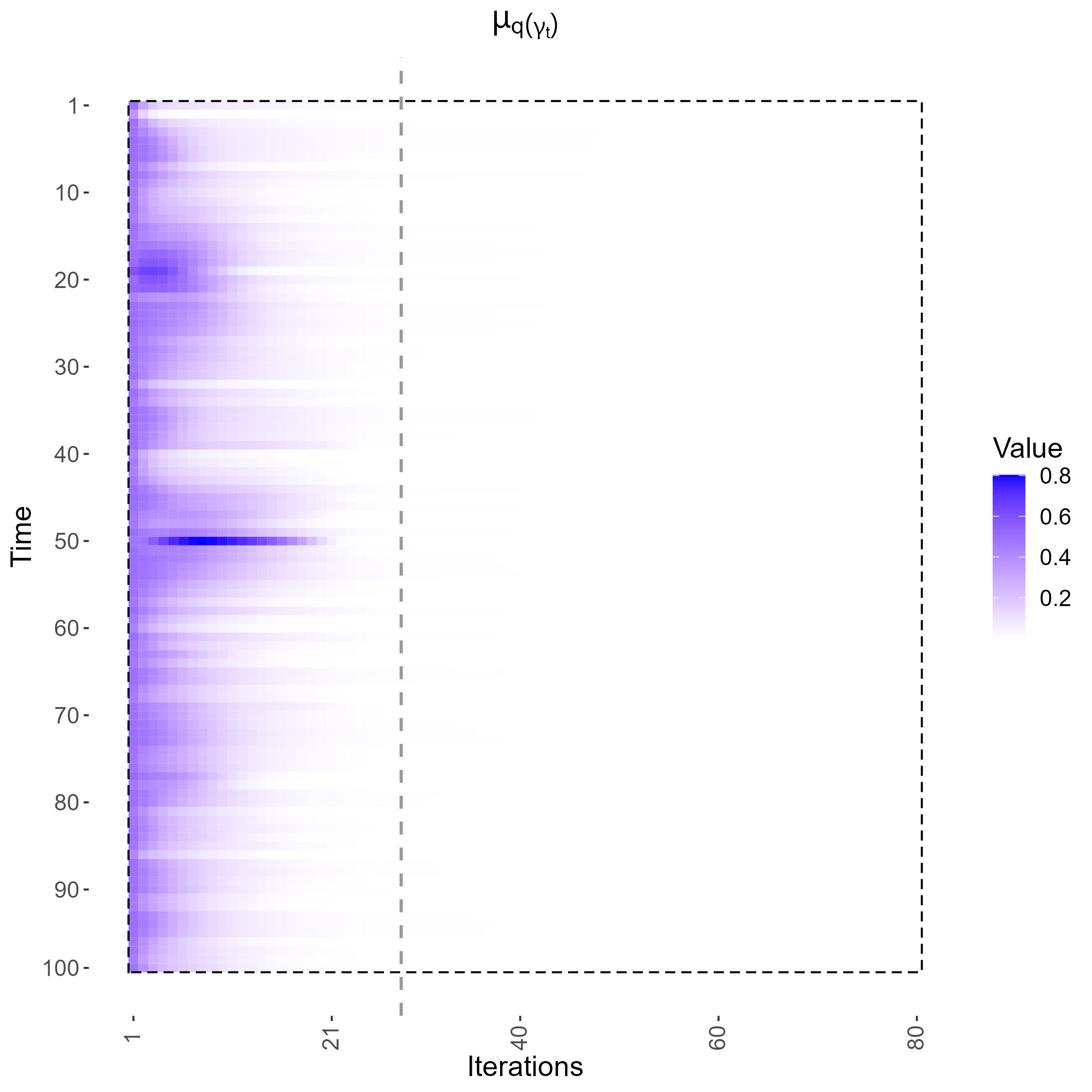}}\subfigure[Convergence of updates for $\mu_{q(\omega_{jt})}$]{\includegraphics[width=.42\textwidth]{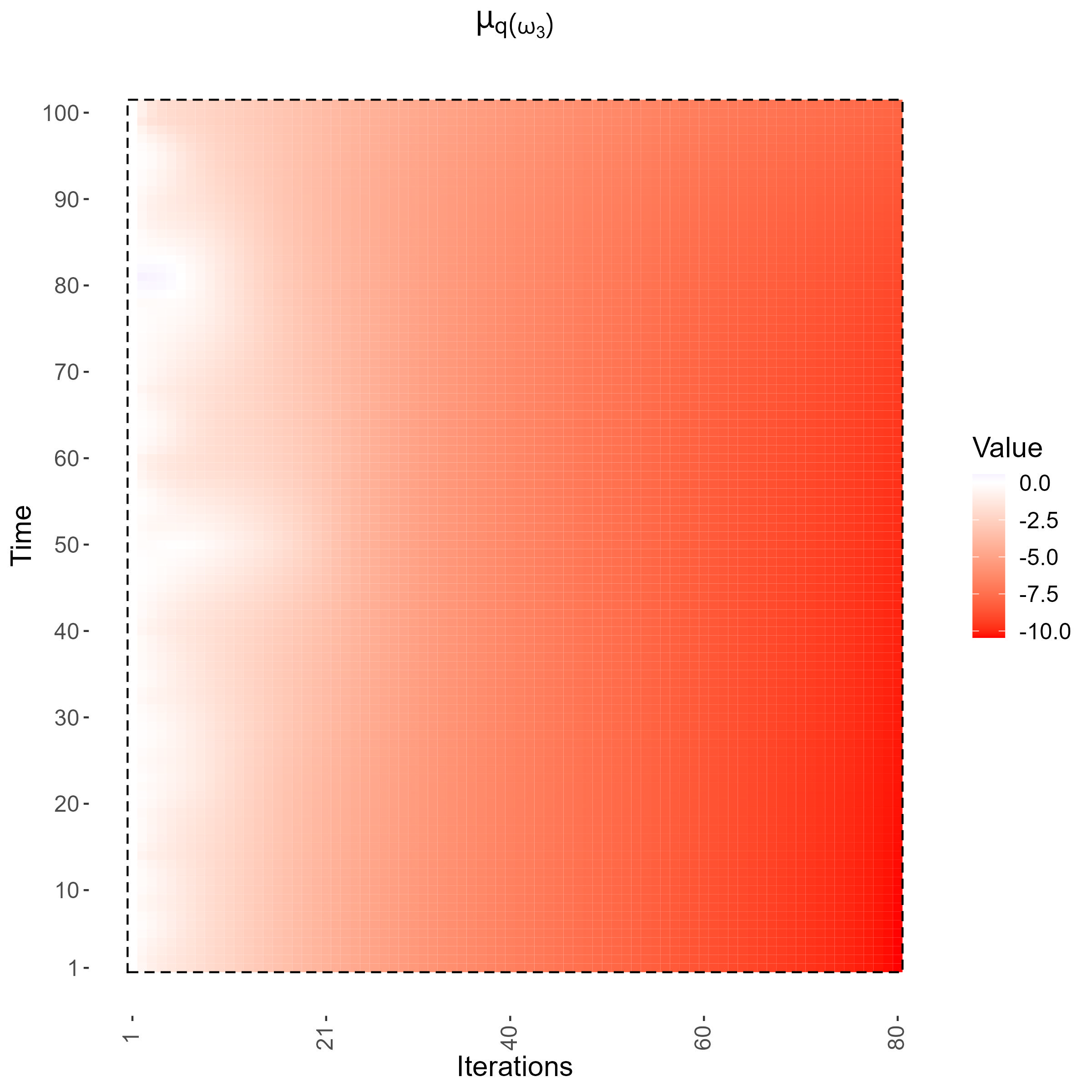}}
    \caption{\small Left panel shows variational update over iterations (x-axis) of the vector of posterior inclusion probabilities $(\mu_{q(\gamma_{j1})},\ldots,\mu_{q(\gamma_{jn})})$ (y-axis), for a parameter $j$ which is always zero $\forall t$. The dashed line identifies the iteration at which the conditions of Proposition \ref{prop:main} are satisfied for $\epsilon=0.01$. The right panel depicts the decreasing behaviour of $\mu_{q(\omega_{jt})}$, $\forall t$.}\label{fig:dimensionreduction} 
\end{figure}

\begin{figure}[!ht]
\centering
\subfigure[Convergence of updates for $\mu_{q(\gamma_{jt})}$]{\includegraphics[width=.45\textwidth]{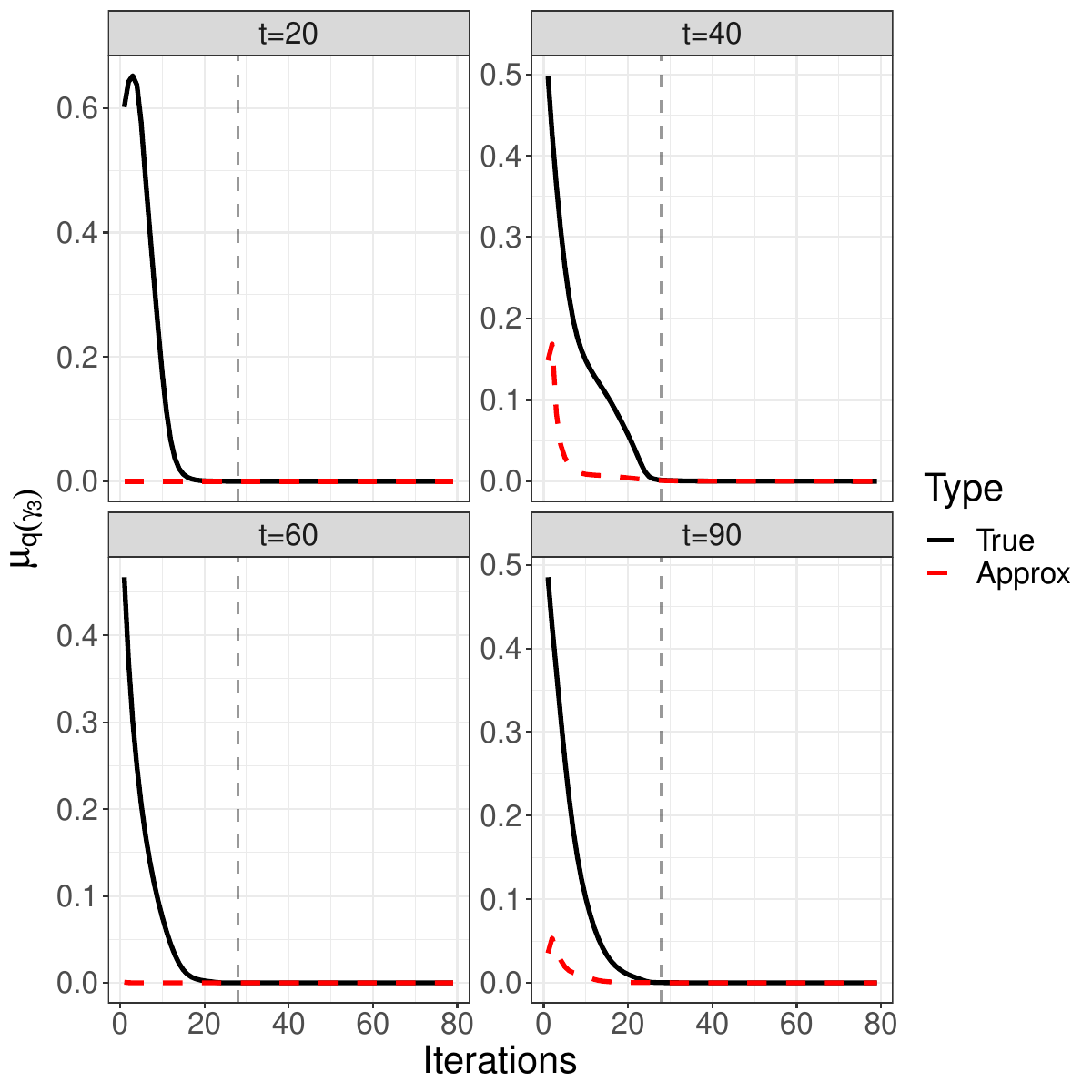}}\subfigure[Convergence of updates for $\mu_{q(\omega_{jt})}$]{\includegraphics[width=.45\textwidth]{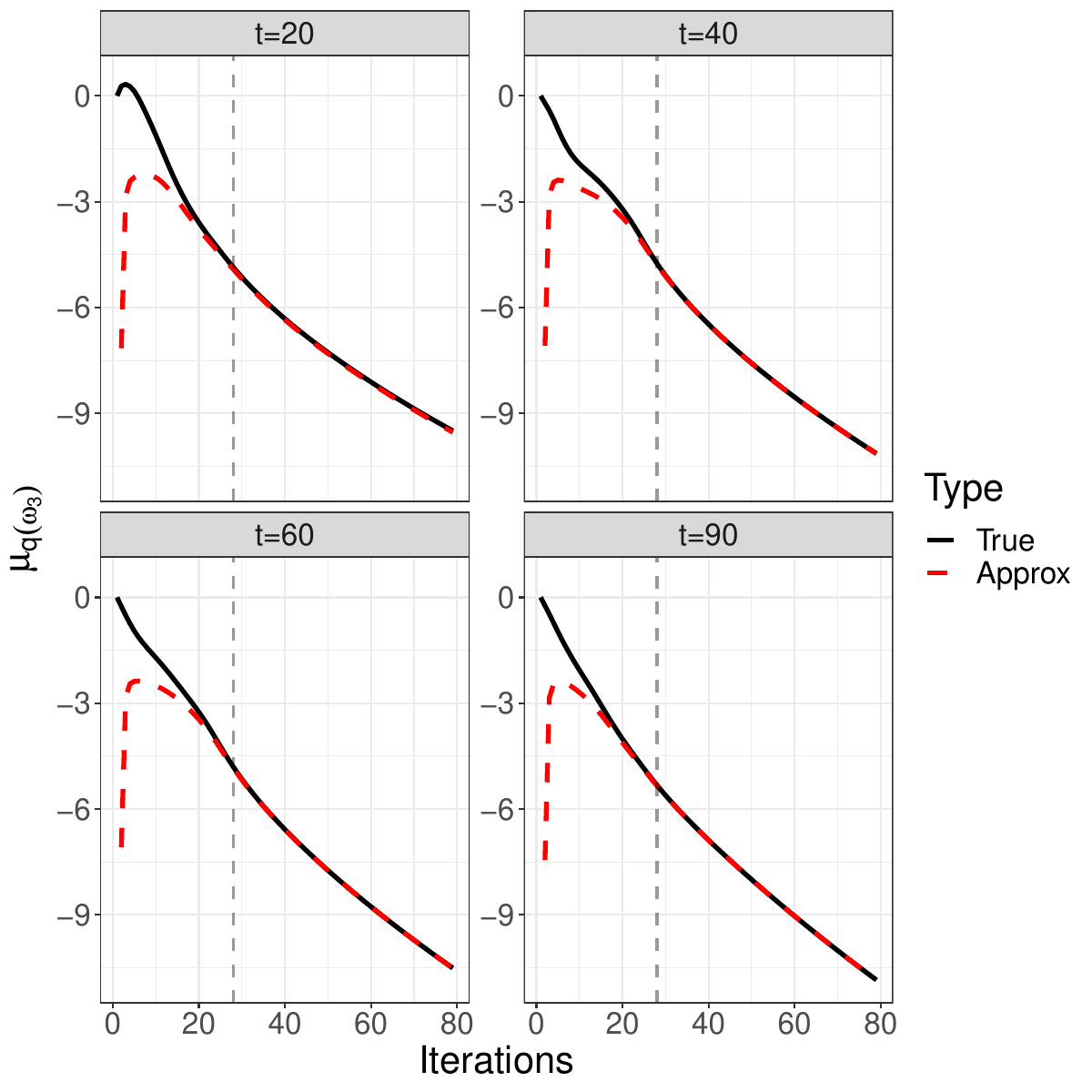}}
    \caption{\small Comparison between the true variational update and the approximation proposed in Proposition \ref{prop:main} of the posterior inclusion probability $\mu_{q(\gamma_{jt})}$ (left panel) and the auxiliary parameter $\mu_{q(\omega_{jt})}$ (right panel), for some times $t$ and for a parameter which is always zero $\forall t$.}\label{fig:convergence}
\end{figure}

Figure \ref{fig:convergence beta2} reports the posterior estimates of $\mu_{q(\gamma_{jt})}$ (left panel) and $\mu_{q(\omega_{jt})}$ (right panel) across $t=1,\ldots,n$ and for a parameter $j$ which is significant for only part of the sample. The colour intensity gives the value of the update. After less than 30 iterations, the posterior estimates of $\omega_{jt}$ and $\gamma_{jt}$ quickly converge to their true values. This threshold corresponds to the iteration at which the conditions of Proposition \ref{prop:main} are satisfied for $\epsilon=0.01$. In this respect, Figure \ref{fig:convergence beta2} complements Figure \ref{fig:dimensionreduction} in showing the convergence properties of our variational Bayes inference approach. 

\begin{figure}[!ht]
\hspace{-1.5em}\includegraphics[width=0.54\linewidth]{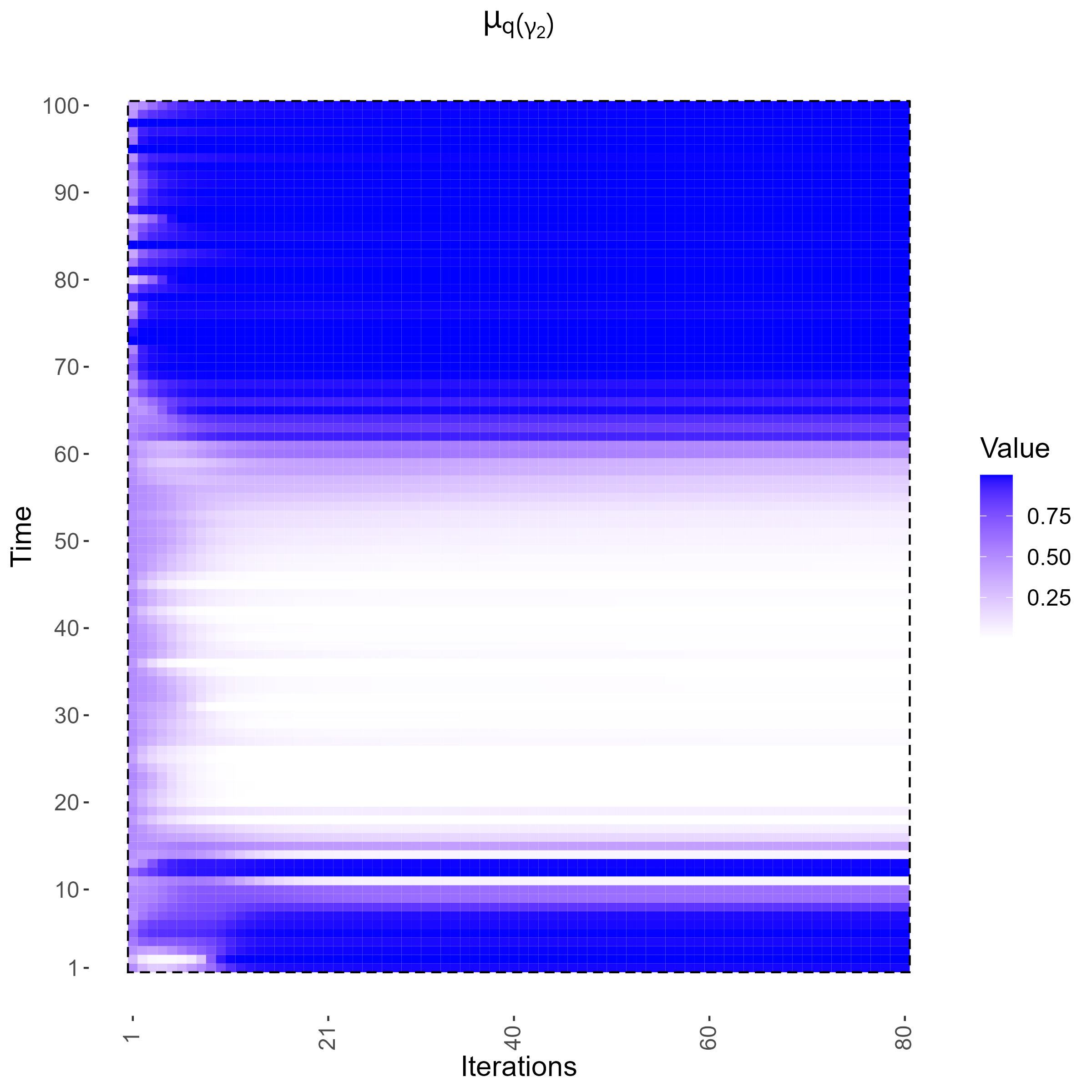}    \includegraphics[width=0.54\linewidth]{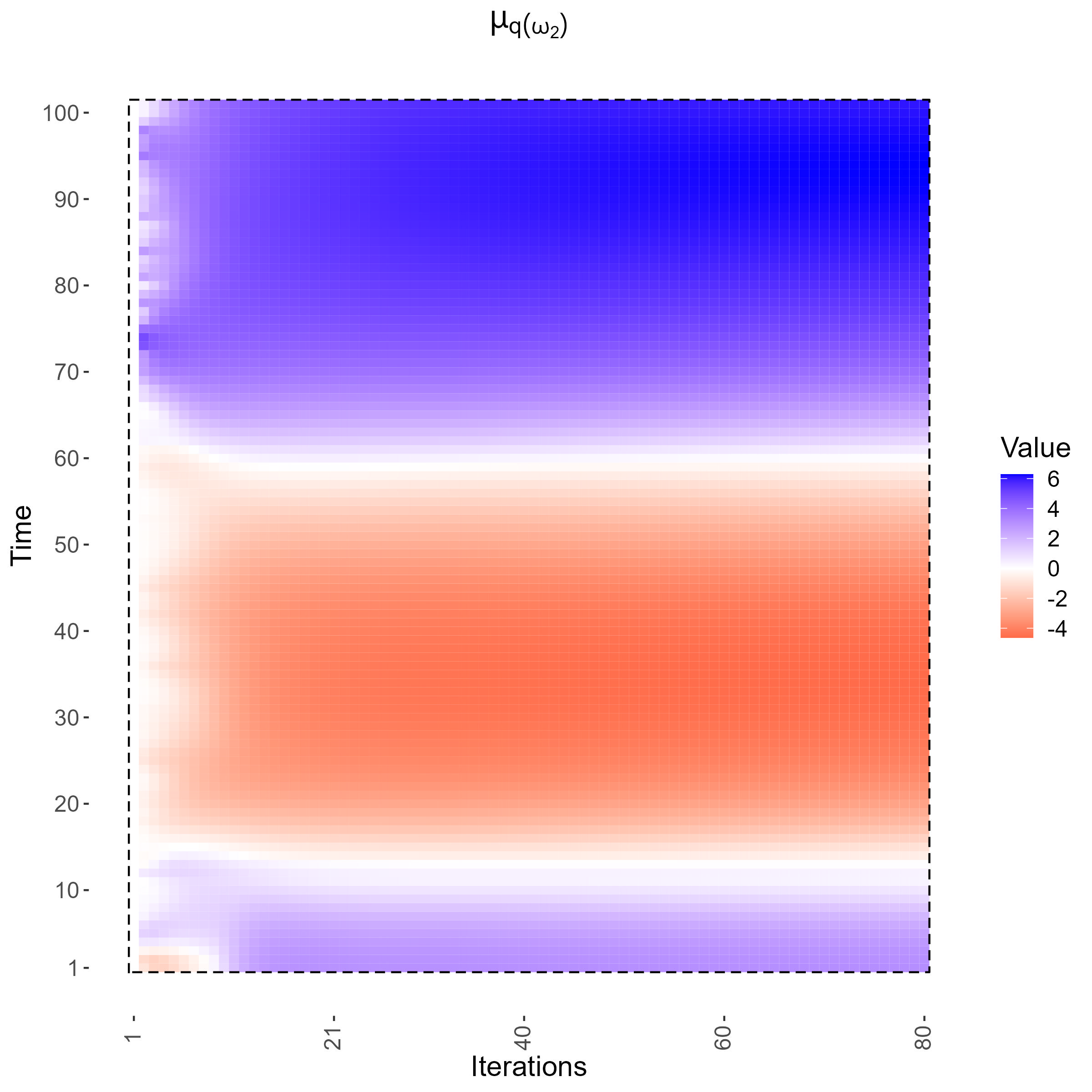}\hspace{-2em}
    \caption{\small Left panel shows the variational update over iterations (x-axis) until convergence of the vector of posterior inclusion probabilities $(\mu_{q(\gamma_{j1})},\ldots,\mu_{q(\gamma_{jn})})$ (y-axis), for a parameter $j$ which is significant for only part of the sample. The colour intensity gives the value of the update. The right panel depicts the behaviour of $\mu_{q(\omega_{jt})}$, $\forall t$.}
\label{fig:convergence beta2}
\end{figure}

\section{Prior hyper-parameters and algorithm initialization}\setcounter{figure}{0}\setcounter{table}{0}

\label{app:hyperparam}

We follow \cite{koop_korobilis_2020} and set $\mu_{q(\gamma_{jt})}^{(0)}=1/2$, $\forall t,j$. Next, we assume inverse-gamma priors for the variances parameters $\nu^2\sim\mathsf{IG}(A_\nu,B_\nu)$, $\eta_j^2\sim\mathsf{IG}(A_\eta,B_\eta)$, and $\xi_j^2\sim\mathsf{IG}(A_\xi,B_\xi)$.  We follow \cite{ormerod2017VS} and set $A_\sigma=B_\sigma=A_\eta=B_\eta=0.01$ to maintain non-informativeness. These are all standard conjugate priors. An important feature of our dynamic variable selection method is that the time-variation of $\gamma_{jt}$ depends on $\omega_{jt}$, where the dynamic of the latter is governed by the variance $\xi^2_j\sim\mathsf{IG}(A_\xi,B_\xi)$. For this reason, the couple $A_\xi,B_\xi$ of hyper-parameters deserve more scrutiny. 

To shed light on the impact of $A_\xi,B_\xi$ we study the estimated variational mean $\{\mu_{q(\omega_{jt})}\}_{t=1}^n$ and variance $\mathbf{\Sigma}_{q(\omega_j)}$, and the resulting $\{\mu_{q(\omega_{jt})}\}_{t=1}^n$, for three alternative limit cases. Specifically, we investigate a series of comparative statics based on Proposition \ref{prop:up_omega_app} and Proposition \ref{prop:up_xi}. This provides a transparent strategy to select values for $A_\xi,B_\xi$.

\begin{figure}[ht]
	\centering
	\subfigure[$\mathbf{\Sigma}_{q(\omega_j)}$]{\includegraphics[width=.31\textwidth]{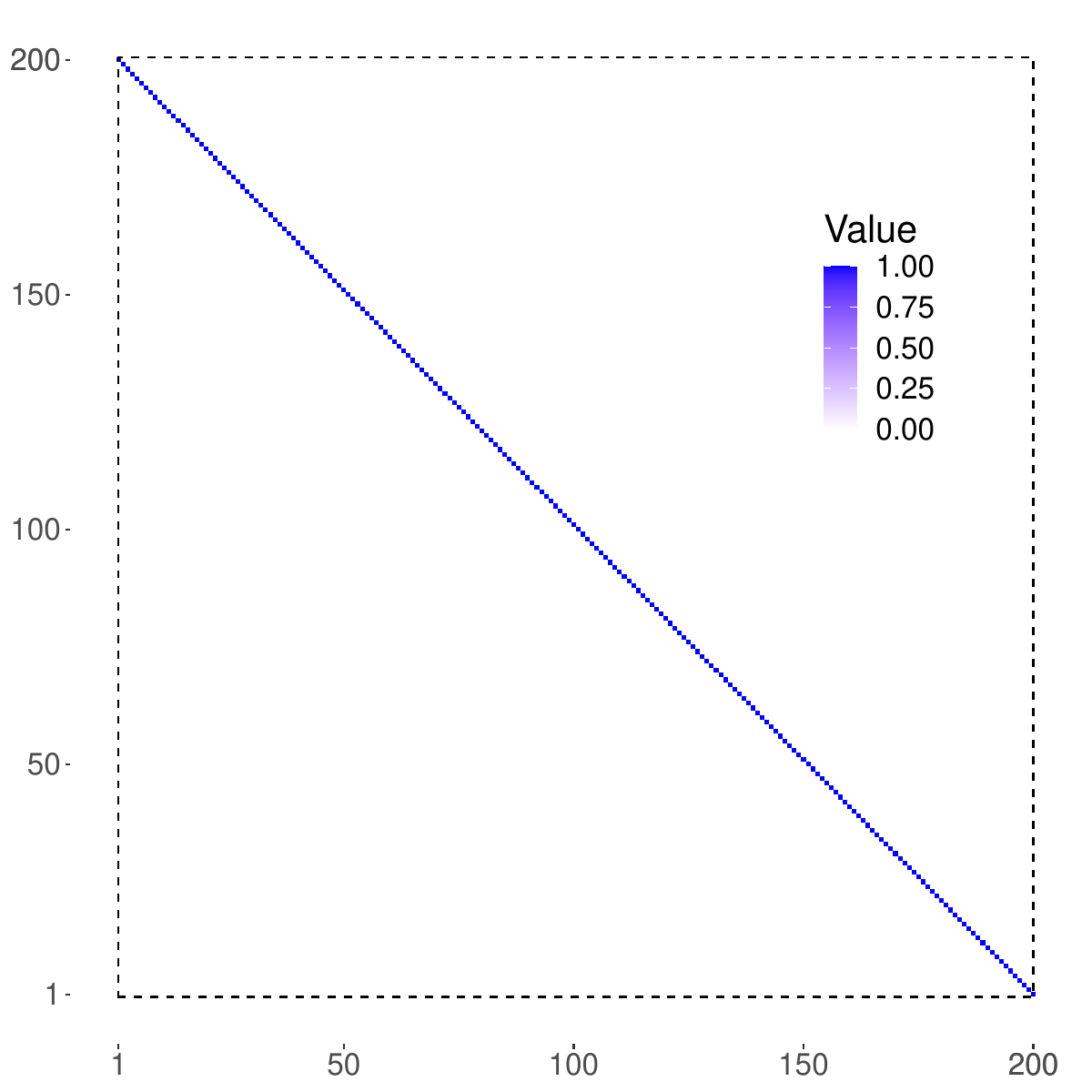}}
	\subfigure[$\{\mu_{q(\omega_{jt})}\}_{t=1}^n$]{\includegraphics[width=.30\textwidth]{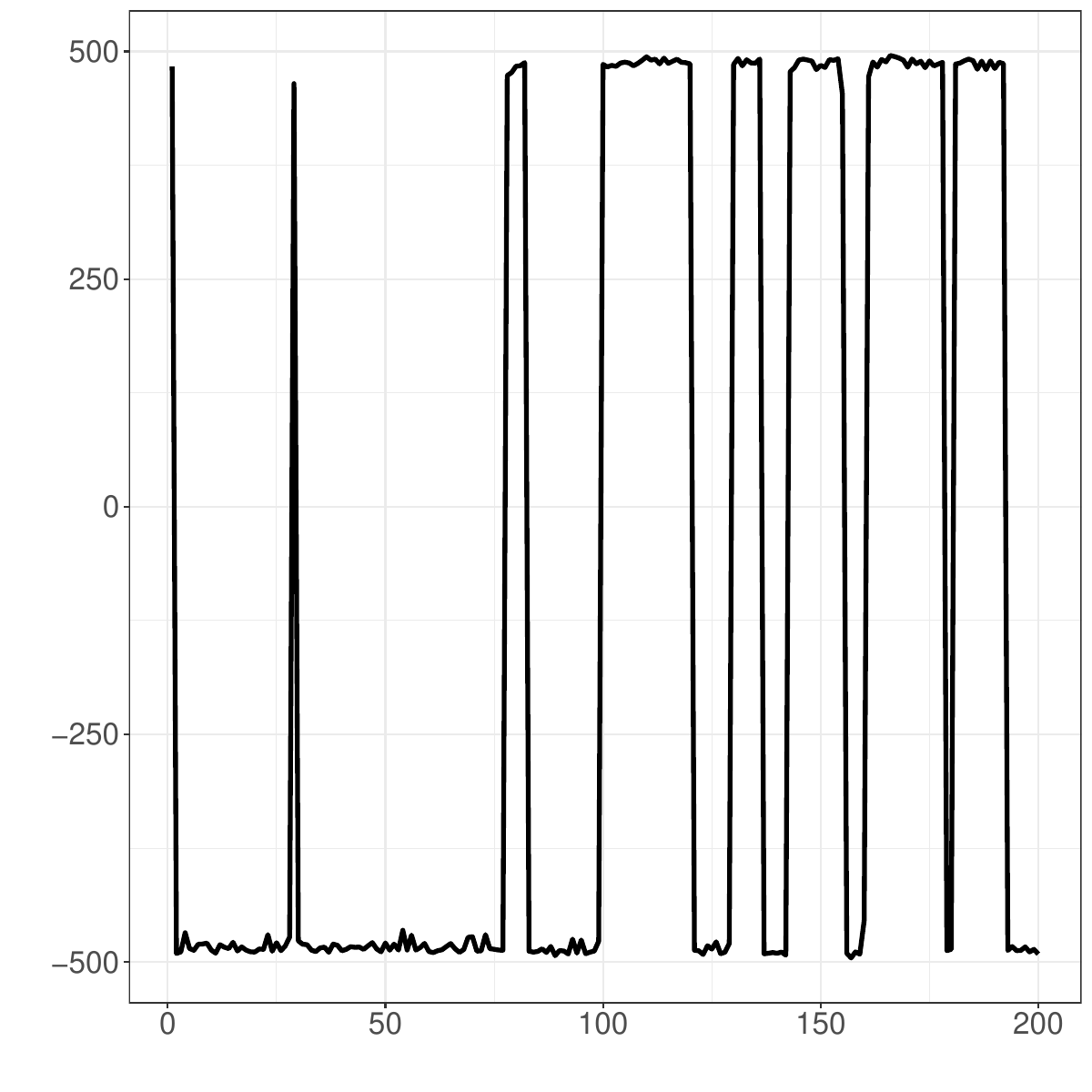}}
 	\subfigure[$\{\mu_{q(\gamma_{jt})}\}_{t=1}^n$]{\includegraphics[width=.30\textwidth]{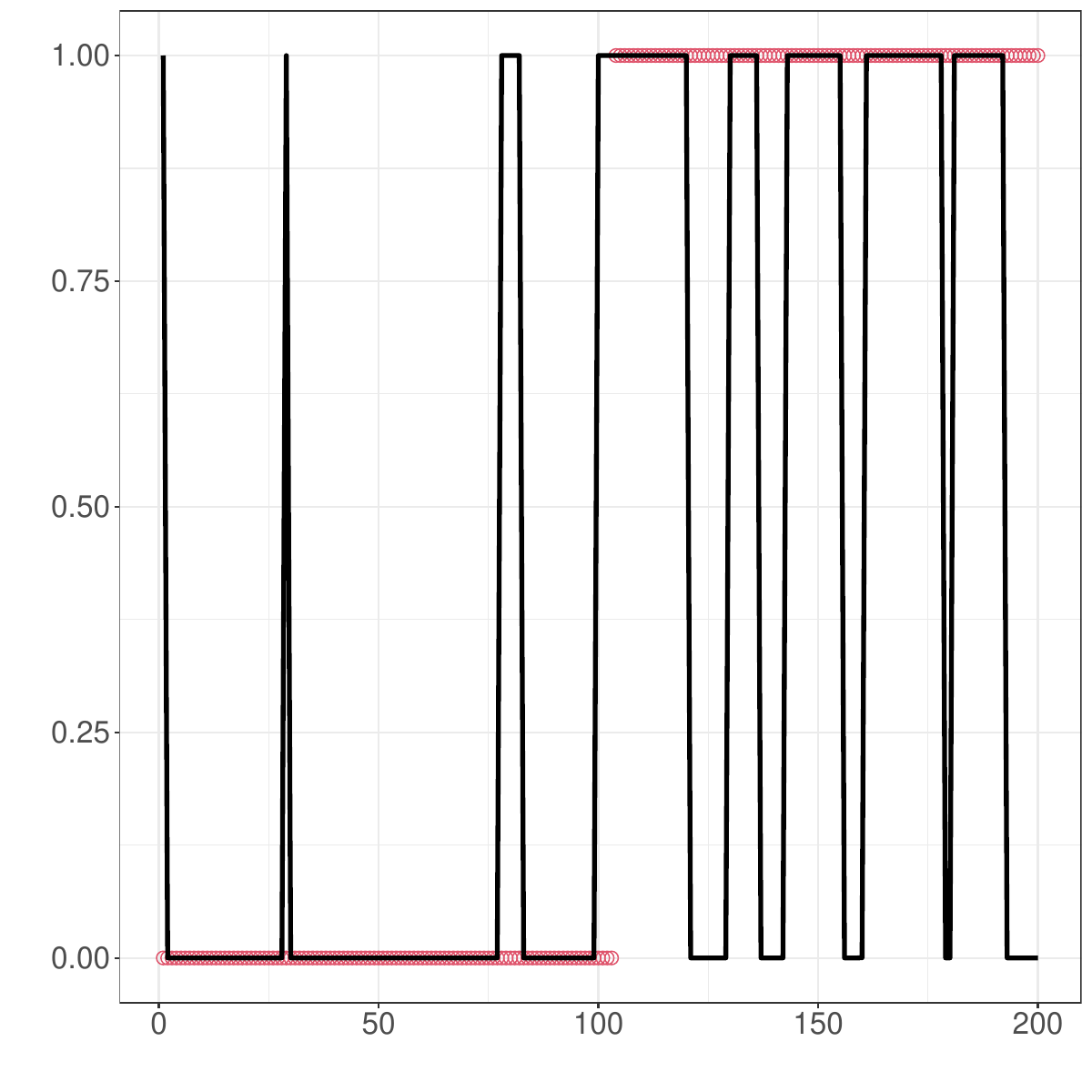}}
	\caption{\small Scenario A: $B_\xi\rightarrow\infty$. (a) Variational correlation matrix for the process $\{\omega_{jt}\}_{t=1}^n$ obtained from $\mathbf{\Sigma}_{q(\omega_j)}$. (b) Trajectory of $\{\mu_{q(\omega_{jt})}\}_{t=1}^n$. (c) The posterior inclusion probabilities $\{\mu_{q(\gamma_{jt})}\}_{t=1}^n$ compared to the simulated (red points).}\label{fig:B_to_inf}
\end{figure}

The first scenario considers $A_\xi$ constant and $B_\xi\rightarrow\infty$. Figure \ref{fig:B_to_inf} reports the resulting variational correlation matrix derived from $\mathbf{\Sigma}_{q(\omega_j)}$, the corresponding trajectory of $\{\mu_{q(\omega_{jt})}\}_{t=1}^n$, and the posterior estimates of the inclusion probabilities $\{\mu_{q(\gamma_{jt})}\}_{t=1}^n$. As $B_\xi\rightarrow+\infty$, the process $\{\omega_{jt}\}_{t=1}^n$ tends to be i.i.d -- $\mathbf{\Sigma}_{q(\omega_j)}$ is a diagonal matrix. This means that we lose the {\it a-priori} time dependence in the inclusion probability, which leads to an erratic dynamic of $\omega_{jt}$ (middle panel) and, thus, a highly irregular trajectory of $\{\mu_{q(\gamma_{jt})}\}_{t=1}^n$ (right panel).

\begin{figure}[ht]
	\centering
	\subfigure[$\mathbf{\Sigma}_{q(\omega_j)}$]{\includegraphics[width=.31\textwidth]{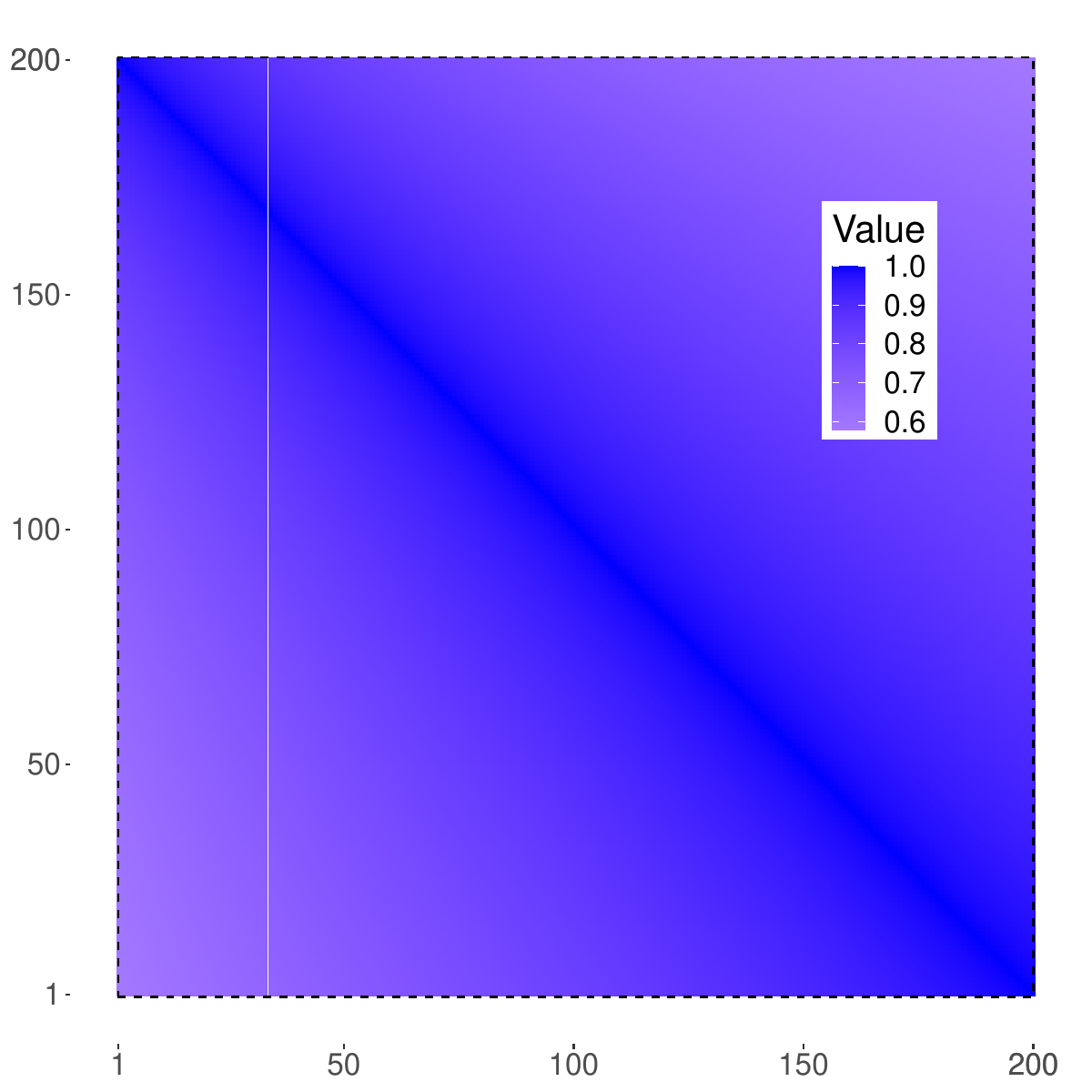}}
	\subfigure[$\{\mu_{q(\omega_{jt})}\}_{t=1}^n$]{\includegraphics[width=.30\textwidth]{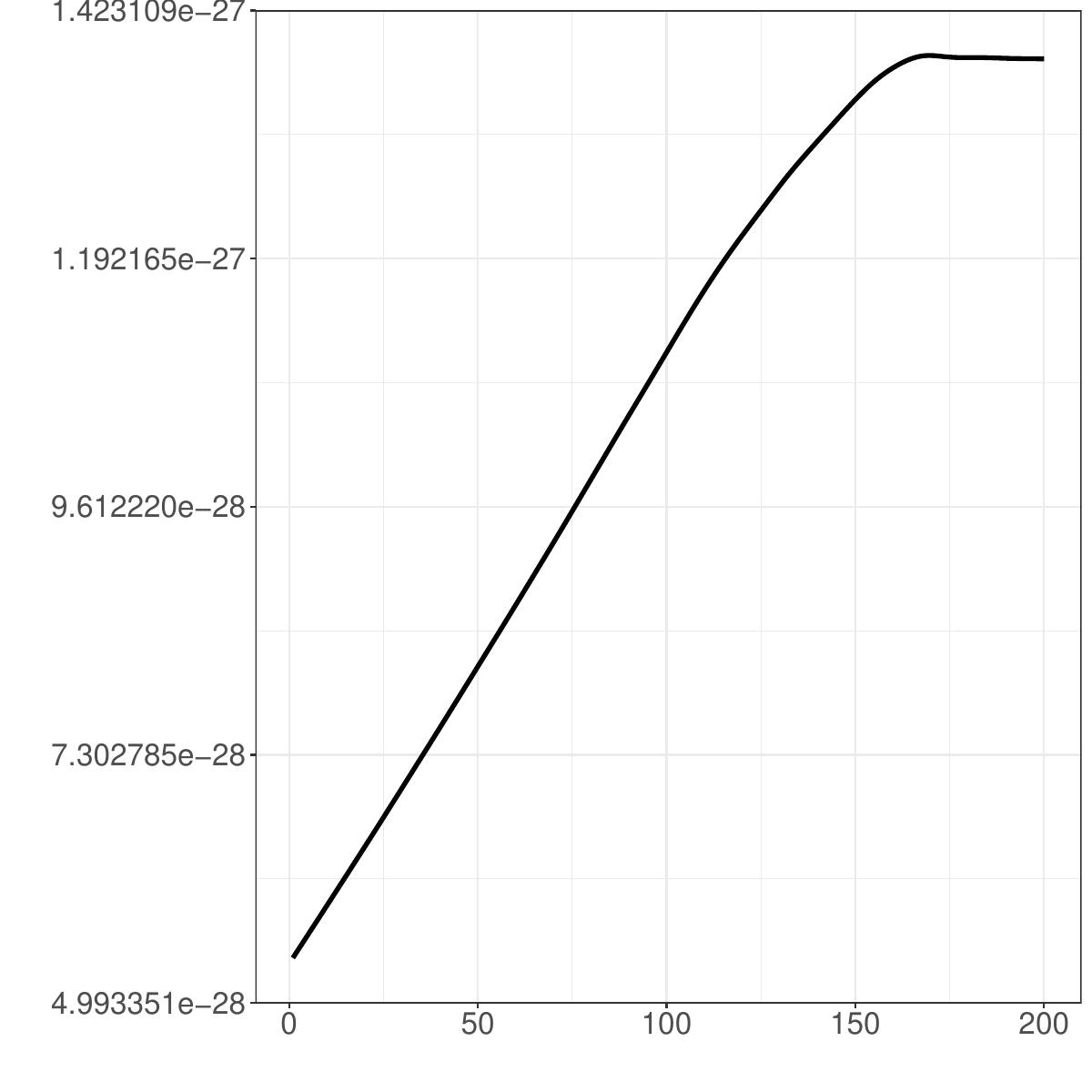}}
 	\subfigure[$\{\mu_{q(\gamma_{jt})}\}_{t=1}^n$]{\includegraphics[width=.30\textwidth]{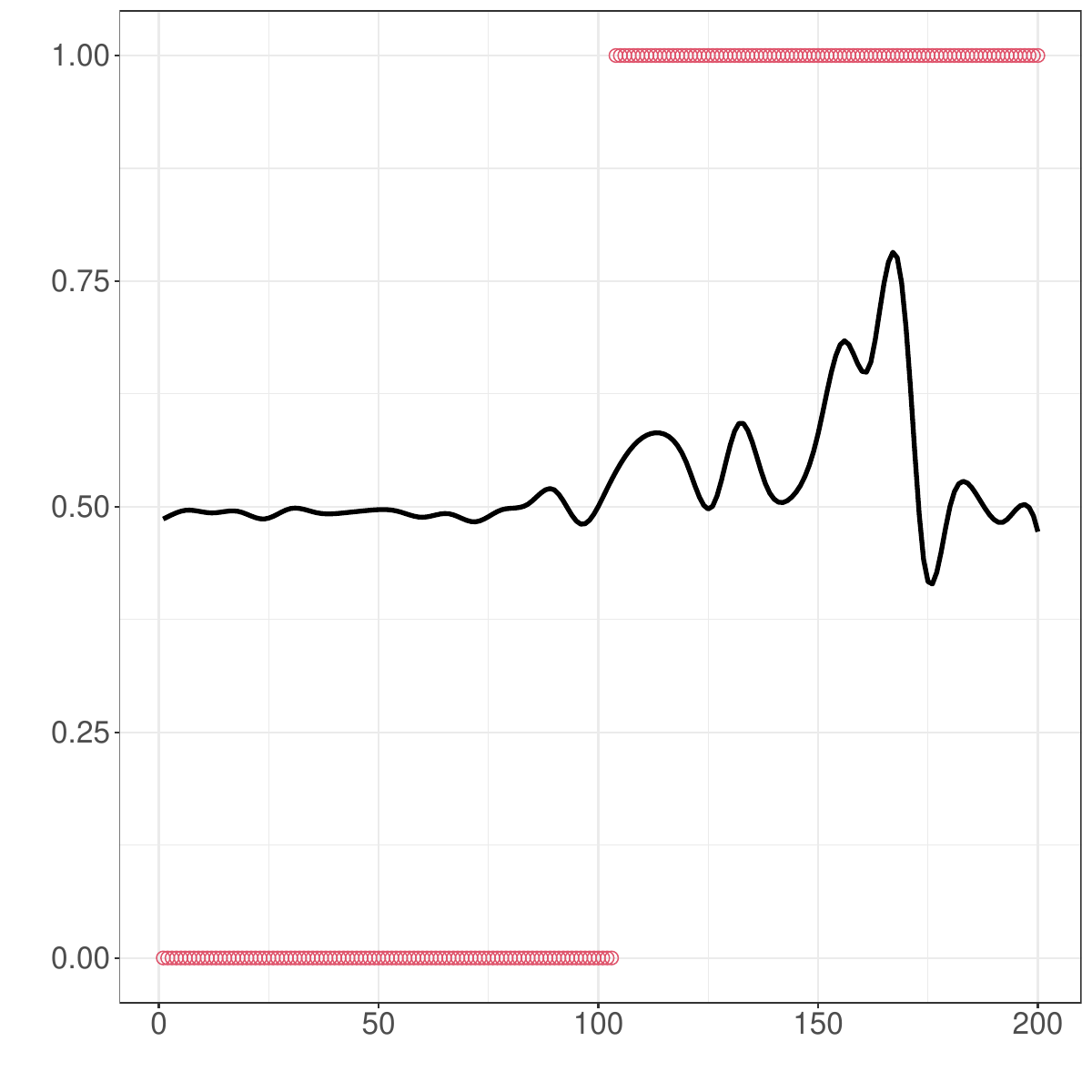}}
	\caption{\small Scenario B: $A_\xi\rightarrow\infty$. (a) Variational correlation matrix for the process $\{\omega_{jt}\}_{t=1}^n$ obtained from $\mathbf{\Sigma}_{q(\omega_j)}$. (b) Trajectory of $\{\mu_{q(\omega_{jt})}\}_{t=1}^n$. (c) The posterior inclusion probabilities $\{\mu_{q(\gamma_{jt})}\}_{t=1}^n$ compared to the simulated (red points).}\label{fig:A_to_inf}
\end{figure}

The second scenario considers $A_\xi\rightarrow\infty$ and fixed $B_\xi$. This implies that $\mu_{q(1/\xi^2_j)}\rightarrow \infty$ and, as a consequence, we give infinite weight to the matrix $\mathbf{Q}$ when computing $\mathbf{\Sigma}_{q(\omega_j)}$. Figure \ref{fig:A_to_inf} shows that such a strong a-priori time dependence leads to inclusion probabilities $\{\mu_{q(\gamma_{jt})}\}_{t=1}^n$ with low variability around the mean, i.e. $\mathrm{expit}(\mathbb{E}(\omega_{jt}))=0$. Thus, no meaningful time variation is captured in the variable importance.

\begin{figure}[ht]
	\centering
	\subfigure[$\mathbf{\Sigma}_{q(\omega_j)}$]{\includegraphics[width=.31\textwidth]{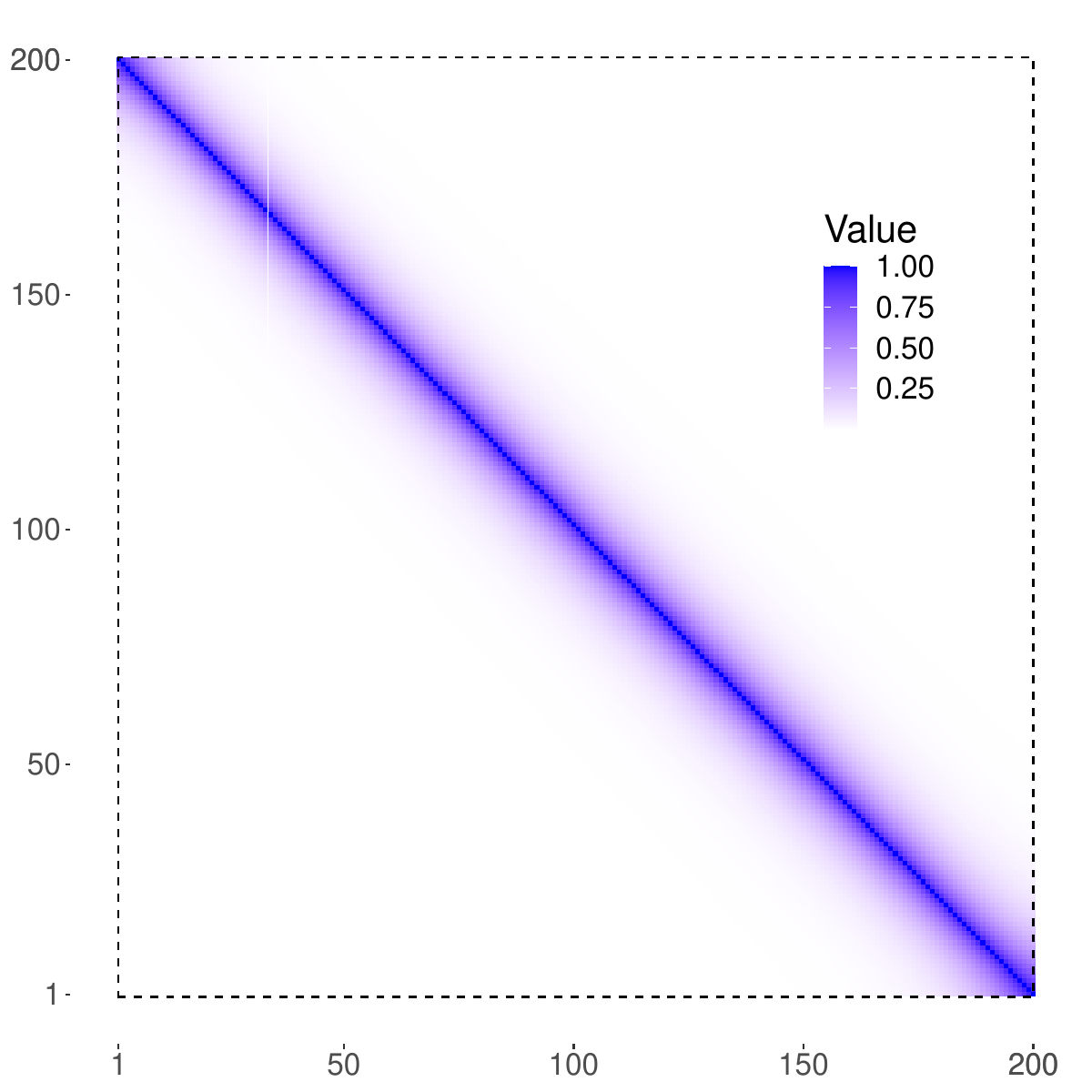}}
	\subfigure[$\{\mu_{q(\omega_{jt})}\}_{t=1}^n$]{\includegraphics[width=.30\textwidth]{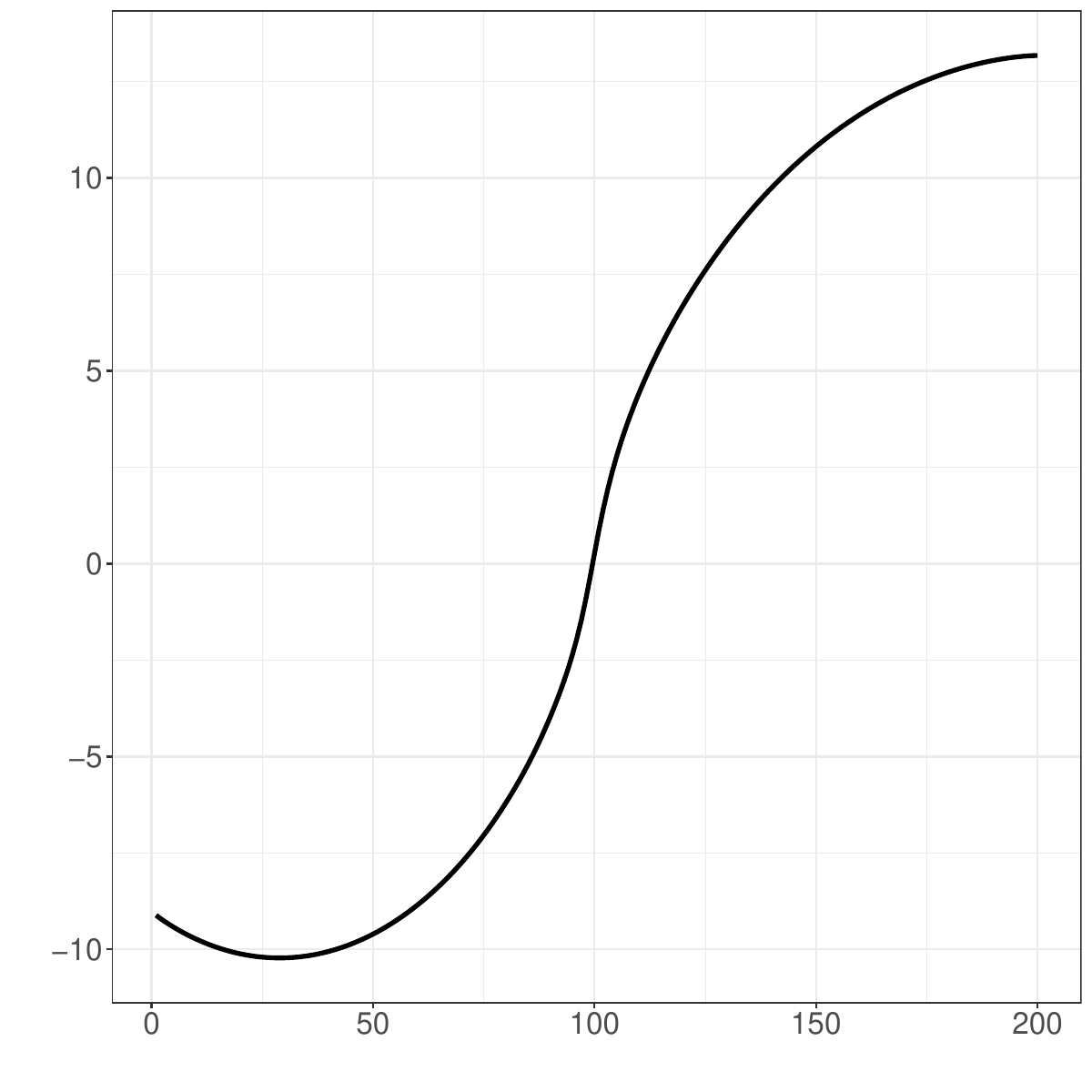}}
 	\subfigure[$\{\mu_{q(\gamma_{jt})}\}_{t=1}^n$]{\includegraphics[width=.30\textwidth]{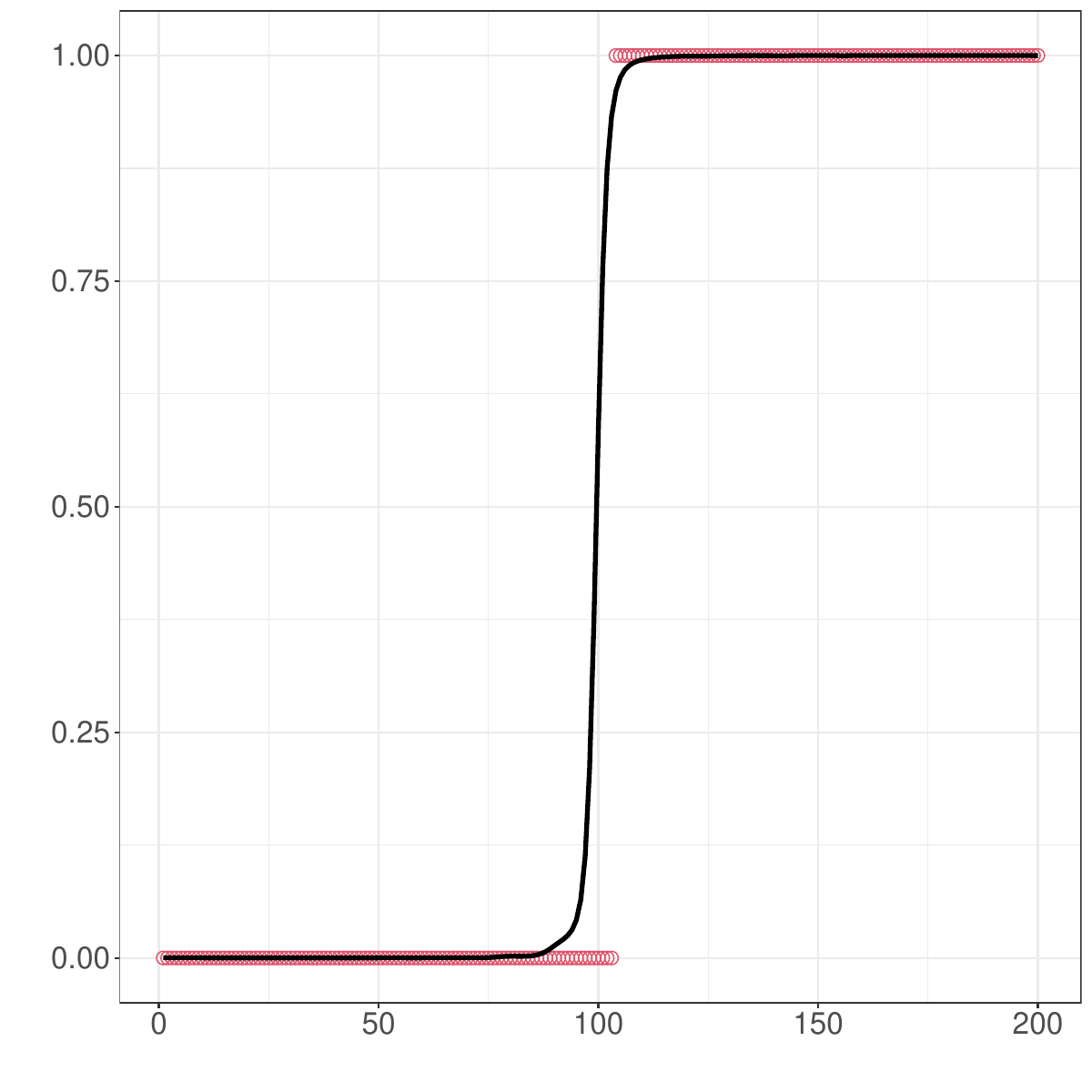}}
	\caption{\small Scenario C: $A_\xi/B_\xi\rightarrow c_1$, $0<c_1<\infty$. (a) Variational correlation matrix for the process $\{\omega_{jt}\}_{t=1}^n$ obtained from $\mathbf{\Sigma}_{q(\omega_j)}$. (b) Trajectory of $\{\mu_{q(\omega_{jt})}\}_{t=1}^n$. (c) The posterior inclusion probabilities $\{\mu_{q(\gamma_{jt})}\}_{t=1}^n$ compared to the simulated (red points).}\label{fig:AB_to_const}
\end{figure}

The third scenario considers $ A_\xi/B_\xi \rightarrow c_1$ as a bounded constant sufficiently greater than $0$. This implies that $\mu_{q(1/\xi^2_j)}\rightarrow c_2$, where $c_2\in\mathbb{R}^+$ and therefore a moderate weight to the matrix $\mathbf{Q}$ when computing $\mathbf{\Sigma}_{q(\omega_j)}$; that is, we account for a decreasing correlation as $|t_1-t_2|$, $t_1,t_2\in\{1,\ldots,n\}$ increases. Figure \ref{fig:AB_to_const} shows that such prior choice translates into a moderate variability in the trajectory of $\{\mu_{q(\omega_{jt})}\}_{t=1}^n$, which leads to posterior estimates $\{\mu_{q(\gamma_{jt})}\}_{t=1}^n$ that meaningfully track the variable importance over time. 

For the remaining simulation and empirical analysis, we fix $A_\xi=2$ so that $\mathsf{Var}(\xi^2_j)=\infty$ and $B_\xi$ can be interpreted as the mean of $\xi^2_j$. We set $B_\xi=5$, which satisfies $A_\xi/B_\xi\rightarrow c_1$. We also test in simulation $B_\xi=1$ or $B_\xi=10$. The model performance is comparable with $B_\xi=10$, while slightly deteriorates when choosing $B_\xi=1$ (see Section 3 in the main paper).

\section{Additional simulation details}\setcounter{figure}{0}\setcounter{table}{0}

\label{app:TVP_more_sim}

In this section, we report additional details on the simulation study reported in the main text. For the simulation studies, we consider $M=100$ replications from a linear data-generating process $y_t = \sum_{j=1}^p\beta_{jt}x_{jt-1}+\varepsilon_t$, with $\varepsilon_t\sim\mathsf{N}(0,0.25)$. We assume that different coefficients have different dynamics; for instance, $\beta_{1t}$ is a time-varying parameter always included in the model, i.e., $\gamma_{1t}=1$ $\forall t$. The $\beta_{1t}$ dynamic follows an AR(1) process with persistence equal to $0.98$ and conditional variance equal to $0.1$. The top-left panel in Figure \ref{fig:res_tvp_sim_app} provides an example of the simulated trajectory.  

 
\begin{figure}[!ht]
\hspace{-1.5em}\subfigure[Example of $\beta_{1t}$]{\includegraphics[width=.52\textwidth]{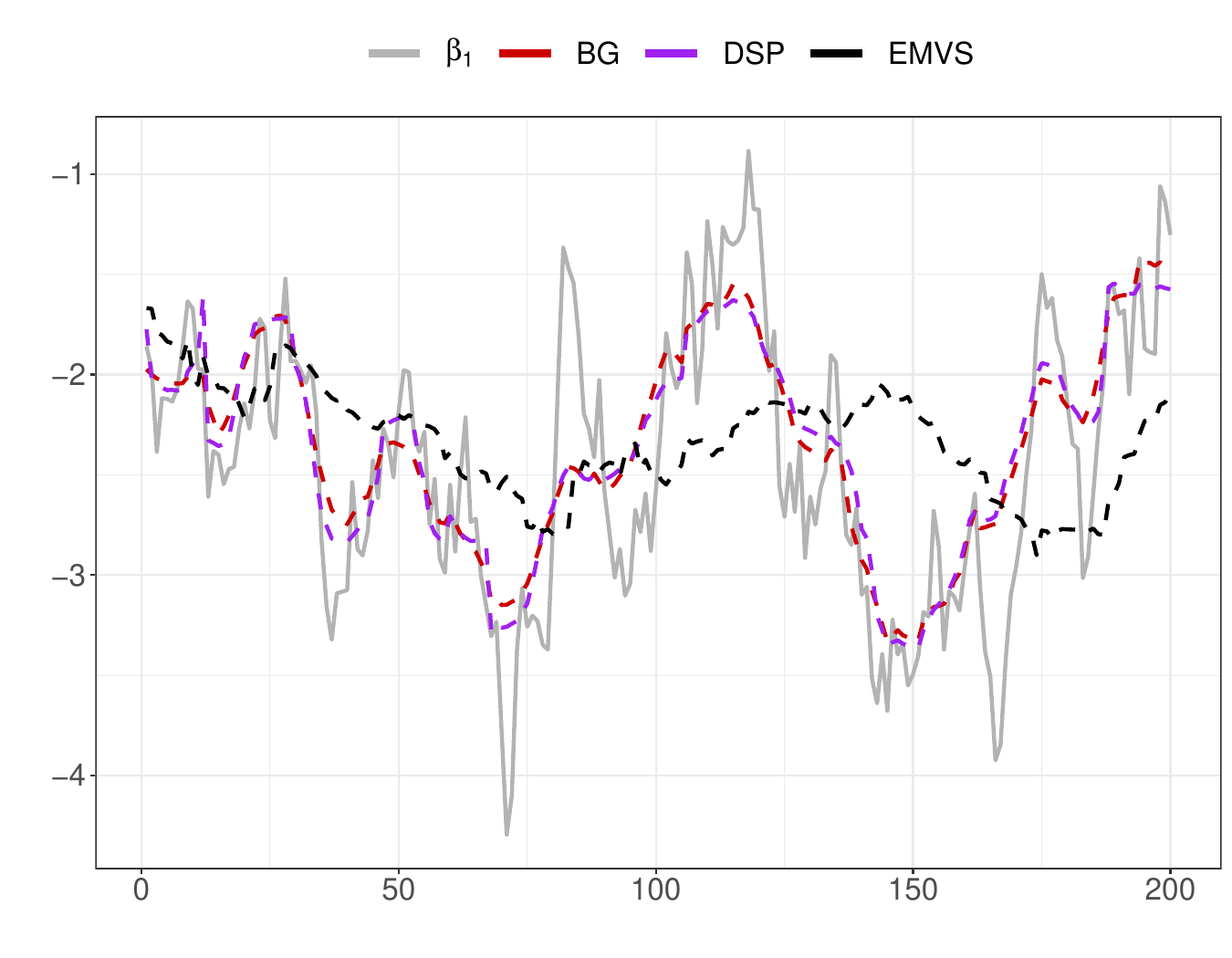}\label{fig:beta_sim1}}
\subfigure[Example of $\beta_{2:3,t}$]{\includegraphics[width=.52\textwidth]{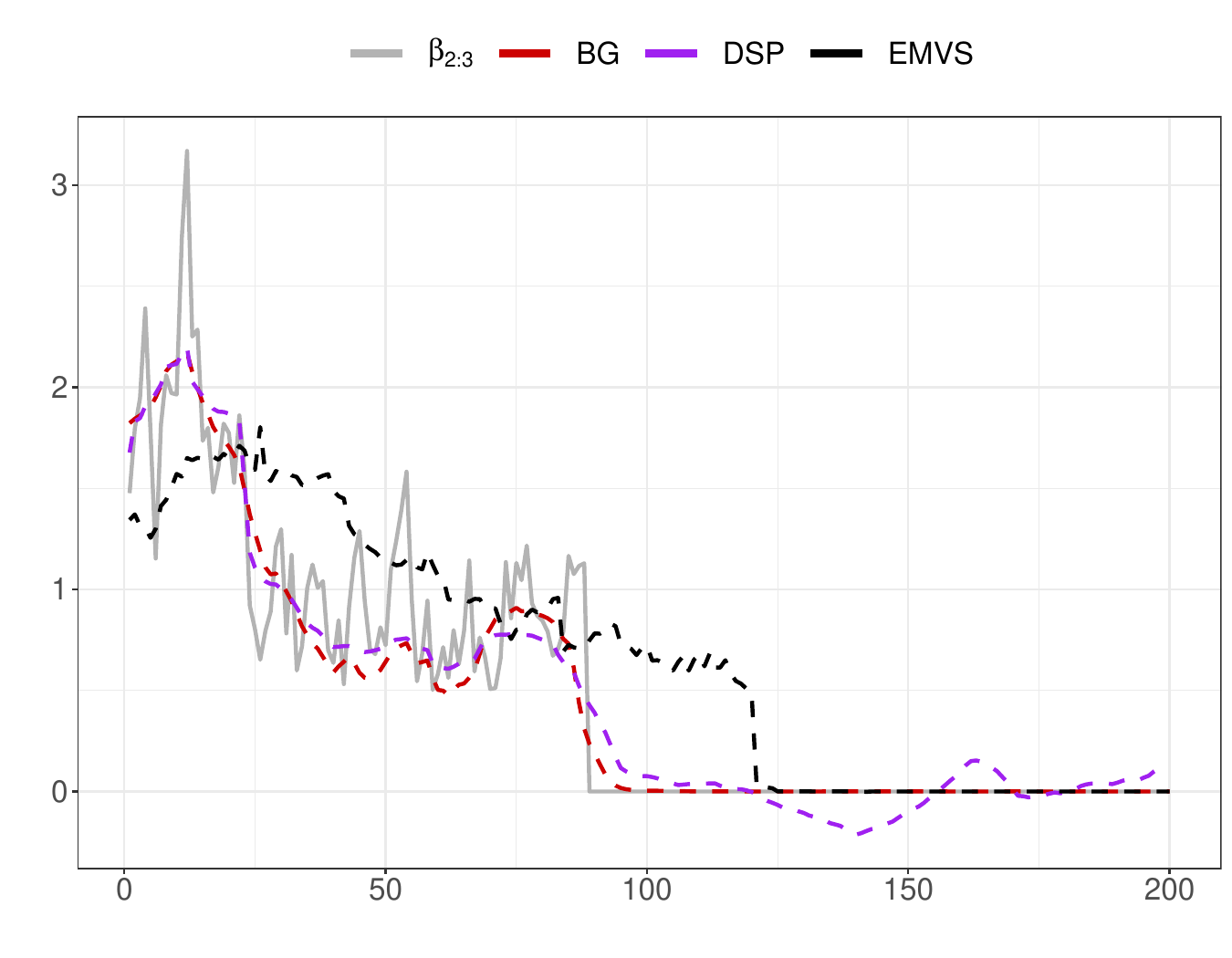}\label{fig:beta_sim2}}

\hspace{-1.5em}\subfigure[Example of $\beta_{4:5,t}$]{\includegraphics[width=.52\textwidth]{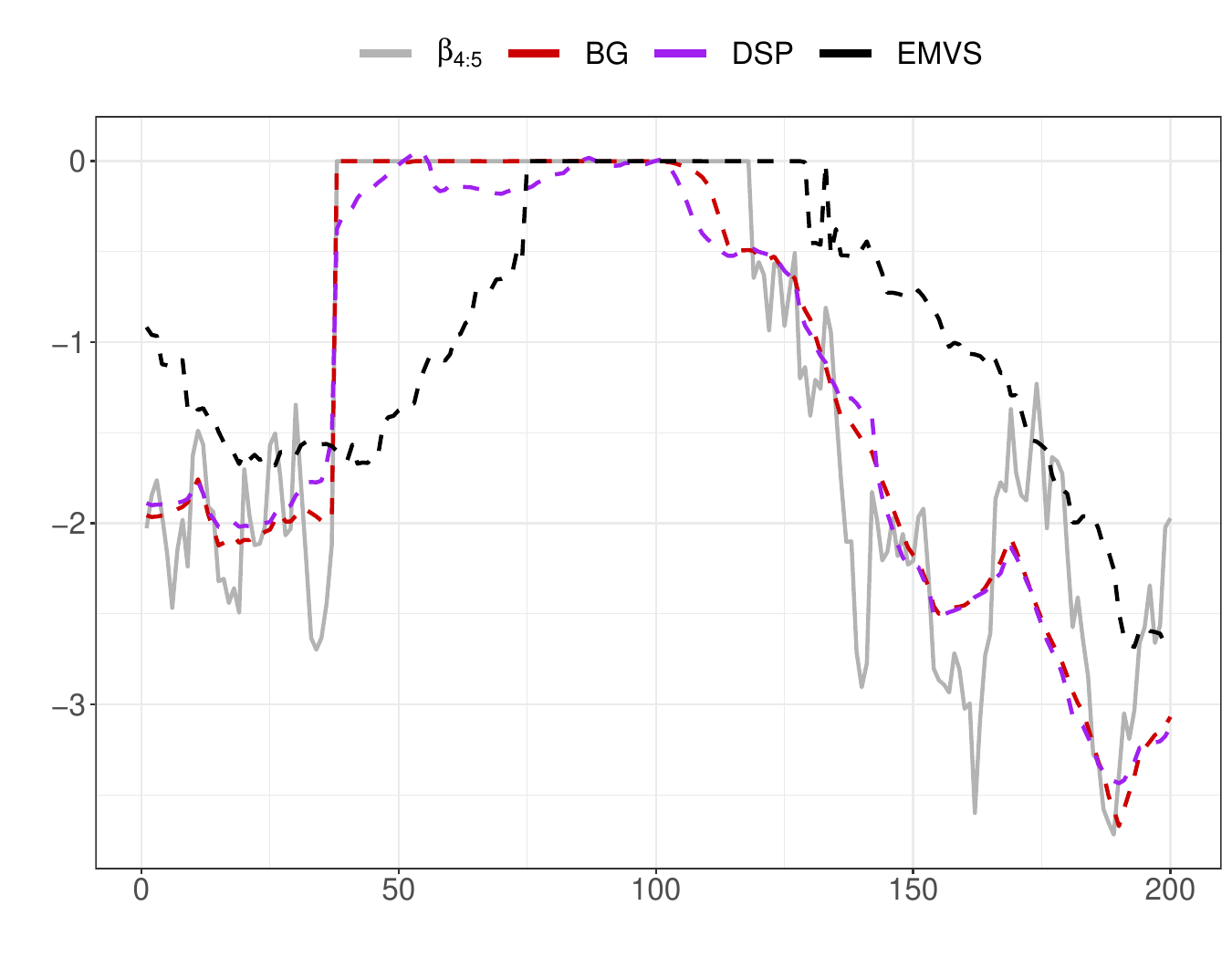}\label{fig:beta_sim4}}
\subfigure[Example of $\beta_{6:7,t}$]{\includegraphics[width=.52\textwidth]{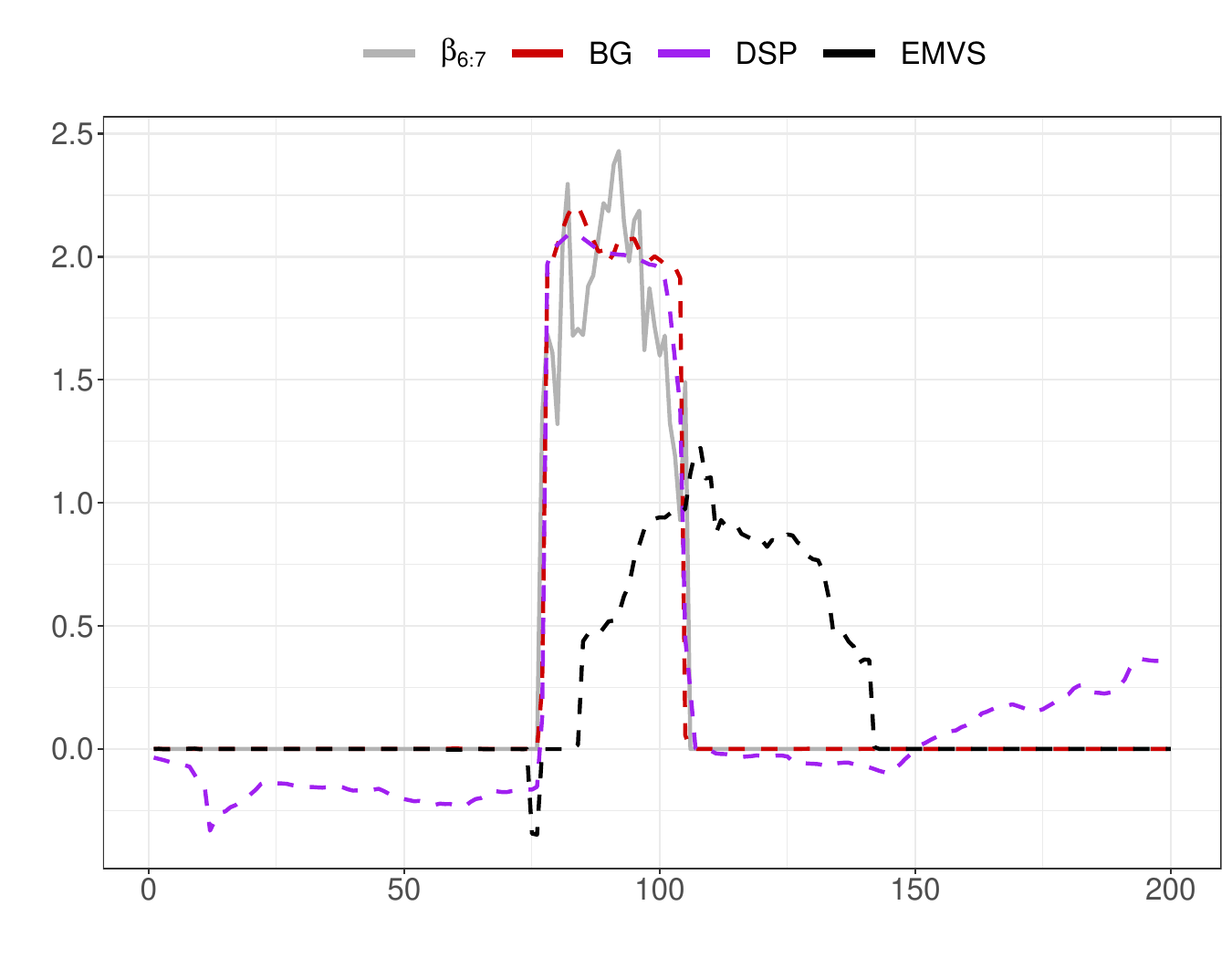}\label{fig:beta_sim6}} 
	\caption{\small Examples of simulated trajectories for $\beta_{jt}$ with $j=1,\ldots,7$ and $t=1,\ldots,200$.}\label{fig:res_tvp_sim_app}
\end{figure}

The coefficient $\beta_{2:3,t}$ is assumed to have a single switch from $\gamma_{jt}=0$ to $\gamma_{jt}=1$. Specifically, the parameter is generated by dividing the interval in sub-periods $[1,n] = [1,t_1] \cup [t_1+1,t_1+t_2] \cup ... \cup [t_1+\ldots+t_n+1,n]$, where $t_k\sim\mathsf{Pois}(n/2)$, so that the expected number of sub-periods is 2, and then randomly alternate periods where $\gamma_{jt}=0$ and $\gamma_{jt}=1$. For the intervals where $\gamma_{jt}=1$ we generate an AR(1) process as for $\beta_{1t}$. This represents a ``structural break'' scenario in which the estimation accuracy broadly deteriorates. The top-right panel in Figure \ref{fig:res_tvp_sim_app} provides an example of the simulated trajectory.  

The parameters $\beta_{4:5,t}$ have a more complex dynamic with two switches from $\gamma_{jt}=0$ to $\gamma_{jt}=1$. Specifically, the parameters are generated as follows by dividing the interval into sub-periods as for $\beta_{2:3,t}$, but set $t_k\sim\mathsf{Pois}(n/4)$, so that the expected number of sub-periods is 4, and then randomly alternate periods where $\gamma_{jt}=0$ and $\gamma_{jt}=1$. For the intervals where $\gamma_{jt}=1$ the process is an AR(1) as for $\beta_{1t}$. The bottom-left panel of Figure \ref{fig:res_tvp_sim_app} provides an example of the simulated trajectory. 

Finally, the parameters $\beta_{6:7,t}$ are modelled as a short-lived regression coefficient, which is significant only for a short fraction of the sample. An example of the simulated trajectory is reported in the bottom-right panel of Figure \ref{fig:res_tvp_sim_app}. The dynamic of the parameter is generated by sampling an interval length $\Delta_i\sim\mathsf{Pois}(n/10)$ and placing it at random on the timeline such that $\gamma_{jt}=1$ in that period, then generating a trajectory for the coefficient as for $\beta_{1t}$. This constitutes a rather extreme case in which a predictor is significant only for a very short period. This setup is used across all simulation studies whereby $\beta_{1:7,t}$ are significant and $\beta_{8:p,t}$ are non-significant. 

\subsection{Additional simulation results}
\label{app:BG vs others}

Figure 2 in the main paper reports the F1 score aggregated for all parameters with dynamic sparsity $\beta_{2:7,t}$ for $p=10$ and $p=100$. Figure \ref{fig:res_sim_corr app} reports the additional results for $p=50$ and $p=200$. The top panels report the results for the independent predictors, whereas the bottom panels report the results for the correlated predictors. 

\begin{figure}[!ht]
\begin{flushleft}
\bf Panel A\normalfont: Independent predictors.    
\end{flushleft}
\subfigure[F1-score ($p=50$)]{\includegraphics[width=0.48\textwidth]{Figures/F1_p10.pdf}\label{fig:f1_tvp_50}}
\subfigure[F1-score ($p=200$)]{\includegraphics[width=0.48\textwidth]{Figures/F1_p50.pdf}\label{fig:f1_tvp_200}}

\begin{flushleft}
\bf Panel B\normalfont: Correlated predictors.
\end{flushleft}
\hspace{-1em}\subfigure[F1-score for $p=50$]{\includegraphics[width=.48\textwidth]{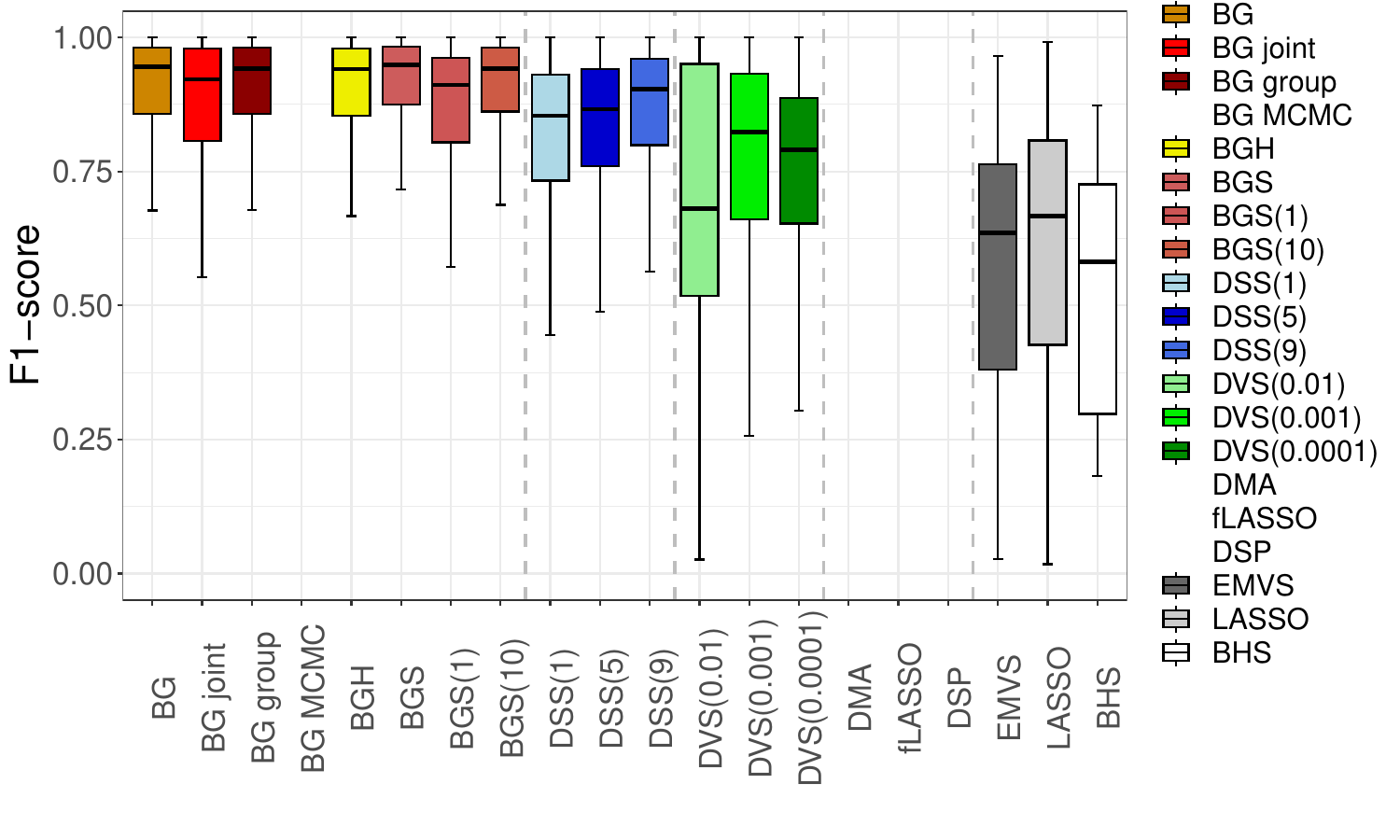}}
\subfigure[F1-score for $p=200$]{\includegraphics[width=.48\textwidth]{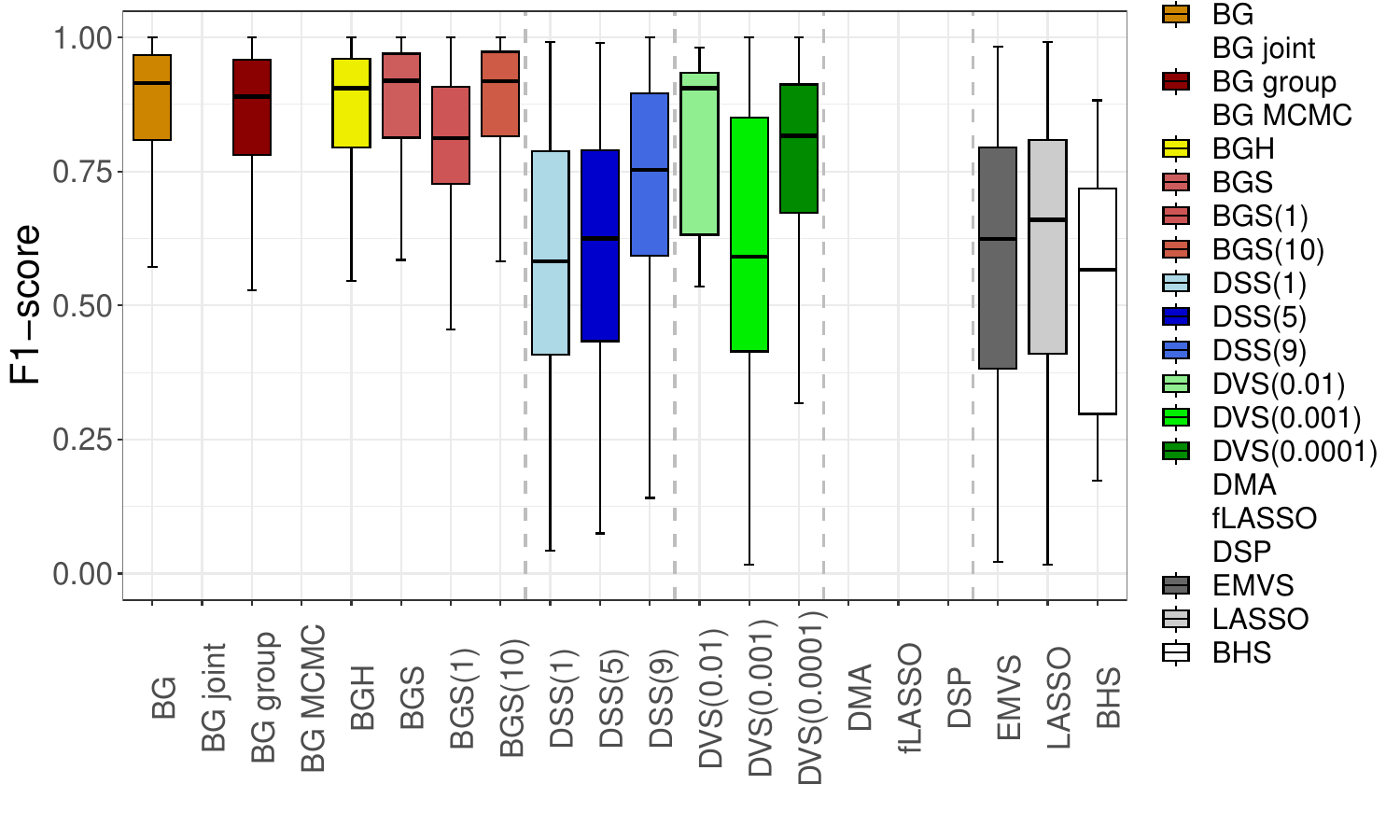}}

\caption{\small Aggregate F1 score for parameters $\beta_{2:7,t}$ exhibiting dynamic sparsity patterns over the sample period with correlated predictors. Results are reported for dimensions $p=50,200$.}\label{fig:res_sim_corr app}
\end{figure}

Table \ref{tab:f1_additional_indep} reports a set of additional results on the ability of each variable selection method to correctly identify the time-varying intercept $\beta_{1,t}$ and all non-significant predictors $\beta_{8:p,t}$. All variable selection methods can identify quite accurately $\gamma_{jt}$ when $\gamma_{jt}=1,\ \forall t$, especially for small to moderate dimensions. When the number of predictors increases, the performance of the dynamic spike-and-slab specifications of \cite{rockova_mcalinn_2021} and \cite{koop_korobilis_2020} tend to deteriorate, with an F1 score as low as 0.91. Overall, when a predictor is significant throughout the sample, there seems to be little disagreement across variable selection methods. 

\begin{table}[!ht]
\centering
\renewcommand{\arraystretch}{0.6}
\resizebox{1\textwidth}{!}{
   \begin{tabular}{llrrrrrrrrrrrrrr}
   \toprule 
    & \multicolumn{6}{c}{F1-score on $\beta_{1t}$} & & \multicolumn{6}{c}{Classification accuracy on $\beta_{8:p,t}$} \\
   \cmidrule{2-7}\cmidrule{9-14}
    Method & \multicolumn{1}{l}{$p=5$} & \multicolumn{1}{l}{$p=10$} & \multicolumn{1}{l}{$p=20$} & \multicolumn{1}{l}{$p=50$} & \multicolumn{1}{l}{$p=100$} & \multicolumn{1}{l}{$p=200$} &       & \multicolumn{1}{l}{$p=5$} & \multicolumn{1}{l}{$p=10$} & \multicolumn{1}{l}{$p=20$} & \multicolumn{1}{l}{$p=50$} & \multicolumn{1}{l}{$p=100$} & \multicolumn{1}{l}{$p=200$} \\
    \midrule
 {\tt BG}  & 0.988 & 0.994 & 0.999 & 0.999 & 0.999 & 1 &   & 0.998 & 0.999 & 1 & 1 & 1 & 1 \\
 {\tt BG joint}  & 0.996 & 0.998 & 0.999 & 0.998 &   &   &   & 1 & 1 & 1 & 1 &  & \\
 {\tt BG group}  & 0.996 & 0.997 & 0.999 & 0.999 & 0.999 & 1 &   & 1 & 0.999 & 1 & 1 & 1 & 1 \\
 {\tt BG MCMC}  & 0.999 & 1 & 0.999 &   &   &   &   & 1 & 0.996 & 0.988 &  &  & \\
 {\tt BGH}  & 0.978 & 0.988 & 0.998 & 0.999 & 0.999 & 1 &   & 0.994 & 0.998 & 1 & 1 & 1 & 1\\
 {\tt BGS}  & 0.989 & 0.994 & 0.999 & 0.999 & 0.999 & 1 &   & 0.999 & 0.999 & 1 & 1 & 1 & 1\\
 {\tt BGS(1)}  & 0.992 & 0.996 & 0.999 & 1 & 1 & 1 &   & 1 & 1 & 1 & 1 & 1 & 1\\
 {\tt BGS(10)}  & 0.986 & 0.992 & 0.998 & 0.998 & 0.998 & 1 &   & 0.998 & 0.999 & 1 & 1 & 1 & 1\\
 {\tt DSS(1)}  & 0.998 & 0.996 & 0.998 & 0.984 & 0.966 & 0.909 &   & 0.994 & 0.995 & 0.996 & 0.995 & 0.998 & 0.999\\
 {\tt DSS(5)}  & 0.998 & 0.996 & 0.999 & 0.99 & 0.971 & 0.932 &   & 0.992 & 0.992 & 0.996 & 0.994 & 0.997 & 0.998\\
 {\tt DSS(9)}  & 0.999 & 0.998 & 0.999 & 0.997 & 0.991 & 0.98 &   & 0.98 & 0.987 & 0.993 & 0.992 & 0.996 & 0.998\\
 {\tt DVS(0.01)}  & 0.995 & 0.95 & 0.945 & 0.888 & 0.937 & 0.979 &   & 0.957 & 0.996 & 0.998 & 1 & 1 & 1\\
 {\tt DVS(0.001)}  & 1 & 1 & 1 & 0.999 & 0.997 & 0.997 &   & 0.806 & 0.961 & 0.981 & 0.991 & 0.997 & 0.999\\
 {\tt DVS(0.0001)}  & 1 & 1 & 1 & 1 & 1 & 0.999 &   & 0.264 & 0.406 & 0.709 & 0.862 & 0.927 & 0.967\\
 {\tt DMA}  & 1 & 1 &   &   &   &   &   & 0.91 & 0.899 & & & & \\
 {\tt fLASSO}  & 1 & 1 & 1 &   &   &   &   & 0.91 & 0.899 & 0.891 & 1 & 1 & 1\\
 {\tt DSP}  & 1 & 1 & 1 &  &  &  &   & 0 & 0 & 0 &  &  & \\
 {\tt EMVS}  & 0.999 & 0.997 & 0.999 & 0.995 & 0.996 & 0.995 &   & 0.952 & 0.99 & 0.996 & 0.802 & 0.958 & 0.987\\
 {\tt LASSO}  & 0.999 & 0.997 & 0.998 & 0.995 & 0.991 & 0.987 &   & 0.917 & 0.94 & 0.951 & 0.944 & 0.964 & 0.981\\
    \bottomrule 
    \end{tabular}%
}
\caption{\small Results for $\beta_{1t}$ and $\beta_{8:p,t}$ in the independent design setting. This table reports the F1-score for the time-varying intercept $\beta_{1t}$ and the classification accuracy for the null coefficients $\beta_{8:p,t}$ in the independent design setting.}
\label{tab:f1_additional_indep}
\end{table}

Table \ref{tab:f1_additional_indep} suggests significant differences across methods in their ability to correctly identify irrelevant predictors throughout the sample. The aggregate F1 score for the parameters $\beta_{8:p,t}$ is often significantly lower for the {\tt DVS} and static variable selection methods. This result holds especially for lower dimensions, whereby the aggregate F1 score of the {\tt DVS} is between 0.26 and 0.96 from $p=5$ to $p=200$ when the hyper-parameter $\underline{c}=0.0001$. Nevertheless, one could argue that most models can correctly identify predictors which are just noise for the full sample. By coupling the results from Figure 2 in the main text and Table \ref{tab:f1_additional_indep}, one can argue that the key advantage of our dynamic variable selection method lies in its ability to detect non-trivial patterns of dynamic sparsity.

\subsection{Approximation accuracy of variational inference.} 
\label{sec:vs MCMC appendix}
We consider $p=\left\{5,10,20\right\}$ and generate $\{\boldsymbol{\beta}_{t}\}_{t=1}^{100}$ with such that $\beta_{1t}$ is a time-varying parameter always included in the model (see top-left panel in Figure \ref{fig:res_tvp_sim_app}), $\beta_{2t}$ is set to enter once over the sample period in the set of significant coefficients (see top-right panel in Figure \ref{fig:res_tvp_sim_app}), $\beta_{3t}$ enters twice over the sample period (see bottom-left panel in Figure \ref{fig:res_tvp_sim_app}), and $\beta_{4t}$ is a short-lived coefficient with a trajectory as shown in the bottom-right panel of Figure \ref{fig:res_tvp_sim_app}). Coefficients $\beta_{5:p,t}$ are set to zero over the sample period. 

In the main text, for each simulated parameter, we report the overlapping posterior densities for one selected replicate of $\beta_{2t}$ obtained via VB (blue) vs MCMC (red) under different assumptions on the factorization. Figure \ref{fig:acc_tvp_app} complements the results in the main text by showing the posterior densities of $\beta_{1t}$ obtained via VB (blue) and MCMC (red) for one selected replicate. The joint approximation provides better overlapping than the independent one, consistently with the results in the main text of the paper (see Section 3).

\begin{figure}[hp!]
\subfigure[Fully factorized variational density $K=p$]{\includegraphics[width=0.5\textwidth]{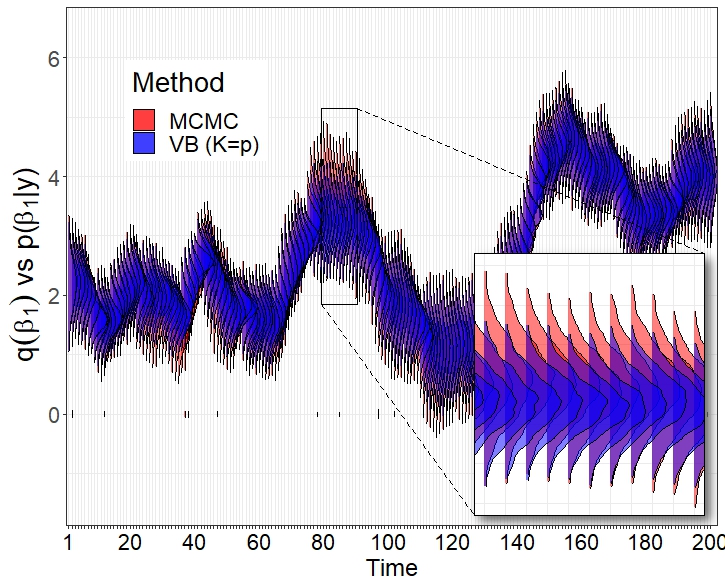}}
\subfigure[Joint variational density $K=1$]{\includegraphics[width=0.5\textwidth]{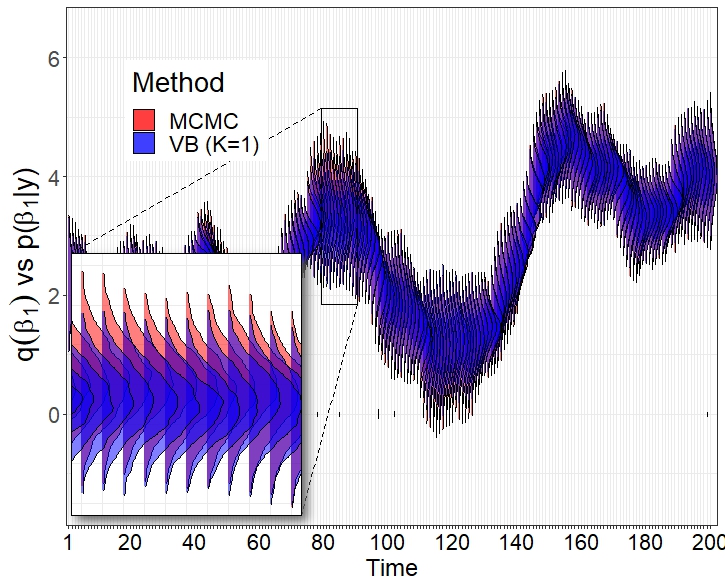}}
\caption{\small Comparison with MCMC. We show the posterior densities of $\beta_{1t}$ under different approximation assumptions (blue) and MCMC (red) for one selected replicate.}\label{fig:acc_tvp_app}
\end{figure}

\paragraph{Computational costs.}
In this Section, we expand the results in Figure 4 in the main text and report the computational cost -- expressed in minutes -- to estimate the full set of competing dynamic variable selection methods used in the main simulation study. 

\begin{table}[!ht]
\centering
\renewcommand{\arraystretch}{0.6}
\resizebox{1\textwidth}{!}{
\begin{tabular}{lrrrrrrrrrrrrrrrrr}
\multicolumn{18}{l}{Panel A: \normalfont Independent predictors }\\
\toprule
    \multicolumn{1}{l}{p} & \multicolumn{1}{l}{BG} & \multicolumn{1}{l}{BG MCMC} & \multicolumn{1}{l}{BG joint} & \multicolumn{1}{l}{BG group} & \multicolumn{1}{l}{BGH} & \multicolumn{1}{l}{BGS} & \multicolumn{1}{l}{BGS(1)} & \multicolumn{1}{l}{BGS(10)} & \multicolumn{1}{l}{DVS(0.01)} & \multicolumn{1}{l}{DVS(0.001)} & \multicolumn{1}{l}{DVS(0.0001)} & \multicolumn{1}{l}{DSS(1)} & \multicolumn{1}{l}{DSS(5)} & \multicolumn{1}{l}{DSS(9)} & \multicolumn{1}{l}{fLASSO} & \multicolumn{1}{l}{DMA} & \multicolumn{1}{l}{DSP} \\
    \midrule 
    5     & 0.01  & 3.02  & 0.07  & 0.07  & 0.01  & 0.02  & 0.02  & 0.01  & 0.02  & 0.03  & 0.03  & 0.13  & 0.13  & 0.1   & 0.1   & 0.1   & 0.41 \\
    10    & 0.02  & 6.06  & 0.12  & 0.18  & 0.02  & 0.03  & 0.03  & 0.03  & 0.03  & 0.03  & 0.04  & 0.21  & 0.21  & 0.18  & 0.45  & 0.23  & 1.19 \\
    20    & 0.04  & 7.81  & 0.26  & 0.29  & 0.03  & 0.04  & 0.04  & 0.04  & 0.05  & 0.04  & 0.06  & 0.32  & 0.31  & 0.24  & 2.15  &       & 3.4 \\
    50    & 0.11  &       & 1.3   & 0.85  & 0.11  & 0.13  & 0.14  & 0.12  & 0.37  & 0.42  & 0.44  & 0.54  & 0.54  & 0.5   &       &       &  \\
    100   & 0.22  &       &       & 1.69  & 0.22  & 0.24  & 0.3   & 0.22  & 1.94  & 1.95  & 1.97  & 1.39  & 1.39  & 1.29  &       &       &  \\
    200   & 0.44  &       &       & 3.22  & 0.44  & 0.46  & 0.7   & 0.41  & 16.45 & 15.93 & 16.04 & 4.31  & 4.3   & 4.12  &       &       &  \\
\bottomrule
\end{tabular}}
\vspace{1em}

\resizebox{1\textwidth}{!}{
\begin{tabular}{lrrrrrrrrrrrrrrrrr}
\multicolumn{8}{l}{Panel B: \normalfont Correlated predictors } \\
\toprule
     \multicolumn{1}{l}{p} & \multicolumn{1}{l}{BG} & \multicolumn{1}{l}{BG MCMC} & \multicolumn{1}{l}{BG joint} & \multicolumn{1}{l}{BG group} & \multicolumn{1}{l}{BGH} & \multicolumn{1}{l}{BGS} & \multicolumn{1}{l}{BGS(1)} & \multicolumn{1}{l}{BGS(10)} & \multicolumn{1}{l}{DVS(0.01)} & \multicolumn{1}{l}{DVS(0.001)} & \multicolumn{1}{l}{DVS(0.0001)} & \multicolumn{1}{l}{DSS(1)} & \multicolumn{1}{l}{DSS(5)} & \multicolumn{1}{l}{DSS(9)} & \multicolumn{1}{l}{fLASSO} & \multicolumn{1}{l}{DMA} & \multicolumn{1}{l}{DSP} \\
    \midrule 
    5     & 0.01  & 3.02  & 0.07  & 0.07  & 0.01  & 0.02  & 0.02  & 0.01  & 0.02  & 0.03  & 0.03  & 0.13  & 0.13  & 0.1   & 0.1   & 0.1   & 0.41 \\
    10    & 0.02  & 6.06  & 0.12  & 0.18  & 0.02  & 0.03  & 0.03  & 0.03  & 0.03  & 0.03  & 0.04  & 0.21  & 0.21  & 0.18  & 0.45  & 0.23  & 1.19 \\
    20    & 0.04  & 7.81  & 0.26  & 0.29  & 0.03  & 0.04  & 0.04  & 0.04  & 0.05  & 0.04  & 0.06  & 0.32  & 0.31  & 0.24  & 2.15  &       & 3.4 \\
    50    & 0.11  &       & 1.3   & 0.85  & 0.11  & 0.13  & 0.14  & 0.12  & 0.37  & 0.42  & 0.44  & 0.54  & 0.54  & 0.5   &       &       &  \\
    100   & 0.22  &       &       & 1.69  & 0.22  & 0.24  & 0.3   & 0.22  & 1.94  & 1.95  & 1.97  & 1.39  & 1.39  & 1.29  &       &       &  \\
    200   & 0.44  &       &       & 3.22  & 0.44  & 0.46  & 0.7   & 0.41  & 16.45 & 15.93 & 16.04 & 4.31  & 4.3   & 4.12  &       &       &  \\
    \bottomrule 
    \end{tabular}
}
\caption{\small Computational efficiency. This table reports the computational time expressed in minutes to estimate a given dynamic variable selection model used in the main empirical analysis. Panel A reports the results for the independent predictors' design, whereas Panel B reports the results for the correlated predictors' design.}
\label{tab:comptime}
\end{table} 

Panel A of Table \ref{tab:comptime} reports the computational cost of each model for the simulation design with independent predictors. The results show that popular dynamic variable selection methods such as the {\tt fLASSO} are not competitive for large dimensional predictive regressions. Similarly, the {\tt DMA} of \cite{raftery_etal2010} is ten times more computationally expensive than our {\tt BG}, {\tt BGS} specifications for a small-scale model with $p=10$. The {\tt DSP} of \cite{kowal2019dynamic} is almost a hundred times slower for $p=20$ and is not computationally viable for larger dimensions. The results for the simulation design with correlated predictors (see Panel B) largely confirm the insight from independent variables; that is, (1) the {\tt BG}, {\tt BGS} specifications are the most computationally efficient, (2) a group structure allows to consider a non-trivial correlation structure in the predictors while still maintaining computational efficiency, and (3) popular dynamic variable selection methods such as the fused lasso and dynamic spike-and-slab specifications are significantly less competitive from a computational perspective.

\section{Additional empirical results}\setcounter{figure}{0}\setcounter{table}{0}

\label{app:additional_empirical}

In this section, we will discuss some additional empirical results that have not been included in the main text. We first briefly discuss the correlation structure of the macroeconomic data. Next, we discuss the retrospective analysis of signal vs. uncertainty based on time-varying parameter estimates and the Diebold-Mariano tests for longer forecasting horizons. 

\subsection{Group correlations of macroeconomic predictors}
\label{app:data}

The panel of 229 predictors is retrieved from the FRED-QD database of \cite{mccracken2020fred}. The sample from 1967:Q3 to 2022:Q2. The series are classified into 14 groups: NIPA; Industrial Production; Employment and Unemployment; Housing; Inventories, Orders, and Sales; Prices; Earnings and Productivity; Interest Rates; Money and Credit; Household Balance Sheets; Exchange Rates; Other; Stock Markets; and Non-Household Balance Sheets. The predictors also include the first two lags of the response variable. 

The predictors are transformed following standard norms in the literature \citep[e.g.,][]{stock2012disentangling}. These often are simply a matter of making nominal series real using a deflator. Monthly frequency series are aggregated to a quarterly frequency using averages. Additional details on the data construction can be found in \cite{mccracken2020fred}. Figure \ref{fig:corr} shows the sample correlation structure of the data. The variables have been ordered based on their group. Although it is far from perfect, there is evidence of a block correlation structure in the panel of macroeconomic predictors; that is, the within-group correlation tends to be higher than the across-group correlation, at least on average, over the sample period. 

In the main empirical analysis, we leverage this group classification as prior information to implement the group factorisation $q\left(\mathbf{b}\right)=\prod_{k=1}^{K}q\left(\mathbf{b}_k\right)$.

\begin{figure}[!ht]
\centering
\includegraphics[width=.65\textwidth]{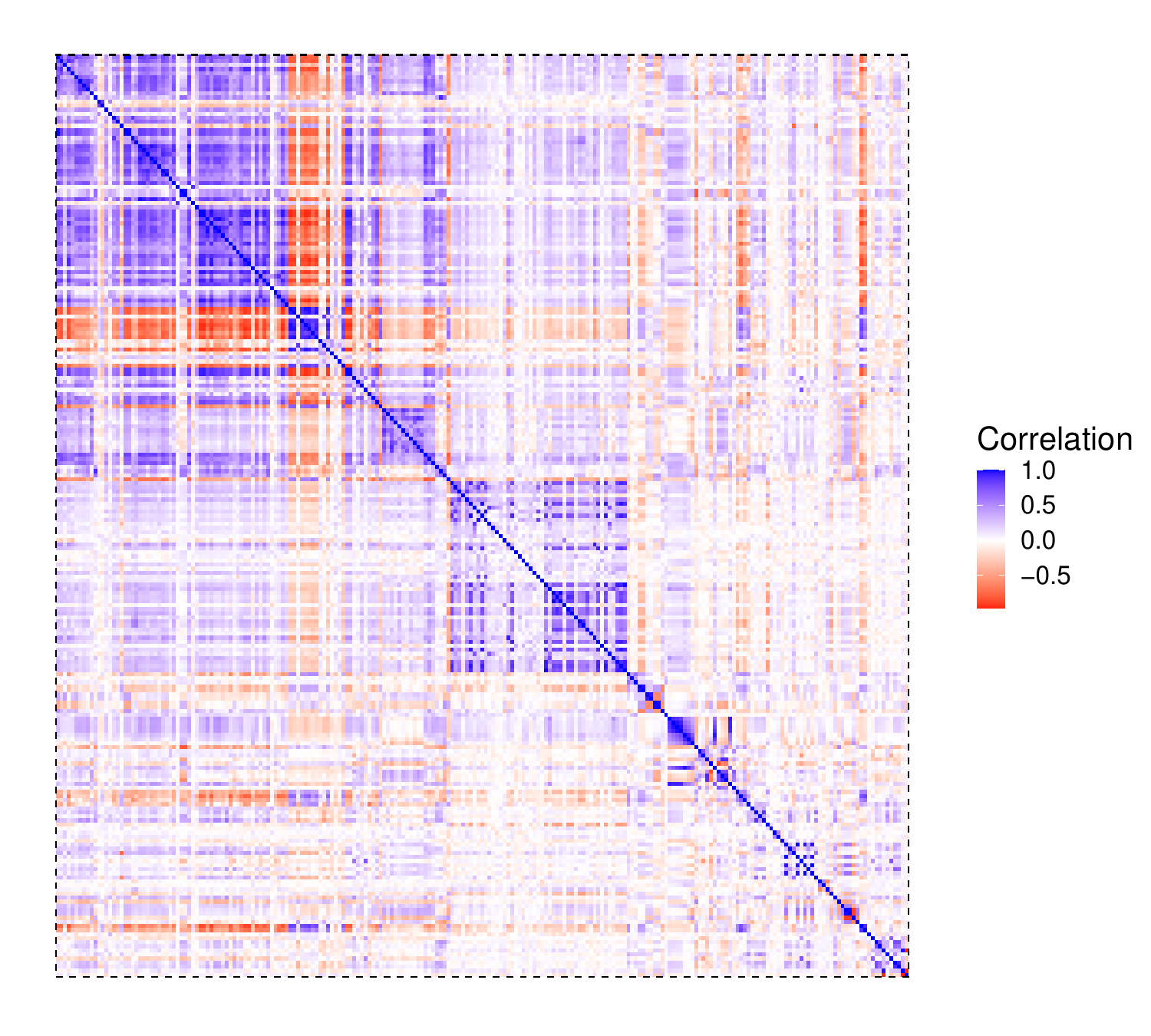}
	\caption{\small Sample correlation of the panel of macroeconomic predictors.}\label{fig:corr}
\end{figure}

\subsection{Out-of-sample forecasting}
\label{app:dm}

We test the statistical significance of the outperformance achieved by our variational Bayes dynamic variable selection based on pairwise \citep{diebold2002} (DM) tests. Figure \ref{fig:dm app} reports the probability of rejecting the null hypothesis $\mathcal{H}_0: MSE^C = MSE^R$ in favour of the alternative $\mathcal{H}_a: MSE^C > MSE^R$, where $MSE^C$ and $MSE^R$ denote the mean-squared error of the column and row model. For brevity, we report the results for the Total CPI (CPIAUCSL), the GDP deflator (GDPCTPI), and the PCE deflator (PCECTPI) for $h=1,2,4$ quarter ahead. For short-term forecasts, the results suggest that our {\tt BG}, {\tt BGS}, and {\tt BG group} models provide statistically comparable performance to the {\tt TVI} model. Yet, {\tt BG}, {\tt BGS}, and {\tt BG group} significantly outperform all of the other variable selection strategies, which make use of a large set of macroeconomic predictors. For longer-term predictions, the performance gap in favour of our dynamic variable selection becomes more statistically significant even compared with {\tt TVI}, {\tt AR(2)}, and {\tt TVAR(2)}.

\begin{figure}[!hp]
\centering
\begin{flushleft}
\bf Panel A: \normalfont Total CPI (CPIAUCSL)
\end{flushleft}
\hspace{-2em}\subfigure[$h=1$]{\includegraphics[width=.33\textwidth]{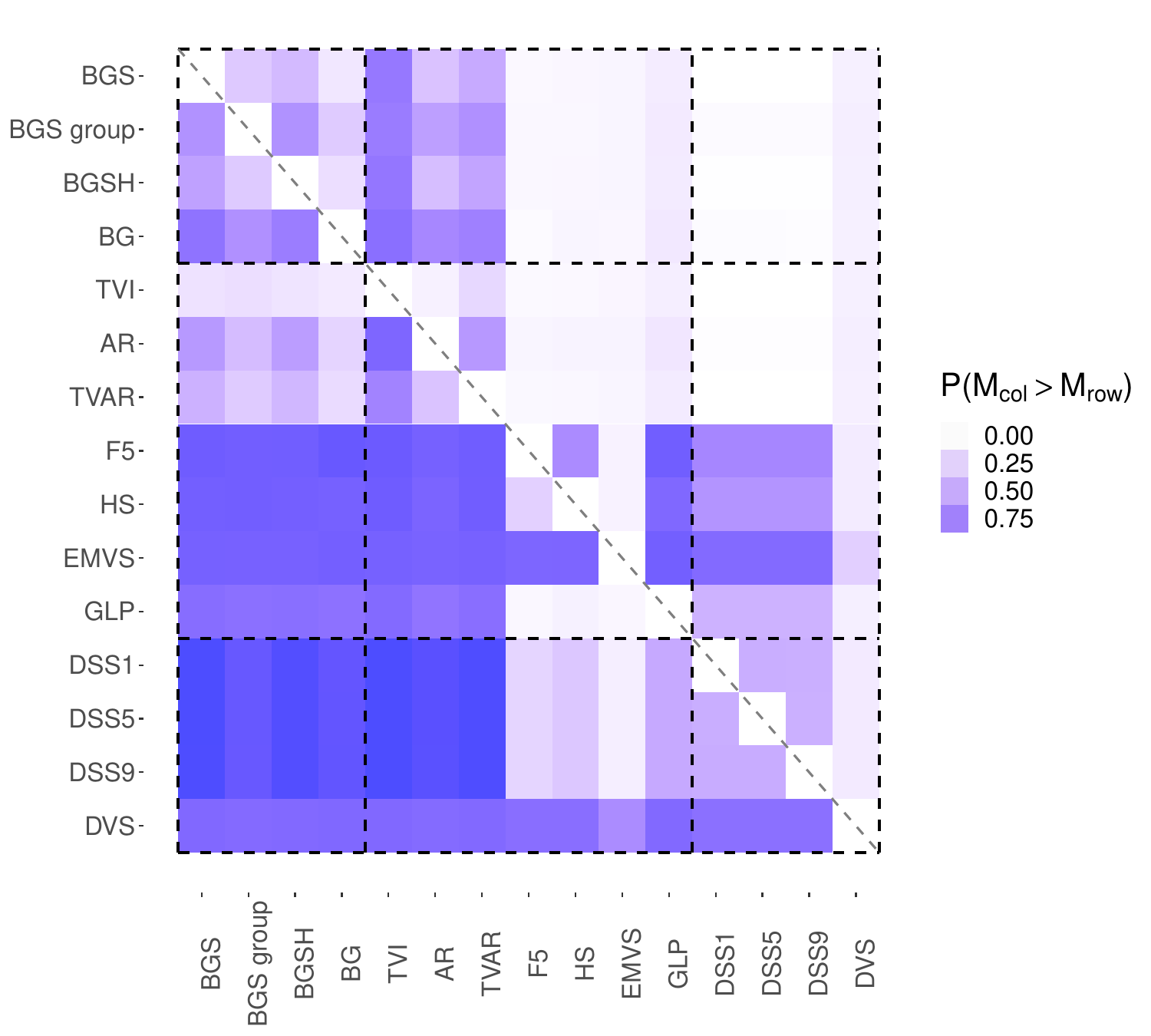}}\subfigure[$h=2$]{\includegraphics[width=.33\textwidth]
{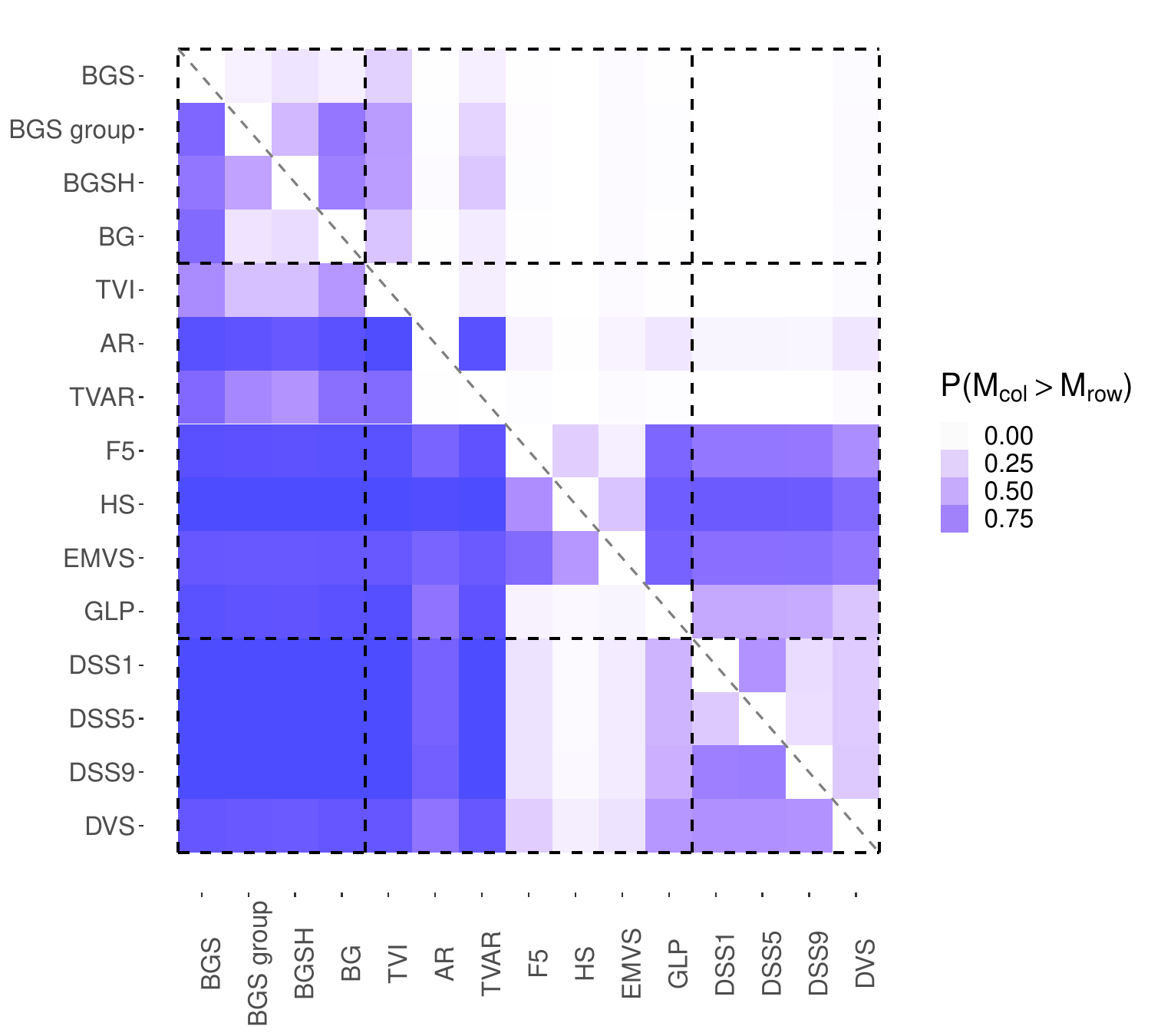}}\subfigure[$h=4$]{\includegraphics[width=.33\textwidth]
{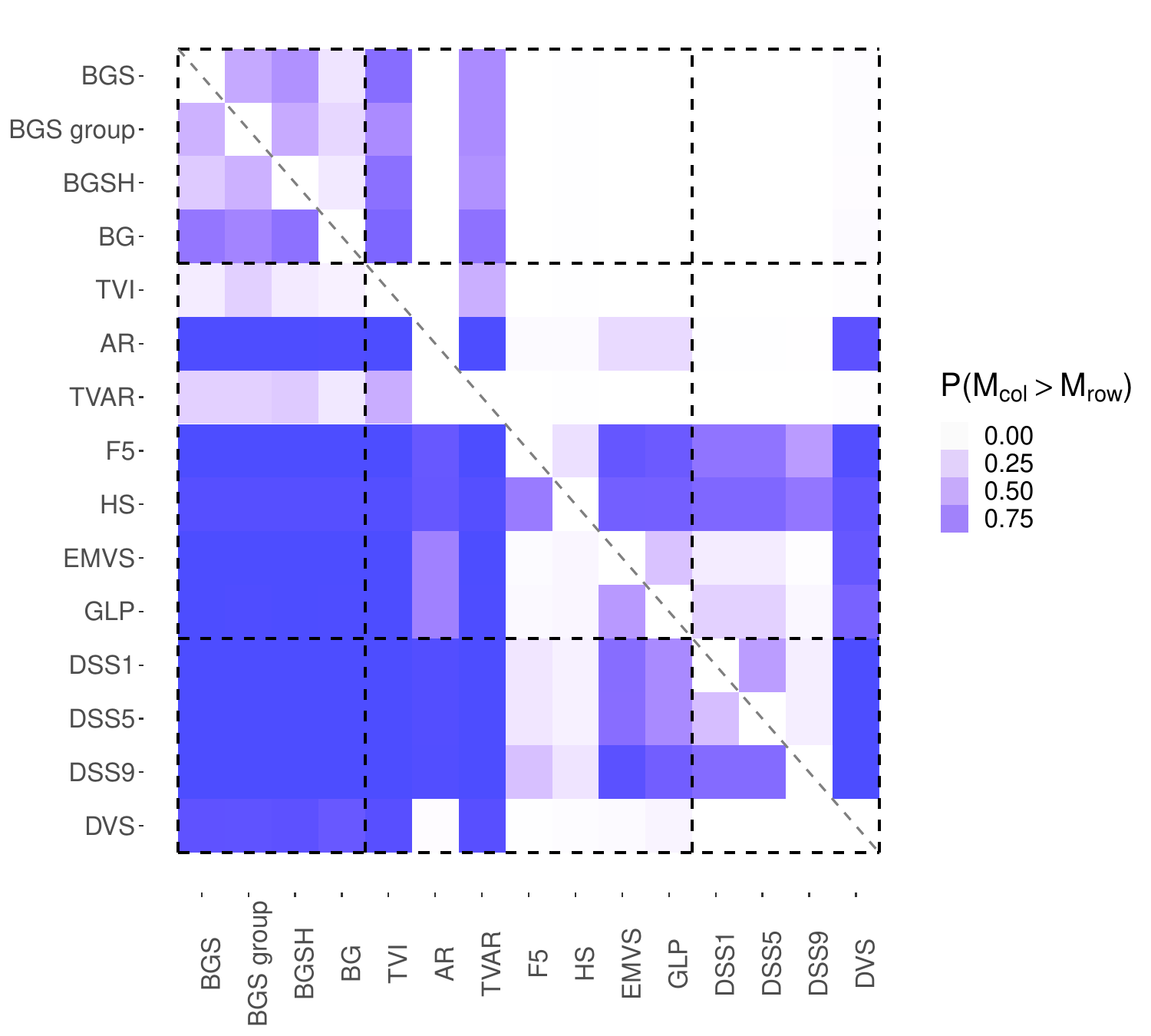}}\hspace{-2em}

\begin{flushleft}
\bf Panel B: \normalfont GDP deflator (GDPCTPI)
\end{flushleft}
\hspace{-2em}\subfigure[$h=1$]{\includegraphics[width=.33\textwidth]{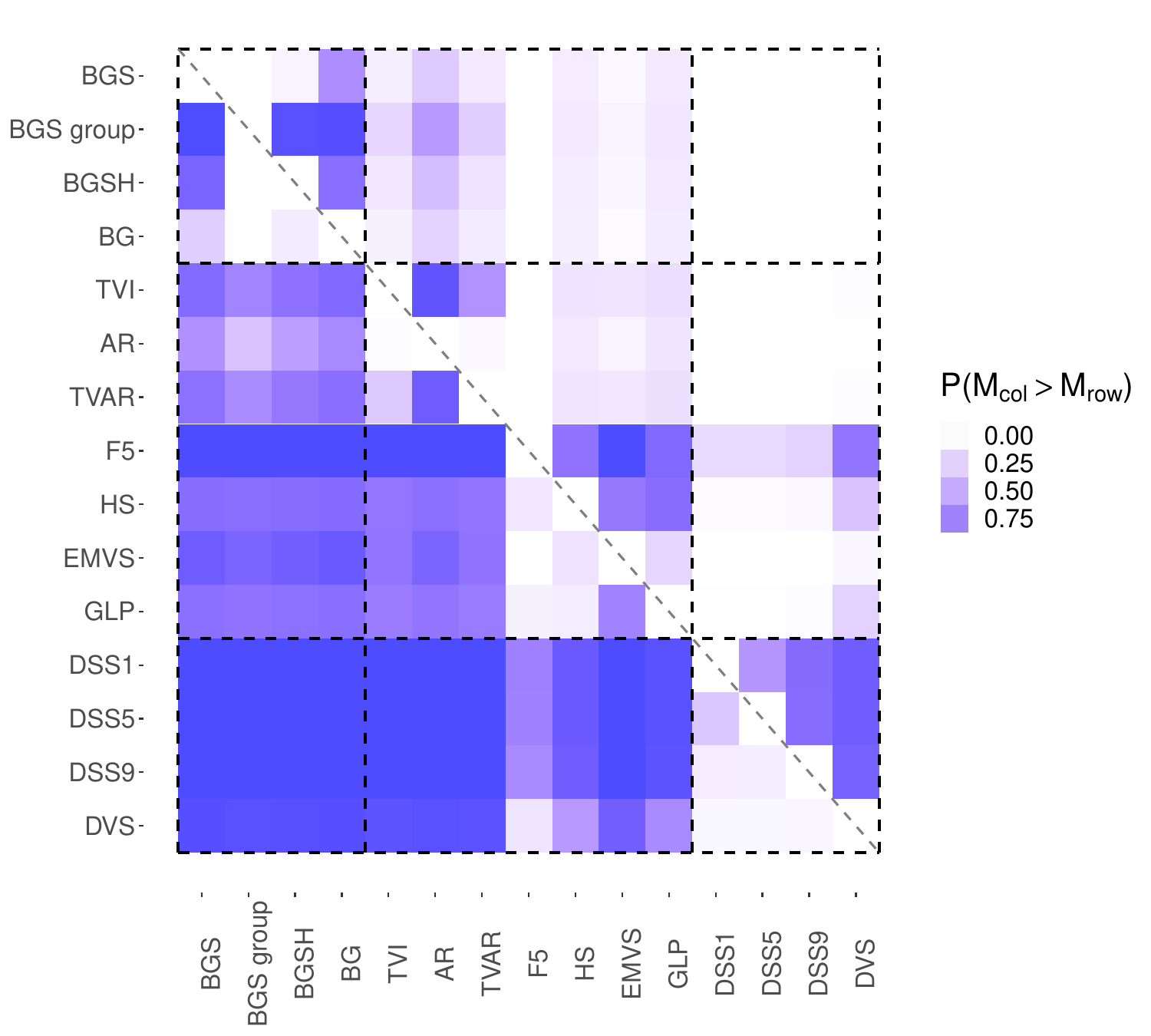}}\subfigure[$h=2$]{\includegraphics[width=.33\textwidth]
{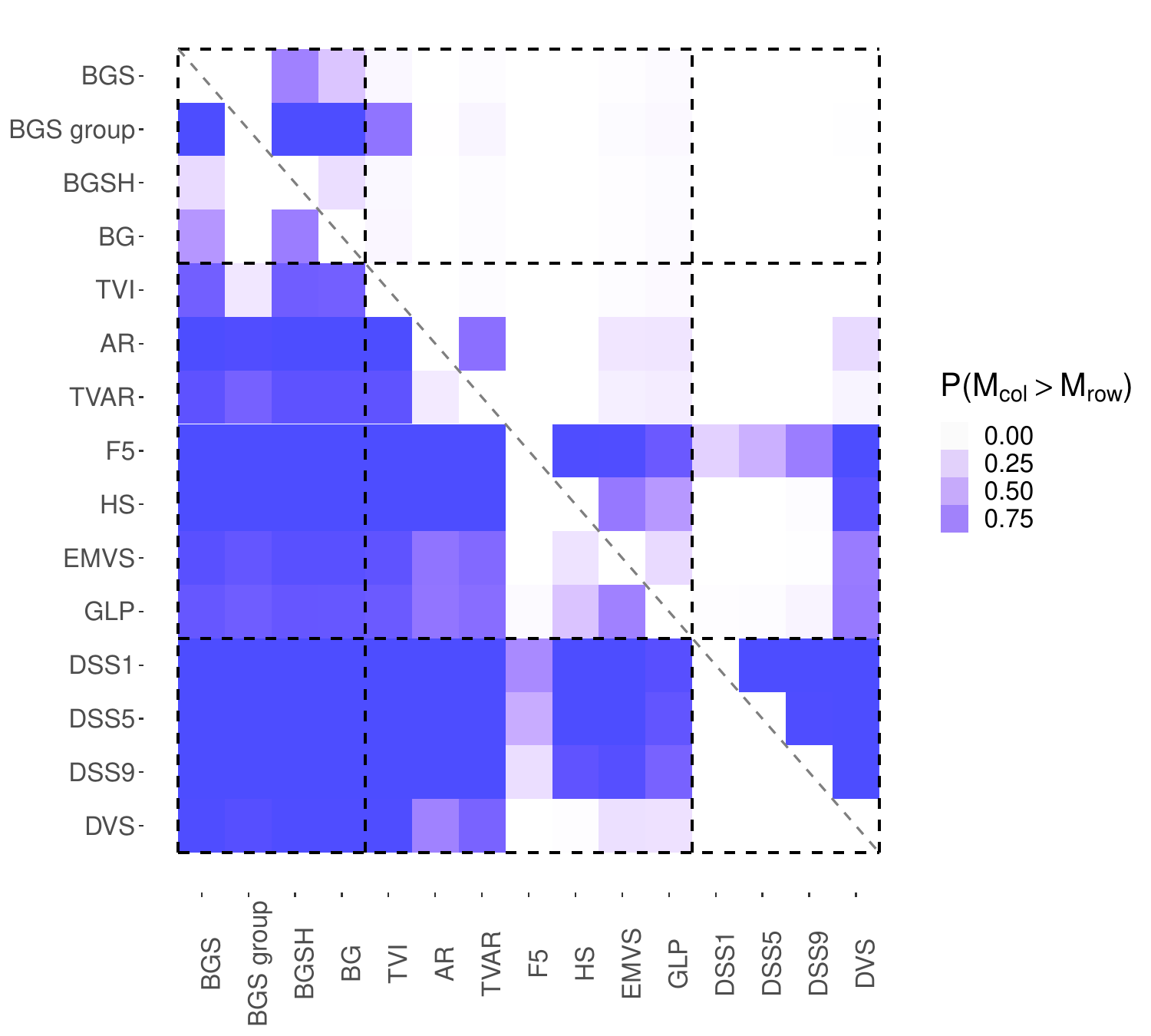}}\subfigure[$h=4$]{\includegraphics[width=.33\textwidth]
{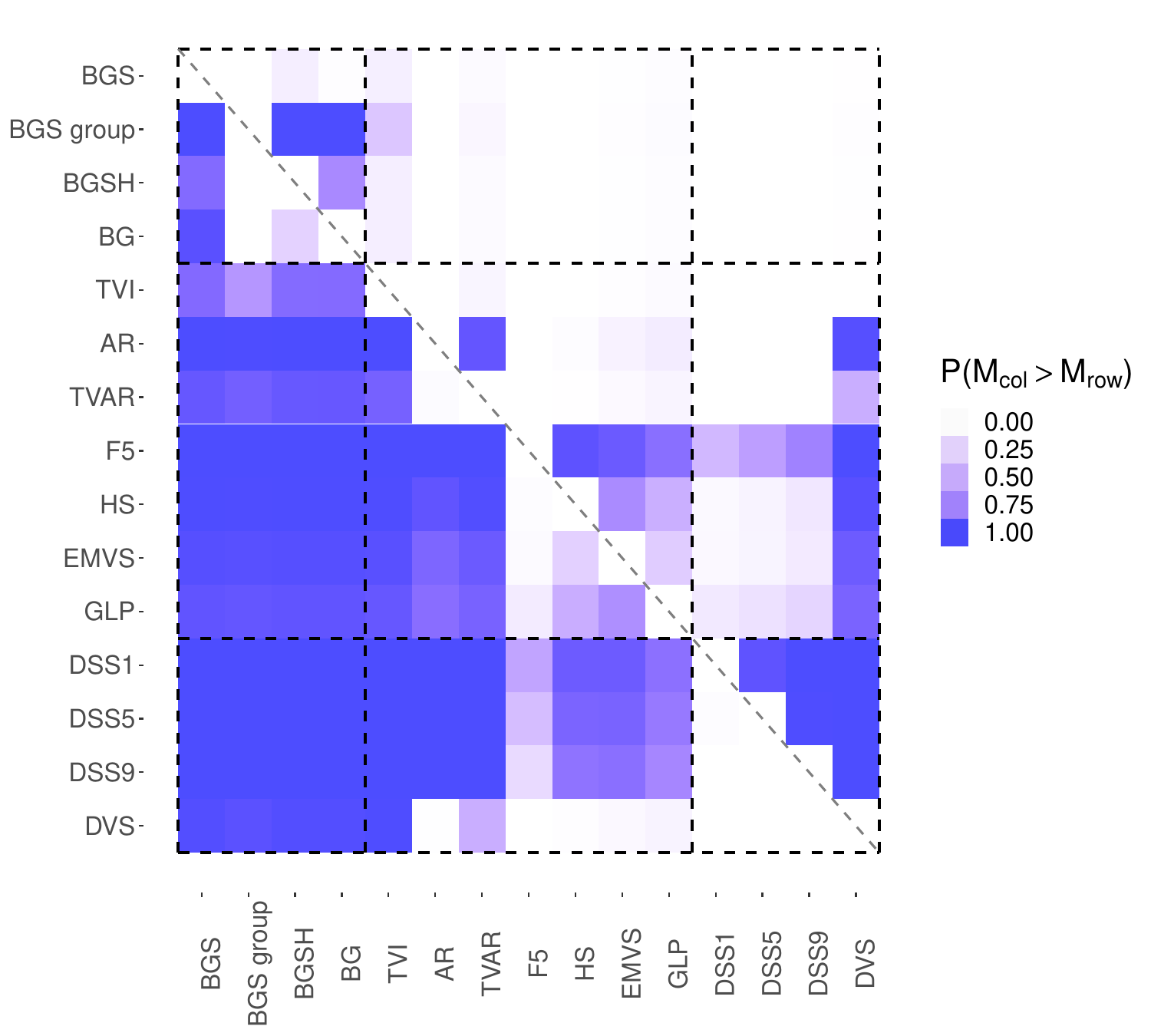}}\hspace{-2em}

\begin{flushleft}
\bf Panel C: \normalfont PCE deflator (PCECTPI)
\end{flushleft}
\hspace{-2em}\subfigure[$h=1$]{\includegraphics[width=.35\textwidth]{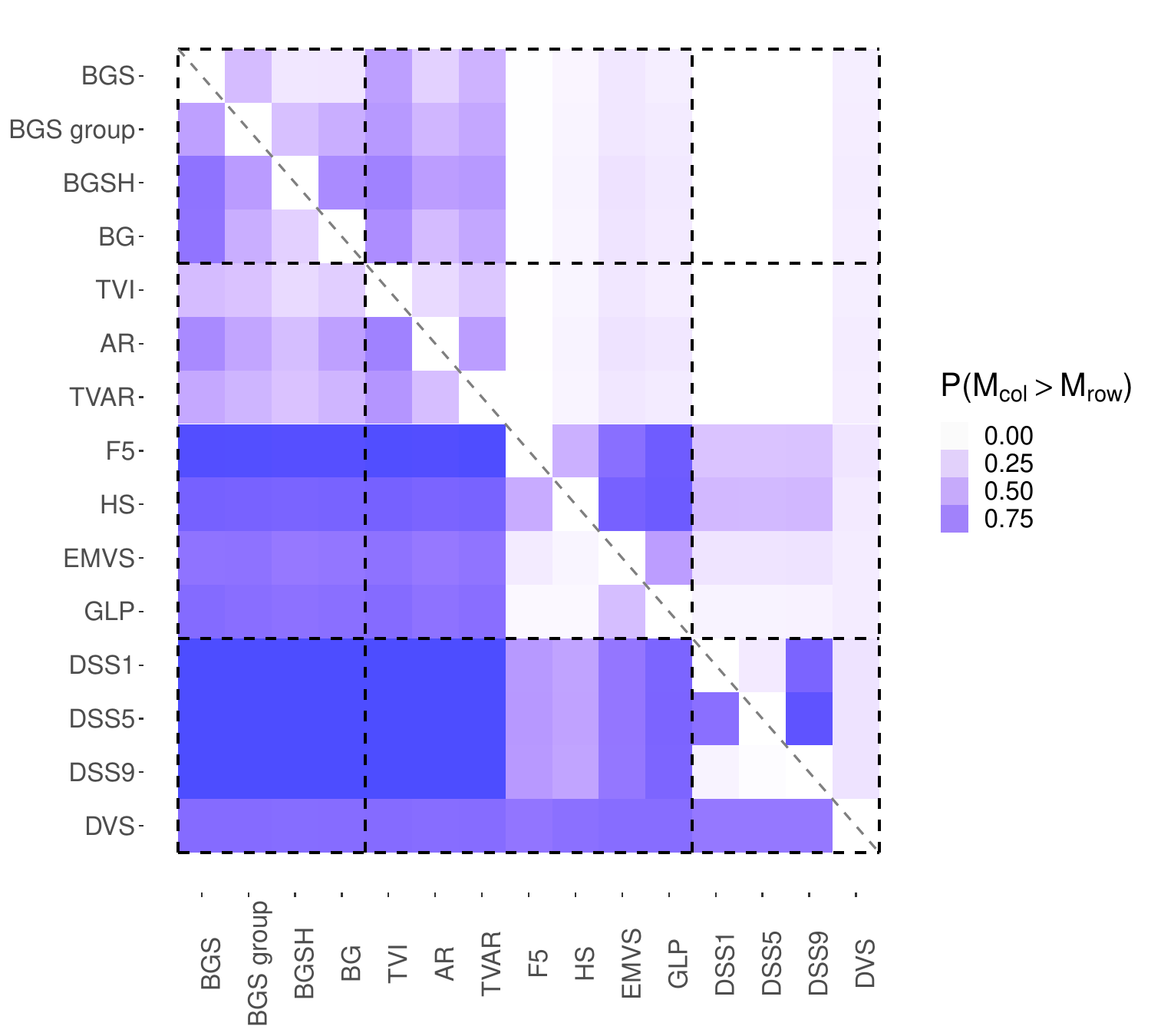}}\subfigure[$h=2$]{\includegraphics[width=.35\textwidth]
{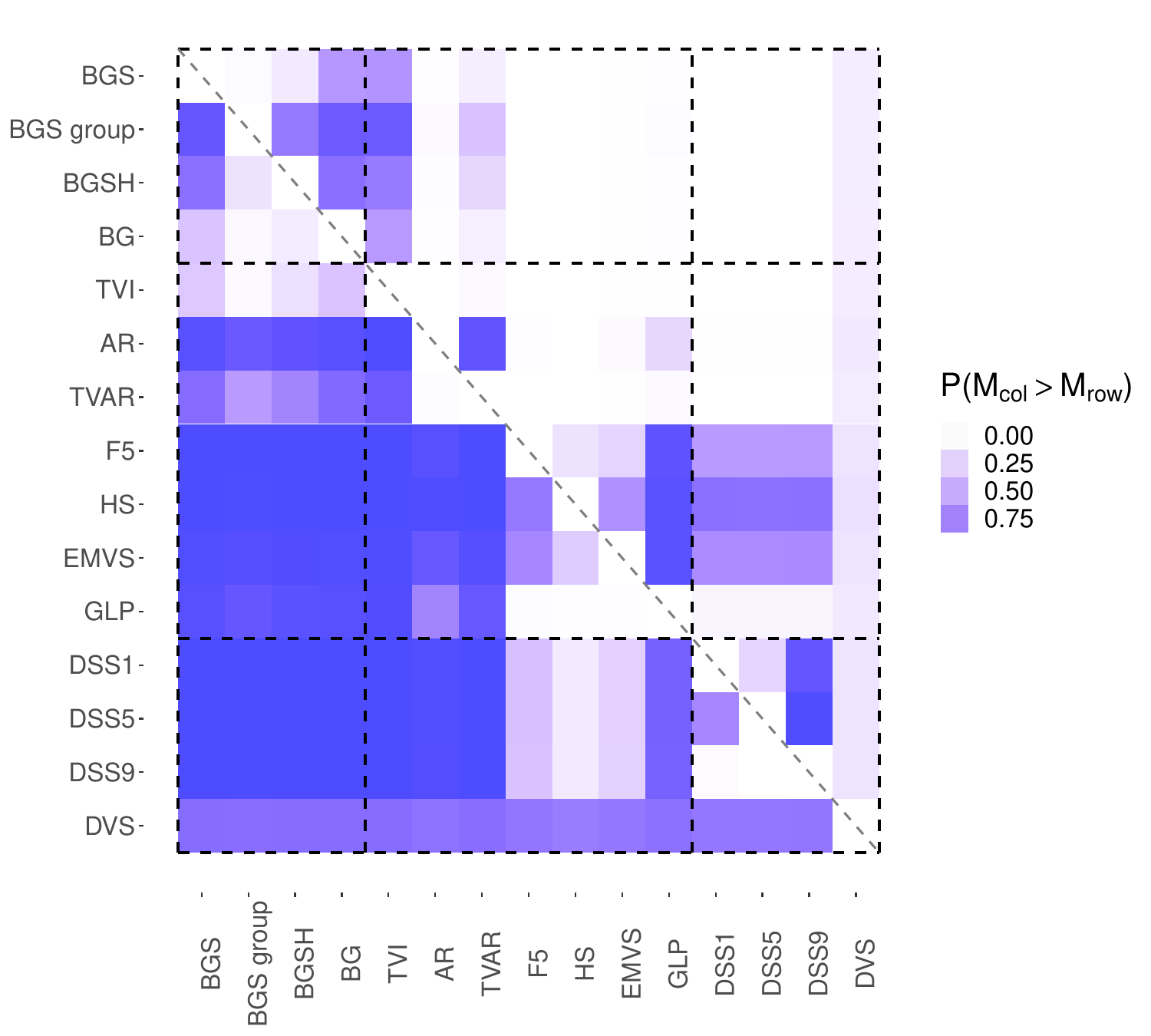}}\subfigure[$h=4$]{\includegraphics[width=.35\textwidth]
{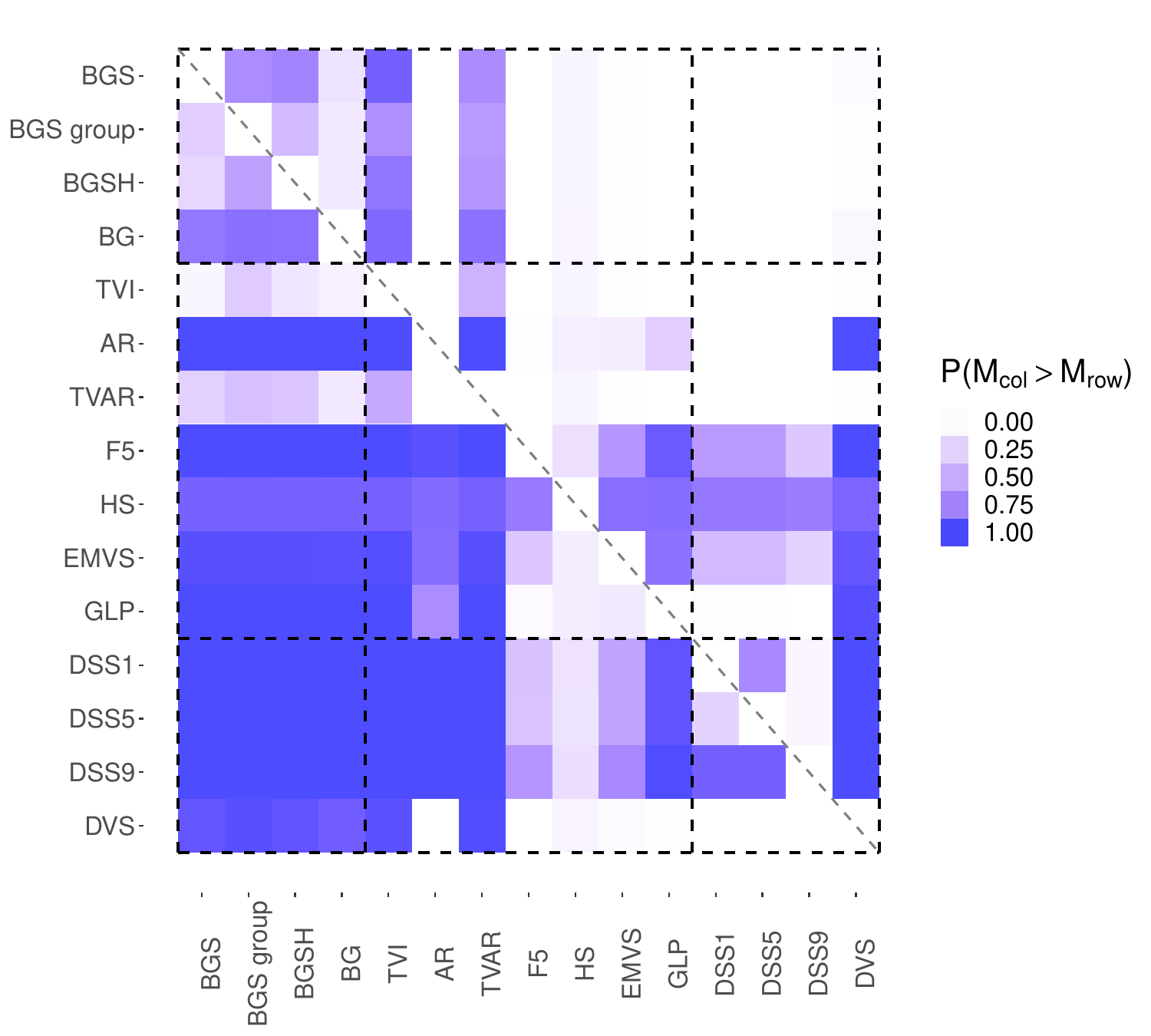}}\hspace{-2em}
	\caption{\small Diebold-Mariano test for the null hypothesis $\mathcal{H}_0: MSE^C \geq MSE^R$, where $MSE^C$ and $MSE^R$ denote the mean-squared error of the column and row model, respectively.}\label{fig:dm app}
\end{figure}

\subsection{Retrospective analysis of macroeconomic predictors}

Figure \ref{fig:beta_appl1 app} reports the coefficient estimates $\mu_{q(\beta_{jt})}$ and inclusion probabilities $\mu_{q(\gamma_{jt})}$ from the {\tt BGS} model, showing only predictors active for significant periods. The figure reports the results for PCE deflator and Core CPI. 

\begin{figure}[ht]
\centering
\hspace{-1em}\subfigure[PCE deflator (PCECTPI)]{\includegraphics[width=0.48\textwidth]{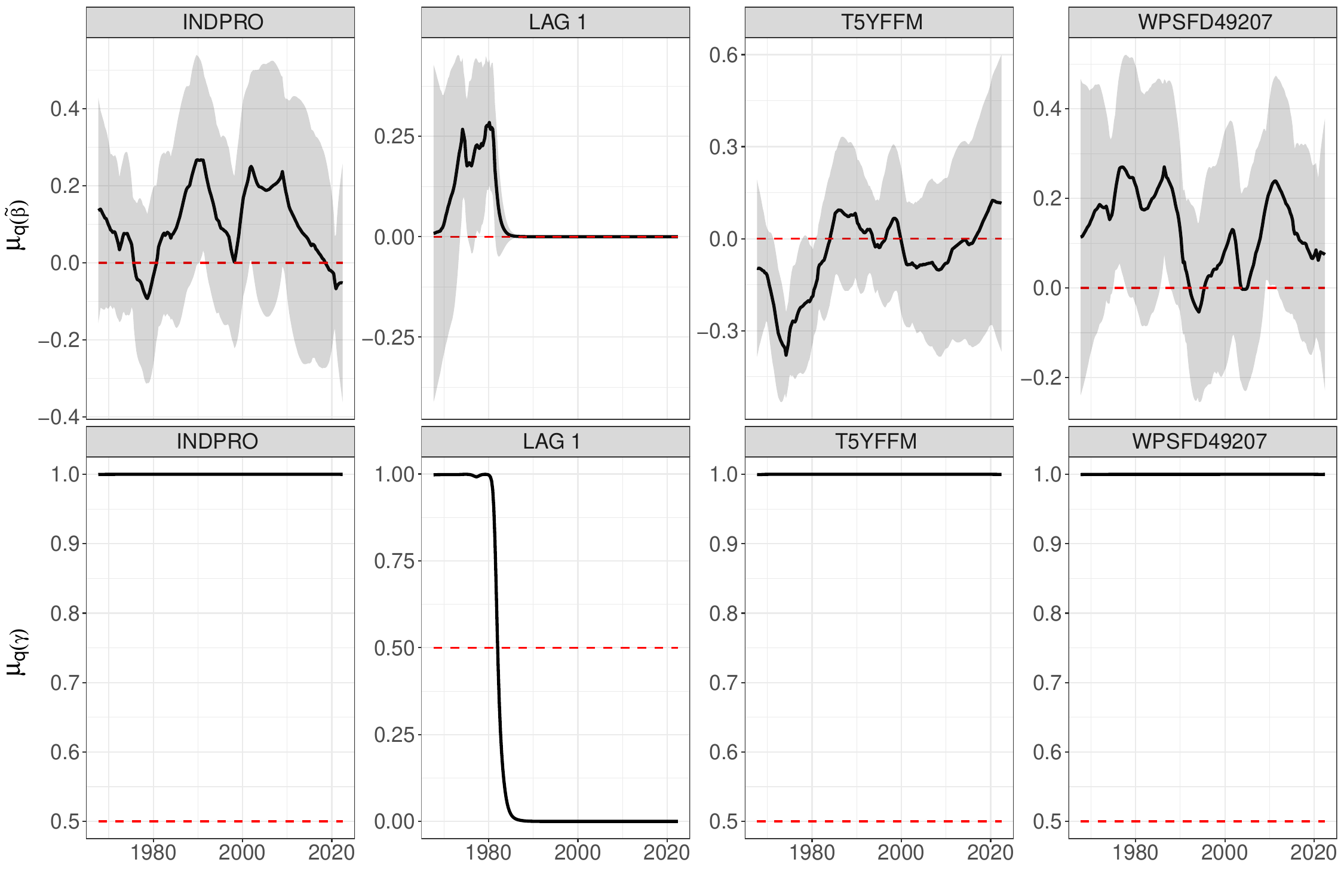}}\hspace{1em}\subfigure[Core CPI (CPILFESL)]{\includegraphics[width=0.48\textwidth]{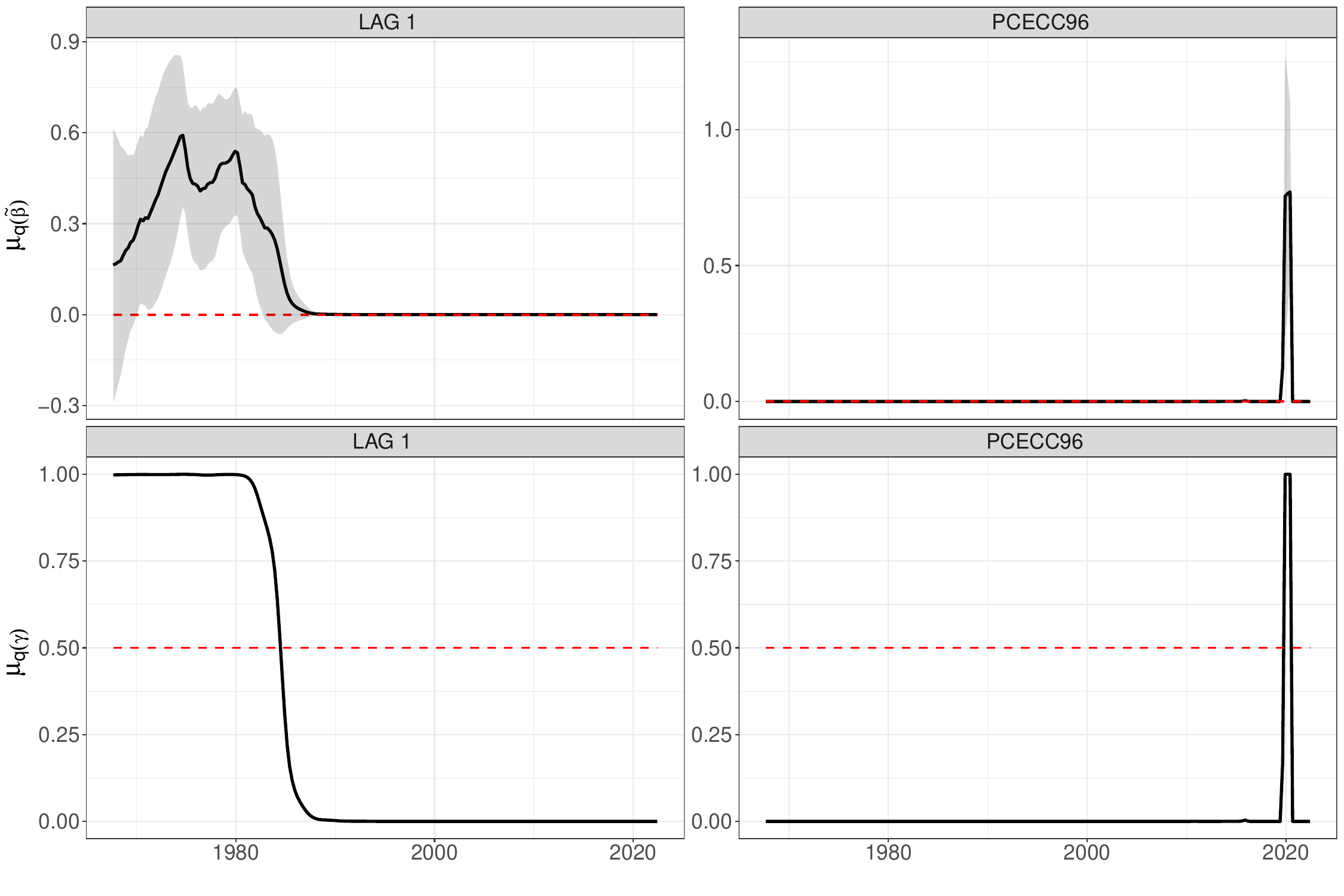}}
\caption{Time-varying coefficient estimates and inclusion probabilities for four inflation measures. Gray bands indicate NBER recessions.}\label{fig:beta_appl1 app}
\end{figure}

Figure \ref{fig:vol} reports the sum of absolute values of the variational mean of the active regression coefficients, i.e., $\sum_{j=1}^p|\mu_{q(\beta_{jt})}|$, which proxies the strength of the information available to predict inflation. The information from the predictors changed over time, decreasing in the middle part of the sample from the 90s to early 2000. In addition, a stronger signal from the predictors correlates with higher idiosyncratic volatility (dashed-red line); that is, a richer model is needed to predict inflation at times of higher uncertainty as proxied by the volatility in the residuals.  

\begin{figure}[!ht]
\hspace{-1em}\subfigure[Total CPI (CPIAUCSL)]{\includegraphics[width=.5\textwidth]{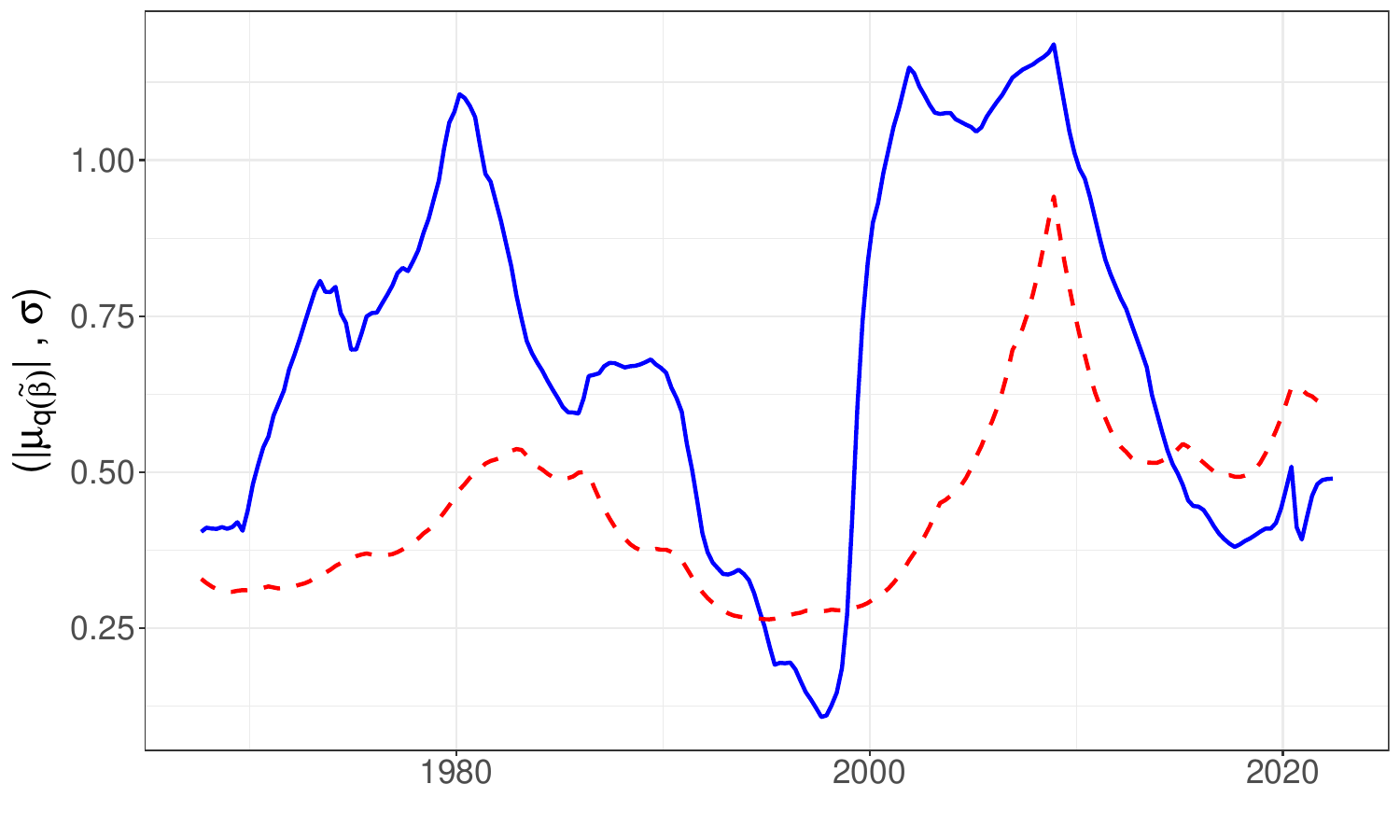}\label{fig:vol_cpiaucsl}}
	\subfigure[PCE deflator (PCECTPI)]{\includegraphics[width=.5\textwidth]{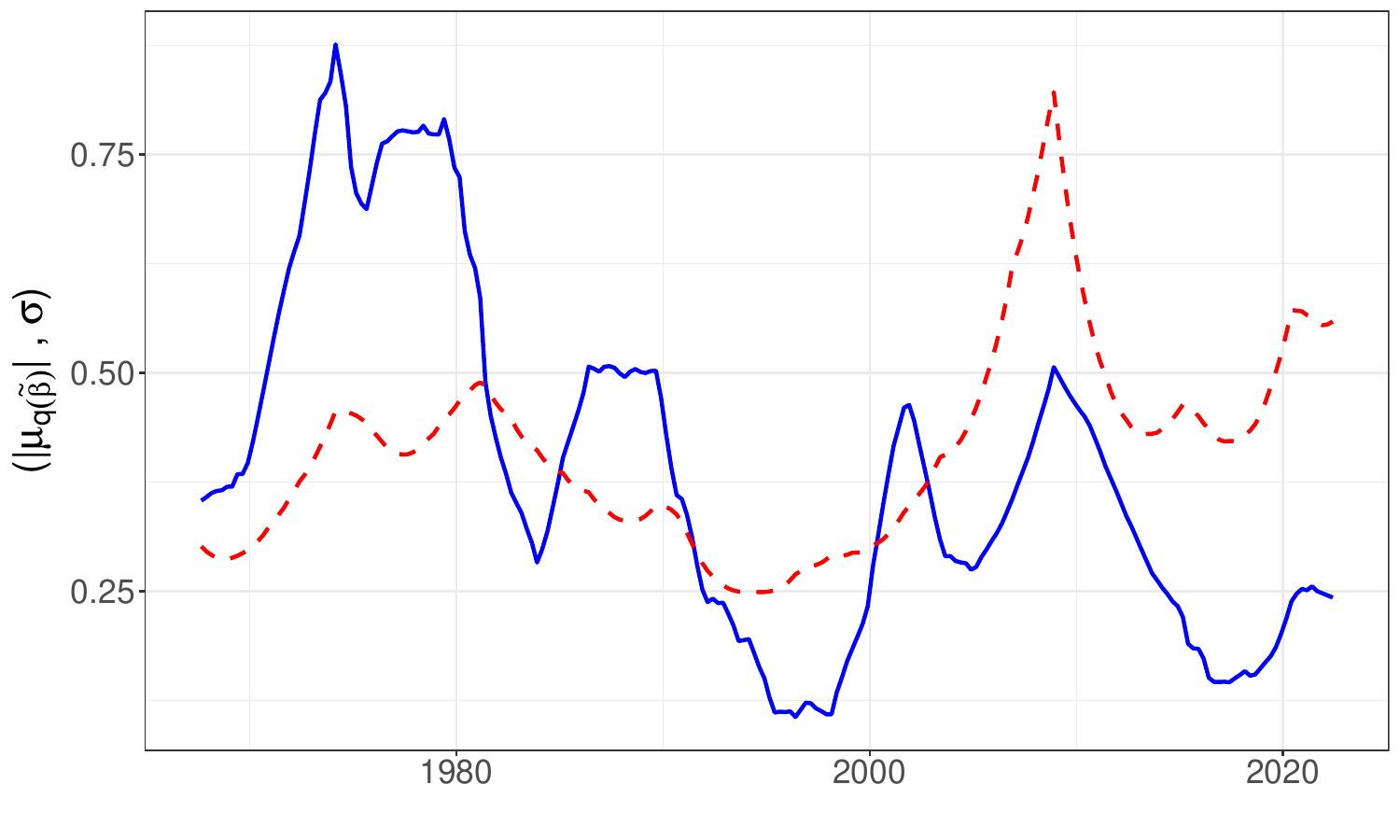}\label{fig:vol_pcectpi}}
 
 \hspace{-1em}\subfigure[Core CPI (CPILFESL)]{\includegraphics[width=.5\textwidth]{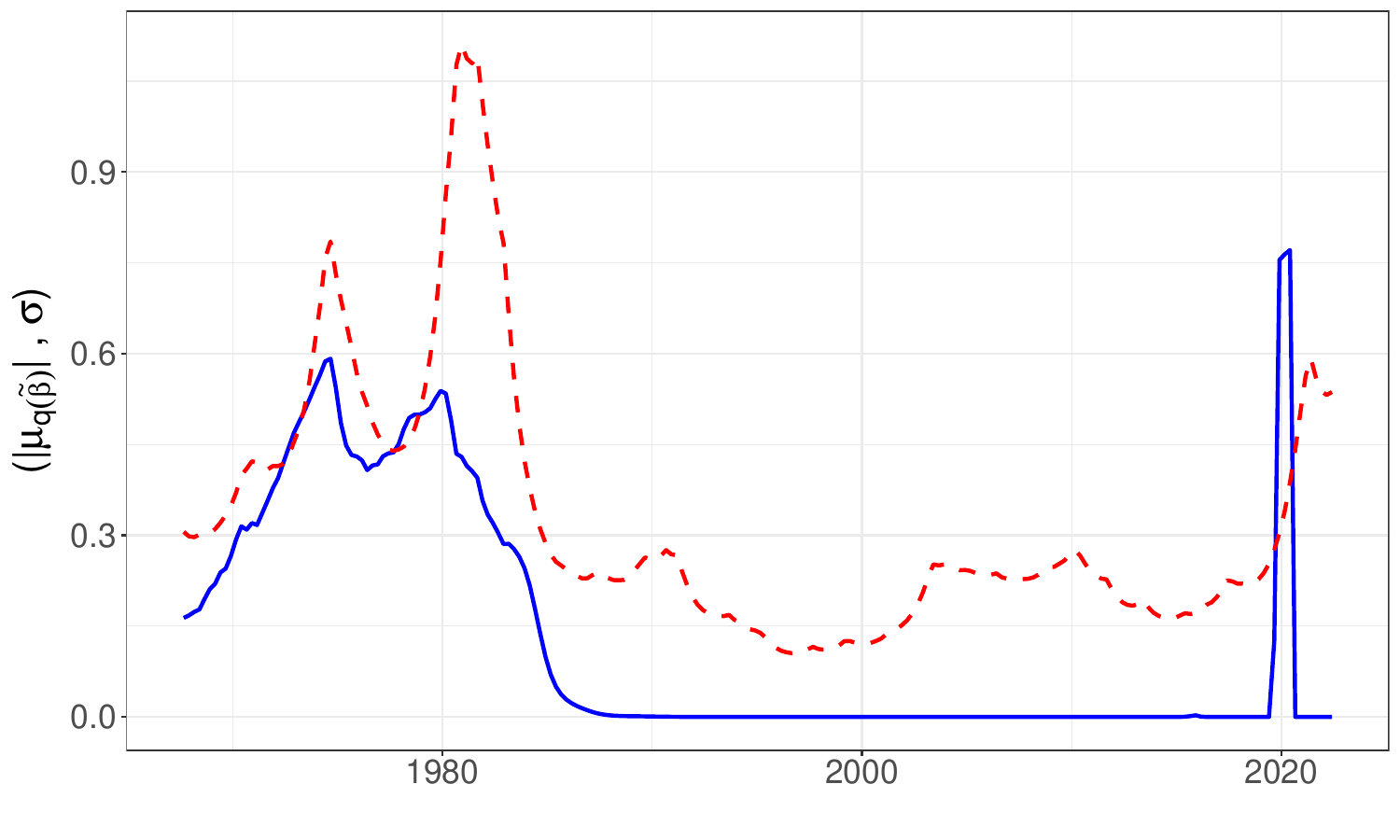}\label{fig:vol_cpilfesl}}
	\subfigure[GDP deflator (GDPCTPI)]{\includegraphics[width=.5\textwidth]{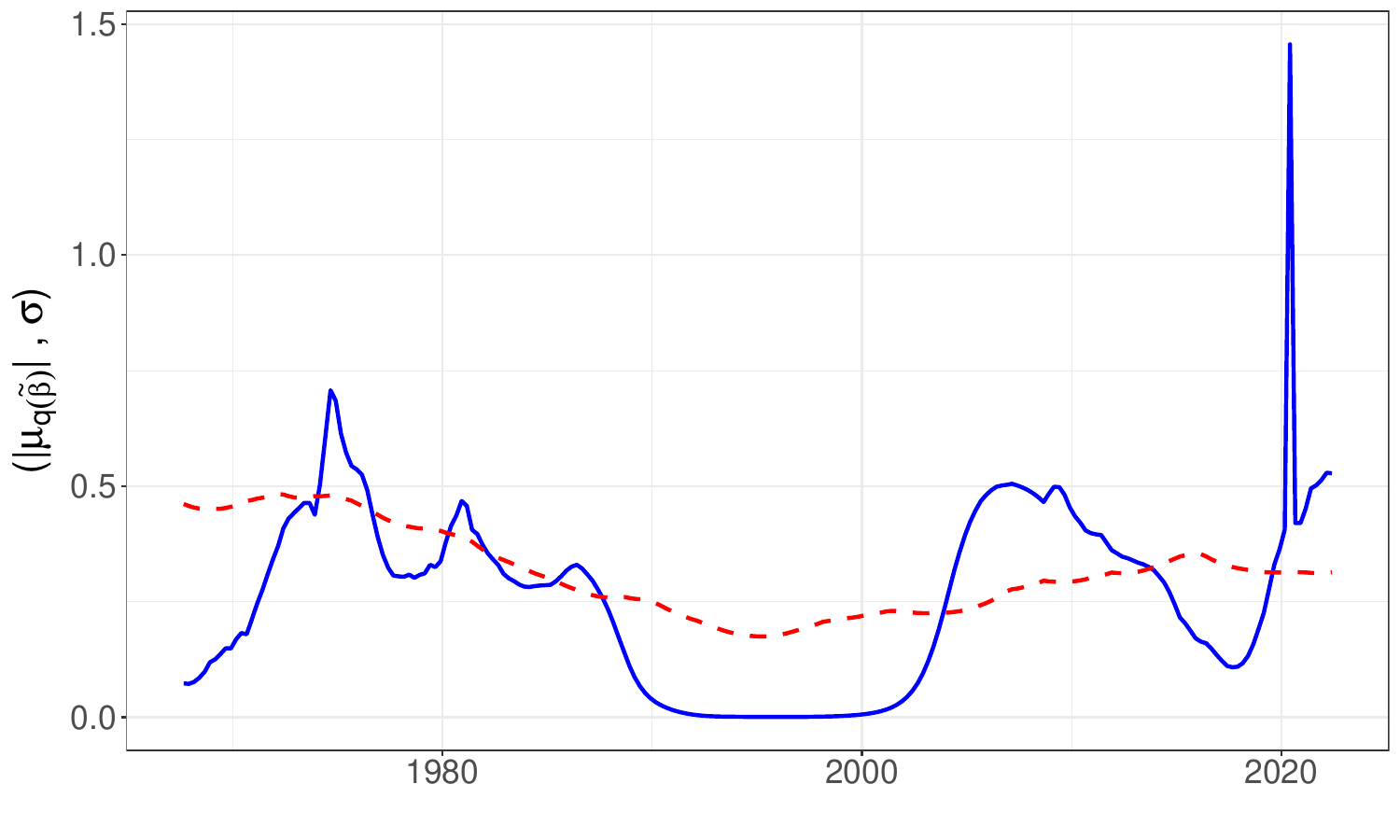}\label{fig:vol_gdpctpi}} 
	\caption{\small Signal (blue) computed as $\sum_{j=1}^p|\mu_{q(\beta_{jt})}|$, for $t=1,\ldots,n$, against the posterior estimates of stochastic volatility $\exp\left(h_t/2\right)$, for $t=1,\ldots,n$ (dashed-red).}\label{fig:vol}
\end{figure}

\end{document}